\newcommand{\nmg}{\mathcal{T}}
\newtheorem{theorem}{Theorem}
\newtheorem{lemma}[theorem]{Lemma}
\newtheorem{proposition}{Proposition}
\DeclarePairedDelimiter\ceil{\lceil}{\rceil}
\DeclarePairedDelimiter\floor{\lfloor}{\rfloor}
\newcommand{\sts}{\mathrm{TS}_{\rm sym}}
\newcommand{\Tsym}{\mathcal{T}_{\rm sym}(n,m)}
\newcommand{\Tsymsq}{\mathcal{T}^2_{\rm sym}(n,m)}
\newcommand{\cts}{\mathrm{TS}_{\rm cyc}}
\newcommand{\Tcyc}{\mathcal{T}_{\rm cyc}(n,m)}
\newcommand{\Tcycsq}{\mathcal{T}^2_{\rm cyc}(n,m)}
\newcommand{\rbf}{\mathbf{R}^{(T,T')}}
\newcommand{\unit}{\mathrm{U}(\mathcal{H})}
\newcommand{\symdif}{\ensuremath{\bigtriangleup}}
\newcommand{\Symdif}{\mbox{\Large \ensuremath{\bigtriangleup}}}
\newcommand{\toric}{\ensuremath{\psi_{\rm toric}}}
\begin{document}
\title{The quantum trajectory sensing problem and its solution}
\author{Zachary E. Chin} \email{zchin@mit.edu}
\affiliation{Department of Physics and Research Laboratory of Electronics, Massachusetts Institute of Technology, Cambridge, Massachusetts 02139, USA}
\author{Isaac L. Chuang}
\affiliation{Department of Physics and Research Laboratory of Electronics, Massachusetts Institute of Technology, Cambridge, Massachusetts 02139, USA}
\date{\today}

\begin{abstract}

     The quantum trajectory sensing problem seeks quantum sensor states which enable the trajectories of incident particles to be distinguished using a single measurement. 
     For an $n$-qubit sensor state to unambiguously discriminate a set of trajectories with a single projective measurement, all post-trajectory output states must be mutually orthogonal; therefore, the $2^n$ state coefficients must satisfy a system of constraints which is typically very large. 
     Given that this system is generally challenging to solve directly, we introduce a group-theoretic framework which simplifies the criteria for sensor states and exponentially reduces the number of equations and variables involved when the trajectories obey certain symmetries. These simplified criteria yield general families of trajectory sensor states and provide bounds on the particle-sensor interaction strength required for perfect one-shot trajectory discrimination. 
     Furthermore, we establish a link between trajectory sensing and quantum error correction, recognizing their common motivation to identify perturbations using projective measurements. 
     Our sensor states in fact form novel quantum codes, and conversely, a number of familiar stabilizer codes (such as toric codes) also provide trajectory sensing capabilities. 
     This connection enables noise-resilient trajectory sensing through the concatenation of sensor states with quantum error-correcting codes. 
\end{abstract}

\maketitle
\section{Introduction}
The trajectory of a particle through a medium creates spatial patterns that reveal its intrinsic properties and the conditions of its origin. By tracking the motion of collision products, detectors at high-energy particle colliders can determine quantities such as particle charge, momentum, energy, and lifetime and also elucidate production/decay mechanisms \cite{atlas,cms}. Additionally, scintillators \cite{snoplus}, bubble chambers \cite{snolab-bubbles}, and Cherenkov detectors \cite{icecube} are used to investigate the cosmic origins of muons, neutrinos, and potential dark matter candidates by inferring their paths intercepted near Earth's surface. Furthermore, the spatial distribution of emitted or transmitted particles is the basis for imaging in diverse applications such as telescopy \cite{webb}, diagnostic medicine \cite{spect}, and electron microscopy \cite{TEM}. 

Because realistic particles often only interact weakly with a sensor, particle trajectory sensing stands to benefit from quantum resources, which are known to improve sensitivities for measurements of forces and fields \cite{field,gravimeter}. A model for a quantum trajectory sensor might consist of an array of networked quantum systems (qubits or qudits) such that an incident particle applies the same local unitary operation to each system coincident with its path. The different trajectories of the particle would produce various perturbation patterns on the array which could ideally be distinguished using a single projective measurement. Although quantum sensor networks have been developed for estimating spatially-distributed continuous variables, such as multiple local parameters \cite{multi-phase} or linear functions of local parameters \cite{lin-func}, the set of possible trajectory patterns within the above model is discrete. In the direction of discrete sensing, previous work has produced quantum schemes which help to localize a perturbation that affects just a single sensor in a network \cite{guptaPRA,gupta}. These advances naturally motivate the exploration of quantum sensor networks which instead aim to discriminate spatial patterns spread over multiple sensors.

Although a quantum trajectory sensor can be understood as a sensor network, it can also be viewed as a kind of quantum code where various spatial perturbations are treated as errors to be distinguished by a syndrome measurement. 
However, the errors corresponding to trajectories are quite different from those considered conventionally in quantum error correction (QEC). While QEC typically focuses on independent and identically-distributed single-qubit Pauli errors \cite{Nielsen_Chuang_2010}, trajectories may involve highly correlated non-Pauli operations affecting many qubits. 
Nevertheless, special quantum codes have been introduced to recover correlated erasure errors due to cosmic rays \cite{QECcosmic, muongoogle,muonmcdermott}, and their usefulness bolsters the prospect of achieving trajectory sensing through QEC. 
Other related work has also shown that repeated syndrome measurements on codes can be employed for the characterization of quantum dynamics \cite{Omkar_2015,omkar_asc} in addition to error channel parameter estimation and hypothesis testing \cite{combes2014insitu}. These techniques are promising for trajectory sensing as well, especially if they could be adapted to use only single-shot measurements.

Importantly, it is unclear how to efficiently find a suitable $n$-qubit sensor state for distinguishing trajectories because the $2^n$-dimensional Hilbert space of possible states is prohibitively large for a brute-force search. Similar challenges are encountered when searching for new quantum codes. To overcome this obstacle in the context of QEC, numerous families of codes have been introduced which use symmetry to simplify their description and construction. For example, a rich class of permutation-invariant codes has been developed which correct for spontaneous decay errors as well as arbitrary qubit errors \cite{ruskai, ouyang, multi-qubit-pi,pi-qudit}; any $n$-qubit permutation-invariant code state can be parameterized with only $O(n)$ variables \cite{schur} and expressed as a superposition of Dicke states, which are known to be useful for metrology \cite{dicke,dicke-metrology}. Along similar lines, there also exist codes displaying cyclic permutation symmetry which correct trajectory-like Pauli burst errors \cite{cyclic-code}. On the other hand, the widely-used stabilizer codes \cite{gottesman1997stabilizer} instead harness symmetries of the Pauli group to efficiently describe a $2^n$-dimensional code space using only $O(n)$ stabilizer group generators. The success of all of the aforementioned codes suggests that the search for trajectory sensors may be analogously simplified by leveraging permutation and Pauli symmetries to reduce the space of prospective sensor states.

In a companion paper \cite{chin2024quantum}, we posed the following trajectory sensing (TS) problem: given a sensor array and set of trajectories, how strongly must the incident particle interact with the array for there to exist sensor states which can unambiguously distinguish the possible trajectories via a single projective measurement? Addressing this challenge, we showed that sensor entanglement can reduce the particle-sensor interaction strength $\theta$ required for perfect trajectory identification. In particular, we introduced families of sensor states, called TS states, which can discriminate trajectories with zero error provided that $\theta$ exceeds a certain threshold and the trajectories obey certain symmetries.

In this paper, we develop a formal mathematical framework for constructing these TS states. Specifically, we initially consider a noiseless, idealized setting of the TS problem where the individual sensors are qubits and the incident particle rotates each qubit along its path by the same angle $\theta\in[0,\pi]$ around a fixed axis of the Bloch sphere; here, $\theta$ parameterizes the strength of the particle-sensor interaction. After establishing basic criteria which determine whether a TS state exists for a given value of $\theta$, we use permutation and Pauli symmetry groups to dramatically simplify these criteria. We subsequently reframe the TS state existence problem as a linear programming problem and derive closed-form bounds on the intervals of $\theta$ over which two broad families of TS states are guaranteed to exist. Switching to the quantum error correction viewpoint, we then build codes from TS states that encode one logical qubit and correct the error channel of applying a random trajectory to the sensor. We then derive criteria that determine whether familiar stabilizer code states are TS states. We subsequently use these criteria to show how existing stabilizer codes, such as toric codes, can be repurposed for sensing trajectories. Lastly, we demonstrate that concatenating TS states with existing error-correcting codes can allow for perfect trajectory sensing even in the presence of environmental noise.

The rest of this paper is organized as follows. In Section \ref{sec:theory}, we formalize the TS problem and develop a group-theoretic approach to simplifying the criteria for the existence of TS states. Then, in Section \ref{sec:simp-criteria}, we use these tools to derive a reduced system of equations which determines whether a TS state exists at a particular $\theta$. We ultimately apply this result for two families of TS problems in Section \ref{sec:results} to determine the intervals of $\theta$ where TS states exist. Finally, in Section \ref{sec:qecc}, we develop a correspondence between TS states and stabilizer codes and discuss the construction of TS states resilient against noise.

\section{Quantum trajectory sensing: theory}\label{sec:theory}

We begin by assembling a theoretical formalism within which a given TS scenario can be discussed in a mathematically concrete fashion. First, in Section \ref{sec:formalism}, we introduce some foundational assumptions about the TS scenario and formally define the TS problem. Then, in Section \ref{sec:symmetries}, we use symmetry groups of the TS scenario to systematically simplify the criteria for useful TS states.

\subsection{Trajectory sensing formalism}\label{sec:formalism}

A quantum trajectory sensor consists of an array of $n$ individual quantum systems each uniquely labeled with an integer index $1,\ldots,n$. In this work, we choose the systems to be qubits, although a more general treatment might consider qudits instead. Let $\mathcal{H}$ denote the Hilbert space spanned by the possible states of the array, and note that $\dim \mathcal{H} = 2^n$. Additionally, let $\unit$ denote the set of unitary operators on $\mathcal{H}$.

We assume that the interaction between particle and sensor qubits is short-range. In particular, only those qubits which are exactly coincident with the particle path are rotated, and they are all rotated by the same local unitary operator in $\mathrm{SU}(2)$. Any operator in $\mathrm{SU}(2)$ can be described as a rotation about some axis of the Bloch sphere; without loss of generality, we choose the qubit rotation axis to be $Z$, since all axes are equivalent up to a change of computational basis. Thus, assume the incident particle applies the operator
    \begin{align}
        R_Z(\theta) = \exp(-\frac{i\theta}{2} Z)
    \end{align}
to each qubit along its path, where $Z$ is the Pauli-$Z$ operator. The parameter $\theta$ represents the particle-qubit interaction strength and is allowed to take values in the interval $[0,\pi]$. 

The assumption of a short-range interaction allows us to meaningfully define a trajectory. Note that any two particle paths which intersect the same set of qubits induce an identical perturbation on the sensor array. Since such paths cannot reasonably be distinguished by this sensor model, we propose that they represent equivalent trajectories. Accordingly, we define the \textit{trajectory} $T$ associated with a path to be the set of qubits intercepted by the particle; $T$ is thus a subset of $[n]\coloneq \{1,\ldots n\}$. The perturbation induced on the whole array by a trajectory $T$ is then represented with the $n$-qubit unitary $R^{(T)}(\theta)$, which corresponds to applying $R_Z(\theta)$ to each qubit in $T$:
\begin{align}\label{eq:RS}
        R^{(T)}(\theta) = \bigotimes_{j=1}^n R_Z(\theta \cdot \mathbbm{1}_T(j)),
\end{align}
where $\mathbbm{1}_T(j)$ is an indicator function that returns $1$ if $j\in T$ and $0$ otherwise.

A TS scenario is characterized by the array size $n$, the interaction strength $\theta$, and the set $\mathcal{T}$ of allowed trajectories to be discriminated. Note that $\mathcal{T}\subseteq \mathbb{P}([n])$, where $\mathbb{P}([n])$ is the power set (i.e., the set of subsets) of $[n]$. In this work, we assume that all allowed trajectories in $\mathcal{T}$ are of equal size, where the size of a trajectory $T$ is defined as $\abs{T}$, the number of qubits it contains. Given an input sensor state $\ket{\psi}$, each of the trajectories $T\in\mathcal{T}$ yield a distinct output state $R^{(T)}(\theta)\ket{\psi}$ (see Figure \ref{fig:scheme}). 

For $\ket{\psi}$ to be a useful trajectory sensor, we require that the various trajectories are distinguishable via a single projective measurement. Equivalently, the all trajectory output states must be mutually orthogonal, thereby imposing the following conditions on the sensor state $\ket{\psi}$:
\begin{align}
    \bra{\psi}R^{\dagger(T)}(\theta)R^{(T')}(\theta)\ket{\psi} = \delta_{T,T'}
\end{align}
for all $T,T'\in\nmg$, where $\delta_{T,T'} = 1$ if $T = T'$ and 0 otherwise. To underscore the fact that each orthogonality condition above involves a pair of trajectories, we define the \textit{orthogonality operator}
\begin{align}
    \rbf(\theta) = R^{\dagger(T)}(\theta)R^{(T')}(\theta)
\end{align}
so that the above criteria take the simpler form 
\begin{align}\label{eq:ortho}
    \bra{\psi}\rbf(\theta)\ket{\psi} = \delta_{T,T'}
\end{align}
for all $T,T'\in\mathcal{T}$. 
\begin{figure}[htbp]
  \includegraphics[width = \linewidth]{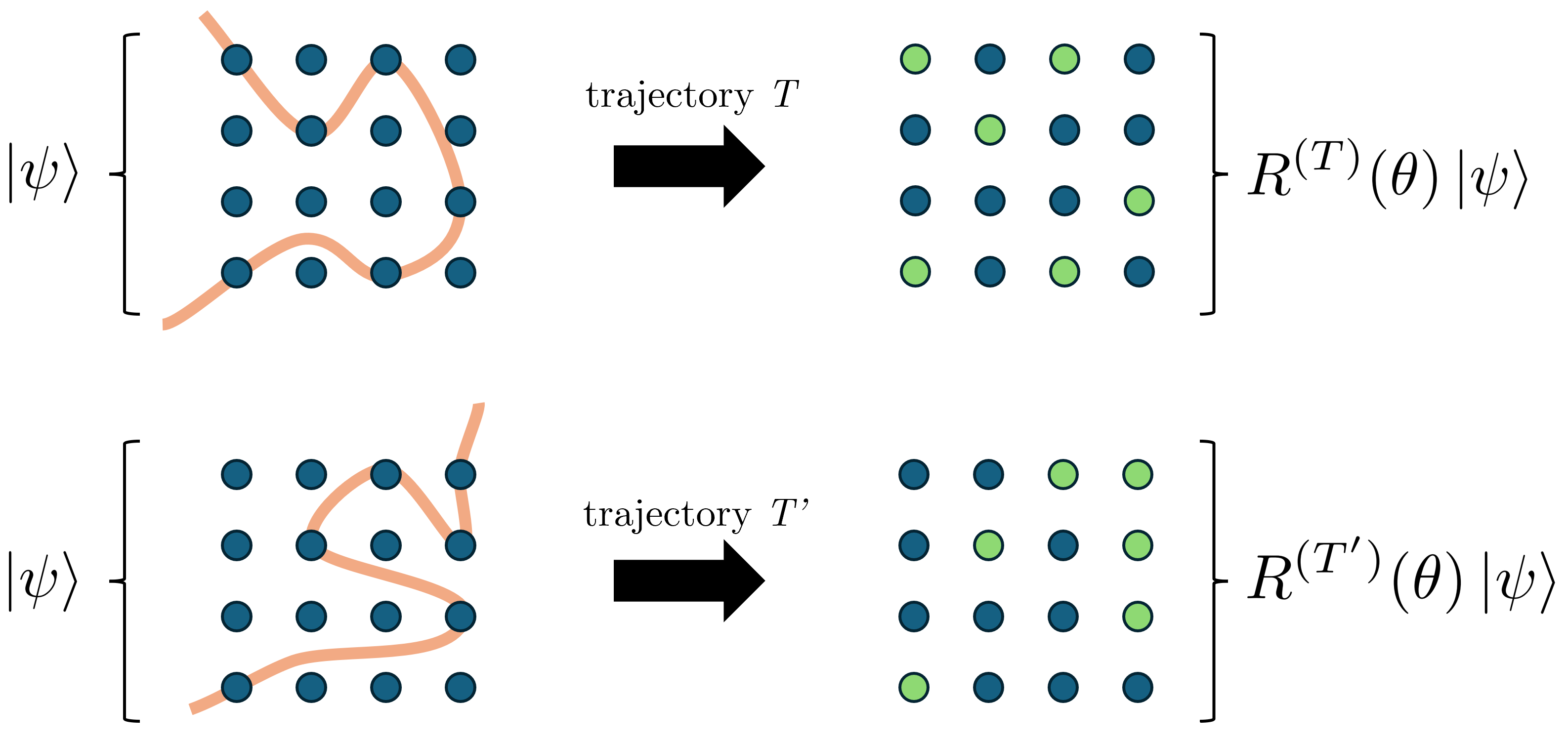}
  \caption{Two trajectories $T$ and $T'$ of the incident particle induce different perturbations on a sensor state $\ket{\psi}$. The small circles represent sensor qubits; the particle rotates each green qubit along its path by $R_Z(\theta)$. Trajectories $T$ and $T'$ are distinguishable by a single projective measurement if they produce orthogonal output states $R^{(T)}(\theta)\ket{\psi}$ and $R^{(T')}(\theta)\ket{\psi}$.} 
  \label{fig:scheme}
\end{figure}

Given $n$ and $\mathcal{T}$, any $\ket{\psi}$ satisfying Eq. (\ref{eq:ortho}) at a particular value of $\theta$ is called a \textit{TS state}. Note that a TS state satisfying these criteria at one value of $\theta$ generally need not satisfy them at other values of $\theta$. Furthermore, since $\ket{\psi}$ and $R^{(T)}(\theta)\ket{\psi}$ are not required to be orthogonal, these TS states generally do not detect the mere presence of a particle and only discriminate trajectories provided the particle has already interacted with the sensor.

For a given $n$ and $\mathcal{T}$, the \textit{TS problem} asks for which $\theta\in[0,\pi]$ there exists a TS state satisfying Eq. (\ref{eq:ortho}). In principle, a TS problem is straightforward, although computationally expensive, to solve via naive means. One can determine whether a TS state exists at a particular value of $\theta$ by substituting the general ansatz $\ket{\psi} = \sum_{j=0}^{2^n-1}a_j\ket{{j}}$ (where $a_j\in\mathbb{C}$ and the $\ket{j}$ are $Z$-eigenbasis vectors) into Eq. (\ref{eq:ortho}) and solving for $a_j$. A TS state exists if and only if there is a valid solution of $a_j$. The difficulty in this approach stems from the fact that the system usually contains an enormous number of variables and equations---there are $2^n$ complex variables and $\abs{\nmg}^2$ equations. For any reasonably large qubit array and set of allowed trajectories, this method of solving the TS problem is intractable.

Fortunately, the TS scenario described above is designed such that the symmetries involved can enable dramatic simplification of a given TS problem via group theory. In particular, symmetries arise from the assumptions that (1) every qubit in a trajectory is perturbed by the same single-qubit rotation and (2) this rotation takes place about the $Z$-axis of the Bloch sphere. These two assumptions respectively lead to groups of permutations and Pauli matrices under which a given TS problem remains invariant. Thus, in the following section, we deploy group theory to explicitly describe how these permutation and Pauli symmetries can be utilized to simplify the search for TS states. 

\begin{figure*}[t]
  \includegraphics[width = 0.75\linewidth]{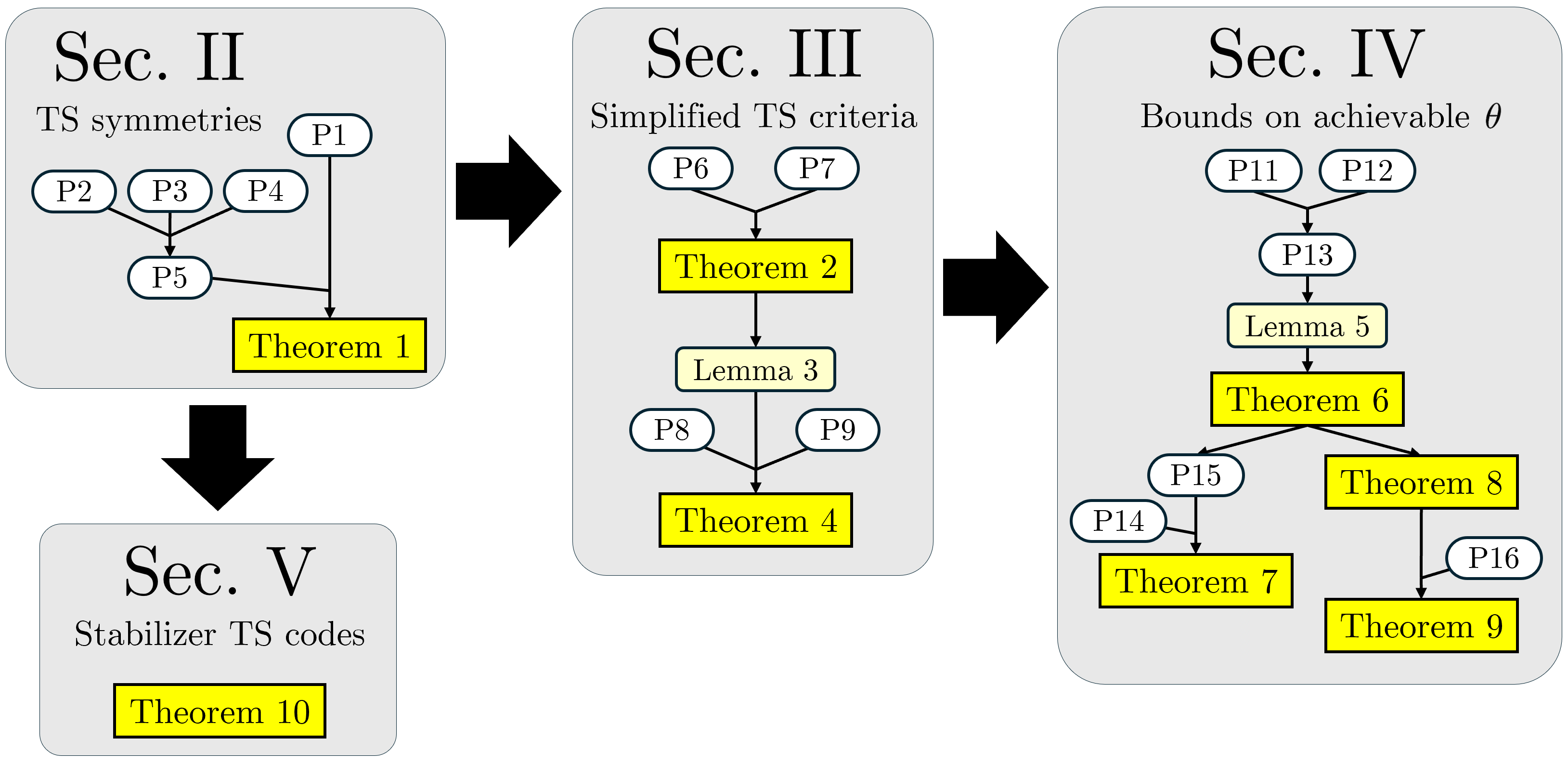}
  \caption{Logical flow of Sections \ref{sec:theory}-\ref{sec:qecc}. The small white bubbles represent propositions (e.g., P1 represents Proposition 1).} 
  \label{fig:theorem-flow}
\end{figure*}

\subsection{Permutation groups and Pauli stabilizer groups}\label{sec:symmetries}
We iteratively build a group-theoretic framework for solving the TS problem through the following steps. First, in Section \ref{sec:rbf-sym}, we formally explore how permutation and Pauli symmetry groups naturally emerge for a given TS problem from the above two assumptions. The goal is ultimately to use these symmetries to reduce the TS state criteria, that is, the system of equations in Eq. (\ref{eq:ortho}). In other words, we seek to understand which of these equations may become equivalent due to symmetry. In Section \ref{sec:index-transform}, we thus develop tools to describe how the orthogonality operators and sensor states constituting these equations transform under the action of permutation and Pauli matrices. We finally invoke these tools in Section \ref{sec:redundant} to establish a reduced set of criteria for symmetry-invariant TS states.

\subsubsection{Symmetries of the TS problem}\label{sec:rbf-sym}
Any operations which leave the criteria of Eq. (\ref{eq:ortho}) unchanged are symmetries of the corresponding TS problem. For a given sensor state $\ket{\psi}$, these criteria are determined by the set $\mathcal{R}$ of allowed orthogonality operators, where
\begin{align}\label{eq:R-inv}
    \mathcal{R} = \{\rbf(\theta)\ :\ T,T'\in\mathcal{T}\}.
\end{align}
Suppose there exists a unitary $U\in \unit$ such that $\mathcal{R}$ is invariant under conjugation by $U$; that is, for every $\rbf \in \mathcal{R}$,
\begin{align}
    U\rbf U^\dagger \in \mathcal{R}.
\end{align}
Note that conjugation by $U$ implements a bijection from $\mathcal{R}$ to itself. Hence, conjugating all of the orthogonality operators in Eq. (\ref{eq:ortho}) by $U$ leads to identical criteria and produces a physically equivalent TS scenario.

Such symmetries can enable simplification of a TS problem. Suppose $\mathcal{R}$ is invariant under conjugation by some $U\in\unit$. The following proposition then asserts that fewer criteria from Eq. (\ref{eq:ortho}) are needed to confirm a state as a TS state, assuming the state is also invariant under $U$. Specifically, the proposition shows how some criteria may become redundant due to symmetry:
\begin{proposition}\label{prop:U-simp}
    Suppose $\theta \neq 0$ and there exists $U\in\unit$ and $T_1,T_1', T_2,T_2'\in\mathcal{T}$ such that 
\begin{align}\label{eq:U-conj}
    U\mathbf{R}^{(T_1,T_1')}(\theta)U^\dagger = \mathbf{R}^{(T_2,T_2')}(\theta).
\end{align}

Then for any $\ket{\psi}$ satisfying $U\ket{\psi} = \ket{\psi}$, Eq. (\ref{eq:ortho}) holds for $T=T_1$ and $T'=T_1'$ if and only if it holds for $T=T_2$ and $T'=T_2'$.
\end{proposition}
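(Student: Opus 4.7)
My plan is to establish the biconditional by verifying separately that each side of Eq.~(\ref{eq:ortho}) takes the same value at $(T_1,T_1')$ as at $(T_2,T_2')$. Concretely, I will show (i) $\bra{\psi}\mathbf{R}^{(T_1,T_1')}(\theta)\ket{\psi} = \bra{\psi}\mathbf{R}^{(T_2,T_2')}(\theta)\ket{\psi}$, and (ii) $\delta_{T_1,T_1'} = \delta_{T_2,T_2'}$. Taken together, these two equalities force the equations $\bra{\psi}\mathbf{R}^{(T_i,T_i')}(\theta)\ket{\psi} = \delta_{T_i,T_i'}$ to hold or fail simultaneously for $i=1,2$, which is precisely the biconditional in the proposition.

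Part (i) should be immediate: the hypothesis $U\ket{\psi}=\ket{\psi}$, combined with unitarity of $U$, also yields $\bra{\psi}U=\bra{\psi}$ and $U^\dagger\ket{\psi}=\ket{\psi}$, so sandwiching Eq.~(\ref{eq:U-conj}) between $\bra{\psi}$ and $\ket{\psi}$ produces the desired equality in a single line.

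Part (ii) is where the hypothesis $\theta\neq 0$ enters and it constitutes the main obstacle. I would reduce it to the following claim: for $\theta\in(0,\pi]$, $\mathbf{R}^{(T,T')}(\theta)=\mathbbm{1}$ if and only if $T=T'$. Granting this claim, conjugation by the unitary $U$ maps the identity to itself, so Eq.~(\ref{eq:U-conj}) gives $\mathbf{R}^{(T_1,T_1')}(\theta)=\mathbbm{1}$ if and only if $\mathbf{R}^{(T_2,T_2')}(\theta)=\mathbbm{1}$; hence $T_1=T_1'$ if and only if $T_2=T_2'$, which is exactly $\delta_{T_1,T_1'}=\delta_{T_2,T_2'}$.

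To prove the claim, I would expand $\mathbf{R}^{(T,T')}(\theta) = \bigotimes_{j=1}^n R_Z(\theta_j)$ via Eq.~(\ref{eq:RS}), where $\theta_j = \theta\bigl(\mathbbm{1}_{T'}(j)-\mathbbm{1}_T(j)\bigr)\in\{-\theta,0,+\theta\}$. A tensor product of single-qubit operators equals $\mathbbm{1}$ only if each tensor factor is a scalar multiple of $\mathbbm{1}$; but $R_Z(\pm\theta)\propto\mathbbm{1}$ would require $\sin(\theta/2)=0$ and thus $\theta\in 2\pi\mathbb{Z}$, contradicting $\theta\in(0,\pi]$. So every $\theta_j$ must vanish, giving $T=T'$, while the converse is trivial. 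This factor-by-factor reduction is the only nontrivial ingredient in the argument; everything else follows by direct substitution using $U$-invariance.
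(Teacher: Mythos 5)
Your proof is correct and mirrors the paper's argument: both reduce the statement to showing $\bra{\psi}\mathbf{R}^{(T_1,T_1')}\ket{\psi}=\bra{\psi}\mathbf{R}^{(T_2,T_2')}\ket{\psi}$ by sandwiching Eq.~(\ref{eq:U-conj}) with the $U$-invariant state, and then showing $\delta_{T_1,T_1'}=\delta_{T_2,T_2'}$ via the observation that $\mathbf{R}^{(T,T')}(\theta)=I$ iff $T=T'$ when $\theta\neq 0$, together with the fact that conjugation by a unitary preserves the identity. The only difference is that you supply an explicit factor-by-factor argument for why $\mathbf{R}^{(T,T')}(\theta)=I$ forces $T=T'$, whereas the paper simply asserts this as evident; your extra detail is sound and tightens an implicit step.
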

\begin{proof}
    $U\ket{\psi}=\ket{\psi}$ implies that $U^\dagger\ket{\psi}=\ket{\psi}$. Taken with Eq. (\ref{eq:U-conj}), this fact implies that
    \begin{align}\label{eq:U-conj2}
    \bra{\psi}\mathbf{R}^{(T_1,T_1')}(\theta)\ket{\psi} = \bra{\psi}\mathbf{R}^{(T_2,T_2')}(\theta)\ket{\psi}.
\end{align}
To prove the desired result, it is sufficient to show that $\delta_{T_1,T'_1} = \delta_{T_2,T'_2}$. Note that any $\rbf$ equals the identity matrix if and only if $T=T'$ since $\theta \neq 0$. It then follows from Eq. (\ref{eq:U-conj}) that $T_2=T_2'$ if and only if $T_1=T_1'$, thereby proving the claim. 
\end{proof}

Two symmetries of $\mathcal{R}$ arise due to the assumption that every qubit in the trajectory is rotated by the same local unitary. Because of this assumption, each orthogonality operator in Eq. (\ref{eq:ortho}) can be written as $\rbf = \bigotimes_{j=1}^nU_j$, where $U_j\in\{I,R_Z,R^{\dagger}_Z\}$. It follows that $\mathcal{R} \subseteq \{I,R_Z,R^\dagger_Z\}^{\otimes n}$, that is, the set of $n$-fold tensor products of the operators $\{I,R_Z,R^{\dagger}_Z\}$. We now provide two observations. Firstly, on a global level, the repeated tensor product structure implies that $\mathcal{R}$ may be invariant under certain permutations of the qubits. Secondly, since each local set $\{I,R_Z,R^{\dagger}_Z\}$ is closed under conjugation by the Pauli matrices $\{I,X,Y,Z\}$, there may be tensor products of Paulis which also leave $\mathcal{R}$ unchanged under conjugation.

These two types of symmetry in fact have useful group structure. A permutation of the qubit indices can be described by a bijection from the set $[n]$ to itself. Define a permutation group $G$ to be a group formed by a set of these permutations. The qubits in an orthogonality operator can be permuted by any $\pi\in G$ via conjugation by the qubit permutation matrix
\begin{align} \label{eq:P-pi}
    P_\pi &=\sum_{j_1,\ldots,j_n\in \{0,1\}}\ketbra{j_{\pi^{-1}(1)}\ldots j_{\pi^{-1}(n)}}{j_1\ldots j_n},
\end{align}
where the $\ket{j_1\ldots j_n}$ are $Z$-eigenbasis states. The function $P$, which maps a permutation to its permutation matrix, is a faithful unitary representation of $G$ on $\mathcal{H}$. Since we are interested in unitaries under which $\mathcal{R}$ is invariant, it will be useful to define the subgroup $\mathcal{G}$ of $\unit$ which consists of the permutation matrices $P_\pi$ for all $\pi\in G$. Since $\mathcal{G}$ is the image of $G$ under $P$, we can write $\mathcal{G} = P(G)$; note that $G$ and $\mathcal{G}$ are isomorphic because $P$ is faithful. 

Like qubit permutations, the tensor products of Paulis can also form a group. Specifically, we use the notation $\mathcal{S}$ to denote a subgroup of the Pauli group $\mathcal{P}_n$ on $n$ qubits. $\mathcal{P}_n$ consists of all $n$-fold tensor products of $\{I,X,Y,Z\}$ with multiplicative factors $\pm 1$ and $\pm i$, that is, the set of all operators
\begin{align}\label{eq:pauli-def}
    D = e^{i\phi}\bigotimes_{j=1}^n D_j
\end{align}
where $\phi\in\{0,\frac{\pi}{2},\pi,\frac{3\pi}{2}\}$ and $D_j\in\{I,X,Y,Z\}$.
Since the Pauli matrices are unitary, any $\mathcal{S}$ is a subgroup of $\unit$ as well.

To simplify a given TS problem, we therefore seek a Pauli subgroup $\mathcal{S}$ and permutation group $G$ such that $\mathcal{R}$ is simultaneously invariant under conjugation by both $\mathcal{S}$ and $\mathcal{G}=P(G)$. For any unitary subgroup $K\leq\unit$, let $\mathcal{H}_K$ denote the simultaneous $+1$ eigenspace of all the operators $U\in K$. It subsequently follows from Proposition \ref{prop:U-simp} that if there exist nontrivial $\mathcal{S}$ and $\mathcal{G}$ under which $\mathcal{R}$ is invariant, then fewer criteria are required to determine whether a state in ${\mathcal{H}_{\mathcal{S}}\cap\mathcal{H}_{\mathcal{G}}}$ is a TS state. Note that for $\mathcal{H}_{\mathcal{S}}$ to be nontrivial, $\mathcal{S}$ must be abelian and not contain the operator $-I^{\otimes n}$ \cite{Nielsen_Chuang_2010}.

Observe that if $\mathcal{R}$ is invariant under some $\mathcal{S}$ and $\mathcal{G}$ individually, then $\mathcal{R}$ is also invariant under products of unitaries from either $\mathcal{S}$ or $\mathcal{G}$. Define $\langle\mathcal{S},\mathcal{G}\rangle$ to be the subgroup of $\unit$ whose elements can be written as a product of various $D\in\mathcal{S}$ and $P_\pi\in\mathcal{G}$:
\begin{align}
    \langle\mathcal{S},\mathcal{G}\rangle = \left\{\prod_{j=1}^kU_j\ :\ U_j\in\mathcal{S}\cup\mathcal{G},\  k\in\mathbb{N}\right\},
\end{align}
where $\mathbb{N}$ denotes the set of natural numbers. Then it is readily verified that $\mathcal{R}$ is invariant under conjugation by $\langle\mathcal{S},\mathcal{G}\rangle$ if and only if it is invariant under both $\mathcal{S}$ and $\mathcal{G}$.
Similarly, it easily follows that the space of states invariant under $\langle\mathcal{S},\mathcal{G}\rangle$ equals the space invariant under $\mathcal{S}$ and $\mathcal{G}$, that is, $\mathcal{H}_{\langle\mathcal{S},\mathcal{G}\rangle}=\mathcal{H}_\mathcal{S}\cap\mathcal{H}_\mathcal{G}$. Therefore, two separate symmetries $\mathcal{S}$ and $\mathcal{G}$ of a given TS problem can be equivalently described as a single joint symmetry $\langle\mathcal{S},\mathcal{G}\rangle$.

\begin{figure}[htbp]
\centering
\resizebox{\columnwidth}{!}{
\begin{quantikz}[row sep = 1mm, column sep = 2mm]
    &\permute{2,1,3}\gategroup[4,steps=1,style={line width = 0.1mm,inner xsep = 0mm}]{$P_{\pi_1}$}&&\gate{X}\gategroup[4,steps=1,style={line width = 0.1mm,inner xsep = 0mm}]{$D_1$}&&\permute{1,3,2}\gategroup[4,steps=1,style={line width = 0.1mm,inner xsep = 0mm}]{$P_{\pi_2}$}&&\gategroup[4,steps=1,style={line width = 0.1mm,inner xsep = 0mm}]{$D_2$}&\\
    &&&\gate{X}&&&&&\\
    &&&&&&&\gate{X}&\\
    &&&&&&&\gate{X}&
\end{quantikz}
$=$\ \begin{quantikz}[row sep = 1mm, column sep = 2mm]
    &\permute{3,1,2}\gategroup[4,steps=1,style={line width = 0.1mm,inner xsep = 0mm}]{$P_{\pi_3}$}&&\gate{X}\gategroup[4,steps=1,style={line width = 0.1mm,inner xsep = 0mm}]{$D_3$}&\\
    &&&\ghost{X}&\\
    &&&\ghost{X}&\\
    &&&\gate{X}&
\end{quantikz}
}

\caption{Consolidation of an element of $\langle\mathcal{S},\mathcal{G}\rangle$ into the product of one Pauli operator and one permutation. Let $\mathcal{G}=\langle P_{\pi_1},P_{\pi_2}\rangle$ and $\mathcal{S} = \langle D_1,D_2\rangle$. Observe that $D_2P_{\pi_2}D_1P_{\pi_1}\in \langle\mathcal{S},\mathcal{G}\rangle$ can be rewritten as $D_3P_{\pi_3}$, where $P_{\pi_3}\in \mathcal{G}$ but $D_3\notin\mathcal{S}$. However, $D_3\in\mathcal{S}_\mathcal{G}$ instead.}
\label{fig:network}
\end{figure}
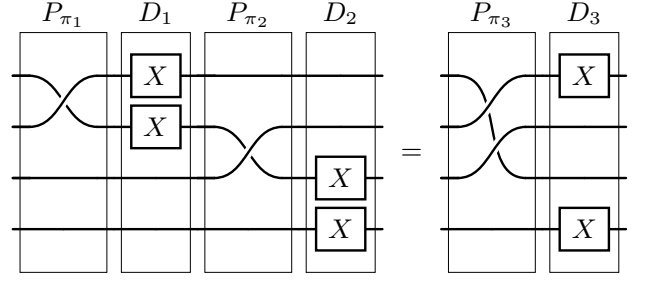

The action of any $U\in\langle\mathcal{S},\mathcal{G}\rangle$ on operators and states can be succinctly characterized by separating the resulting transformation into its local and global components. In particular, note that Pauli operators effect local transformations (due to their tensor product structure) while permutations effect global rearrangements of the underlying qubits. 
Accordingly, each $U\in\langle\mathcal{S},\mathcal{G}\rangle$ can in fact be decomposed simply as the product of only one Pauli operator $D$ and one permutation matrix $P_\pi$; Figure \ref{fig:network} illustrates how such a consolidation is possible. 

However, as seen in Figure \ref{fig:network}, the Pauli operator $D$ obtained from this decomposition need not necessarily belong to $\mathcal{S}$. Intuitively, $D$ must be the product of some Pauli operators in $\mathcal{S}$ whose qubits have been permuted by $G$, but note that these permutations potentially take the constituent Pauli operators out of $\mathcal{S}$. Thus, defining $\mathcal{S}_\mathcal{G}$ to be the subgroup of $\mathcal{P}_n$ generated by
\begin{align}\label{eq:SsubG-def}
    \mathcal{S}_\mathcal{G} =\langle P_{\pi'} D' P_{\pi'}^\dagger\ :\ D'\in \mathcal{S},P_{\pi'}\in\mathcal{G}\rangle, 
\end{align}
it follows that $D\in\mathcal{S}_\mathcal{G}$ even if $D\notin\mathcal{S}$. In general, $\mathcal{S}\subseteq\mathcal{S}_\mathcal{G}$ with equality if $\mathcal{S}$ is closed under permutations of the qubits by $G$, or equivalently, if $\mathcal{G}$ normalizes $\mathcal{S}$. Hence, whether $D$ indeed belongs to $\mathcal{S}$ depends on whether $\mathcal{G}$ normalizes $\mathcal{S}$. We subsequently define the \textit{normalizer} $\mathcal{N}(\mathcal{S})$ of a given $\mathcal{S}$ to be the set of $U\in\unit$ such that $UD'U^\dagger\in\mathcal{S}$ for all $D'\in\mathcal{S}$.

The following proposition then summarizes these results:
\begin{proposition}\label{prop:SG-decomp}
    The following are true for any $\mathcal{S}\leq \mathcal{P}_n$ and permutation matrix group $\mathcal{G}$:
    \begin{enumerate}[label=(\alph*), ref=\ref{prop:SG-decomp}\alph*]
        \item $\mathcal{G} \subseteq \mathcal{N}(\mathcal{S}_\mathcal{G})$.\label{prop:SG-decomp1}
        \item Each element of $\langle\mathcal{S},\mathcal{G}\rangle$ can be uniquely expressed as $DP_\pi$ for some $D\in\mathcal{S}_\mathcal{G}$ and $P_\pi\in\mathcal{G}$. Furthermore, $\langle\mathcal{S},\mathcal{G}\rangle = \mathcal{S}_\mathcal{G}\mathcal{G}$, where 
        \begin{align}
            \mathcal{S}_\mathcal{G}\mathcal{G}=\{DP_\pi:D\in\mathcal{S}_\mathcal{G}, P_\pi\in \mathcal{G}\}.
        \end{align}\label{prop:SG-decomp2}
        \item If $\mathcal{G}\subseteq \mathcal{N}(\mathcal{S})$, then $\mathcal{S}_\mathcal{G} = \mathcal{S}$.\label{prop:SG-decomp3}
    \end{enumerate}
\end{proposition}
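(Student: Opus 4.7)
The plan is to handle the three parts in order, since (a) supplies a commutation relation needed in (b), while (c) follows as a one-line corollary.

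For part (a), I would verify the normalizer condition only on the generators of $\mathcal{S}_\mathcal{G}$. Any generator has the form $P_{\pi'} D' P_{\pi'}^\dagger$, and conjugation by an arbitrary $P_\pi\in\mathcal{G}$ yields
\[
P_\pi (P_{\pi'} D' P_{\pi'}^\dagger) P_\pi^\dagger = (P_\pi P_{\pi'}) D' (P_\pi P_{\pi'})^\dagger,
\]
which is again a generator of $\mathcal{S}_\mathcal{G}$ because $P_\pi P_{\pi'}\in\mathcal{G}$. Since conjugation is a group homomorphism, the inclusion extends to all of $\mathcal{S}_\mathcal{G}$, giving $P_\pi\in\mathcal{N}(\mathcal{S}_\mathcal{G})$.

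For the existence portion of part (b), I would show directly that $\mathcal{S}_\mathcal{G}\mathcal{G}$ is a subgroup of $\unit$ containing both $\mathcal{S}$ and $\mathcal{G}$, and so equals $\langle\mathcal{S},\mathcal{G}\rangle$. Closure under multiplication follows from part (a) via
\[
(D_1 P_{\pi_1})(D_2 P_{\pi_2}) = \bigl[D_1 \, (P_{\pi_1} D_2 P_{\pi_1}^\dagger)\bigr]\, (P_{\pi_1} P_{\pi_2}),
\]
where the bracketed factor lies in $\mathcal{S}_\mathcal{G}$; closure under inverses is analogous. I expect the main obstacle to be uniqueness. If $D_1 P_{\pi_1} = D_2 P_{\pi_2}$, then $D_2^{-1} D_1 = P_{\pi_2} P_{\pi_1}^{-1}$, with the left side a Pauli in $\mathcal{P}_n$ and the right side a qubit-index permutation matrix, so the proof reduces to the structural fact that $\mathcal{P}_n \cap \{P_\pi : \pi\in S_n\} = \{I^{\otimes n}\}$. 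To establish this, I would argue that a Pauli whose matrix entries all lie in $\{0,1\}$ must have phase $+1$ and tensor factors drawn only from $\{I,X\}$, since any $Y$ factor contributes $\pm i$ entries and any $Z$ factor introduces $-1$'s that cannot be cancelled. Such a Pauli acts on computational basis states by a fixed bit-flip pattern, whereas $P_\pi$ merely rearranges bit positions; evaluating both on $\ket{0\ldots 0}$ forces the bit-flip mask to be empty, so both operators must equal $I^{\otimes n}$.

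Part (c) is then a direct consequence of the definition of $\mathcal{S}_\mathcal{G}$. If $\mathcal{G}\subseteq\mathcal{N}(\mathcal{S})$, then every generator $P_{\pi'} D' P_{\pi'}^\dagger$ of $\mathcal{S}_\mathcal{G}$ already lies in $\mathcal{S}$ by definition of the normalizer, so $\mathcal{S}_\mathcal{G}\subseteq\mathcal{S}$. The reverse inclusion holds because any $D'\in\mathcal{S}$ equals the generator corresponding to the identity permutation in $\mathcal{G}$.
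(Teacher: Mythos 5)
Your proof is correct, and for part (b) it takes a genuinely different route from the paper. The paper writes an arbitrary element of $\langle\mathcal{S},\mathcal{G}\rangle$ as a word in $\mathcal{S}_\mathcal{G}\cup\mathcal{G}$ and then iteratively pushes all permutation matrices to the right, using the normalizer property from (a) to commute a $P_\pi$ past a $D$ at each step; you instead verify directly that $\mathcal{S}_\mathcal{G}\mathcal{G}$ is a subgroup (closure under products and inverses follows from (a) by the same conjugation trick) containing $\mathcal{S}$ and $\mathcal{G}$, and therefore must equal the group they generate. Both arguments ride on part (a), but the subgroup-closure version avoids the induction on word length and is arguably cleaner. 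You also fill a gap the paper leaves open: the paper asserts without proof that $\mathcal{S}_\mathcal{G}\cap\mathcal{G}=\{I^{\otimes n}\}$ to get uniqueness, whereas you prove the stronger fact that $\mathcal{P}_n$ meets the set of qubit permutation matrices only in the identity, by inspecting matrix entries (ruling out $Y$ and $Z$ factors and nontrivial phases) and then comparing action on $\ket{0\ldots 0}$ (ruling out any remaining $X$ factors). Your parts (a) and (c) match the paper in substance: (a) differs only in that you check the normalizer condition on generators and extend by the homomorphism property of conjugation rather than unwinding a generic product, and (c) is identical.
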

\begin{proof}
    Part (a). Write any $D\in\mathcal{S}_\mathcal{G}$ as $D=\prod_{j=1}^kP_j D_j P_j^\dagger$ for some $P_j\in\mathcal{G}$, $D_j\in S$, and $k\in\mathbb{N}$. Then for any $P_\pi\in \mathcal{G}$, we have $P_\pi D P_\pi^\dagger = P_\pi\left(\prod_{j=1}^kP_j D_j P_j^\dagger\right) P_\pi^\dagger= \prod_{j=1}^kP_\pi \left(P_j D_j P_j^\dagger\right)P_\pi^\dagger = \prod_{j=1}^kP'_j D_j P'^\dagger_j$ where $P'_j=P_\pi P_j\in\mathcal{G}$. It follows that $P_\pi D P_\pi^\dagger\in\mathcal{S}_\mathcal{G}$, so $\mathcal{G}\subseteq\mathcal{N}(\mathcal{S}_\mathcal{G})$.

    Part (b). Since $\mathcal{S}\subseteq \mathcal{S}_\mathcal{G}$, any $U\in \langle\mathcal{S},\mathcal{G}\rangle$ can be written as a product of elements from $\mathcal{S}_\mathcal{G}$ and $\mathcal{G}$. Hence, $U$ can be written as $U=\prod_{j=1}^kU_j$ for some $U_j\in\mathcal{S}_\mathcal{G}\cup\mathcal{G}$ and $k\in\mathbb{N}$. For every $P_\pi\in\mathcal{G}$ and $D\in\mathcal{S}_\mathcal{G}$, part (b) implies that $P_\pi D=  D'P_\pi $ for some $D'\in\mathcal{S}_\mathcal{G}$. Repeatedly using this rule, the factors of $U$ can be reordered to write $U=\left(\prod_{j=1}^{l}D_{j}\right)\left(\prod_{j'=1}^{k-l}P_{j'}\right)$ for some $D_{j}\in\mathcal{S}_\mathcal{G}$, $P_{j'}\in\mathcal{G}$, and $l\leq k$. It follows that $U$ can be expressed as the product of one element of $\mathcal{S}_\mathcal{G}$ and one element of $\mathcal{G}$. The uniqueness of this expression follows from the fact that $\mathcal{S}_\mathcal{G}\cap\mathcal{G}$ contains only the identity matrix.
    
    It remains to show that $\langle\mathcal{S},\mathcal{G}\rangle= \mathcal{S}_\mathcal{G}\mathcal{G}$. The inclusion $\langle\mathcal{S},\mathcal{G}\rangle\subseteq \mathcal{S}_\mathcal{G}\mathcal{G}$ follows immediately from above. Conversely, since each element of $\mathcal{S}_\mathcal{G}$ is a product of elements from $\mathcal{S}$ and $\mathcal{G}$, we have $\langle\mathcal{S},\mathcal{G}\rangle\supseteq\mathcal{S}_\mathcal{G}\mathcal{G}$, which implies the result.

    Part (c). Assume $\mathcal{G}\subseteq \mathcal{N}(\mathcal{S})$. Then $P_\pi D P_\pi^\dagger \in \mathcal{S}$ for any $D\in\mathcal{S}$ and $P_\pi\in G$. It follows from Eq. (\ref{eq:SsubG-def}) that $\mathcal{S}_\mathcal{G}=\mathcal{S}$.
\end{proof}

Hence, any $U\in\langle\mathcal{S},\mathcal{G}\rangle$ can in general be written as the product of one Pauli operator in $\mathcal{S}_\mathcal{G}$ and one permutation matrix in $\mathcal{G}$. We refer to $\mathcal{S}_\mathcal{G}$ as a \textit{$\mathcal{G}$-invariant Pauli subgroup} since $\mathcal{S}_\mathcal{G}$ is invariant under conjugation by $\mathcal{G}$ due to Proposition \ref{prop:SG-decomp1} above. Furthermore, in the special case that $\mathcal{G}\subseteq\mathcal{N}(\mathcal{S})$, $U$ can be written simply as the product of one element of $\mathcal{S}$ and one element of $\mathcal{G}$.

Thus far, we have established the general joint symmetry group $\langle\mathcal{S},\mathcal{G}\rangle$ of $\mathcal{R}$ and suggested that this symmetry should reduce the number of criteria required to verify a TS state in $\mathcal{H}_\mathcal{S}\cap\mathcal{H}_\mathcal{G}$. However, it remains unclear exactly which criteria become redundant when this symmetry is applied. Generally speaking, we showed in Proposition \ref{prop:U-simp} that two equations from Eq. (\ref{eq:ortho}) become equivalent if:
\begin{enumerate}
    \item A symmetry operation transforms the orthogonality operator of one equation into that of the other.
    \item The sensor state of interest is invariant under the same symmetry.
\end{enumerate}
Therefore, in the next subsection, we develop tools to concretely describe how orthogonality operators and sensor states transform under a given $\langle\mathcal{S},\mathcal{G}\rangle$.

\subsubsection{Induced transformations of trajectory pairs and bit-strings}\label{sec:index-transform}

The general TS problem summarized by Eq. (\ref{eq:ortho}) involves two fundamental mathematical objects: orthogonality operators in $\unit$ and sensor states $\ket{\psi}$ in $\mathcal{H}$. Each of these operators and states is each associated with a simpler, more abstract index object. For example, each orthogonality operator $\rbf$ is indexed by an ordered pair $(T,T')$ of trajectories. The indices of all $\rbf\in\mathcal{R}$ thus constitute $\mathcal{T}^2$, the set of all pairs of allowed trajectories. Note that the map from $\mathcal{T}^2$ to $\mathcal{R}$ is not injective; for example, even if $T\neq T'$, then we still have $\mathbf{R}^{(T,T)} = \mathbf{R}^{(T',T')} = I^{\otimes n}$. On the other hand, any sensor state $\ket{\psi}$ can be written as a superposition of $Z$-eigenbasis vectors $\ket{j_1\ldots j_n}$, which are indexed by bit-strings $j_1\ldots j_n$. We use the symbol $\mathbb{Z}_2^n=\{0,1\}^n$ to denote the set of all length-$n$ bit-strings, and it is obvious that a bijection exists between $\mathbb{Z}_2^n$ and the $Z$-eigenbasis.

Many transformations of orthogonality operators and $Z$-eigenbasis states can equivalently be understood as transformations of their associated index objects. Suppose there exists a unitary $U$ which transforms each orthogonality operator and basis state into another such that for any $T_1,T_1'\subseteq[n]$ and $j_1\ldots j_n\in \mathbb{Z}_2^n$,
\begin{align}
    U\mathbf{R}^{(T_1,T_1')} U^\dagger&= \mathbf{R}^{(T_2,T_2')}\ \mathrm{and}\nonumber\\
    U\ket{j_1\ldots j_n } &= \ket{j'_1\ldots j'_n}\label{eq:transform}
\end{align}
for some $T_2,T_2'\subseteq[n]$ and $j'_1\ldots j'_n\in \mathbb{Z}_2^n$. It follows that $U$ also induces a transformation of the indices, that is, the application of $U$ maps $(T_1,T_1')\rightarrow(T_2,T_2')$ and $j_1\ldots j_n\rightarrow j'_1\ldots j'_n$. Subsequently, we may rewrite Eq. (\ref{eq:transform}) as 
\begin{align}
    U\mathbf{R}^{(T_1,T_1')} U^\dagger&= \mathbf{R}^{u(T_1,T_1')}\ \mathrm{and}\nonumber\\
    U\ket{j_1\ldots j_n } &= \ket{u(j_1\ldots j_n)},\label{eq:transform2}
\end{align}
where $u(\cdot)$ is some function of the indices satisfying $u(T_1,T_1') = (T_2,T_2')$ and $u(j_1\ldots j_n) = j'_1\ldots j'_n$.

This observation suggests a duality between groups which act naturally on the Hilbert space $\mathcal{H}$ and groups which act naturally on sets of indices. Suppose there exists a group of unitary transformations on $\mathcal{H}$ such that every $U$ in the group satisfies Eq. (\ref{eq:transform2}) for some function $u(\cdot)$ of the indices. We then expect that the various $u$ might also form a group, and we furthermore anticipate that the map from each $U$ to its corresponding $u$ may constitute a group homomorphism.

\begin{center}
\begin{table*}[t]
\begin{tabular}{|c|c||c|c||c|}
\hline
\multicolumn{2}{|c||}{\textbf{Object associated with $\mathcal{H}$}} & \multicolumn{2}{c||}{\textbf{Index object}} & \textbf{Mapping} 
\\
\hline\hline
trajectory perturbation &$R^{(T)}$ &trajectory & $T$  & \\
\hline
perturbation set & $\{R^{(T)}:T\in\mathcal{T}\}$& trajectory set &$\mathcal{T}$ & $\leftrightarrow$\\
\hline
orthogonality operator &$\mathbf{R}^{(T,T')}$ & trajectory pair &$(T,T')$ & \\
\hline
orthogonality operator set & $\mathcal{R}$ & trajectory pair set & $\mathcal{T}^2$& $\leftarrow$\\
\hline
$Z$-eigenbasis state & $\ket{j_1\ldots j_n}$ &bit-string&$j_1\ldots j_n$& \\
\hline
$Z$-eigenbasis & $\{\ket{j_1\ldots j_n} : j_1\ldots j_n\in\mathbb{Z}_2^n\}$ & bit-string set & $\mathbb{Z}_2^n$ & $\leftrightarrow$\\
\hline
permutation matrix & $P_\pi$& permutation & $\pi$ & \\
\hline
permutation matrix group & $\mathcal{G}$ & permutation group & $G$ & $\leftrightarrow$ (isomorphism $P$)\\
\hline
Pauli operator & $D$ & qubit swap & $\sigma_D$ & \\
\hline
Pauli subgroup &$\mathcal{S}$ & swap group & $S$ & $\rightarrow$ (homomorphism $\sigma$)\\
\hline
$\mathcal{G}$-invariant Pauli subgroup &$\mathcal{S}_\mathcal{G}$ & $G$-invariant swap group & $S_G$ & $\rightarrow$ (homomorphism $\sigma$)\\
\hline
group generated by $\mathcal{S},\mathcal{G}$& $\langle \mathcal{S},\mathcal{G} \rangle=\mathcal{S}_\mathcal{G}\mathcal{G}$ & semidirect product & $S_G\rtimes G$ & $\rightarrow$ (homomorphism $\Phi$)\\
\hline
\end{tabular}
\caption{Mathematical objects associated with the Hilbert space $\mathcal{H}$ and their corresponding index objects. In the ``Mapping" column, an arrow is included if the two objects in that row are both sets; a double-headed arrow ``$\leftrightarrow$" indicates a bijection, while a single-headed arrow ``$\rightarrow$" or ``$\leftarrow$" indicates a (not necessarily injective) surjection. If the two sets are groups, then the mapping is specified to be a homomorphism and/or isomorphism.}
\label{table:indices}
\end{table*}
\end{center}

Table \ref{table:indices} summarizes how each operator, state, and symmetry group relevant for trajectory sensing can be paired with a separate index object. We have already seen how orthogonality operators and basis states are respectively indexed by trajectory pairs and bit-strings. Moreover, by the above duality, we expect that each group which transforms operators and states (i.e., $\mathcal{S}$, $\mathcal{G}$, and $\langle \mathcal{S},\mathcal{G}\rangle$) should correspond to another group which instead realizes the induced transformations of the associated indices. In the remainder of this subsection, we will investigate the structure of these corresponding groups which act on indices.

In general, we are motivated to study orthogonality operators, basis states, and symmetry groups through their corresponding index objects for the following reasons. Firstly, the index objects succinctly consolidate the useful mathematical structure of their parent objects. Transformations of trajectory pairs and bit-strings are therefore generally simpler to analyze than transformations of the corresponding operators and states. Furthermore, note that a TS problem is fully defined by its inputs $n$ and $\mathcal{T}$; consequently, by studying $\mathcal{R}$ through its index set $\mathcal{T}^2$, one can use the given structure of $\mathcal{T}$ to determine which specific symmetries of $\mathcal{R}$ may be available to simplify a particular TS problem.

We can apply this principle to determine how orthogonality operators and basis states transform under permutations of the qubits. Suppose we are given a permutation group $G$, which consists of bijections $\pi$ from $[n]$ to itself. Recall $G$ is isomorphic to a permutation matrix group $\mathcal{G}=P(G)$. Intuitively, if we conjugate $\rbf$ by some permutation matrix $P_\pi\in\mathcal{G}$, then the qubits within each trajectory of the index $(T,T')$ will be permuted by the corresponding $\pi\in G$. Likewise, applying $P_\pi$ to a $Z$-eigenbasis state $\ket{j_1\ldots j_n}$ should permute the bits of the index $j_1\ldots j_n$ by $\pi$. The following proposition then formalizes this intuition. 
\begin{proposition}\label{prop:P-conj}
    Let $G$ be any permutation group on $[n]$. Then the following are true for any $\pi\in G$:
    \begin{enumerate}[label=(\alph*), ref=\ref{prop:P-conj}\alph*]
        \item For any $T,T' \subseteq [n]$,
        \begin{align}\label{eq:P-action}
            P_\pi\rbf P_\pi^\dagger = \mathbf{R}^{\pi(T,T')},
        \end{align}
        where $\pi(T,T')$ is the group action of $G$ on $[n]^2$ defined by
        \begin{align}\label{eq:pi-pair-action}
            \pi(T,T') = (\pi(T),\pi(T'))
        \end{align}
        with $\pi(T) = \{\pi(j)\ :\ j\in T\}$.
        \item For any $Z$-eigenbasis state $\ket{j_1\ldots j_n}$,
        \begin{align}
            P_\pi\ket{j_1\ldots j_n} = \ket{\pi(j_1\ldots j_n)},
        \end{align}
        where $\pi(j_1\ldots j_n)$ is the group action of $G$ on $\mathbb{Z}_2^n$ defined by
        \begin{align}\label{eq:pi-string-action}
            \pi(j_1\ldots j_n) = j_{\pi^{-1}(1)}\ldots j_{\pi^{-1}(n)}.
        \end{align}
    \end{enumerate}
\end{proposition}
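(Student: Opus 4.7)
The plan is to dispatch part (b) first, since it follows directly from the definition of $P_\pi$, and then leverage the tensor-product structure of the orthogonality operator together with part (b) to obtain part (a).

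For part (b), I would simply apply $P_\pi$ as defined in Eq. (\ref{eq:P-pi}) to $\ket{j_1\ldots j_n}$:
\begin{align*}
P_\pi\ket{j_1\ldots j_n} &= \sum_{k_1,\ldots,k_n}\ket{k_{\pi^{-1}(1)}\ldots k_{\pi^{-1}(n)}}\braket{k_1\ldots k_n}{j_1\ldots j_n} \\
&= \ket{j_{\pi^{-1}(1)}\ldots j_{\pi^{-1}(n)}},
\end{align*}
where the sum collapses by orthonormality of the $Z$-eigenbasis, yielding the formula in Eq. (\ref{eq:pi-string-action}).

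For part (a), I would first rewrite the orthogonality operator explicitly as a tensor product. Since $R_Z$ operators on the same qubit commute, Eq. (\ref{eq:RS}) gives
\begin{align*}
\rbf(\theta) = \bigotimes_{j=1}^n R_Z\bigl(\theta(\mathbbm{1}_{T'}(j) - \mathbbm{1}_T(j))\bigr).
\end{align*}
Next, I would establish the general identity that conjugation by $P_\pi$ permutes the tensor factors of a product operator, i.e., $P_\pi\bigl(\bigotimes_{j} A_j\bigr)P_\pi^\dagger = \bigotimes_{i} A_{\pi^{-1}(i)}$; this follows by letting both sides act on an arbitrary basis state $\ket{j_1\ldots j_n}$ and invoking part (b) twice (once for $P_\pi$ and once for $P_\pi^\dagger = P_{\pi^{-1}}$). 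Applying this identity to $\rbf(\theta)$ yields
\begin{align*}
P_\pi\rbf(\theta)P_\pi^\dagger = \bigotimes_{i=1}^n R_Z\bigl(\theta(\mathbbm{1}_{T'}(\pi^{-1}(i)) - \mathbbm{1}_T(\pi^{-1}(i)))\bigr).
\end{align*}
The final ingredient is the elementary set-theoretic identity $\mathbbm{1}_T(\pi^{-1}(i)) = \mathbbm{1}_{\pi(T)}(i)$, which holds because $\pi^{-1}(i)\in T \iff i\in \pi(T)$. Substituting this identity (and the analogous one for $T'$) into the tensor product recognizes the result as $\mathbf{R}^{(\pi(T),\pi(T'))}(\theta) = \mathbf{R}^{\pi(T,T')}(\theta)$.

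The only real pitfall I anticipate is keeping track of whether $\pi$ or $\pi^{-1}$ appears in each place. The definition of $P_\pi$ in Eq. (\ref{eq:P-pi}) places $\pi^{-1}$ inside the ket, so a permutation $\pi$ that sends qubit $i$ to position $\pi(i)$ produces the $\pi^{-1}$ in the tensor reordering, which then correctly becomes $\pi$ on the trajectory sets via the indicator-function identity. I would sanity-check the conventions on a small case ($n=2$, $\pi$ the transposition) to ensure the directions line up before writing out the final calculation.
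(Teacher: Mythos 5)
Your proof is correct, and it takes a slightly different (and arguably tidier) route from the paper's for part~(a). The paper first splits the orthogonality operator as $P_\pi\rbf P_\pi^\dagger = \bigl(P_\pi R^{\dagger(T)}P_\pi^\dagger\bigr)\bigl(P_\pi R^{(T')}P_\pi^\dagger\bigr)$ and invokes a dedicated appendix lemma stating $P_\pi R^{(T)}P_\pi^\dagger = R^{(\pi(T))}$, whereas you consolidate $\rbf$ into the single tensor product $\bigotimes_j R_Z\bigl(\theta(\mathbbm{1}_{T'}(j)-\mathbbm{1}_T(j))\bigr)$ using $R_Z(\alpha)R_Z(\beta)=R_Z(\alpha+\beta)$, then apply the tensor-permutation identity once. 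Both arguments ultimately rest on the same fact, namely that conjugation by $P_\pi$ relabels tensor factors by $\pi^{-1}$ (Proposition A1 in the paper's appendix), and both finish with the indicator identity $\mathbbm{1}_T(\pi^{-1}(i)) = \mathbbm{1}_{\pi(T)}(i)$. Your consolidation saves one application of the permutation lemma and makes the cancellation on $T\cap T'$ visible at the operator level, which is a nice structural bonus. One small caution: your sketch of the tensor-permutation identity --- ``letting both sides act on an arbitrary basis state and invoking part~(b) twice'' --- is a bit too quick as phrased, because applying $\bigotimes_j A_j$ to a basis state generally returns a superposition, so part~(b) cannot be applied directly to the intermediate vector. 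You would need to expand each $A_j$ in the $Z$-eigenbasis (i.e., work with the matrix elements), which is exactly what the paper's appendix proof of Proposition A1 does; this is routine but worth making explicit.
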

\begin{proof}
    Part (a). The left side of Eq. (\ref{eq:P-action}) can be rewritten as 
    \begin{align}
        P_\pi\rbf P^\dagger_\pi &= P_\pi R^{\dagger(T)}R^{(T')} P^\dagger_\pi\nonumber\\
        &=\left(P_\pi R^{\dagger(T)}P^\dagger_\pi\right) \left(P_\pi R^{(T')} P^\dagger_\pi\right).
    \end{align}
    Proposition \ref{prop:P-R-conj} from Appendix \ref{appendix:perm-matrix-conjugation} implies that conjugating $R^{(T')}$ by $P_\pi$ permutes its tensor factors so that $P_\pi R^{(T')}P^\dagger_\pi = R^{(\pi(T'))}$. Thus,
    \begin{align}
        P_\pi\rbf P^\dagger_\pi = R^{\dagger(\pi(T))}R^{(\pi(T'))} = \mathbf{R}^{(\pi(T),\pi(T'))},
    \end{align}
    from which Eq. (\ref{eq:P-action}) follows. Additionally, it is easy to verify that Eq. (\ref{eq:pi-pair-action}) satisfies the group action axioms.

    Part (b). Follows trivially from Eq. (\ref{eq:P-pi}). It is also easy to verify that Eq. (\ref{eq:pi-string-action}) satisfies the group action axioms. 
\end{proof}

It follows that the action of $\mathcal{G}$ on orthogonality operators and states corresponds to the action of $G$ on indices. This proposition supports the anticipated duality between groups which act naturally on the Hilbert space $\mathcal{H}$ (e.g., $\mathcal{G}$) and groups which act naturally on the index set $[n]$ (e.g., $G$). In the case of $\mathcal{G}$ and $G$, this correspondence takes the form of the isomorphism $P$.

On the other hand, it is not immediately obvious whether a Pauli subgroup $\mathcal{S}$ also has a corresponding group on $[n]$ which realizes the induced transformations of the indices. Note that in the tensor product decomposition of $\rbf$, each qubit in $T$ receives an $R_Z$ while each qubit in $T'$ receives an $R_Z^\dagger$. Now let $\mathcal{S}$ be any subgroup of $\mathcal{P}_n$ and consider the conjugation of $\rbf$ by some $D\in\mathcal{S}$. Since $XR_ZX^\dagger = YR_ZY^\dagger = R_Z^\dagger$, if $D$ applies an $X$ or $Y$ to any qubit contained in $T$ or $T'$, then that qubit is effectively swapped between the two trajectories. It follows that the induced group on $[n]$ must transform trajectory pairs by swapping the subset of qubits which receive either $X$ or $Y$. Similarly, we anticipate that the induced group must also act on basis states by bit-flipping those qubits which receive $X$ or $Y$ (up to a phase). Accordingly, define the map $\sigma$ which returns the subset of qubits receiving $X$ or $Y$ for any general $D\in\mathcal{P}_n$:
\begin{align}
    \sigma_D = \{j\ :\ D_j=X\ \mathrm{or}\ D_j = Y\},
\end{align}
where the $D_j$ are the tensor factors of $D$ defined in Eq. (\ref{eq:pauli-def}). Next, let $S$ be the set of $\sigma_D$ for all $D\in\mathcal{S}$; equivalently, $S=\sigma(\mathcal{S})$ is the image of $\mathcal{S}$ under $\sigma$. Note that $S\leq\mathbb{P}([n])$. It is straightforward to verify that $S$ forms a group under the composition law $\symdif$, where $\symdif$ is the set symmetric difference defined by
\begin{align}
    \varsigma\symdif\varsigma' = (\varsigma\cup\varsigma')\setminus (\varsigma\cap\varsigma')
\end{align}
for any $\varsigma,\varsigma'\in S$. Furthermore, it is readily shown that $\sigma$ is a surjective homomorphism from $\mathcal{S}$ to $S$ (see Proposition \ref{prop:sigma-homo} in Appendix \ref{appendix:pauli-action}). This $S$, which we call a \textit{qubit swap group}, is in fact the desired induced group which swaps qubits between the paired trajectories and flips bit-strings, as shown in the following proposition:

\begin{proposition}\label{prop:S-conj}
    Let $\mathcal{S}$ be any subgroup of $\mathcal{P}_n$. Then the following are true for any $D\in\mathcal{S}$:
    \begin{enumerate}[label=(\alph*), ref=\ref{prop:S-conj}\alph*]
        \item For any $T,T'\subseteq[n]$,
        \begin{align}\label{eq:pauli-action}
            D\rbf D^\dagger = \mathbf{R}^{\sigma_D(T,T')},
        \end{align}
        where $\sigma_D(T,T')$ is the group action of $S = \sigma(\mathcal{S})$ on $[n]^2$ defined by
        \begin{align}\label{eq:action-s}
        \sigma_D(T_1,T_1') = (T_2,T_2')
        \end{align}
        with
        \begin{align}
        T_2 &= (T_1\setminus \sigma_D)\cup(T_1'\cap \sigma_D)\ \mathrm{and}\nonumber\\
        T_2' &= (T_1'\setminus \sigma_D)\cup(T_1\cap \sigma_D).
        \end{align}
        \item For any $Z$-eigenbasis state $\ket{j_1\ldots j_n}$,
        \begin{align}
        D\ket{j_1\ldots j_n} = e^{i\phi}\ket{\sigma_D(j_1\ldots j_n)}
        \end{align}
        for some $\phi\in[0,2\pi)$; define $\sigma_D(j_1\ldots j_n)$ to be the group action of $S$ on $\mathbb{Z}_2^n$ given by
        \begin{align}\label{eq:bit-flip-action}
            \sigma_D(j_1\ldots j_n) = 
            j'_1\ldots j'_n,
        \end{align}
        where
        \begin{align}\label{eq:j-prime-indices}
            j'_k = \begin{cases}
                j_k\oplus 1 &k\in\sigma_D\\
                j_k &k\notin\sigma_D
            \end{cases}
        \end{align}
        for $k=1,\ldots,n$. The $\oplus$ symbol indicates addition modulo 2.
    \end{enumerate}
    
\end{proposition}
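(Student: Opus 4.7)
The plan is to reduce both parts to single-qubit computations and then verify the group-action axioms, paralleling the structure of Proposition \ref{prop:P-conj}.

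For part (a), I would first decompose $\rbf = R^{\dagger(T)}R^{(T')} = \bigotimes_{j=1}^n M_j$, where the local factor $M_j$ equals $I$ if $j\in (T\cap T')\cup\overline{T\cup T'}$, equals $R_Z^\dagger(\theta)$ if $j\in T\setminus T'$, and equals $R_Z(\theta)$ if $j\in T'\setminus T$. Writing $D=e^{i\phi}\bigotimes_j D_j$, the conjugation $D\rbf D^\dagger$ factorizes over qubits and the overall phase cancels. The identities $XR_ZX^\dagger = YR_ZY^\dagger = R_Z^\dagger$ and $IR_ZI^\dagger = ZR_ZZ^\dagger = R_Z$ then show that local conjugation exchanges $R_Z\leftrightarrow R_Z^\dagger$ precisely when $D_j\in\{X,Y\}$, i.e., when $j\in\sigma_D$, and leaves it fixed otherwise. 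A four-case check on whether $j\in T$ and whether $j\in T'$ confirms that the new local factor at position $j$ matches the factor prescribed by the pair $(T_2,T_2')$ defined in Eq. (\ref{eq:action-s}); intuitively, a qubit's membership is swapped between the two trajectories exactly when it lies in $\sigma_D$.

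Part (b) follows from the elementary single-qubit identities $I\ket{a}=\ket{a}$, $Z\ket{a}=(-1)^a\ket{a}$, $X\ket{a}=\ket{a\oplus 1}$, and $Y\ket{a}=i(-1)^a\ket{a\oplus 1}$: each $D_j\in\{I,Z\}$ preserves bit $j_k$ while each $D_j\in\{X,Y\}$ flips it, both up to a phase. Taking the tensor product and absorbing all local phases together with the overall $e^{i\phi}$ into one global phase yields the claimed form with $j'_k$ as in Eq. (\ref{eq:j-prime-indices}).

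It remains to check that the maps defined by Eqs. (\ref{eq:action-s}) and (\ref{eq:bit-flip-action}) are genuine group actions of $(S,\symdif)$. The $\emptyset$-swap is the identity in both cases by direct inspection. For compatibility with composition, since $\sigma:\mathcal{S}\to S$ is a surjective homomorphism onto $(S,\symdif)$ (Proposition \ref{prop:sigma-homo} in Appendix \ref{appendix:pauli-action}), it suffices to verify that $(\varsigma_1\symdif\varsigma_2)\cdot x = \varsigma_1\cdot(\varsigma_2\cdot x)$ for any $\varsigma_1,\varsigma_2\in S$. For the bit-string action this is immediate: bit $k$ ends up flipped precisely when it lies in an odd number of the $\varsigma_i$, i.e., in $\varsigma_1\symdif\varsigma_2$. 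For the pair action, a short Boolean case analysis on the four possibilities for $(\mathbbm{1}_{\varsigma_1}(j),\mathbbm{1}_{\varsigma_2}(j))$ confirms that a qubit's membership in $(T,T')$ is swapped iff $j\in\varsigma_1\symdif\varsigma_2$. The main obstacle will be the disciplined bookkeeping in the four-case check of part (a); once the correspondence between the local factor ($I$, $R_Z$, or $R_Z^\dagger$) and pair membership is pinned down, the rest of the argument is mechanical.
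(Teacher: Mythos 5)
Your proposal is correct and takes essentially the same route as the paper's proof in Appendix~\ref{appendix:pauli-action}: decompose locally, use $XR_ZX^\dagger=YR_ZY^\dagger=R_Z^\dagger$ and the analogous bit-flip identities so that conjugation by $D_j$ toggles $R_Z\leftrightarrow R_Z^\dagger$ precisely when $j\in\sigma_D$, then verify the group-action axioms. The only cosmetic difference is that you check compatibility of the pair action pointwise by cases on $(\mathbbm{1}_{\varsigma_1}(j),\mathbbm{1}_{\varsigma_2}(j))$, whereas the paper derives $\varsigma_1(\varsigma_2(T,T'))=(\varsigma_1\symdif\varsigma_2)(T,T')$ via set-algebraic identities (distributivity and De Morgan's laws); the two are equivalent.
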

\begin{proof}
    See Appendix \ref{appendix:pauli-action}.
\end{proof}
In contrast to $\mathcal{G}$ and $G$, the groups $\mathcal{S}$ and $S$ are not necessarily isomorphic. The potential non-injectivity of the homomorphism $\sigma$ is due to the fact that conjugation by $X$ or $Y$ has identical effect on the operators $R_Z$ and $R^\dagger_Z$. For example, conjugating any $\rbf$ by either $X^{\otimes n}$ or $Y^{\otimes n}$ produces an identical result. Consequently, there may be multiple Paulis in $\mathcal{S}$ which identically transform all $\rbf$ operators.

Evidently, applying products of permutations and Paulis from a given $\langle\mathcal{S},\mathcal{G}\rangle$ to orthogonality operators and states also induces transformations of the indices, so we again describe these induced transformations with a corresponding group on $[n]$. Recall the equality $\langle\mathcal{S},\mathcal{G}\rangle=\mathcal{S}_\mathcal{G}\mathcal{G}$ from Proposition \ref{prop:SG-decomp2}, which provides a convenient way to write each element of $\langle\mathcal{S},\mathcal{G}\rangle$ as the product of one Pauli operator and one permutation matrix. Now define $S_G=\sigma(\mathcal{S}_\mathcal{G})$ to be the image of $\mathcal{S}_\mathcal{G}$ under the map $\sigma$. Equivalently, $S_G$ can be defined directly in terms of $S = \sigma(\mathcal{S})$ and $G = P^{-1}(\mathcal{G})$ as the subgroup of $\mathbb{P}([n])$ generated by
\begin{align}\label{eq:SG-equiv-def}
    S_G = \langle \pi(\varsigma)\ :\ \varsigma\in S, \pi\in G\rangle
\end{align}
with the composition law $\symdif$, where $\pi(\varsigma)=\{\pi(j):j\in\varsigma\}$ (see Proposition \ref{prop:equiv-def} of Appendix \ref{appendix:semidirect}). Noting the existing correspondences $\mathcal{S}_\mathcal{G}\rightarrow S_G$ and $\mathcal{G}\rightarrow G$, we expect that the group on $[n]$ associated with $\langle\mathcal{S},\mathcal{G}\rangle$ should be some product of $S_G$ and $G$. In fact, the desired group is the semidirect product $S_G\rtimes G$, defined as the set of pairs $(\varsigma,\pi)$ for all $\varsigma\in S_G$ and $\pi\in G$ together with the composition law
\begin{align}\label{eq:SG-comp-law}
    (\varsigma,\pi)\cdot(\varsigma',\pi') = (\varsigma\symdif\pi(\varsigma'),\pi\pi').
\end{align}
Proposition \ref{prop:sigma-norm} asserts that $\pi(\cdot)$ is an automorphism of $S_G$ for all $\pi\in G$, which guarantees that $S_G\rtimes G$ is a well-defined semidirect product; additionally, due to this proposition, we call $S_G$ a \textit{G-invariant qubit swap group}.

Because there exist homomorphisms $\sigma$ from $\mathcal{S}_\mathcal{G}$ to $S_G$ and $P^{-1}$ from $\mathcal{G}$ to $G$, there also exists a natural homomorphism $\Phi$ from $\langle\mathcal{S},\mathcal{G}\rangle=\mathcal{S}_\mathcal{G}\mathcal{G}$ to $S_G\rtimes G$. In particular, define 
\begin{align}\label{eq:phi-def}
    \Phi(U) = (\sigma_D,\pi)
\end{align}
for any $U\in\langle\mathcal{S},\mathcal{G}\rangle$, where $D\in\mathcal{S}_\mathcal{G}$ and $\pi\in G$ are the unique elements of $\mathcal{S}_\mathcal{G}$ and $G$ satisfying $U=DP_\pi$. The uniqueness of $D$ and $P_\pi$ (guaranteed by Proposition \ref{prop:SG-decomp2}) ensures that $\Phi$ is well-defined. In Proposition \ref{prop:phi-homo} from Appendix \ref{appendix:semidirect}, we verify that $\Phi$ is indeed a surjective homomorphism. The following proposition then shows how transformations of orthogonality operators and basis states under $\langle\mathcal{S},\mathcal{G}\rangle$ correspond to transformations of trajectory pairs and bit-strings under $S_G\rtimes G$:

\begin{proposition}\label{prop:PS-conj}
    Given some permutation group $G$ on $[n]$ and $\mathcal{S}\leq\mathcal{P}_n$, let $\mathcal{G} = P(G)$ and $S=\sigma(\mathcal{S})$. Then the following are true for any $U \in \langle\mathcal{S},\mathcal{G}\rangle$:
    \begin{enumerate}[label=(\alph*), ref=\ref{prop:PS-conj}\alph*]
        \item For any $T,T'\subseteq[n]$,
        \begin{align}\label{eq:SG-action}
        U \rbf U^\dagger = \mathbf{R}^{(\varsigma,\pi)[(T,T')]},
        \end{align}
        where $(\varsigma,\pi) = \Phi(U)$ and $(\varsigma,\pi)[(T,T')]$ is the group action of $S_G\rtimes G$ on $[n]^2$ defined by
        \begin{align}\label{eq:sympauli-action}
            (\varsigma,\pi)[(T,T')] = \varsigma[\pi(T,T')].
        \end{align}
        \item For any $Z$-eigenbasis state $\ket{j_1\ldots j_n}$,
        \begin{align}\label{eq:SG-action-state}
            U\ket{j_1\ldots j_n} = e^{i\phi}\ket{(\varsigma,\pi)(j_1\ldots j_n)}
        \end{align}
        for some $\phi\in[0,2\pi)$; define $(\varsigma,\pi)(j_1\ldots j_n)$ to be the group action of $S_G\rtimes G$ on $\mathbb{Z}_2^n$ given by
        \begin{align}\label{eq:sympauli-action-state}
            (\varsigma,\pi)(j_1\ldots j_n) = \varsigma[\pi(j_1\ldots j_n)].
        \end{align}
    \end{enumerate}
\end{proposition}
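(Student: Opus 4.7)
The plan is to decompose $U$ using Proposition \ref{prop:SG-decomp2} and then apply the already-established Propositions \ref{prop:P-conj} and \ref{prop:S-conj} in sequence. By Proposition \ref{prop:SG-decomp2}, any $U\in\langle\mathcal{S},\mathcal{G}\rangle$ can be written uniquely as $U=DP_\pi$ with $D\in\mathcal{S}_\mathcal{G}$ and $P_\pi\in\mathcal{G}$; by definition of $\Phi$, this gives $\Phi(U)=(\sigma_D,\pi)=(\varsigma,\pi)$, where $\sigma_D\in S_G$ since $D\in\mathcal{S}_\mathcal{G}$ and $S_G=\sigma(\mathcal{S}_\mathcal{G})$.

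For part (a), I would compute
\begin{align}
U\rbf U^\dagger = D\left(P_\pi \rbf P_\pi^\dagger\right)D^\dagger = D\,\mathbf{R}^{\pi(T,T')}D^\dagger = \mathbf{R}^{\sigma_D[\pi(T,T')]},
\end{align}
where the second equality uses Proposition \ref{prop:P-conj}a and the third uses Proposition \ref{prop:S-conj}a. Matching indices with Eq. (\ref{eq:sympauli-action}) yields the claim. For part (b), I would argue analogously on the state level: $U\ket{j_1\ldots j_n} = D\,P_\pi\ket{j_1\ldots j_n} = D\ket{\pi(j_1\ldots j_n)} = e^{i\phi}\ket{\sigma_D[\pi(j_1\ldots j_n)]}$ by Propositions \ref{prop:P-conj}b and \ref{prop:S-conj}b, which matches Eq. (\ref{eq:sympauli-action-state}).

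It remains to verify that Eqs. (\ref{eq:sympauli-action}) and (\ref{eq:sympauli-action-state}) actually define group actions of $S_G\rtimes G$. The identity element $(\emptyset, e)$ clearly fixes every trajectory pair and bit-string, so only compatibility with the composition law (\ref{eq:SG-comp-law}) requires work. Writing out both sides, one needs
\begin{align}
\bigl(\varsigma\symdif\pi(\varsigma'),\pi\pi'\bigr)[(T,T')] = (\varsigma,\pi)\bigl[(\varsigma',\pi')[(T,T')]\bigr],
\end{align}
which unpacks to $(\varsigma\symdif\pi(\varsigma'))\bigl[\pi\pi'(T,T')\bigr] = \varsigma\bigl[\pi\bigl[\varsigma'[\pi'(T,T')]\bigr]\bigr]$. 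This reduces to the equivariance identity $\pi(\varsigma'[X]) = \pi(\varsigma')[\pi(X)]$ for $X\in[n]^2$ (and the analogous statement on $\mathbb{Z}_2^n$), together with the fact that successive $\symdif$-swaps compose by symmetric difference. Both of these are straightforward once one observes that a permutation commutes with the set operations $\setminus$, $\cap$, $\cup$ used in Eq. (\ref{eq:action-s}), and with bitwise XOR in Eq. (\ref{eq:j-prime-indices}).

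The main obstacle is this verification of the semidirect-product action axiom, since the action is genuinely twisted (the $G$-factor acts on the $S_G$-factor before the swap is applied). However, this is really just bookkeeping: one could alternatively avoid it entirely by invoking that $\Phi$ is a surjective homomorphism (already established in Proposition \ref{prop:phi-homo}), combined with the fact that conjugation-action of $\langle\mathcal{S},\mathcal{G}\rangle$ on $\mathcal{R}$ (and its action on basis states) is automatically a group action; the induced actions on the index sets then inherit the group-action structure through $\Phi$, with the formulas (\ref{eq:sympauli-action}) and (\ref{eq:sympauli-action-state}) following from the sequential application carried out above.
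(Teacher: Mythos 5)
Your proof follows the same route as the paper: decompose $U=DP_\pi$ via Proposition \ref{prop:SG-decomp2}, apply Propositions \ref{prop:P-conj} and \ref{prop:S-conj} in sequence to obtain the index formula, and verify the group-action axioms by reducing compatibility to the equivariance identity $\pi(\varsigma'[X]) = \pi(\varsigma')[\pi(X)]$ on $[n]^2$ and $\mathbb{Z}_2^n$, which the paper establishes as Eqs. (\ref{eq:semidirect-claim}) and (\ref{eq:semidirect-claim2}) by exactly the observation you make (permutations commute with $\cup,\cap,\setminus$ and with the bitwise flip). One caveat about your suggested shortcut at the end: to transfer the conjugation action of $\langle\mathcal{S},\mathcal{G}\rangle$ to an action of $S_G\rtimes G$ through the merely surjective $\Phi$, you would still need to check that elements of $\ker\Phi$ (Pauli operators with $\sigma_D=\varnothing$, i.e.\ diagonal in the $Z$-basis) act trivially on the relevant indices, so the direct verification you outline first is the cleaner path and is what the paper does.
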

\begin{proof}
    See Appendix \ref{appendix:semidirect}.
\end{proof}

The above proposition provides a useful way to view symmetries of $\mathcal{R}$ instead as symmetries of $\mathcal{T}^2$. In particular, it is easy to see that $\mathcal{R}$ is invariant under $\langle\mathcal{S},\mathcal{G}\rangle$ if $\mathcal{T}^2$ is invariant under $S_G\rtimes G$. 
Recall that if $\mathcal{R}$ is invariant under $\langle\mathcal{S},\mathcal{G}\rangle$, then the criteria for verifying TS states in $\mathcal{H}_\mathcal{S}\cap\mathcal{H}_\mathcal{G}$ may be simplified through Proposition \ref{prop:U-simp}. It then follows that such simplification is equivalently guaranteed if $\mathcal{T}^2$ is invariant under $S_G\rtimes G$.

\subsubsection{Redundant TS criteria}\label{sec:redundant}

Equipped with tools to describe transformations of orthogonality operators in terms of transformations of their indices, we are prepared to explicitly determine which TS state criteria of Eq. (\ref{eq:ortho}) become redundant when there exist $G$ and $\mathcal{S}$ such that $\mathcal{T}^2$ is invariant under the corresponding $S_G\rtimes G$. Proposition \ref{prop:U-simp} asserts that for two criteria of Eq. (\ref{eq:ortho}) to become redundant, their corresponding orthogonality operators must be equivalent up to conjugation by some symmetry operator. We now argue that the action of $S_G\rtimes G$ on trajectory pair indices conveniently determines whether two $\rbf$ operators are equivalent under $\langle\mathcal{S},\mathcal{G}\rangle$.

In particular, two orthogonality operators are equivalent up to conjugation by some element of $\langle\mathcal{S},\mathcal{G}\rangle$ if their indices belong to the same orbit of $\mathcal{T}^2$ under the action of $S_G\rtimes G$. Define the set of such orbits to be
\begin{align}
     \mathcal{T}^2/(S_G\rtimes G) = \left \{\operatorname{Orb}_{S_G\rtimes G}[(T,T')]\ :\ T,T'\in\mathcal{T} \rule{0pt}{2.4ex} \right \},
\end{align}
where
\begin{align}
    \operatorname{Orb}_{S_G\rtimes G}[(T,T')] = \left \{(\varsigma,\pi)[(T,T')]\ :\ (\varsigma,\pi)\in S_G\rtimes G \rule{0pt}{2.4ex} \right \}.
\end{align}
To understand this claim, suppose two trajectory pairs in $\mathcal{T}^2$ belong to the same orbit. Then there exists some $(\varsigma,\pi)\in S_G\rtimes G$ mapping one pair to the other. Since the homomorphism $\Phi(\cdot)$ is surjective, there exists some $U\in\langle\mathcal{S},\mathcal{G}\rangle$ such that $\Phi(U) = (\varsigma,\pi)$. It then follows from Proposition \ref{prop:PS-conj} that the orthogonality operators corresponding to these two trajectory pairs are equivalent up to conjugation by $U$.

The following lemma then explicitly specifies the reduced criteria required to verify whether a state in $\mathcal{H}_{\mathcal{S}}\cap\mathcal{H}_{\mathcal{G}}$ is a TS state:
\begin{theorem}\label{lemma:SG-simp}
    Given a permutation group $G$ on $[n]$, Pauli subgroup $\mathcal{S}\leq \mathcal{P}_n$, and trajectory set $\mathcal{T}\subseteq\mathbb{P}([n])$, let $\mathcal{G} = P(G)$ and $S = \sigma(\mathcal{S})$. Suppose $\mathcal{T}^2$ is invariant under the action of $S$ and $G$. Then, at a particular value of $\theta\in[0,\pi]$, a state $\ket{\psi}\in\mathcal{H}_\mathcal{S}\cap\mathcal{H}_\mathcal{G}$ is a TS state if and only if for every orbit $\Omega\in\mathcal{T}^2/(S_G\rtimes G)$, $\ket{\psi}$ satisfies Eq. (\ref{eq:ortho}) for at least one representative $(T,T')\in\Omega$.
\end{theorem}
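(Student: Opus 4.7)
The forward direction is immediate: if $\ket{\psi}$ is a TS state then Eq.~(\ref{eq:ortho}) holds for every pair in $\mathcal{T}^2$, and in particular at any chosen representative of each orbit.

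For the backward direction, I first need a bookkeeping step: the quotient $\mathcal{T}^2/(S_G\rtimes G)$ gives a meaningful partition of $\mathcal{T}^2$ only once $\mathcal{T}^2$ is closed under the full semidirect product, whereas the hypothesis only asserts closure under $S$ and $G$ separately. I would bridge this by recalling from Eq.~(\ref{eq:SG-equiv-def}) that $S_G$ is generated by elements of the form $\pi(\varsigma)$ with $\varsigma\in S$ and $\pi\in G$, and then checking directly that each such generator acts on $[n]^2$ as the conjugate $\pi\circ\varsigma\circ\pi^{-1}$. Closure of $\mathcal{T}^2$ under $G$ and under $S$ therefore lifts to closure under each $\pi(\varsigma)$, and hence, via the factorization of Eq.~(\ref{eq:SG-comp-law}), under the entire group $S_G\rtimes G$.

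The heart of the backward direction is a pair-by-pair transfer. Given any $(T,T')\in\mathcal{T}^2$, let $(T_1,T_1')$ be the orbit representative at which Eq.~(\ref{eq:ortho}) is assumed to hold, and let $(\varsigma,\pi)\in S_G\rtimes G$ be an element sending $(T_1,T_1')$ to $(T,T')$. Surjectivity of the homomorphism $\Phi$ from Eq.~(\ref{eq:phi-def}) lifts $(\varsigma,\pi)$ to some $U\in\langle\mathcal{S},\mathcal{G}\rangle$ with $\Phi(U)=(\varsigma,\pi)$; Proposition~\ref{prop:PS-conj}(a) then yields $U\mathbf{R}^{(T_1,T_1')}(\theta)U^\dagger=\mathbf{R}^{(T,T')}(\theta)$, while $\ket{\psi}\in\mathcal{H}_\mathcal{S}\cap\mathcal{H}_\mathcal{G}=\mathcal{H}_{\langle\mathcal{S},\mathcal{G}\rangle}$ guarantees $U\ket{\psi}=\ket{\psi}$. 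Proposition~\ref{prop:U-simp} then transfers the orthogonality relation from $(T_1,T_1')$ to $(T,T')$, completing the argument.

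I anticipate no deep obstacle: the theorem is essentially the payoff of the machinery assembled in Sections~\ref{sec:rbf-sym}--\ref{sec:index-transform}, and the proof amounts to stitching together Propositions~\ref{prop:U-simp}, \ref{prop:PS-conj}(a), and the surjectivity of $\Phi$. The one mild subtlety is that Proposition~\ref{prop:U-simp} is stated for $\theta\neq 0$; at the degenerate value $\theta=0$ every orthogonality operator collapses to $I^{\otimes n}$, but since the action of $S_G\rtimes G$ preserves diagonality of trajectory pairs (diagonal orbits stay diagonal, off-diagonal orbits stay off-diagonal), the same conclusion is recovered by direct inspection.
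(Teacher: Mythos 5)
Your proof is correct and takes essentially the same route as the paper: the closure bridge you re-derive (that invariance under $S$ and $G$ separately lifts to invariance under $S_G\rtimes G$) is precisely Proposition \ref{prop:pairs-sym}, and the orbit-by-orbit transfer invokes the same chain of Proposition \ref{prop:U-simp}, Proposition \ref{prop:PS-conj}, and the surjectivity of $\Phi$. Your handling of $\theta=0$ is slightly more explicit than the paper's, which simply declares that case trivial; your observation that the semidirect-product action preserves diagonality of trajectory pairs is the precise fact that justifies the paper's shortcut.
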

\begin{proof}
    First note by Proposition \ref{prop:pairs-sym} of Appendix \ref{appendix:pairs-sym} that $\mathcal{T}^2$ is invariant under $S$ and $G$ if and only if $\mathcal{T}^2$ is invariant under $S_G\rtimes G$. Hence, the orbits $\mathcal{T}^2/(S_G\rtimes G)$ are well-defined. Furthermore, observe that when $\theta=0$, the theorem becomes trivial since $\ket{\psi}$ is a TS state if and only if $\abs{\mathcal{T}}=1$; we thus henceforth assume that $\theta\neq 0$.
    
    ``$\implies$" direction: If $\ket{\psi}$ is a TS state, then Eq. (\ref{eq:ortho}) is satisfied for every $(T,T')\in\mathcal{T}^2$. It follows trivially that Eq. (\ref{eq:ortho}) is satisfied for at least one $(T,T')$ in each orbit. 
    
    ``$\impliedby$" direction: Choose $\Omega$ to be any orbit in $\mathcal{T}^2/(S_G\rtimes G)$, and assume Eq. (\ref{eq:ortho}) is satisfied for some $(T_1,T_1')\in \Omega$. Then for any $(T_2,T_2')\in \Omega$, there exists some $(\varsigma,\pi)\in S_G\rtimes G$ such that $ (T_2,T_2')= (\varsigma,\pi)[(T_1,T_1')]$ by the definition of an orbit.
    Due to Proposition \ref{prop:PS-conj} and the surjectivity of $\Phi(\cdot)$, there exists some $U\in\langle\mathcal{S},\mathcal{G}\rangle$ such that Eq. (\ref{eq:U-conj}) holds. Additionally, since $\ket{\psi}\in\mathcal{H}_{\mathcal{S}}\cap\mathcal{H}_{\mathcal{G}}$ and $\mathcal{H}_{\mathcal{S}}\cap\mathcal{H}_{\mathcal{G}}=\mathcal{H}_{\langle \mathcal{S},\mathcal{G}\rangle}$, we have $U\ket{\psi}=\ket{\psi}$. It then follows from Proposition \ref{prop:U-simp} that Eq. (\ref{eq:ortho}) must also be satisfied for $(T_2,T_2')$. Thus, if Eq. (\ref{eq:ortho}) holds for one trajectory pair in an orbit, then it also holds for all pairs in the orbit. Since the orbits partition $\mathcal{T}^2$, if Eq. (\ref{eq:ortho}) is satisfied for one pair per orbit, then it is satisfied for all pairs in $\mathcal{T}^2$, which implies $\ket{\psi}$ is a TS state.
    
\end{proof}

Although this theorem implies that the TS criteria may be simplified when the orthogonality operators and sensor state are invariant under Paulis and permutation matrices, two important issues remain unresolved. Firstly, given a particular TS problem, it is not obvious how to choose suitable Pauli and permutation groups which lead to meaningful simplification; the given structure of $\mathcal{T}$ must somehow be utilized to determine which specific symmetries may be useful. Secondly, even though this lemma assists in the search for specifically Pauli- and permutation-invariant TS states, we are ultimately interested in whether \textit{any} TS state exists for a particular value of $\theta$. Hence, we must consider the possibility of TS states which satisfy no symmetry property. We address both of these issues in the next section.

\section{Simplified criteria for the existence of general TS states}\label{sec:simp-criteria}

In this section, we will expand on the result of Theorem \ref{lemma:SG-simp}, deriving simplified criteria to determine whether any general TS state exists for a particular $\theta$, given $n$ and $\mathcal{T}$. Recall that in Section \ref{sec:formalism}, we presented the following naive approach to solving a TS problem: substitute the general ansatz $\ket{\psi} = \sum_{j=0}^{2^n-1}a_j\ket{{j}}$ into the system of Eq. (\ref{eq:ortho}) at a particular $\theta$ and solve for $a_j$, since a solution implies the existence of a TS state. Note that this ansatz permits the sensor state to be any state in $\mathcal{H}$ with no restrictions of symmetry. The resulting system of equations has $\dim \mathcal{H} = 2^n$ variables and $\abs{\mathcal{T}}^2$ equations, and is generally computationally intractable to solve for large $n$ or $\mathcal{T}$. 

Hypothetically, assume now for a given $\mathcal{T}$ that it is always possible to find a TS state which obeys some permutation and Pauli symmetries, provided that a TS state exists at all. Then, a TS state exists if and only if one can be found in the subspace invariant under these symmetries. As a result, fewer variables are needed to specify the TS state ansatz used in the system, and these symmetries may also allow the number of equations to be reduced via Theorem \ref{lemma:SG-simp}. Thus, if this assumption is shown to be true for a given $\mathcal{T}$, then the general TS state existence criteria of Eq. (\ref{eq:ortho}) may be substantially simplified.

In Section \ref{sec:existence}, we confirm that this assumption holds for a broad family of trajectory sets $\mathcal{T}$; namely, there exist nontrivial permutation and Pauli stabilizer groups such that one can always find a TS state invariant under both groups, given that a TS state exists at all. Subsequently, in Section \ref{sec:var}, we show that the existence of a general TS state can be determined by solving a simplified system containing fewer than $2^n$ variables and $\abs{\mathcal{T}}^2$ equations. We then show that this system can be equivalently presented as a linear programming feasibility problem.

\subsection{Existence of symmetry-invariant TS states}\label{sec:existence}

We now investigate how to choose a suitable permutation group $G$ and Pauli stabilizer group $\mathcal{S}$ which lead to meaningful simplification. For a given $\mathcal{T}$, recall that a particular $G$ and $\mathcal{S}$ lead to simplification via Theorem \ref{lemma:SG-simp} if (1) $\mathcal{T}^2$ is invariant under $S$ and $G$ and (2) the prospective sensor state is invariant under $\mathcal{S}$ and $\mathcal{G}$, where $\mathcal{G}=P(G)$ and $S = \sigma(\mathcal{S})$. We address the former requirement first, thereby seeking $G$ and $\mathcal{S}$ such that $\mathcal{T}^2$ is invariant under both $S$ and $G$.

Permutation invariance emerges naturally for a broad family of trajectory sets $\mathcal{T}$. We will call a trajectory set $\mathcal{T}$ transitive under $G$, or \textit{$G$-transitive}, if 
\begin{align}\label{eq:G-transitive}
    \mathcal{T} = \left \{\pi(T_0)\ :\ \pi\in G 
    \rule{0pt}{2.4ex}
    \right \}
\end{align}
for some \textit{generator trajectory} $T_0\subseteq[n]$ and permutation group $G$ on $[n]$. In other words, $\mathcal{T}$ is $G$-transitive if it is the orbit of some generator trajectory under $G$. Because each $\pi\in G$ is a bijection, every trajectory in such a $\mathcal{T}$ is the same size. This construction is therefore useful because it provides a succinct way to describe a set of equally-sized trajectories in terms of a single generator and a permutation group $G$. Furthermore, $\mathcal{T}^2$ is invariant under $G$ if $\mathcal{T}$ is $G$-transitive, as desired:
\begin{proposition}\label{prop:G-transitive}
    If $\mathcal{T}\subseteq\mathbb{P}([n])$ is $G$-transitive for any permutation group $G$ on $[n]$, then $\mathcal{T}^2$ is invariant under $G$.
\end{proposition}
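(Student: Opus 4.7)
The plan is a direct unfolding of the $G$-transitivity hypothesis combined with closure of $G$ under composition. Fix an arbitrary $\pi \in G$ and $(T, T') \in \mathcal{T}^2$; the goal is to show $\pi(T, T') = (\pi(T), \pi(T')) \in \mathcal{T}^2$, which by the definition of the pair action in Eq.~(\ref{eq:pi-pair-action}) reduces to verifying $\pi(T) \in \mathcal{T}$ and $\pi(T') \in \mathcal{T}$ separately.

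By Eq.~(\ref{eq:G-transitive}), the $G$-transitivity of $\mathcal{T}$ produces $\pi_1, \pi_2 \in G$ with $T = \pi_1(T_0)$ and $T' = \pi_2(T_0)$. Since the subset action $\pi(T) = \{\pi(j) : j \in T\}$ satisfies $\pi(\pi_1(T_0)) = (\pi\pi_1)(T_0)$ (this is immediate from the definition, and was already observed to satisfy the group action axioms in Proposition~\ref{prop:P-conj}), I get $\pi(T) = (\pi\pi_1)(T_0)$. Because $G$ is a group, the composition $\pi\pi_1$ lies in $G$, so appealing once more to Eq.~(\ref{eq:G-transitive}) yields $(\pi\pi_1)(T_0) \in \mathcal{T}$. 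The same argument with $\pi_2$ in place of $\pi_1$ gives $\pi(T') \in \mathcal{T}$, which combined with the previous step establishes $(\pi(T), \pi(T')) \in \mathcal{T}^2$.

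No real obstacle is expected: the proposition is essentially the tautology that the orbit of a single element under a group action is invariant under that group, promoted to the product by coordinatewise application. The only minor point worth flagging is that I am using the fact that the pair action $\pi(T, T') = (\pi(T), \pi(T'))$ is a genuine group action on $\mathbb{P}([n])^2$, which was already verified as part of Proposition~\ref{prop:P-conj}a; this is what allows the composition identity $\pi \circ \pi_1$ to pass through the action on $T_0$ without further justification.
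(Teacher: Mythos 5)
Your proof is correct and follows the same route as the paper's: write each of $T$ and $T'$ as a permutation of the generator $T_0$, use group closure to absorb the new permutation $\pi$ into each, and conclude coordinatewise membership in $\mathcal{T}$. The only differences are notational, so there is nothing further to flag.
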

\begin{proof}
    Assume $\mathcal{T}$ is $G$-transitive, and pick any $(T,T')\in\mathcal{T}^2$. Hence, we can write $T$ and $T'$ in terms of a generator trajectory $T_0\subseteq[n]$ as follows: $T=\pi(T_0)$ and $T'=\pi'(T_0)$. Then for any $\pi''\in G$, we have 
    \begin{align}
        \pi''(T,T') &= (\pi''(T),\pi''(T'))\nonumber\\
        &= (\pi''\pi(T_0), \pi''\pi'(T_0)).
    \end{align}
    Since $G$ is a group, $\pi''\pi$ and $\pi''\pi'$ are in $G$. It follows that $\pi''\pi(T_0)$ and $\pi''\pi'(T_0)$ are in $\mathcal{T}$, which implies that $\pi''(T,T')\in\mathcal{T}^2$. We conclude that $\mathcal{T}^2$ is invariant under $G$.
\end{proof}

Additionally, for \textit{any} choice of $\mathcal{T}$ and $G$, there exists a nontrivial $\mathcal{S}$ such that $\mathcal{T}^2$ is guaranteed to be invariant under $S = \sigma(\mathcal{S})$. In particular, the group $\mathcal{S}=\{I^{\otimes n},X^{\otimes n}\}$ is satisfactory:
\begin{proposition}\label{prop:S-bitflip}
    Let $G$ be any permutation group on $[n]$ and let $\mathcal{S} = \{I^{\otimes n}, X^{\otimes n}\}$. Furthermore, let $\mathcal{G} = P(G)$ and $S = \sigma(\mathcal{S})$. Then the following are true:
    \begin{enumerate}[label=(\alph*), ref=\ref{prop:S-bitflip}\alph*]
        \item $\mathcal{G}\subseteq\mathcal{N}(\mathcal{S})$. In fact, every element of $\mathcal{G}$ commutes with every element of $\mathcal{S}$.\label{prop:S-bitflip1}
        \item $S_G=S = \{\varnothing, [n]\}$.\label{prop:S-bitflip2}
        \item $\mathcal{T}^2$ is invariant under $S$ for any $\mathcal{T}\subseteq\mathbb{P}([n])$.\label{prop:S-bitflip3}
    \end{enumerate}
\end{proposition}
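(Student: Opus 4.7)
The plan is to prove each of the three parts by direct verification, unpacking the definitions of $\mathcal{N}(\mathcal{S})$, $\sigma$, $S_G$, and the group action of $S$ on $[n]^2$ given in Eq. (\ref{eq:action-s}). The key conceptual point unifying all three parts is that $X^{\otimes n}$ treats every qubit identically, so it is stable under any permutation of tensor factors and acts on a trajectory pair simply by swapping $T$ with $T'$.

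For part (a), the plan is to show that $P_\pi X^{\otimes n} = X^{\otimes n} P_\pi$ for all $\pi \in G$. I would invoke Proposition \ref{prop:P-R-conj} from the appendix (or argue directly from the tensor-factor permutation action established for $R^{(T)}$) to conclude $P_\pi X^{\otimes n} P_\pi^\dagger = X^{\otimes n}$, since permuting the tensor factors of a uniform operator leaves it invariant. Commutation with $I^{\otimes n}$ is trivial. Since both elements of $\mathcal{S}$ are mapped to themselves under conjugation by every $P_\pi \in \mathcal{G}$, the inclusion $\mathcal{G} \subseteq \mathcal{N}(\mathcal{S})$ follows.

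For part (b), the plan is first to compute $S = \sigma(\mathcal{S})$ from the definition of $\sigma_D$: $\sigma_{I^{\otimes n}} = \varnothing$ since no tensor factor is $X$ or $Y$, while $\sigma_{X^{\otimes n}} = [n]$ since every factor is $X$. Thus $S = \{\varnothing, [n]\}$. To get $S_G = S$, I would appeal to part (a) together with Proposition \ref{prop:SG-decomp3}, which gives $\mathcal{S}_\mathcal{G} = \mathcal{S}$ whenever $\mathcal{G} \subseteq \mathcal{N}(\mathcal{S})$, and hence $S_G = \sigma(\mathcal{S}_\mathcal{G}) = \sigma(\mathcal{S}) = S$. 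As an alternative, one could verify $S_G = S$ directly from Eq. (\ref{eq:SG-equiv-def}), using that any permutation $\pi$ fixes both $\varnothing$ and $[n]$ setwise.

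For part (c), the plan is to check invariance of $\mathcal{T}^2$ under each of the two elements of $S$. The action of $\varsigma = \varnothing$ is trivial and fixes every pair. For $\varsigma = [n]$, Eq. (\ref{eq:action-s}) gives $T_2 = (T \setminus [n]) \cup (T' \cap [n]) = T'$ and $T_2' = (T' \setminus [n]) \cup (T \cap [n]) = T$, so the all-qubit swap simply exchanges the two trajectories in the pair. Since $(T',T) \in \mathcal{T}^2$ whenever $(T,T') \in \mathcal{T}^2$, invariance follows for any $\mathcal{T}$. I do not anticipate a significant obstacle in any of these steps; the proposition is essentially a sequence of unpacking exercises, and the only subtlety worth flagging is the clean reuse of part (a) through Proposition \ref{prop:SG-decomp3} to avoid redoing the generator-set computation in part (b).
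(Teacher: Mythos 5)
Your proposal is correct and follows essentially the same route as the paper: for (a) the paper invokes the general tensor-factor permutation lemma (Proposition \ref{prop:permute-factors}, not \ref{prop:P-R-conj}, which is specific to $R^{(T)}$, though your parenthetical fallback argument is exactly the content of the right lemma); for (b) it likewise computes $S$ from the definition of $\sigma$ and uses (a) with Proposition \ref{prop:SG-decomp3}; and for (c) it observes that $[n]$ swaps the two trajectories so that $(T,T')\mapsto(T',T)\in\mathcal{T}^2$.
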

\begin{proof}
    Part (a). Clearly $P_\pi I^{\otimes n} P_\pi^\dagger = I^{\otimes n}$ for any $\pi\in G$. Additionally, due to Proposition \ref{prop:permute-factors} of Appendix \ref{appendix:perm-matrix-conjugation},  $P_\pi X^{\otimes n} P_\pi^\dagger = X^{\otimes n}$ for any $\pi\in G$ as well. It follows that every element of $\mathcal{G}$ commutes with every element of $\mathcal{S}$, which also implies that $\mathcal{G}\subseteq\mathcal{N}(\mathcal{S})$.

    Part (b). Because $\mathcal{G}\subseteq\mathcal{N}(\mathcal{S})$, we have $\mathcal{S}_\mathcal{G} = \mathcal{S}$ by Proposition \ref{prop:SG-decomp3}. Thus, $S_G = \sigma(\mathcal{S}_\mathcal{G}) = \sigma(\mathcal{S}) = S$. Additionally, it readily follows from the definition of $\sigma$ that $S = \{\varnothing,[n]\}$.
    
    Part (c). Pick any $(T,T')\in\mathcal{T}^2$. The non-identity element $[n]$ of $S$ swaps \textit{all} of the qubits between $T$ and $T'$ such that $[n](T,T')=(T',T)$. Clearly, $(T',T)\in\mathcal{T}^2$, so $\mathcal{T}^2$ is invariant under $S$.
\end{proof}

When $\mathcal{S}=\{I^{\otimes n},X^{\otimes n}\}$, we emphasize that the single non-identity element of $S=\{\varnothing,[n]\}$ acts on trajectory pairs by swapping the positions of each trajectory, i.e. $(T,T')\leftrightarrow(T',T)$. It is worthwhile to examine this result in greater detail. Note that $\sigma\left(X^{\otimes n}\right) = [n]$. Then, by Proposition \ref{prop:S-conj}, we expect that conjugating any $\rbf$ operator by $X^{\otimes n}$ should yield $\mathbf{R}^{\sigma\left(X^{\otimes n}\right)(T,T')}= \mathbf{R}^{[n](T,T')}=\mathbf{R}^{(T',T)}$. We can now explicitly verify that this is indeed the case:
\begin{align}
    X^{\otimes n}\rbf X^{\otimes n} &= X^{\otimes n}R^{\dagger(T)}R^{(T')} X^{\otimes n}\nonumber\\
    &= \left(X^{\otimes n}R^{\dagger(T)}X^{\otimes n}\right)\left(X^{\otimes n}R^{(T')} X^{\otimes n}\right)\nonumber\\
    &= R^{(T)}R^{\dagger(T')}\nonumber\\
    &=R^{\dagger(T')}R^{(T)}\nonumber\\ 
    &= \mathbf{R}^{(T',T)}.
\end{align}
Intuitively, conjugating an orthogonality operator by $X^{\otimes n}$ yields its adjoint, and taking the adjoint of an $\rbf$ swaps the positions of $T$ and $T'$.

Provided that $\mathcal{T}$ is $G$-transitive for some permutation group $G$, Propositions \ref{prop:G-transitive} and \ref{prop:S-bitflip3} indicate that the symmetry groups $G$ and $\mathcal{S} = \{I^{\otimes n},X^{\otimes n}\}$ can be invoked within Theorem \ref{lemma:SG-simp} to simplify the search for symmetry-invariant TS states. Recall by Theorem \ref{lemma:SG-simp} that only one orthogonality condition per orbit in the set $\mathcal{T}^2/(S_G\rtimes G)$ is necessary to verify a TS state in $\mathcal{H}_{\mathcal{S}}\cap\mathcal{H}_{\mathcal{G}}=\mathcal{H}_{\langle \mathcal{S},\mathcal{G}\rangle}$. For convenience, we subsequently define the groups $\tilde{G}=S_G\rtimes G$ and $\tilde{\mathcal{G}}=\langle \mathcal{S},\mathcal{G}\rangle$ in the special case where $S=\{\varnothing,[n]\}$ and $\mathcal{S} = \{I^{\otimes n},X^{\otimes n}\}$. It then follows for $G$-transitive $\mathcal{T}$ that the search for TS states in  $\mathcal{H}_{\tilde{\mathcal{G}}}$ requires checking only $\abs{\mathcal{T}^2/\tilde{G}}$ separate conditions.

As a side note, it is useful to express $\tilde{G}$ and $\tilde{\mathcal{G}}$ in simpler and more practical forms. In particular, $\tilde{G}$ can be recast as a direct product:
\begin{align}\label{eq:tilde-G-def}
    \tilde{G} = S\times G,\ \mathrm{where}\ S = \{\varnothing,[n]\},
\end{align}
since $S_G = S$ and also $\pi(\varsigma)=\varsigma$ for all $\varsigma\in\{\varnothing,[n]\}$ and $\pi\in G$. Similarly, $\tilde{\mathcal{G}}$ can be rewritten as
\begin{align}\label{eq:tilde-m-G-def}
    \tilde{\mathcal{G}} = \mathcal{S}\mathcal{G},\ \mathrm{where}\ \mathcal{S} = \{I^{\otimes n},X^{\otimes n}\},
\end{align}
due to Proposition \ref{prop:SG-decomp} and the fact that $\mathcal{G}\subseteq\mathcal{N}(\mathcal{S})$ by Proposition \ref{prop:S-bitflip1}.

Given that the search for TS states in $\mathcal{H}_{\tilde{\mathcal{G}}}$ can be simplified when $\mathcal{T}$ is $G$-transitive, it is insightful to characterize the states in this subspace. Intuitively, the states in $\mathcal{H}_{\tilde{\mathcal{G}}}$ are invariant under bit-flips as well as permutations of the qubits under $G$. Clearly, a state is invariant under $\tilde{\mathcal{G}}$ if and only if it is invariant under both $X^{\otimes n}$ and $\mathcal{G}$; the permutation symmetry is immediately evident. To understand the bit-flip symmetry, observe that
\begin{align}
    X^{\otimes n}\ket{j_1\ldots j_n} &= \ket{\sigma\left(X^{\otimes n}\right)(j_1\ldots j_n)}\nonumber\\
    &= \ket{[n](j_1\ldots j_n)}\label{eq:X-def}
\end{align}
for all $j_1\ldots j_n\in\mathbb{Z}_2^n$, in accordance with Proposition \ref{prop:PS-conj}. Note that the $[n](\cdot)$ operation flips all of the bits in the string $j_1\ldots j_n$ such that $[n](j_1\ldots j_n) = (j_1\oplus 1)\ldots (j_n\oplus 1)$, where $\oplus$ indicates addition modulo 2. It follows that states invariant under $X^{\otimes n}$ exhibit a bit-flip symmetry.

Although the search for TS states in $\mathcal{H}_{\tilde{\mathcal{G}}}$ may be simplified, we remain interested more generally in whether \textit{any} TS state exists in the full Hilbert space $\mathcal{H}$ at a particular value of $\theta$. The following theorem provides a solution to this question. Specifically, the theorem asserts for $G$-transitive $\mathcal{T}$ that, at a particular $\theta$, a TS state exists in $\mathcal{H}$ if and only if a TS state exists in $\mathcal{H}_{\tilde{\mathcal{G}}}$.
\begin{theorem}\label{thm:exist}
    Given a permutation group $G$ on $[n]$, suppose $\mathcal{T}\subseteq\mathbb{P}([n])$ is $G$-transitive, and let $\mathcal{G} = P(G)$. Then a TS state exists at a particular value of $\theta\in[0,\pi]$ if and only if a $\tilde{\mathcal{G}}$-invariant TS state $\ket{\psi_{\tilde{\mathcal{G}}}}\in\mathcal{H}_{\tilde{\mathcal{G}}}$ exists at the same $\theta$. 
\end{theorem}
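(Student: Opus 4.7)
The backward direction is immediate, since any $\tilde{\mathcal{G}}$-invariant TS state is in particular a TS state in $\mathcal{H}$. For the forward direction, my plan is to construct $\ket{\psi_{\tilde{\mathcal{G}}}}$ from a given TS state $\ket{\psi}$ by averaging probability distributions over $\tilde{\mathcal{G}}$. The key observation is that every orthogonality operator $\rbf(\theta)$ is diagonal in the $Z$-eigenbasis, since it is a tensor product of operators in $\{I, R_Z(\theta), R_Z^\dagger(\theta)\}$. Writing $\ket{\psi} = \sum_{x\in\mathbb{Z}_2^n} a_x\ket{x}$ and setting $p_x := |a_x|^2$, the TS condition (\ref{eq:ortho}) reduces to the linear equation $\sum_x p_x\bra{x}\rbf\ket{x} = \delta_{T,T'}$. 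Thus the TS conditions depend only on the probability distribution $p = (p_x)$, not on the phases of the amplitudes, and this is what permits the averaging argument below.

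The first step is to show that for every $V \in \tilde{\mathcal{G}}$, the state $V\ket{\psi}$ is again a TS state. By part (a) of Proposition \ref{prop:PS-conj}, $V^\dagger \rbf V = \mathbf{R}^{\Phi(V^\dagger)[(T,T')]}$, which lies in $\mathcal{R}$ thanks to the $\tilde{G}$-invariance of $\mathcal{T}^2$ (a consequence of Propositions \ref{prop:G-transitive} and \ref{prop:S-bitflip3}); replaying the argument in the proof of Proposition \ref{prop:U-simp} then yields $\bra{\psi}V^\dagger \rbf V \ket{\psi} = \delta_{T,T'}$. Crucially, because $\mathcal{S} = \{I^{\otimes n}, X^{\otimes n}\}$ has trivial scalar prefactors, part (b) of Proposition \ref{prop:PS-conj} shows that each $V \in \tilde{\mathcal{G}}$ acts on $Z$-eigenbasis states as a phase-free permutation $V\ket{x} = \ket{v(x)}$, where $v$ denotes the bit-string action of $V$ via $\Phi$. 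Hence the probability distribution of $V\ket{\psi}$ is simply $p^V_x = p_{v^{-1}(x)}$.

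Next, I would form the averaged distribution $\bar{p}_x := \frac{1}{|\tilde{\mathcal{G}}|}\sum_{V\in\tilde{\mathcal{G}}} p^V_x$. Since the TS conditions are linear in $p$, the distribution $\bar{p}$ still satisfies them; a direct reindexing argument verifies $\bar{p}_{v(x)} = \bar{p}_x$ for every $v$ arising from $\tilde{\mathcal{G}}$, so $\bar{p}$ is constant on $\tilde{G}$-orbits of $\mathbb{Z}_2^n$. Finally, set $\ket{\psi_{\tilde{\mathcal{G}}}} := \sum_x \sqrt{\bar{p}_x}\ket{x}$. This state is normalized and nonzero because $\sum_x \bar{p}_x = 1$; its $\tilde{\mathcal{G}}$-invariance follows by combining the $\tilde{G}$-invariance of $\bar{p}$ with the phase-free action of $\tilde{\mathcal{G}}$ on basis states; and its probability distribution is exactly $\bar{p}$, so it satisfies the TS conditions and is the desired $\tilde{\mathcal{G}}$-invariant TS state.

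The main subtlety---and the feature that makes the construction work---is the phase-free action of $\tilde{\mathcal{G}}$ on the $Z$-eigenbasis, which is essential for lifting the averaged distribution $\bar{p}$ back to a pure invariant state via the square-root map. This phase-freeness relies specifically on the choice $\mathcal{S} = \{I^{\otimes n}, X^{\otimes n}\}$ fixed in Eq. (\ref{eq:tilde-m-G-def}); a more general Pauli subgroup would introduce phases in $V\ket{x}$ that would in general prevent $\ket{\psi_{\tilde{\mathcal{G}}}}$ from being either invariant or a TS state.
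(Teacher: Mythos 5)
Your proof is correct and arrives at essentially the same construction as the paper: the $\tilde{\mathcal{G}}$-invariant TS state whose amplitudes are square roots of the group-averaged probability weights. The paper carries out the averaging in two explicit passes (first symmetrizing over $\mathcal{G}$, then over $X^{\otimes n}$) and verifies the TS condition by expanding eigenphase formulas directly, whereas you do the twirl in a single pass over $\tilde{\mathcal{G}}$ and isolate the structural reason the construction works — namely that every $\rbf(\theta)$ is diagonal in the $Z$-eigenbasis, so the TS conditions are linear in the probability vector $p_x = |a_x|^2$ and hence preserved under convex averaging; this is the same fact the paper uses implicitly, but stating it upfront makes the argument noticeably cleaner. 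One small care point worth making explicit if you write this up: the claim that $V\ket{\psi}$ is a TS state for every $V\in\tilde{\mathcal{G}}$ needs $\delta_{T'',T'''}=\delta_{T,T'}$ where $(T'',T''')=\Phi(V^\dagger)[(T,T')]$, which follows because the action of $\tilde{G}$ on pairs is by componentwise bijections (so it maps the diagonal of $\mathcal{T}^2$ to itself and off-diagonal to off-diagonal), rather than by invoking Proposition~\ref{prop:U-simp} which is stated only for $\theta\neq 0$.
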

\begin{proof}
    See Appendix \ref{appendix:exist}.
\end{proof}
Hence, given 
$\mathcal{T}$ is $G$-transitive, it suffices to search only the symmetry-invariant subspace $\mathcal{H}_{\tilde{\mathcal{G}}}$ to determine whether \textit{any} TS state exists in the full space $\mathcal{H}$ at a given value of $\theta$. The situation described by the above theorem is intuitively depicted in Figure \ref{fig:venn}, which illustrates how the space of TS states intersects with the invariant subspaces $\mathcal{H}_\mathcal{S}$ and $\mathcal{H}_\mathcal{G}$ when $\mathcal{S}=\{I^{\otimes n},X^{\otimes n}\}$. Theorem \ref{thm:exist} guarantees that the region marked by ``$\star$" is non-empty if and only if the set of TS states is non-empty. Subsequently, one only needs to search the yellow region $\mathcal{H}_{\tilde{\mathcal{G}}}$ to determine whether a TS state exists at all.
\begin{figure}[htbp]
  \includegraphics[width = 0.6\linewidth]{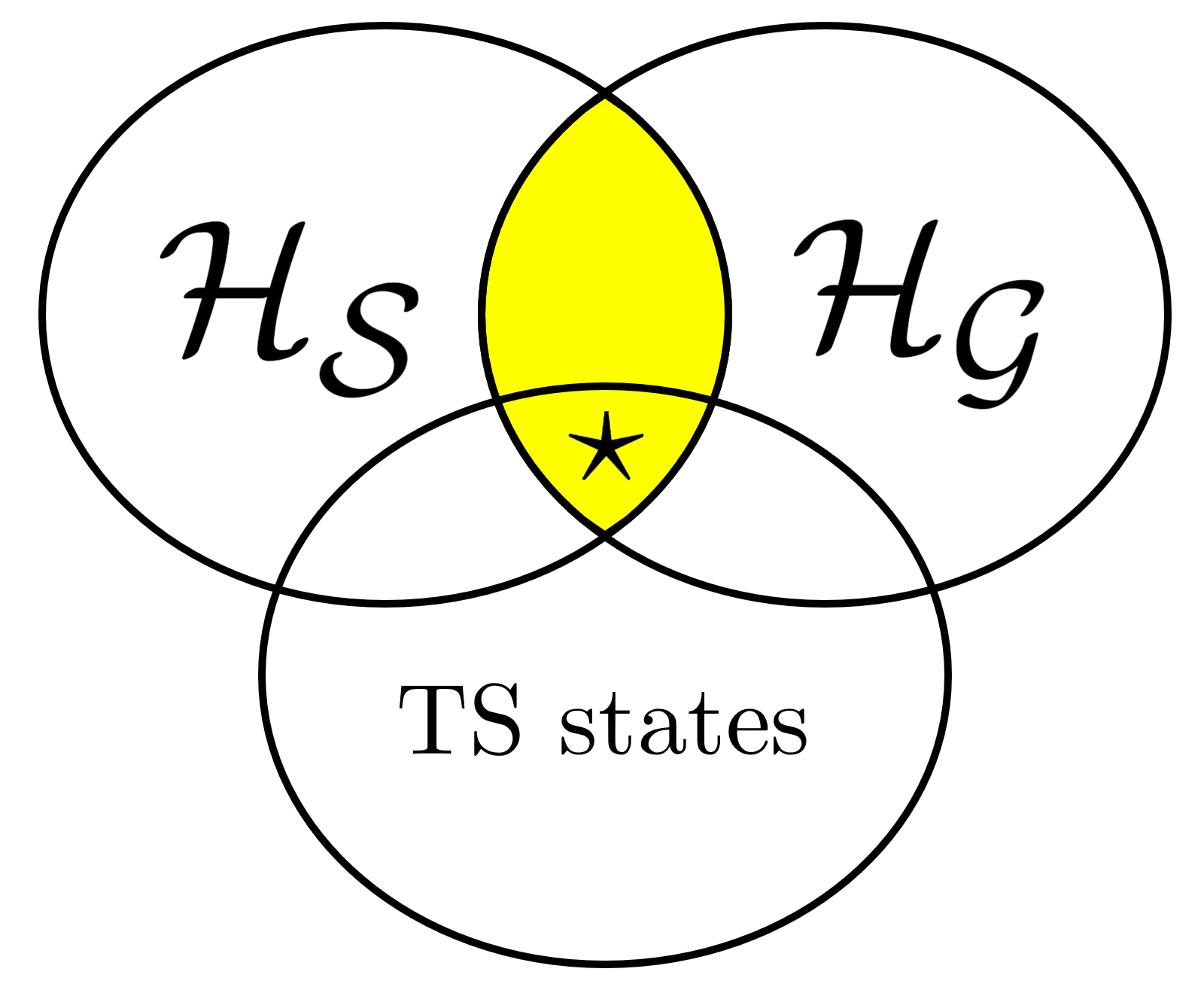}
  \caption{Venn diagram illustrating the space of TS states and the invariant subspaces $\mathcal{H}_\mathcal{S}$ and $\mathcal{H}_\mathcal{G}$ for the Pauli stabilizer $\mathcal{S}=\{I^{\otimes n},X^{\otimes n}\}$ and an arbitrary permutation matrix group $\mathcal{G}$. The yellow region depicts the joint invariant space $\mathcal{H}_{\tilde{\mathcal{G}}}$, and the region labeled with ``$\star$" is the space of $\tilde{\mathcal{G}}$-invariant TS states.} 
  \label{fig:venn}
\end{figure}

Importantly, since Theorem \ref{lemma:SG-simp} can simplify the criteria for $\tilde{\mathcal{G}}$-invariant TS states in the yellow region, Theorem \ref{thm:exist} implies that the general criteria for the existence of \textit{any} TS state can also be simplified, provided that $\mathcal{T}$ is $G$-transitive. In the next section, we formalize these reduced general criteria and demonstrate that they result in a system involving fewer equations and variables. Furthermore, we explain how this reduced system can be cast as a linear programming feasibility problem.

\subsection{Simplified TS state existence criteria as a linear programming problem}\label{sec:var}

In Section \ref{sec:existence} above, we showed for $G$-transitive  $\mathcal{T}$ that the existence of a TS state at a particular value of $\theta$ can be determined by searching only the subspace $\mathcal{H}_{\tilde{\mathcal{G}}}$; hence, we assume $\mathcal{T}$ is $G$-transitive in the remainder of this section. Additionally, in Section \ref{sec:symmetries}, we demonstrated that fewer criteria are required to determine whether a state in $\mathcal{H}_{\tilde{\mathcal{G}}}$ is a TS state. Synthesizing these results, it follows from Theorems \ref{lemma:SG-simp} and \ref{thm:exist} that a TS state exists at a particular value of $\theta$ if and only if there exists some $\ket{\psi_{\tilde{\mathcal{G}}}}\in\mathcal{H}_{\tilde{\mathcal{G}}}$ such that for every orbit $\Omega\in\mathcal{T}^2/\tilde{G}$,
\begin{align}\label{eq:G-inv-ortho}
    \bra{\psi_{\tilde{\mathcal{G}}}}\rbf\ket{\psi_{\tilde{\mathcal{G}}}}=\delta_{T,T'}
\end{align}
for at least one representative $(T,T')\in\Omega$. Thus, to determine whether a TS state exists at a particular $\theta$, we can substitute a $\tilde{\mathcal{G}}$-invariant ansatz state into Eq. (\ref{eq:G-inv-ortho}) and solve the resulting system of equations. 

In general, this reduced system involves fewer equations and variables than the naive system obtained by substituting a general ansatz directly into Eq. (\ref{eq:ortho}). In particular, Theorem \ref{lemma:SG-simp} removes those equations which become redundant due to symmetry, while Theorem \ref{thm:exist} decreases the number of variables by restricting the sensor state ansatz to a subspace of $\mathcal{H}$. Requiring fewer equations and variables, this reduced system is generally much easier to solve.

However, one practical issue arises when attempting to substitute a $\tilde{\mathcal{G}}$-invariant ansatz state $\ket{\psi_{\tilde{\mathcal{G}}}}$ into Eq. (\ref{eq:G-inv-ortho}). Note that Eq. (\ref{eq:G-inv-ortho}) requires taking the inner product of $\ket{\psi_{\tilde{\mathcal{G}}}}$ and $\rbf\ket{\psi_{\tilde{\mathcal{G}}}}$; however, while the former state resides in $\mathcal{H}_{\tilde{\mathcal{G}}}$, the latter generally does not. This discrepancy leads to computational inefficiency because both states must first be expanded in a basis for the full space $\mathcal{H}$ before their product can be evaluated. 

To circumvent this issue, we replace the operator $\rbf$ in Eq. (\ref{eq:G-inv-ortho}) with another equivalent operator which does not take states out of $\mathcal{H}_{\tilde{\mathcal{G}}}$. To construct this operator, we introduce the  projector $\Pi_{\tilde{\mathcal{G}}}$ onto the space $\mathcal{H}_{\tilde{\mathcal{G}}}$ given by
\begin{align}
    \Pi_{\tilde{\mathcal{G}}} &= \frac{1}{\abs{\tilde{\mathcal{G}}}}\sum_{U\in\tilde{\mathcal{G}}} U \nonumber\\
    &=\frac{1}{2\abs{G}}\left(I^{\otimes n}+X^{\otimes n}\right)\sum_{\pi\in G}P_\pi,\label{eq:HG-proj}
\end{align}
where the second equality follows from the fact that $\tilde{\mathcal{G}}=\mathcal{S}\mathcal{G}$. 
Then, given an ordered pair of trajectories $(T,T')\in\nmg^2$, define 
\begin{align}
    \mathbf{R}^{(T,T')}_{\tilde{\mathcal{G}}}(\theta) = \Pi_{\tilde{\mathcal{G}}} \rbf(\theta)\Pi_{\tilde{\mathcal{G}}},
\end{align}
which we refer to as a \textit{$\tilde{\mathcal{G}}$-invariant orthogonality operator}. Evidently, if the sensor state $\ket{\psi_{\tilde{\mathcal{G}}}}$ is $\tilde{\mathcal{G}}$-invariant, the $\rbf$ operator in Eq. (\ref{eq:G-inv-ortho}) can be replaced with $\rbf_{\tilde{\mathcal{G}}}$ since $\bra{\psi_{\tilde{\mathcal{G}}}} \mathbf{R}^{(T,T')}_{\tilde{\mathcal{G}}} \ket{\psi_{\tilde{\mathcal{G}}}} =\bra{\psi_{\tilde{\mathcal{G}}}} \mathbf{R}^{(T,T')} \ket{\psi_{\tilde{\mathcal{G}}}}$. Furthermore, it is easy to see that $\mathbf{R}^{(T,T')}_{\tilde{\mathcal{G}}}$ leaves states within the invariant subspace $\mathcal{H}_{\tilde{\mathcal{G}}}$, as desired.

These $\tilde{\mathcal{G}}$-invariant orthogonality operators have the nice property that two such operators are equal if their trajectory pair indices belong to the same orbit under $\tilde{G}$. To understand this claim, suppose $(T_2,T_2') = (\varsigma,\pi)[(T_1,T_1')]$ for some $(\varsigma,\pi)\in\tilde{G}$ and $T_1,T_1',T_2,T_2'\in\mathcal{T}$. Then by Proposition \ref{prop:PS-conj} and the surjectivity of the map $\Phi:\tilde{\mathcal{G}}\rightarrow\tilde{G}$, there exists $U\in\tilde{\mathcal{G}}$ such that $\mathbf{R}^{(T_2,T_2')}=U \mathbf{R}^{(T_1,T_1')} U^\dagger$. Hence,
\begin{align}
    \mathbf{R}^{(T_2,T_2')}_{\tilde{\mathcal{G}}}&= \Pi_{\tilde{\mathcal{G}}}\mathbf{R}^{(T_2,T_2')}\Pi_{\tilde{\mathcal{G}}}\nonumber\\
    &= \Pi_{\tilde{\mathcal{G}}}U \mathbf{R}^{(T_1,T_1')} U^\dagger\Pi_{\tilde{\mathcal{G}}}\nonumber\\
    &= \Pi_{\tilde{\mathcal{G}}}\mathbf{R}^{(T_1,T_1')} \Pi_{\tilde{\mathcal{G}}}\nonumber\\
    &=\mathbf{R}^{(T_1,T_1')}_{\tilde{\mathcal{G}}},
\end{align}
as desired.

Consequently, it is most useful to index $\tilde{\mathcal{G}}$-invariant orthogonality operators with orbits from the set $\mathcal{T}^2/\tilde{G}$ rather than with individual trajectory pairs. Defining $M_{\tilde{G}} = \abs{\mathcal{T}^2/\tilde{G}}$ to be the number of such orbits, it is thus helpful to assign a unique integer $\mu=0,1,\ldots M_{\tilde{G}}-1$ to each $\Omega\in\mathcal{T}^2/\tilde{G}$ such that $\Omega_\mu$ represents the orbit associated with the integer $\mu$. By convention, we choose $\Omega_0=\operatorname{Orb}_{\tilde{G}}[(T,T)]$, where $T$ can be chosen to be any trajectory in $\mathcal{T}$. Then for all $\mu$, define $\mathbf{R}_{\tilde{\mathcal{G}}}^{(\mu)}$ to be the operator equal to $\mathbf{R}_{\tilde{\mathcal{G}}}^{(T,T')}$ for any equivalent choice of $(T,T')\in\Omega_\mu$. The following lemma then formally restates the reduced criteria of Eq. (\ref{eq:G-inv-ortho}), but where the $\rbf$ operators have been replaced with these $\tilde{\mathcal{G}}$-invariant orthogonality operators:
\begin{lemma}\label{thm:red-criteria}
     Given a permutation group $G$ on $[n]$, suppose $\mathcal{T}\subseteq\mathbb{P}([n])$ is $G$-transitive, and let $\mathcal{G} = P(G)$. For a given $\theta\in[0,\pi]$, a TS state exists if and only if there exists a state $\ket{\psi_{\tilde{\mathcal{G}}}}\in\mathcal{H}_{\tilde{\mathcal{G}}}$ such that 
     \begin{align}\label{eq:red-criteria}
         \bra{\psi_{\tilde{\mathcal{G}}}} \mathbf{R}^{(\mu)}_{\tilde{\mathcal{G}}} \ket{\psi_{\tilde{\mathcal{G}}}} = \delta_{\mu,0}
     \end{align}
     for all $\mu=0,\ldots,M_{\tilde{G}}-1$, where $\delta_{\mu,0}$ equals $1$ if $\mu=0$ and zero otherwise. Furthermore, any $\ket{\psi_{\tilde{\mathcal{G}}}}\in\mathcal{H}_{\tilde{\mathcal{G}}}$ satisfying Eq. (\ref{eq:red-criteria}) is a TS state. 
\end{lemma}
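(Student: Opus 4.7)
The plan is to chain together Theorems \ref{thm:exist} and \ref{lemma:SG-simp} and then convert the resulting orthogonality conditions into the $\tilde{\mathcal{G}}$-invariant form stated in Eq. (\ref{eq:red-criteria}). First, I would observe that the hypotheses of Theorem \ref{thm:exist} are met, so a TS state exists at the given $\theta$ if and only if a $\tilde{\mathcal{G}}$-invariant TS state $\ket{\psi_{\tilde{\mathcal{G}}}}\in\mathcal{H}_{\tilde{\mathcal{G}}}$ exists. Next, with $\mathcal{S}=\{I^{\otimes n},X^{\otimes n}\}$, Propositions \ref{prop:G-transitive} and \ref{prop:S-bitflip3} show that $\mathcal{T}^2$ is invariant under both $G$ and $S=\sigma(\mathcal{S})$, so Theorem \ref{lemma:SG-simp} applies: such a $\ket{\psi_{\tilde{\mathcal{G}}}}$ is a TS state if and only if it satisfies Eq. (\ref{eq:ortho}) for one representative $(T,T')$ in each orbit $\Omega_\mu\in\mathcal{T}^2/\tilde{G}$.

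The next step is to rewrite these representative-wise conditions using $\mathbf{R}^{(\mu)}_{\tilde{\mathcal{G}}}$ instead of $\mathbf{R}^{(T,T')}$. Since $\ket{\psi_{\tilde{\mathcal{G}}}}\in\mathcal{H}_{\tilde{\mathcal{G}}}$ is fixed by $\Pi_{\tilde{\mathcal{G}}}$, we have the identity
\begin{align}
    \bra{\psi_{\tilde{\mathcal{G}}}}\mathbf{R}^{(T,T')}\ket{\psi_{\tilde{\mathcal{G}}}}=\bra{\psi_{\tilde{\mathcal{G}}}}\Pi_{\tilde{\mathcal{G}}}\mathbf{R}^{(T,T')}\Pi_{\tilde{\mathcal{G}}}\ket{\psi_{\tilde{\mathcal{G}}}}=\bra{\psi_{\tilde{\mathcal{G}}}}\mathbf{R}^{(\mu)}_{\tilde{\mathcal{G}}}\ket{\psi_{\tilde{\mathcal{G}}}}
\end{align}
for any $(T,T')\in\Omega_\mu$, using the representative-independence of $\mathbf{R}^{(\mu)}_{\tilde{\mathcal{G}}}$ established in the paragraph preceding the lemma. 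It remains to argue that $\delta_{T,T'}=\delta_{\mu,0}$, i.e., that $\Omega_0$ consists of exactly the diagonal pairs. Since $\mathcal{T}$ is $G$-transitive, any two diagonal pairs $(T,T)$ and $(T',T')$ are related by some $\pi\in G$ acting as $(I,\pi)$ in $\tilde{G}=S\times G$, so $\Omega_0=\{(T,T):T\in\mathcal{T}\}$; conversely, no non-diagonal pair lies in $\Omega_0$ because the $\tilde{G}$-action preserves the property $T=T'$ (both $\pi$ and $[n]$ act injectively on the pair). Hence the condition $\bra{\psi_{\tilde{\mathcal{G}}}}\mathbf{R}^{(T,T')}\ket{\psi_{\tilde{\mathcal{G}}}}=\delta_{T,T'}$ for a representative of $\Omega_\mu$ becomes precisely Eq. (\ref{eq:red-criteria}).

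Putting the pieces together yields both directions: if a TS state exists, Theorem \ref{thm:exist} produces a $\ket{\psi_{\tilde{\mathcal{G}}}}\in\mathcal{H}_{\tilde{\mathcal{G}}}$ for which Eq. (\ref{eq:ortho}) holds on every pair and hence Eq. (\ref{eq:red-criteria}) holds for every $\mu$; conversely, if some $\ket{\psi_{\tilde{\mathcal{G}}}}\in\mathcal{H}_{\tilde{\mathcal{G}}}$ satisfies Eq. (\ref{eq:red-criteria}), then by the identity above it satisfies Eq. (\ref{eq:ortho}) on one representative per orbit, and Theorem \ref{lemma:SG-simp} promotes this to the full TS condition, so $\ket{\psi_{\tilde{\mathcal{G}}}}$ itself is a TS state, which also gives the ``furthermore'' clause.

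The only non-automatic step is the orbit identification $\Omega_0=\{(T,T):T\in\mathcal{T}\}$, which is what makes the Kronecker delta on the right-hand side clean; everything else is an invocation of the two prior theorems combined with the projector identity $\Pi_{\tilde{\mathcal{G}}}\ket{\psi_{\tilde{\mathcal{G}}}}=\ket{\psi_{\tilde{\mathcal{G}}}}$. I do not anticipate a substantive obstacle, but care is needed to verify the diagonal-pair claim using $G$-transitivity (for one inclusion) together with the fact that the bit-flip element $[n]\in S$ only swaps coordinates of a pair (for the other).
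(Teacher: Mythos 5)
Your proposal is correct and follows essentially the same route as the paper's proof: invoke Theorem \ref{thm:exist} for the reduction to $\mathcal{H}_{\tilde{\mathcal{G}}}$, invoke Theorem \ref{lemma:SG-simp} for the one-representative-per-orbit criterion, and use the projector identity to replace $\mathbf{R}^{(T,T')}$ with $\mathbf{R}^{(\mu)}_{\tilde{\mathcal{G}}}$. You are somewhat more explicit than the paper about why $\Omega_0$ consists precisely of the diagonal pairs (the paper dismisses this as ``by convention''), but your justification via $G$-transitivity for one inclusion and preservation of the diagonal under the $\tilde{G}$-action for the other is exactly the right bookkeeping.
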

\begin{proof}
    We first make the following observation: for any $\mu=0,\ldots, M_{\tilde{G}}-1$ and $(T,T')\in\Omega_\mu$, 
    \begin{align}
        \bra{\psi_{\tilde{\mathcal{G}}}} \mathbf{R}^{(\mu)}_{\tilde{\mathcal{G}}} \ket{\psi_{\tilde{\mathcal{G}}}} &=\bra{\psi_{\tilde{\mathcal{G}}}} \mathbf{R}^{(T,T')}_{\tilde{\mathcal{G}}} \ket{\psi_{\tilde{\mathcal{G}}}}\nonumber\\
        &=\bra{\psi_{\tilde{\mathcal{G}}}} \Pi_{\tilde{\mathcal{G}}}\mathbf{R}^{(T,T')}\Pi_{\tilde{\mathcal{G}}} \ket{\psi_{\tilde{\mathcal{G}}}}\nonumber\\
        &=\bra{\psi_{\tilde{\mathcal{G}}}} \mathbf{R}^{(T,T')}\ket{\psi_{\tilde{\mathcal{G}}}}
    \end{align}
    and $\delta_{\mu,0}=\delta_{T,T'}$ by convention.
    
    ``$\implies$" direction: If a TS state exists, then Theorem \ref{thm:exist} guarantees that there exists a $\tilde{\mathcal{G}}$-invariant TS state $\ket{\psi_{\tilde{\mathcal{G}}}}\in \mathcal{H}_{\tilde{\mathcal{G}}}$. For all $\mu=0,\ldots,M_{\tilde{G}}-1$, Theorem \ref{lemma:SG-simp} then implies that $\ket{\psi_{\tilde{\mathcal{G}}}}$ satisfies Eq. (\ref{eq:ortho}) for at least one $(T,T')\in\Omega_\mu$. The result then follows from the above observation.

    ``$\impliedby$" direction: If some $\ket{\psi_{\tilde{\mathcal{G}}}}\in \mathcal{H}_{\tilde{\mathcal{G}}}$ satisfies Eq. (\ref{eq:red-criteria}) for all $\mu=0,\ldots,M_{\tilde{G}}-1$, then the above observation implies that Eq. (\ref{eq:ortho}) holds for at least one trajectory pair per orbit. Consequently, by Theorem \ref{lemma:SG-simp}, $\ket{\psi_{\tilde{\mathcal{G}}}}$ is a TS state.   
\end{proof}

Using this Lemma, we now derive the explicit reduced system obtained by substituting a $\tilde{\mathcal{G}}$-invariant ansatz into Eq. (\ref{eq:red-criteria}). Concretely, this ansatz will be a superposition of basis vectors of $\mathcal{H}_{\tilde{\mathcal{G}}}$; therefore, it is necessary to find a basis for $\mathcal{H}_{\tilde{\mathcal{G}}}$.

We will determine the basis vectors of $\mathcal{H}_{\tilde{\mathcal{G}}}$ by projecting the $Z$-eigenbasis vectors onto $\mathcal{H}_{\tilde{\mathcal{G}}}$. Note that the projector $\Pi_{\tilde{\mathcal{G}}}$ maps any $Z$-eigenbasis vector $\ket{j_1\ldots j_n}$ to the equal superposition of $U\ket{j_1\ldots j_n}$ for all $U\in\tilde{\mathcal{G}}$. Due to Eq. (\ref{eq:SG-action-state}) of Proposition \ref{prop:PS-conj}, the various $U\ket{j_1\ldots j_n}$ must be $Z$-eigenbasis vectors as well, up to a phase. In fact, this phase is trivial for all $U\in\tilde{\mathcal{G}}$ such that  
\begin{align}\label{eq:tilde-mathcal-G-action}
    U\ket{j_1\ldots j_n} = \ket{(\varsigma,\pi)(j_1\ldots j_n)},
\end{align}
where $(\varsigma,\pi)=\Phi(U)$ is an element of $\tilde{G}$. To prove Eq. (\ref{eq:tilde-mathcal-G-action}), note that any $U\in\tilde{\mathcal{G}}$ can be written as $DP_\pi$ for some unique $D\in\{I^{\otimes n},X^{\otimes n}\}$ and $\pi\in G$ due to Proposition \ref{prop:SG-decomp2}. By Proposition \ref{prop:P-conj}, $U\ket{j_1\ldots j_n} =D\ket{\pi(j_1\ldots j_n)}$. Eq. (\ref{eq:X-def}) then implies the desired result.

It follows that projection of some $\ket{j_1 \ldots j_n}$ onto $\mathcal{H}_{\tilde{\mathcal{G}}}$ can be expressed as an equal superposition of $Z$-eigenbasis vectors whose indices belong to the orbit of $j_1\ldots j_n$ under the action of $\tilde{G}$. Thus, let $\mathbb{Z}_2^n/\tilde{G}$ be the set of orbits of bit-strings under the action of $\tilde{G}$: more concretely,
\begin{align}
    \mathbb{Z}_2^n/\tilde{G} = \left \{\operatorname{Orb}_{\tilde{G}}[j_1\ldots j_n]\ :\ j_1\ldots j_n\in\mathbb{Z}_2^n \rule{0pt}{2.4ex} \right \},
\end{align}
where 
\begin{align}
    \operatorname{Orb}_{\tilde{G}}[j_1\ldots j_n] = \left \{(g,\pi)[j_1\ldots j_n]\ :\ (g,\pi)\in\tilde{G} \rule{0pt}{2.4ex} \right \}.
\end{align}
Denote the number of orbits in $\mathbb{Z}_2^n/\tilde{G}$ with $N_{\tilde{G}}$, that is, $N_{\tilde{G}} = \abs{\mathbb{Z}_2^n/\tilde{G}}$. It is again useful to assign to each orbit $\omega\in\mathbb{Z}_2^n/\tilde{G}$ a unique integer $\nu=0,1,\ldots,N_{\tilde{G}}-1$ such that $\omega_\nu$ represents the orbit associated with integer $\nu$. The following proposition then provides a basis for $\mathcal{H}_{\tilde{\mathcal{G}}}$ in terms of these orbits.
\begin{proposition}\label{prop:HG-basis}
    Given a permutation group $G$ on $[n]$, let $\mathcal{G} = P(G)$. An unnormalized orthogonal basis for $\mathcal{H}_{\tilde{\mathcal{G}}}$ is given by
    \begin{align}\label{eq:G-basis-vecs}
        \left\{\ket{\overline{\nu}}=\sum_{j_1\ldots j_n\in {\omega_\nu}}\ket{{j_1\ldots j_n}}\ :\ \nu=0,\ldots,N_{\tilde{G}}-1\right\}
    \end{align}
    where the $\omega_\nu\in \mathbb{Z}_2^n/\tilde{G}$ are sets of bit-strings.
\end{proposition}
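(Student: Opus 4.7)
The plan is to establish three properties of the proposed vectors: (i) each $\ket{\overline{\nu}}$ lies in $\mathcal{H}_{\tilde{\mathcal{G}}}$, (ii) distinct $\ket{\overline{\nu}}$ are orthogonal, and (iii) they span $\mathcal{H}_{\tilde{\mathcal{G}}}$. Invariance and orthogonality are essentially bookkeeping about orbits, so the real content is showing that the span is all of $\mathcal{H}_{\tilde{\mathcal{G}}}$, which I would handle using the projector $\Pi_{\tilde{\mathcal{G}}}$ defined in Eq. (\ref{eq:HG-proj}).

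First I would verify invariance. Pick any $U\in\tilde{\mathcal{G}}$ with $\Phi(U)=(\varsigma,\pi)\in\tilde{G}$. By Eq. (\ref{eq:tilde-mathcal-G-action}), $U\ket{j_1\ldots j_n} = \ket{(\varsigma,\pi)(j_1\ldots j_n)}$, so $U$ simply permutes $Z$-eigenbasis vectors according to the action of $\tilde{G}$ on $\mathbb{Z}_2^n$. Since $\omega_\nu$ is by construction an orbit of this action, $U$ maps the summands of $\ket{\overline{\nu}}$ among themselves bijectively, giving $U\ket{\overline{\nu}} = \ket{\overline{\nu}}$. Hence $\ket{\overline{\nu}}\in\mathcal{H}_{\tilde{\mathcal{G}}}$.

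Next, orthogonality is immediate: the orbits $\omega_\nu$ partition $\mathbb{Z}_2^n$, so for $\nu\neq \nu'$ the index sets in the defining sums are disjoint, and the distinct $Z$-eigenbasis vectors are mutually orthogonal. Thus $\langle\overline{\nu}|\overline{\nu'}\rangle = 0$ when $\nu\neq\nu'$, and $\langle\overline{\nu}|\overline{\nu}\rangle = |\omega_\nu|$, establishing linear independence as well.

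The final step, spanning, is where I would invoke $\Pi_{\tilde{\mathcal{G}}}$. Any $\ket{\phi}\in\mathcal{H}_{\tilde{\mathcal{G}}}$ satisfies $\Pi_{\tilde{\mathcal{G}}}\ket{\phi}=\ket{\phi}$, and expanding $\ket{\phi} = \sum_{j_1\ldots j_n\in\mathbb{Z}_2^n} c_{j_1\ldots j_n}\ket{j_1\ldots j_n}$ in the $Z$-eigenbasis gives $\ket{\phi} = \sum c_{j_1\ldots j_n}\, \Pi_{\tilde{\mathcal{G}}}\ket{j_1\ldots j_n}$. Using Eq. (\ref{eq:tilde-mathcal-G-action}) once more,
\begin{align}
    \Pi_{\tilde{\mathcal{G}}}\ket{j_1\ldots j_n} = \frac{1}{|\tilde{\mathcal{G}}|}\sum_{U\in\tilde{\mathcal{G}}}\ket{\Phi(U)(j_1\ldots j_n)} = \frac{|\mathrm{Stab}(j_1\ldots j_n)|}{|\tilde{\mathcal{G}}|}\sum_{j_1'\ldots j_n'\in\omega_\nu}\ket{j_1'\ldots j_n'},
\end{align}
where $\omega_\nu=\operatorname{Orb}_{\tilde{G}}[j_1\ldots j_n]$ and the stabilizer factor arises from the orbit--stabilizer theorem as we overcount each orbit element. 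Thus $\Pi_{\tilde{\mathcal{G}}}\ket{j_1\ldots j_n}$ is a nonzero scalar multiple of $\ket{\overline{\nu}}$, and $\ket{\phi}$ is a linear combination of the $\ket{\overline{\nu}}$. Combined with (i) and (ii), this gives the claimed orthogonal basis.

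The only mild subtlety I anticipate is making sure the phase-free identity Eq. (\ref{eq:tilde-mathcal-G-action}) is legitimately used throughout; this is already justified in the text by the fact that every element of $\tilde{\mathcal{G}}$ decomposes uniquely as $DP_\pi$ with $D\in\{I^{\otimes n},X^{\otimes n}\}$, and both factors act by honest basis permutations rather than by introducing phases. Everything else is routine group-action combinatorics, so I do not expect a substantive obstacle.
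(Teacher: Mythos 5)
Your proof is correct and follows essentially the same route as the paper's: orthogonality from disjoint orbits, and spanning by applying the projector $\Pi_{\tilde{\mathcal{G}}}$ to a $Z$-basis expansion and using the phase-free action of Eq. (\ref{eq:tilde-mathcal-G-action}). You are somewhat more careful than the paper's version in two places — you explicitly check that each $\ket{\overline{\nu}}\in\mathcal{H}_{\tilde{\mathcal{G}}}$, and you keep the orbit-stabilizer factor when converting the sum over $\tilde{G}$ into a sum over the orbit — but neither changes the substance of the argument.
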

\begin{proof}
    See Appendix \ref{appendix:HGbasis}.
\end{proof}

Using this basis for $\mathcal{H}_{\tilde{\mathcal{G}}}$, a suitable $\tilde{\mathcal{G}}$-invariant ansatz $\ket{\psi_{\tilde{\mathcal{G}}}}$ is
\begin{align}\label{eq:G-ansatz}
    \ket{\psi_{\tilde{\mathcal{G}}}} = \sum_{\nu=0}^{N_{\tilde{G}}-1}b_\nu\ket{\overline{\nu}}.
\end{align}
Clearly, any state in $\mathcal{H}_{\tilde{\mathcal{G}}}$ can be expressed in the form of this ansatz. Substituting this ansatz into Eq. (\ref{eq:red-criteria}), it follows from Lemma \ref{thm:red-criteria} that a TS state exists if and only if the resulting system of $M_{\tilde{G}}$ equations in $N_{\tilde{G}}$ variables admits a solution of $b_\nu\in\mathbb{C}$. 

To expand Eq. (\ref{eq:red-criteria}) into this system, it remains to determine how $\tilde{\mathcal{G}}$-invariant orthogonality operators act on the basis states of $\mathcal{H}_{\tilde{\mathcal{G}}}$. Conveniently, the basis states $\ket{\overline{\nu}}$ of Eq. (\ref{eq:G-basis-vecs}) are eigenvectors of the $\tilde{\mathcal{G}}$-invariant orthogonality operators:
\begin{proposition}\label{prop:RG-props}
    Given a permutation group $G$ on $[n]$ and a $G$-transitive trajectory set $\mathcal{T}$, the following are true for any $\mu=0,\ldots M_{\tilde{G}}-1$:
    \begin{enumerate}[label=(\alph*), ref=\ref{prop:RG-props}\alph*]
        \item $\mathbf{R}^{(\mu)}_{\tilde{\mathcal{G}}}(\theta)$ is Hermitian.
        \item For all $\nu=0,\ldots,N_{\tilde{G}}-1$,
            \begin{align}\label{eq:R-eigval}
                \mathbf{R}^{(\mu)}_{\tilde{\mathcal{G}}}(\theta)\ket{\overline{\nu}} = \lambda_{\mu,\nu}(\theta)\ket{\overline{\nu}},
            \end{align}
             where $\lambda_{\mu,\nu}\in\mathbb{R}$ is a real eigenvalue (given in Appendix \ref{appendix:R-prop}).
    \end{enumerate}
\end{proposition}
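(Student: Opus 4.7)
The plan is two-fold: establish part (a) from the $X^{\otimes n}$ symmetry inherent in $\tilde{\mathcal{G}}$, and establish part (b) by exploiting the fact that every $\mathbf{R}^{(T,T')}(\theta)$ is diagonal in the $Z$-eigenbasis.

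For part (a), I would start from $(\mathbf{R}^{(\mu)}_{\tilde{\mathcal{G}}})^\dagger = \Pi_{\tilde{\mathcal{G}}}\,\mathbf{R}^{(T,T')\dagger}\,\Pi_{\tilde{\mathcal{G}}}$ for any representative $(T,T')\in\Omega_\mu$, using that $\Pi_{\tilde{\mathcal{G}}}$ is self-adjoint. Since $\mathbf{R}^{(T,T')\dagger}=R^{\dagger(T')}R^{(T)}=\mathbf{R}^{(T',T)}$, and since Proposition \ref{prop:S-conj} together with $\sigma_{X^{\otimes n}}=[n]$ implies that $X^{\otimes n}\in\tilde{\mathcal{G}}$ acts on trajectory pairs by swapping their two components, the pair $(T',T)$ lies in the same orbit $\Omega_\mu$ as $(T,T')$. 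Applying the orbit-independence of $\mathbf{R}^{(\mu)}_{\tilde{\mathcal{G}}}$ (established in the paragraph preceding Lemma \ref{thm:red-criteria}) yields $(\mathbf{R}^{(\mu)}_{\tilde{\mathcal{G}}})^\dagger = \mathbf{R}^{(\mu)}_{\tilde{\mathcal{G}}}$.

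For part (b), the key observation is that $R_Z(\theta)$ is diagonal in $\{\ket{0},\ket{1}\}$, so the operators $R^{(T)}(\theta)$, $R^{\dagger(T)}(\theta)$, and therefore $\mathbf{R}^{(T,T')}(\theta)=R^{\dagger(T)}(\theta)R^{(T')}(\theta)$ are all diagonal in the $Z$-eigenbasis:
\begin{align}
    \mathbf{R}^{(T,T')}(\theta)\ket{j_1\ldots j_n}=\alpha_{T,T'}(j_1\ldots j_n;\theta)\ket{j_1\ldots j_n}
\end{align}
for phases $\alpha_{T,T'}(j_1\ldots j_n;\theta)\in\mathbb{C}$ read off from the $Z$-eigenvalues of each tensor factor. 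Next, I would apply the orbit--stabilizer theorem to the action of $\tilde{G}$ on $\mathbb{Z}_2^n$ (via Eq.~(\ref{eq:tilde-mathcal-G-action})) to conclude $\Pi_{\tilde{\mathcal{G}}}\ket{j_1\ldots j_n}=\abs{\omega_\nu}^{-1}\ket{\overline{\nu}}$ whenever $j_1\ldots j_n\in\omega_\nu$, while $\Pi_{\tilde{\mathcal{G}}}\ket{\overline{\nu}}=\ket{\overline{\nu}}$ since $\ket{\overline{\nu}}\in\mathcal{H}_{\tilde{\mathcal{G}}}$. Combining these facts, for any $(T,T')\in\Omega_\mu$,
\begin{align}
    \mathbf{R}^{(\mu)}_{\tilde{\mathcal{G}}}\ket{\overline{\nu}} &= \Pi_{\tilde{\mathcal{G}}}\,\mathbf{R}^{(T,T')}\sum_{j\in\omega_\nu}\ket{j} \nonumber\\
    &= \sum_{j\in\omega_\nu}\alpha_{T,T'}(j;\theta)\,\Pi_{\tilde{\mathcal{G}}}\ket{j} \nonumber\\
    &= \left(\frac{1}{\abs{\omega_\nu}}\sum_{j\in\omega_\nu}\alpha_{T,T'}(j;\theta)\right)\ket{\overline{\nu}},
\end{align}
identifying $\lambda_{\mu,\nu}(\theta)=\abs{\omega_\nu}^{-1}\sum_{j\in\omega_\nu}\alpha_{T,T'}(j;\theta)$. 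Reality of $\lambda_{\mu,\nu}$ then follows immediately from part (a), since a Hermitian operator has real eigenvalues on its eigenvectors.

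The main obstacle is mostly bookkeeping: I must confirm that the scalar $\lambda_{\mu,\nu}(\theta)$ depends only on the orbit label $\mu$ rather than on the specific representative $(T,T')\in\Omega_\mu$ used in the computation, but this is inherited directly from the orbit-independence of $\mathbf{R}^{(\mu)}_{\tilde{\mathcal{G}}}$ itself. The explicit closed form for $\lambda_{\mu,\nu}(\theta)$ reported in Appendix \ref{appendix:R-prop} then reduces to substituting the exponential expression for $\alpha_{T,T'}(j_1\ldots j_n;\theta)$ obtained from the $Z$-eigenvalues of $R_Z(\theta)$, which is a routine calculation.
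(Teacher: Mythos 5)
Your proposal is correct and follows essentially the same approach as the paper. Part (b) matches the argument in Appendix \ref{appendix:R-prop} closely (orbit--stabilizer theorem to get $\Pi_{\tilde{\mathcal{G}}}\ket{j_1\ldots j_n}=\abs{\omega_\nu}^{-1}\ket{\overline{\nu}}$, diagonality of $\rbf$ in the $Z$-eigenbasis, reality via Hermiticity), while in part (a) you invoke the orbit-independence of $\mathbf{R}^{(\mu)}_{\tilde{\mathcal{G}}}$ established before Lemma \ref{thm:red-criteria} instead of the paper's direct insertion of $X^{\otimes n}$ and the identity $X^{\otimes n}R^{(T)}X^{\otimes n}=R^{\dagger(T)}$ --- but since that orbit-independence itself rests on precisely that conjugation identity, the underlying mechanism is the same.
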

\begin{proof}
    See Appendix \ref{appendix:R-prop}.
\end{proof}

We now derive the explicit system of Eq. (\ref{eq:red-criteria}) in terms of the TS ansatz coefficients $b_\nu$. Substituting the $\tilde{\mathcal{G}}$-invariant ansatz of Eq. (\ref{eq:G-ansatz}) into Eq. (\ref{eq:red-criteria}) and evaluating the inner products yields a system of equations which are each linear in the various squared ansatz coefficients $\abs{b_{\nu}}^2$. Letting $c_\nu = \abs{b_{_\nu}}^2$, the original system in the variables $b_{\nu}\in\mathbb{C}$ can be recast as a system in $c_\nu\in\mathbb{R}$ along with the nonnegativity constraints $c_\nu\geq 0$, per the following theorem:
\begin{theorem}\label{thm:lin-prog}
    Given a permutation group $G$ on $[n]$, suppose $\mathcal{T}\subseteq\mathbb{P}([n])$ is $G$-transitive. Let $A(\theta)$ be the $M_{\tilde{G}}\times N_{\tilde{G}}$ real matrix such that
    \begin{align}\label{eq:A-entries-def}
        A_{\mu,\nu}(\theta)=\lambda_{\mu,\nu}(\theta)\abs{\omega_\nu},
    \end{align}
    where $\lambda_{\mu,\nu}$ is the eigenvalue given in Eq. (\ref{eq:R-eigval}) and $\omega_\nu$ is the $\nu$th orbit of $\mathbb{Z}_2^n/\tilde{G}$. Then for a given $\theta\in[0,\pi]$, a TS state exists if and only if there exists $\mathbf{c}\in\mathbb{R}^{N_{\tilde{G}}}$ such that
    \begin{align}\label{eq:lin-prog}
        A(\theta)\mathbf{c}=\mathbf{d}\ \mathrm{and}\ \mathbf{c}\geq 0,
    \end{align}
    where $\mathbf{d}\in\mathbb{R}^{M_{\tilde{G}}}$ is the vector with entries $d_\mu = \delta_{\mu,0}$.
    Furthermore, for any $\mathbf{c}$ solving Eq. (\ref{eq:lin-prog}), $\ket{\psi}=\sum_{\nu=0}^{N_{\tilde{G}}-1}\sqrt{c_\nu}\ket{\overline{\nu}}$ is a satisfactory TS state.
\end{theorem}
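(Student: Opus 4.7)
The plan is to show that Eq. (\ref{eq:red-criteria}) from Lemma \ref{thm:red-criteria} directly translates into the linear system $A(\theta)\mathbf{c}=\mathbf{d},\ \mathbf{c}\geq 0$ once one expands the $\tilde{\mathcal{G}}$-invariant sensor state in the basis provided by Proposition \ref{prop:HG-basis}. I would start from Lemma \ref{thm:red-criteria}, which reduces the TS-state existence problem to finding a state $\ket{\psi_{\tilde{\mathcal{G}}}}\in\mathcal{H}_{\tilde{\mathcal{G}}}$ satisfying $\bra{\psi_{\tilde{\mathcal{G}}}}\mathbf{R}^{(\mu)}_{\tilde{\mathcal{G}}}\ket{\psi_{\tilde{\mathcal{G}}}}=\delta_{\mu,0}$ for every $\mu=0,\ldots,M_{\tilde{G}}-1$. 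Then I would parameterize any such $\ket{\psi_{\tilde{\mathcal{G}}}}$ as the general ansatz $\ket{\psi_{\tilde{\mathcal{G}}}}=\sum_{\nu=0}^{N_{\tilde{G}}-1}b_\nu\ket{\overline{\nu}}$ with $b_\nu\in\mathbb{C}$, since the $\ket{\overline{\nu}}$ of Eq. (\ref{eq:G-basis-vecs}) form a basis for $\mathcal{H}_{\tilde{\mathcal{G}}}$.

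Next I would compute the inner products explicitly. Proposition \ref{prop:HG-basis} gives that the basis is orthogonal with $\braket{\overline{\nu}}{\overline{\nu'}}=|\omega_\nu|\,\delta_{\nu,\nu'}$, while Proposition \ref{prop:RG-props} says $\mathbf{R}^{(\mu)}_{\tilde{\mathcal{G}}}(\theta)\ket{\overline{\nu}}=\lambda_{\mu,\nu}(\theta)\ket{\overline{\nu}}$ with $\lambda_{\mu,\nu}\in\mathbb{R}$. Combining these two facts,
\begin{align}
    \bra{\psi_{\tilde{\mathcal{G}}}}\mathbf{R}^{(\mu)}_{\tilde{\mathcal{G}}}\ket{\psi_{\tilde{\mathcal{G}}}}
    =\sum_{\nu=0}^{N_{\tilde{G}}-1}|b_\nu|^2\,\lambda_{\mu,\nu}(\theta)\,|\omega_\nu|
    =\sum_{\nu=0}^{N_{\tilde{G}}-1}A_{\mu,\nu}(\theta)\,|b_\nu|^2,
\end{align}
so the reduced criteria of Eq. (\ref{eq:red-criteria}) become $\sum_\nu A_{\mu,\nu}(\theta)|b_\nu|^2=\delta_{\mu,0}$ for every $\mu$. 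Setting $c_\nu=|b_\nu|^2$ turns this into precisely the matrix equation $A(\theta)\mathbf{c}=\mathbf{d}$.

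For the ``only if'' direction, any satisfactory $b_\nu\in\mathbb{C}$ yields $c_\nu=|b_\nu|^2\geq 0$ solving Eq. (\ref{eq:lin-prog}). For the ``if'' direction, given any $\mathbf{c}\in\mathbb{R}^{N_{\tilde{G}}}$ with $\mathbf{c}\geq 0$ solving Eq. (\ref{eq:lin-prog}), I can take $b_\nu=\sqrt{c_\nu}\in\mathbb{R}_{\geq 0}$, which is well defined by nonnegativity. The state $\ket{\psi}=\sum_\nu\sqrt{c_\nu}\ket{\overline{\nu}}$ then lies in $\mathcal{H}_{\tilde{\mathcal{G}}}$ and, by reversing the computation above, satisfies Eq. (\ref{eq:red-criteria}); Lemma \ref{thm:red-criteria} then certifies it as a TS state. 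Note that normalization is handled automatically by the $\mu=0$ equation: since $\Omega_0=\operatorname{Orb}_{\tilde{G}}[(T,T)]$ yields $\mathbf{R}^{(0)}_{\tilde{\mathcal{G}}}=\Pi_{\tilde{\mathcal{G}}}$, which acts as the identity on $\mathcal{H}_{\tilde{\mathcal{G}}}$, all eigenvalues $\lambda_{0,\nu}=1$ and the $\mu=0$ row of $A(\theta)\mathbf{c}=\mathbf{d}$ reduces to $\sum_\nu c_\nu|\omega_\nu|=1=\braket{\psi}{\psi}$.

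There is no serious obstacle here; the argument is essentially a change of variables. The only delicate bookkeeping is making sure the nonnegativity constraint $\mathbf{c}\geq 0$ captures exactly the physical requirement that the $c_\nu$ arise as squared moduli of complex amplitudes, and that the conventional choice $\Omega_0=\operatorname{Orb}_{\tilde{G}}[(T,T)]$ is what forces the right-hand side $\mathbf{d}$ to encode both orthogonality ($\mu\neq 0$) and normalization ($\mu=0$) in a single linear system.
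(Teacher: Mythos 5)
Your proof is correct and follows essentially the same route as the paper: expand a $\tilde{\mathcal{G}}$-invariant ansatz in the $\ket{\overline{\nu}}$ basis, use Propositions \ref{prop:HG-basis} and \ref{prop:RG-props} to reduce each orthogonality condition to $\sum_\nu A_{\mu,\nu}|b_\nu|^2 = \delta_{\mu,0}$, and translate between $b_\nu$ and $c_\nu=|b_\nu|^2$ in the two directions via Lemma \ref{thm:red-criteria}. Your extra remark that the $\mu=0$ row encodes normalization (since $\mathbf{R}^{(0)}_{\tilde{\mathcal{G}}}$ is the projector $\Pi_{\tilde{\mathcal{G}}}$, so $\lambda_{0,\nu}=1$) is a correct observation but not a departure from the paper's argument.
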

\begin{proof}
    Note by convention that $A$, $\mathbf{c}$, and $\mathbf{d}$ will be zero-indexed. ``$\impliedby$'' direction: Suppose there exists $\mathbf{c}\in\mathbb{R}^{N_{\tilde{G}}}$ satisfying Eq. (\ref{eq:lin-prog}). Then let $\ket{\psi_{\tilde{\mathcal{G}}}} = \sum_{\nu=0}^{N_{\tilde{G}}-1}\sqrt{c_\nu}\ket{\overline{\nu}}$. By Proposition \ref{prop:HG-basis}, $\ket{\psi_{\tilde{\mathcal{G}}}}\in\mathcal{H}_{\tilde{\mathcal{G}}}$. Additionally, $\ket{\psi_{\tilde{\mathcal{G}}}}$ satisfies Eq. (\ref{eq:red-criteria}) for all $\mu=0,\ldots,M_{\tilde{G}}-1$:
    \begin{align}
        \bra{\psi_{\tilde{\mathcal{G}}}}\mathbf{R}_{\tilde{\mathcal{G}}}^{(\mu)}\ket{\psi_{\tilde{\mathcal{G}}}} &= \bra{\psi_{\tilde{\mathcal{G}}}}\left(\sum_{\nu} \lambda_{\mu,\nu} \sqrt{c_\nu} \ket{\overline{\nu}}\right)\nonumber\\
        &= \sum_{\nu} \lambda_{\mu,\nu} c_\nu\braket{\overline{\nu}|\overline{\nu}}\nonumber\\
        &= \sum_{\nu} \lambda_{\mu,\nu} c_\nu\abs{\omega_\nu}\nonumber\\
         &= \sum_{\nu} A_{\mu,\nu} c_\nu=d_\mu = \delta_{\mu,0},
    \end{align}
    where the first equality follows from Proposition \ref{prop:RG-props}. By Lemma \ref{thm:red-criteria}, $\ket{\psi_{\tilde{G}}}$ is a TS state. 

    ``$\implies$'' direction: Suppose there exists a TS state. By Lemma \ref{thm:red-criteria}, there exists a $\tilde{\mathcal{G}}$-invariant TS state $\ket{\psi_{\tilde{\mathcal{G}}}}$ satisfying Eq. (\ref{eq:red-criteria}) for all $\mu=0,\ldots,M_{\tilde{G}}-1$; by Proposition \ref{prop:HG-basis}, this state can be written as $\ket{\psi_{\tilde{\mathcal{G}}}} = \sum_{\nu=0}^{N_{\tilde{G}}-1}b_\nu\ket{\overline{\nu}}$ for some $b_\nu\in \mathbb{C}$. Now let $c_\nu = \abs{b_\nu}^2$. Then 
    \begin{align}
        \sum_{\nu} A_{\mu,\nu} c_\nu&= \sum_{\nu} \lambda_{\mu,\nu} \abs{b_\nu}^2\abs{\omega_\nu}= \bra{\psi_{\tilde{\mathcal{G}}}}\mathbf{R}_{\tilde{\mathcal{G}}}^{(\mu)}\ket{\psi_{\tilde{\mathcal{G}}}}
    \end{align}
    by Proposition \ref{prop:RG-props}. Since $\ket{\psi_{\tilde{\mathcal{G}}}}$ satisfies Eq. (\ref{eq:red-criteria}), it follows that 
    \begin{align}
         \sum_{\nu} A_{\mu,\nu} c_\nu=\delta_{\mu,0} = d_\mu,
    \end{align}
    or equivalently, $A\mathbf{c}=\mathbf{d}$. Note also that $\abs{b_\nu}^2\geq 0$ implies that $\mathbf{c}\geq 0$.
\end{proof}

The system in Eq. (\ref{eq:lin-prog}) defines a linear programming feasibility problem which seeks a solution vector $\mathbf{c}$ inside the feasible region given by Eq. (\ref{eq:lin-prog}). Theorem \ref{thm:lin-prog} then asserts that TS states exist if and only if this linear program is feasible. Furthermore, the proof of this theorem shows that any solution to the program gives the coefficients for a satisfactory TS state; that is, if $\mathbf{c}$ is a solution, then $\ket{\psi_{\tilde{\mathcal{G}}}} = \sum_\nu\sqrt{c_\nu}\ket{\overline{\nu}}$ is a TS state. Therefore, if a TS state exists, it can be constructed using pre-existing linear programming algorithms \cite{linprog} upon determining the matrix $A(\theta)$, which depends on $\mathcal{T}$ and consequently on the structure of the group $G$.

Lastly, we rigorously demonstrate that the system of Theorem \ref{thm:lin-prog} involves fewer equations and variables than the naive system obtained by substituting a general ansatz directly into Eq. (\ref{eq:ortho}). Recall that this naive system contains $\abs{\mathcal{T}}^2$ equations in $2^n$ variables, while the reduced system of Eq. (\ref{eq:lin-prog}) contains $M_{\tilde{G}}$ equations in $N_{\tilde{G}}$ variables. The following proposition then asserts that $M_{\tilde{G}}<\abs{\mathcal{T}}^2$ and $N_{\tilde{G}}<2^n$:
\begin{proposition}\label{prop:MNbounds}
    The following upper bounds hold on the size of $M_{\tilde{G}}$ and $N_{\tilde{G}}$ for any permutation group $G$ on $[n]$, assuming $\mathcal{T}$ is $G$-transitive:
    \begin{align}
       M_{\tilde{G}} \leq \abs{\nmg}\ \mathrm{and}\ N_{\tilde{G}}\leq 2^{n-1}
    \end{align}
\end{proposition}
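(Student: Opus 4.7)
The plan is to establish each bound via an orbit-counting argument, handling the two inequalities independently.

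For the bound $N_{\tilde{G}} \leq 2^{n-1}$, I would focus on the subgroup $H = \{(\varnothing,e),([n],e)\} \leq \tilde{G}$ corresponding to the bit-flip symmetry alone. By Proposition \ref{prop:PS-conj}(b) and Eq. (\ref{eq:X-def}), the nontrivial element of $H$ acts on $\mathbb{Z}_2^n$ by flipping every bit. This involution is fixed-point-free since $(j_k\oplus 1) = j_k$ is impossible for any bit $j_k$. Hence every $H$-orbit has size exactly $2$, so the number of $H$-orbits equals $2^n/2 = 2^{n-1}$. Since $H \subseteq \tilde{G}$, each $\tilde{G}$-orbit is a union of $H$-orbits, so $N_{\tilde{G}} \leq 2^{n-1}$.

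For the bound $M_{\tilde{G}} \leq |\mathcal{T}|$, I would first reduce to counting $G$-orbits on $\mathcal{T}^2$ (identifying $G$ with the subgroup $\{(\varnothing,\pi) : \pi \in G\}$ of $\tilde{G}$ via the action in Eq. (\ref{eq:sympauli-action})). Since enlarging the group only coarsens the orbit partition, $M_{\tilde{G}}$ is bounded by the number of $G$-orbits on $\mathcal{T}^2$. To bound the latter, I would fix an arbitrary generator trajectory $T_0\in\mathcal{T}$ and argue that every $G$-orbit on $\mathcal{T}^2$ contains some pair of the form $(T_0, T')$: given any $(T,T')$ in the orbit, $G$-transitivity of $\mathcal{T}$ (Eq. (\ref{eq:G-transitive})) supplies $\pi\in G$ with $\pi(T)=T_0$, so $\pi(T,T')=(T_0,\pi(T'))$ lies in the same orbit. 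Consequently, restricting the orbit partition of $\mathcal{T}^2$ to the ``slice'' $\{(T_0,T'):T'\in\mathcal{T}\}$ yields a partition of this $|\mathcal{T}|$-element set into nonempty pieces, one per orbit, so the number of orbits is at most $|\mathcal{T}|$.

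Combining these gives $M_{\tilde{G}} \leq |\mathcal{T}|$ and $N_{\tilde{G}} \leq 2^{n-1}$, as required. I do not anticipate significant obstacles: the bit-flip argument is immediate once one observes the involution is fixed-point-free, and the trajectory-pair bound is essentially the standard fact that the rank of a transitive action on a set $X$ is at most $|X|$. The only mild care needed is checking that the embeddings $G \hookrightarrow \tilde{G}$ and $H \hookrightarrow \tilde{G}$ really do induce the expected restricted actions on $\mathcal{T}^2$ and $\mathbb{Z}_2^n$ respectively, which is ensured by the explicit formulas in Propositions \ref{prop:PS-conj} and the definition $\tilde{G}=S\times G$ in Eq. (\ref{eq:tilde-G-def}).
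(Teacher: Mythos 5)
Your proof is correct, and for the bound $N_{\tilde{G}}\leq 2^{n-1}$ it coincides in essence with the paper's argument: both observe that the element $([n],e)\in\tilde{G}$ acts as a fixed-point-free involution on bit-strings, so every $\tilde{G}$-orbit has size at least two, whence $N_{\tilde{G}}\leq 2^n/2$. For the bound $M_{\tilde{G}}\leq|\mathcal{T}|$, you and the paper take genuinely different (though both standard) routes from $G$-transitivity to the orbit count. The paper bounds the orbit size from below: it notes that the $G$-sub-orbit $\mathcal{X}=\{(\pi(T),\pi(T')):\pi\in G\}$ of any trajectory pair surjects onto $\mathcal{T}$ under first-coordinate projection, so $|\mathcal{X}|\geq|\mathcal{T}|$, and then divides $|\mathcal{T}^2|/|\mathcal{T}|$. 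You instead construct a small transversal: the slice $\{(T_0,T'):T'\in\mathcal{T}\}$ meets every $G$-orbit (because $G$ can move the first coordinate to $T_0$), so the number of orbits is at most the slice's cardinality $|\mathcal{T}|$. Your route is perhaps a hair more direct and makes the rank-of-a-transitive-action interpretation explicit; the paper's route has the minor advantage of delivering, as a byproduct, the lower bound $|\Omega|\geq|\mathcal{T}|$ on every orbit size rather than just the orbit count. Both are valid, and you correctly justify the needed embeddings $H,G\hookrightarrow\tilde{G}$ and the coarsening of orbit partitions under enlarging the group.
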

\begin{proof}
    See Appendix \ref{appendix:MNbounds}. For completeness, lower bounds on $M_{\tilde{G}}$ and $N_{\tilde{G}}$ are provided in the same appendix.
\end{proof}
Typically, the upper bounds given by Proposition \ref{prop:MNbounds} are very loose, since they do not utilize any specific structure of the permutation group $G$. Consequently, $M_{\tilde{G}}$ and $N_{\tilde{G}}$ are generally much smaller than $\abs{\mathcal{T}}^2$ and $2^n$, respectively. For example, suppose $n$ is even and that $\mathcal{T}$ contains all trajectories of size $n/2$. Then $\abs{\mathcal{T}}^2$ grows asymptotically as $O(4^n)$. Noting that $\mathcal{T}$ is transitive under the symmetric group $\Sigma_n$ on $[n]$, we can compute $M_{\tilde{G}}=N_{\tilde{G}} = n/2+1$ for $G=\Sigma_n$. It follows that the numbers of equations and variables in Eq. (\ref{eq:lin-prog}) are only linear in $n$ as opposed to exponential. Hence, the system in Eq. (\ref{eq:lin-prog}) contains exponentially fewer equations and variables than the naive alternative under these conditions.

Theorem \ref{thm:lin-prog} provides a complete procedure for solving the TS problem when $\mathcal{T}$ is $G$-transitive for some permutation group $G$ on $[n]$. Recall that the inputs of a TS problem are $n$ and $\mathcal{T}$. To deploy this theorem, one must first compute the orbit sets $\mathcal{T}^2/\tilde{G}$ and $\mathbb{Z}_2^n/\tilde{G}$. From these orbit sets, the entries of the matrix $A(\theta)$ can then be calculated. Solving the TS problem then amounts to checking the feasibility of the linear program in Eq. (\ref{eq:lin-prog}) for every $\theta\in[0,\pi]$. In the next section, we use the results obtained above to solve the TS problem for arbitrary $n$ when $\mathcal{T}$ is transitive under symmetric and cyclic permutation groups.

\section{Main results for $G$-transitive trajectory sets}\label{sec:results}
In this section, we apply the machinery developed in Sections \ref{sec:theory} and \ref{sec:simp-criteria} to solve two general families of TS problems. Recall that a TS problem takes the total number of qubits $n$ and trajectory set $\mathcal{T}$ as inputs and asks for which interaction strengths $\theta\in[0,\pi]$ there exists a TS state yielding mutually orthogonal trajectory output states. We refer to this set of $\theta$ for which a TS state exists as the set of achievable $\theta$. Often, we are interested in the infinum of this set, which we call the \textit{minimum achievable $\theta$}, since this quantity represents the smallest particle-sensor interaction strength for which perfect single-shot trajectory discrimination is possible. Thus, in Sections \ref{sec:sym-group} and \ref{sec:cyc}, we derive bounds on the minimum achievable $\theta$ in terms of $n$ when $\mathcal{T}$ is transitive under the symmetric and cyclic permutation groups, respectively. The proofs of these bounds are constructive and provide explicit descriptions of TS states which exist at each $\theta$. 

Supposing that a given TS problem involves a $G$-transitive $\mathcal{T}$, the inputs of the problem can be equivalently redefined. Note that a $G$-transitive $\mathcal{T}$ is parameterized in terms of a permutation group $G$ and single generator trajectory $T_0$, per Eq. (\ref{eq:G-transitive}). In this section, we will thus choose $\mathcal{T}$ to be
\begin{align}
    \mathcal{T}_G(n,m) = \left \{\pi([m])\ :\ \pi\in G \rule{0pt}{2.4ex}  \right\}
\end{align}
for some given permutation group $G$ on $[n]$ and integer $m\leq n$, where $[m]=\{1,\ldots m\}$ is the generator trajectory. Because every trajectory in $\mathcal{T}_G(n,m)$ is a permutation of $[m]$, all trajectories contain exactly $m$ qubits. Consequently, $\mathcal{T}_G(n,m)$ is completely parameterized by $n$, $m$, and $G$, from which it follows that the TS problems considered here are fully defined by the three inputs $n$, $m$, and $G$.

Hence, we will assume in Sections \ref{sec:sym-group} and \ref{sec:cyc} that $\mathcal{T}=\mathcal{T}_G(n,m)$ and that $n,m$, and $G$ are given parameters. We first apply Theorem \ref{thm:lin-prog} to derive bounds on the minimum achievable $\theta$ when $n$ and $m$ are arbitrary and $G$ is the symmetric group. 

\subsection{$\mathcal{T}$ is transitive under the symmetric group}\label{sec:sym-group}

If $G$ is chosen to be the symmetric group $\Sigma_n$ on the set $[n]$, then the trajectory set $\mathcal{T}_G(n,m)$ includes all possible size-$m$ trajectories. We use the symbol $\Tsym$ to represent $\mathcal{T}_G(n,m)$ when $G=\Sigma_n$; Figure \ref{fig:mathcalT} then illustrates $\Tsym$ for small values of $n$ and $m$. Note that for a sensor of $n$ qubits, any trajectory set $\mathcal{T}'$ containing only size-$m$ trajectories must be a subset of $\Tsym$. Consequently, any TS state which can discriminate all trajectories in $\Tsym$ can also discriminate the trajectories in any such $\mathcal{T}'$. It follows that a solution to the TS problem for $\Tsym$ automatically generalizes to every TS problem involving equally-sized trajectories. 
\begin{figure}[htbp]
  \includegraphics[width = \linewidth]{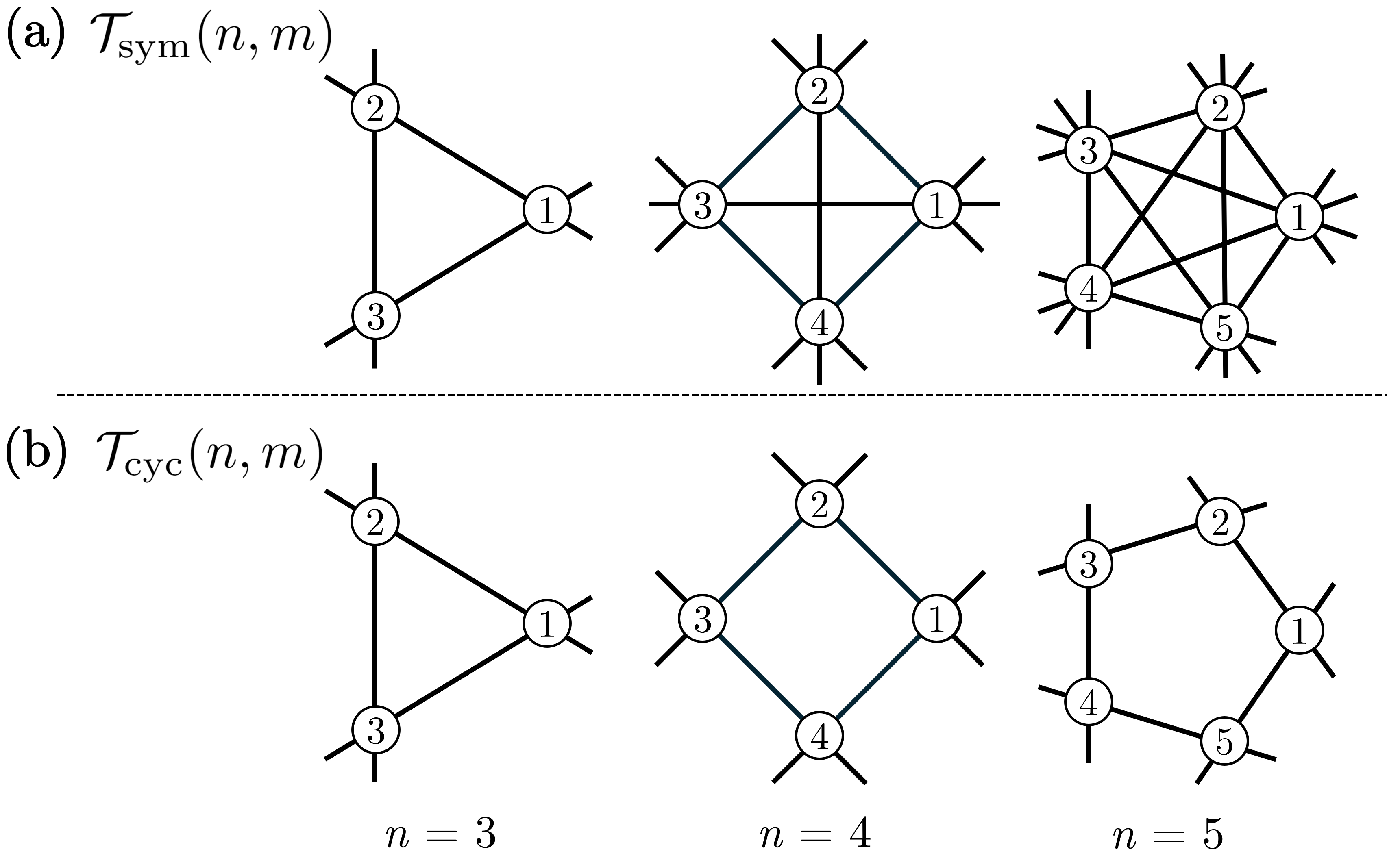}
  \caption{Sample trajectory sets (a) $\Tsym$ and (b) $\Tcyc$ for some $n$-qubit systems, with $m=2$ qubits per trajectory. Each circle represents a qubit, and solid black lines denote trajectories. Note that $\Tcyc\subseteq \Tsym$.}
  \label{fig:mathcalT}
\end{figure}

We now solve the TS problem for $\Tsym$ using the strategy outlined at the end of Section \ref{sec:simp-criteria}. Namely, we solve Eq. (\ref{eq:lin-prog}) of Theorem \ref{thm:lin-prog} to derive bounds on the achievable $\theta$. Let $\tilde{\Sigma}_n$ represent $\tilde{G}$ when $G=\Sigma_n$. Then to compute the entries of the matrix $A(\theta)$, we evaluate the orbits of  $\mathbb{Z}_2^n$ and $\Tsymsq$ under $\tilde{\Sigma}_n$.

We begin by determining the orbits of bit-strings in $\mathbb{Z}_2^n$ under $\tilde{\Sigma}_n$. Recall that the elements of $\tilde{\Sigma}_n$ act on a bit-string by permuting the bits and/or flipping every bit. If two bit-strings are related by a permutation, then they have the same weight, where the \textit{weight} of a bit-string $j_1\ldots j_n\in\mathbb{Z}_2^n$ is the number of 1s in the string (i.e., $\sum_k j_k$). Likewise, global bit-flips take bit-strings of weight $\nu$ to bit-strings of weight $n-\nu$. Now let $\mathcal{W}_\nu\subseteq\mathbb{Z}_2^n$ be the subset of bit-strings with weight $\nu$:
\begin{align}
    \mathcal{W}_\nu = \{j_1\ldots j_n\in \mathbb{Z}_2^n\ :\ j_1+\cdots+j_n=\nu\}.
\end{align}
Intuitively, it then follows that each orbit of $\mathbb{Z}_2^n/\tilde{\Sigma}_n$ is equal to ${\mathcal{W}_\nu\cup\mathcal{W}_{n-\nu}}$ for some $\nu\in\{0,\ldots,n\}$:
\begin{proposition}\label{prop:sym-orbits-strings}
    The set of orbits of bit-strings in $\mathbb{Z}_2^n$ under $\tilde{\Sigma}_n$ is $\mathbb{Z}_2^n/\tilde{\Sigma}_n = \{\omega_\nu\ :\ \nu = 0,\ldots,N_{\tilde{\Sigma}_n}-1\}$ where $\omega_\nu = \mathcal{W}_\nu\cup\mathcal{W}_{n-\nu}$. The number of such orbits equals
    \begin{align}\label{eq:N-sigma}
        N_{\tilde{\Sigma}_n}=\floor*{\frac{n}{2}}+1.
    \end{align}
\end{proposition}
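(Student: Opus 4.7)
The plan is to directly compute the orbits of $\mathbb{Z}_2^n$ under the action of $\tilde{\Sigma}_n = S\times \Sigma_n$, exploiting the fact that this action is generated by two easily-understood operations: permutations of bit positions (from $\Sigma_n$) and the global bit-flip (from $S=\{\varnothing,[n]\}$). First I would observe that each generator has a clear effect on the weight $w(j_1\ldots j_n)=\sum_k j_k$: a permutation $\pi\in\Sigma_n$ preserves weight, while the non-trivial element $[n]\in S$ sends a bit-string of weight $w$ to one of weight $n-w$, by Eq.~(\ref{eq:j-prime-indices}). Hence every orbit is contained in $\mathcal{W}_\nu \cup \mathcal{W}_{n-\nu}$ for some $\nu$.

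Next I would establish the converse inclusion. The symmetric group $\Sigma_n$ acts transitively on each $\mathcal{W}_\nu$, since any two length-$n$ bit-strings with the same weight differ only by a rearrangement of coordinates. Combined with the fact that the global bit-flip carries $\mathcal{W}_\nu$ bijectively onto $\mathcal{W}_{n-\nu}$, this shows that $\mathcal{W}_\nu \cup \mathcal{W}_{n-\nu}$ is itself a single orbit under $\tilde{\Sigma}_n$. Together with the previous paragraph, this gives $\operatorname{Orb}_{\tilde{\Sigma}_n}[j_1\ldots j_n] = \mathcal{W}_{w}\cup\mathcal{W}_{n-w}$ where $w$ is the weight of $j_1\ldots j_n$.

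Finally I would count the distinct orbits. Since $\mathcal{W}_\nu \cup \mathcal{W}_{n-\nu} = \mathcal{W}_{n-\nu} \cup \mathcal{W}_{\nu}$, the orbits $\omega_\nu$ for $\nu\in\{0,1,\ldots,n\}$ collapse under the identification $\nu\leftrightarrow n-\nu$. A set of representatives is therefore $\nu\in\{0,1,\ldots,\lfloor n/2\rfloor\}$, yielding $N_{\tilde{\Sigma}_n}=\lfloor n/2\rfloor+1$; the case $n$ even and $\nu=n/2$ is automatically handled because $\omega_{n/2}=\mathcal{W}_{n/2}$ is fixed under the $\nu\leftrightarrow n-\nu$ identification and should be counted only once, which the floor expression correctly does.

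The proof is almost entirely a bookkeeping exercise; the only subtlety is making sure the parity cases are handled correctly in the counting step, but the unified expression $\lfloor n/2 \rfloor + 1$ takes care of both parities without the need for a case split. No step presents a real obstacle beyond verifying the transitive action of $\Sigma_n$ on each fixed-weight stratum, which is a standard fact.
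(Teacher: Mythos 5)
Your proposal is correct and takes essentially the same approach as the paper's proof: both argue that permutations preserve weight while the global bit-flip exchanges $\mathcal{W}_\nu$ and $\mathcal{W}_{n-\nu}$, invoke transitivity of $\Sigma_n$ on each fixed-weight stratum to show $\mathcal{W}_\nu \cup \mathcal{W}_{n-\nu}$ is a single orbit, and count representatives under the identification $\nu \leftrightarrow n-\nu$. The only cosmetic difference is that you phrase the containment argument in terms of orbits and strata rather than the paper's ``two strings lie in the same orbit iff\dots'' biconditional, but the logical content is identical.
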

\begin{proof}
    See Appendix \ref{appendix:orbs-pairs}.
\end{proof}

Now consider the orbits of trajectory pairs in $\Tsymsq$ under $\tilde{\Sigma}_n$. Recall that the elements of $\tilde{\Sigma}_n$ act on a trajectory pair $(T,T')$ by permuting the indices of qubits in $T$ and $T'$ and/or swapping the positions of $T$ and $T'$ within the pair. 
Note that each qubit in $T\cup T'$ either belongs to only one trajectory or to both. Since permutations are bijective, the number of qubits belonging exclusively to one of the trajectories does not change after a permutation.
Accordingly, define the \textit{degree} of a trajectory pair $(T,T')$ to be the number of qubits in one trajectory that are not in the other (i.e., $\abs{T\setminus T'}=\abs{T'\setminus T}$). It subsequently follows that if two trajectory pairs are related by a permutation, then their degrees are equal. Additionally, since $\abs{T\setminus T'}=\abs{T'\setminus T}$, the degree of a trajectory pair does not change if its trajectories are swapped. Now let $\mathcal{D}_\mu$ be the set of trajectory pairs with degree $\mu$:
\begin{align}
    \mathcal{D}_\mu=\{(T,T')\in\Tsymsq\ :\ \abs{T\setminus T'} = \mu\}.
\end{align}
Then, each orbit in $\Tsymsq/\tilde{\Sigma}_n$ is equal to $\mathcal{D}_\mu$ for some $\mu\in\{0,\ldots,m\}$:
\begin{proposition}\label{prop:sym-orbits-pairs}
    The set of orbits of trajectory pairs in $\Tsymsq$ under $\tilde{\Sigma}_n$ is $\Tsymsq/\tilde{\Sigma}_n = \{\Omega_\mu\ :\ \mu=0,\ldots,M_{\tilde{\Sigma}_n}-1\}$ where $\Omega_\mu = \mathcal{D}_\mu$. The number of such orbits equals 
    \begin{align}\label{eq:M-sigma}
    M_{\tilde{\Sigma}_n} = \begin{cases}
        m+1 &m\leq \floor*{\frac{n}{2}}\\
        n-m+1 &m>\floor*{\frac{n}{2}}.
    \end{cases}
\end{align}
\end{proposition}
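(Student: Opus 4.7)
The plan is to proceed in three stages: first establish that the degree function $\mu(T,T')=|T\setminus T'|$ is a complete invariant for the action of $\tilde{\Sigma}_n$ on $\Tsymsq$, then determine the range of values $\mu$ can take when $(T,T')\in\Tsymsq$, and finally count. The invariance half was essentially sketched in the proposition statement: a permutation $\pi\in\Sigma_n$ is a bijection, so $|\pi(T)\setminus\pi(T')|=|T\setminus T'|$, and the swap element $[n]\in S$ exchanges $T\leftrightarrow T'$, which preserves the degree because $|T\setminus T'|=|T'\setminus T|$. So I would just write this out briefly and conclude that each orbit lies entirely in some $\mathcal{D}_\mu$.

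The more substantive step is to show conversely that any two pairs of equal degree are in the same orbit. Given $(T_1,T_1'),(T_2,T_2')\in\mathcal{D}_\mu$, both partition $[n]$ into four disjoint pieces of the same sizes, namely $T_i\cap T_i'$ of size $m-\mu$, $T_i\setminus T_i'$ of size $\mu$, $T_i'\setminus T_i$ of size $\mu$, and $[n]\setminus(T_i\cup T_i')$ of size $n-m-\mu$. I would choose any four bijections between the corresponding blocks and concatenate them into a single permutation $\pi\in\Sigma_n$ satisfying $\pi(T_1)=T_2$ and $\pi(T_1')=T_2'$. Under the action of $(\varnothing,\pi)\in\tilde{\Sigma}_n$ this maps $(T_1,T_1')$ to $(T_2,T_2')$, showing that $\mathcal{D}_\mu$ is a single orbit. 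Thus $\Tsymsq/\tilde{\Sigma}_n=\{\mathcal{D}_\mu\}$ indexed by the attainable degrees.

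It remains to determine which values of $\mu$ are attainable. Since $T\setminus T'\subseteq T$ and $|T|=m$, we need $\mu\le m$. Since $T\cup T'\subseteq[n]$ and $|T\cup T'|=m+\mu$, we need $\mu\le n-m$. Both bounds are achieved: $\mu=0$ by taking $T=T'=[m]$, and $\mu=\min(m,n-m)$ by an obvious choice of nearly-disjoint (or maximally-overlapping) trajectories. So $\mu$ ranges over $\{0,1,\ldots,\min(m,n-m)\}$, giving $M_{\tilde{\Sigma}_n}=\min(m,n-m)+1$, which splits into the two cases stated in Eq.~(\ref{eq:M-sigma}) depending on whether $m\le\lfloor n/2\rfloor$.

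I do not expect a real obstacle here: the action of $\tilde{\Sigma}_n=S\times\Sigma_n$ is very explicit, and the degree partitions a trajectory pair into blocks whose sizes are determined by $m$, $n$, and $\mu$ alone, so the transitivity within each $\mathcal{D}_\mu$ reduces to matching up equally-sized sets. The only mild subtlety is to notice that the swap part of $\tilde{\Sigma}_n$ is not actually needed to witness the orbit — a pure permutation already suffices — but this is consistent with (and no obstacle to) the claim. The analogous statement for bit-strings is already handled by Proposition~\ref{prop:sym-orbits-strings}, and the proof here follows the same template applied to pairs of $m$-subsets instead of single subsets.
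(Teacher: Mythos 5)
Your proposal is correct and follows essentially the same approach as the paper's proof: both directions argue degree is a complete invariant, with the forward direction noting permutations and the swap preserve degree, and the converse direction gluing together bijections between the matching blocks to build a witnessing permutation, followed by the same bound $\mu\le\min(m,n-m)$. The only small refinement is that you explicitly include the fourth block $[n]\setminus(T_i\cup T_i')$ in the gluing construction, a detail the paper leaves implicit when extending to a full permutation of $[n]$.
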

\begin{proof}
    See Appendix \ref{appendix:orbs-pairs}.
\end{proof}

For convenience in the remainder of this subsection, we will denote $M_{\tilde{\Sigma}_n}$ and $N_{\tilde{\Sigma}_n}$ with $M$ and $N$, respectively. It is implicitly understood that $M$ and $N$ are functions of the given parameters $m$ and $n$. Comparing Eqs. (\ref{eq:N-sigma}) and (\ref{eq:M-sigma}), note also that $M\leq N$, just as $m\leq n$.

Equipped with the orbits of bit-strings and trajectory pairs under $\tilde{\Sigma}_n$, the next step in solving the TS problem is to compute the entries of the $A(\theta)$ matrix. Per Eq. (\ref{eq:A-entries-def}), these entries are determined by the eigenvalues of the  $\mathbf{R}^{(\mu)}_{\tilde{\mathcal{G}}}$ operators which correspond to the basis states $\ket{\overline{\nu}}$ of the $\tilde{\mathcal{G}}$-invariant subspace. For $G=\Sigma_n$ and $\mathcal{G} = P(G)$, we use the notation $\mathbf{R}^{(\mu)}_{\rm sym}$ and $\mathcal{H}_{\rm sym}$ to represent $\mathbf{R}^{(\mu)}_{\tilde{\mathcal{G}}}$ and $\mathcal{H}_{\tilde{\mathcal{G}}}$, respectively. By Proposition \ref{prop:HG-basis}, the bit-string orbits $\mathbb{Z}_2^n/\tilde{\Sigma}_n$ give an orthogonal basis for $\mathcal{H}_{\rm sym}$; in particular, each unnormalized basis vector $\ket{\overline{\nu}}$ is the sum of all $Z$-eigenbasis states with weight either $\nu$ or $n-\nu$:
\begin{align}\label{eq:sym-basis}
    \ket{\overline{\nu}}=\left(\sum_{j_1\ldots j_n\in \mathcal{W}_\nu}\ket{j_1\ldots j_n}\right) + \left(\sum_{j'_1\ldots j'_n\in \mathcal{W}_{n-\nu}}\ket{j'_1\ldots j'_n}\right)
\end{align}
for all $\nu=0,\ldots,N-1$, where the amplitude of $\ket{\overline{\nu}}$ is halved when $n$ is even and $\nu=N-1=n/2$ to avoid double-counting. For example, if $n=3$, the unnormalized basis vectors for  $\mathcal{H}_{\rm sym}$ are
\begin{align}
    \ket{\overline{0}}&=\ket{000}+\ket{111}\ \mathrm{and}\nonumber\\
    \ket{\overline{1}}&=\ket{100}+\ket{010}+\ket{001}+\ket{011}+\ket{101}+ \ket{110}.
\end{align}
On the other hand, the trajectory pair orbits $\Tsymsq/\tilde{\Sigma}_n$ give the operators $\mathbf{R}^{(\mu)}_{\rm sym}$. By applying each $\mathbf{R}^{(\mu)}_{\rm sym}$ to each $\ket{\overline{\nu}}$, the entries of $A(\theta)$ can be computed explicitly:
\begin{proposition}\label{prop:A-entries}
    Suppose $\mathcal{T}=\Tsym$ and $G=\Sigma_n$. Then the entries of the matrix $A(\theta)$ used in Theorem \ref{thm:lin-prog} are given by
    \begin{align}\label{eq:A-entries}
    A_{\mu,\nu}(\theta) = \alpha_\nu\sum_{i,i'=0}^\nu\binom{\mu}{i}\binom{\mu}{i'}\binom{n-2\mu}{\nu-(i+i')}\cos{[(i-i')\theta]}
\end{align}
    for $\mu=0,\ldots,M-1$ and $\nu=0,\ldots N-1$, where the symbol $\alpha_\nu$ is equal to 1 if $n$ is even and $\nu=N-1=n/2$; otherwise, $\alpha_\nu=2$.
\end{proposition}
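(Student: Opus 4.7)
The plan is to compute $A_{\mu,\nu}(\theta) = \bra{\overline{\nu}}\mathbf{R}^{(T,T')}(\theta)\ket{\overline{\nu}}$ directly. This reduction follows from Proposition \ref{prop:RG-props} together with the $\tilde{\mathcal{G}}$-invariance of $\ket{\overline{\nu}}$ (so that $\Pi_{\tilde{\mathcal{G}}}\ket{\overline{\nu}}=\ket{\overline{\nu}}$) and the identity $A_{\mu,\nu}=\lambda_{\mu,\nu}|\omega_\nu|=\lambda_{\mu,\nu}\braket{\overline{\nu}|\overline{\nu}}$. I may pick any convenient representative $(T,T')\in\Omega_\mu=\mathcal{D}_\mu$ (by Proposition \ref{prop:sym-orbits-pairs}), namely any two size-$m$ trajectories with $|T\setminus T'|=|T'\setminus T|=\mu$. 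Set $\Delta:=T\setminus T'$ and $\Delta':=T'\setminus T$.

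The first calculation is the diagonal action of $\mathbf{R}^{(T,T')}(\theta)=R^{\dagger(T)}(\theta)R^{(T')}(\theta)$ on $Z$-eigenstates. The operator applies $R_Z(-\theta)$ on qubits in $\Delta$, $R_Z(\theta)$ on qubits in $\Delta'$, and the identity on qubits in $T\cap T'$ and $[n]\setminus(T\cup T')$. Since $R_Z(\theta)\ket{b}$ has eigenvalue $e^{-i\theta(-1)^b/2}$, a direct expansion gives
\begin{align*}
    \mathbf{R}^{(T,T')}(\theta)\ket{j_1\ldots j_n}=e^{i\theta(a-b)}\ket{j_1\ldots j_n},
\end{align*}
where $a$ and $b$ are the numbers of 1-bits of $j_1\ldots j_n$ lying in $\Delta'$ and $\Delta$, respectively. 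Because this action is diagonal in the $Z$-basis, $\bra{\overline{\nu}}\mathbf{R}^{(T,T')}\ket{\overline{\nu}}$ reduces to a sum of the phases $e^{i\theta(a-b)}$ over the bit-strings comprising $\omega_\nu=\mathcal{W}_\nu\cup\mathcal{W}_{n-\nu}$ (per Proposition \ref{prop:sym-orbits-strings}).

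The second step is a combinatorial enumeration. Partition $[n]$ into the four blocks $\Delta,\Delta',T\cap T',[n]\setminus(T\cup T')$, of respective sizes $\mu,\mu,m-\mu,n-m-\mu$. The number of weight-$w$ bit-strings with exactly $i$ ones in $\Delta'$ and $i'$ ones in $\Delta$ is $\binom{\mu}{i}\binom{\mu}{i'}\binom{n-2\mu}{w-i-i'}$, since the remaining $w-i-i'$ ones are distributed freely over the $n-2\mu$ qubits outside $\Delta\cup\Delta'$. Swapping $i\leftrightarrow i'$ leaves the binomial factors invariant while conjugating the phase, so the sum is real and equals
\begin{align*}
    \sum_{i,i'\ge 0}\binom{\mu}{i}\binom{\mu}{i'}\binom{n-2\mu}{w-i-i'}\cos[(i-i')\theta].
\end{align*}
The substitution $i\to\mu-i$, $i'\to\mu-i'$ together with $\binom{N}{k}=\binom{N}{N-k}$ further shows that the weight-$\nu$ and weight-$(n-\nu)$ contributions are equal.

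To finish, one combines the two halves of $\omega_\nu$: when $\nu\neq n/2$ the two disjoint halves each contribute the same single-weight sum, giving $\alpha_\nu=2$; when $n$ is even and $\nu=n/2$ the halving prescribed in the definition of $\ket{\overline{\nu}}$ collapses the double count into a single copy, giving $\alpha_\nu=1$. Since $\binom{\mu}{k}=0$ for $k>\mu$, the upper summation index can harmlessly be raised to $\nu$, yielding precisely the claimed formula. The principal obstacle is the bookkeeping of the four-block partition and, in particular, verifying that the halving convention in the definition of $\ket{\overline{\nu}}$ for $\nu=n/2$ exactly absorbs the doubling coming from the two-weight structure, producing the correct $\alpha_\nu$ in each case.
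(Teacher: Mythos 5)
Your proof is correct and follows essentially the same route as the paper (Appendix \ref{appendix:A-entries}): reduce $A_{\mu,\nu}$ to $\bra{\overline{\nu}}\mathbf{R}^{(T,T')}\ket{\overline{\nu}}$ for a representative of $\mathcal{D}_\mu$, diagonalize the orthogonality operator in the $Z$-eigenbasis so that the entry becomes a sum of phases over $\omega_\nu$, then count bit-strings by the four-block partition $(\Delta,\Delta',T\cap T',[n]\setminus(T\cup T'))$. The only cosmetic difference is in how the two of you obtain the cosine: the paper pairs each weight-$\nu$ string with its global bit-flip in $\mathcal{W}_{n-\nu}$ (using $\varphi^{(T)}_{[n](j)}=-\varphi^{(T)}_j$) and then counts only over $\mathcal{W}_\nu$, whereas you extract the cosine from the $i\leftrightarrow i'$ symmetry of the binomial weights within a single weight class and separately show the $\mathcal{W}_\nu$ and $\mathcal{W}_{n-\nu}$ contributions are equal via $i\mapsto\mu-i$, $i'\mapsto\mu-i'$. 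Both bookkeeping choices are valid and land on the same formula with the same $\alpha_\nu$.
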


The set of achievable $\theta$ can now be computed by determining for what values of $\theta$ the linear program of Eq. (\ref{eq:lin-prog}) admits a feasible solution of $\mathbf{c}$. Although this linear program can be solved numerically for any $\theta\in[0,\pi]$ and size parameters $n$ and $m$, we instead derive closed-form bounds on the achievable $\theta$ using the following two step strategy:

\begin{enumerate}
    \item Given $n$ and $m$, solve the equality constraint $A(\theta)\mathbf{c}=\mathbf{d}$ and express the entries of the solution $\mathbf{c}$ in terms of $\theta$.
    \item Apply the inequality constraint $\mathbf{c}\geq 0$ to each of these entries to find a bound on the $\theta$ for which the linear program is feasible.
\end{enumerate}
Step (1) above could be achieved by performing Gaussian elimination on the augmented matrix $(A|\mathbf{d})$. Note that since only the $\mu=0$ entry of $\mathbf{d}$ is nonzero, the $\mu=0$ row of $(A|\mathbf{d})$ constitutes a normalization condition; this row can thus be removed from the system with the understanding that $\mathbf{c}$ is to be normalized later. Hence, step (1) could also be fulfilled by performing Gaussian elimination on $(A|\mathbf{d})_{1:M-1}$, where $(A|\mathbf{d})_{1:M-1}$ denotes the matrix obtained by removing the $\mu=0$ row of $(A|\mathbf{d})$.

However, due to the fact that the entries of $A(\theta)$ may contain various high-order terms proportional to $\cos{k\theta}$ for $k>1$, it is very challenging to directly execute Gaussian elimination on $(A|\mathbf{d})_{1:M-1}$. The first practical improvement will be to write the matrix $A$ in terms of the new variable $t=\cos{\theta}$ instead of $\theta$. After this substitution, $\cos{k\theta}$ is replaced with $T_k(t)$, where $T_k$ is the $k$th Chebyshev polynomial of the first kind. The entries of $A(t)$ are then
\begin{align}\label{eq:A-entries-t}
    A_{\mu,\nu}(t) = \alpha_\nu\sum_{i,i'=0}^\nu\binom{\mu}{i}\binom{\mu}{i'}\binom{n-2\mu}{\nu-(i+i')}T_{|i-i'|}(t).
\end{align}
As a result, the new system contains only polynomial terms in $t$ instead of trigonometric functions of $\theta$. Nonetheless, it remains unclear whether elementary row operations can perform the cancellations needed to convert $(A|\mathbf{d})_{1:M-1}$ to row-echelon form.

Fortunately, a beautiful relationship between the entries of $A(t)$ and their derivatives ultimately allows these row-reduction challenges to be circumvented. It is shown in Appendix \ref{appendix:deriv-A} that for all $\mu=1,\ldots M-1$ and $\nu=0,\ldots,N-1$,
\begin{align}\label{eq:A-deriv-basecase}
    \mu\left(A_{\mu,\nu}(t)-A_{\mu-1,\nu}(t)\right)=(t-1)\frac{d}{dt}A_{\mu,\nu}(t).
\end{align}
This relation can be intuitively justified via the following argument. Recall that $A_{\mu,\nu}$ is proportional to the eigenvalue of $\mathbf{R}^{(\mu)}_{\rm sym}(\theta)$ with eigenvector $\ket{\overline{\nu}}$.  Note that $\mathbf{R}^{(\mu)}_{\rm sym}(\theta) = \Pi_{\rm sym}\rbf(\theta)\Pi_{\rm sym}$ for any $(T,T')$ in the orbit $\Omega_\mu$, where $\Pi_{\rm sym}$ is the projector onto $\mathcal{H}_{\rm sym}$. In the tensor product decomposition of $\rbf(\theta)$, each qubit in $T\setminus T'$ receives $R^\dagger_Z(\theta)$, while each qubit in $T'\setminus T$ receives $R_Z(\theta)$. Observe that one $R^\dagger_Z(\theta)$ and one $R_Z(\theta)$ together contribute at most a factor of $e^{i\theta}$ to the eigenvalues of $\rbf(\theta)$. It follows that the eigenvalues of $\rbf(\theta)$ do not contain terms of higher order than $e^{i\theta\abs{T\setminus T'}}$, which eventually implies that the entry $A_{\mu,\nu}(t)$ also does not contain terms of higher order than $t^{\abs{T\setminus T'}}$, where $t=\cos{\theta}$. Since $\mu$ is the degree of the trajectory pair $(T,T')$, we have $\mu=\abs{T\setminus T'}$ and the entry $A_{\mu,\nu}(t)$ is subsequently a polynomial in $t$ with degree not exceeding $\mu$. If for the sake of intuition we suppose that the degree of $A_{\mu,\nu}(t)$ is exactly $\mu$, then Eq. (\ref{eq:A-deriv-basecase}) is conceptually supported by the following observation: differentiating $A_{\mu,\nu}(t)$ introduces a multiplicative factor of $\mu$ and appears to decrease its degree by $1$.

By repeatedly differentiating of each side of Eq. (\ref{eq:A-deriv-basecase}), a more general relation can be found relating the $j$th derivative of $A_{\mu,\nu}$ to its $(j+1)$th derivative. Let $A_{\mu,\nu}^{(j)}(t)=\frac{d^{j}}{dt^{j}}A_{\mu,\nu}(t)$ represent the $j$th derivative of $A_{\mu,\nu}(t)$ with respect to $t$. We then obtain the following lemma:
\begin{lemma}\label{lemma:deriv-A}
    Suppose $\mathcal{T}=\Tsym$ and $G=\Sigma_n$. The following relationship holds for all $\mu=1,\ldots,M-1$, $\nu=0,\ldots,N-1$ and integers $j$ such that $0\leq j\leq \mu-1$:
\begin{align}\label{eq:deriv-A}
    (\mu-j)A_{\mu,\nu}^{(j)}(t)-\mu A_{\mu-1,\nu}^{(j)}(t)=(t-1)A_{\mu,\nu}^{(j+1)}(t).
\end{align}
\end{lemma}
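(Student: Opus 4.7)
The plan is to prove Lemma~\ref{lemma:deriv-A} by induction on $j$, taking the base case $j=0$ (which is Eq.~(\ref{eq:A-deriv-basecase}), established independently in Appendix~\ref{appendix:deriv-A}) as given. The inductive step should follow simply by differentiating both sides with respect to $t$ and applying the product rule on the right-hand side.

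Concretely, I would first verify that for $j=0$ the lemma reduces to
\begin{align}
\mu A_{\mu,\nu}^{(0)}(t) - \mu A_{\mu-1,\nu}^{(0)}(t) = (t-1)A_{\mu,\nu}^{(1)}(t),
\end{align}
which is exactly Eq.~(\ref{eq:A-deriv-basecase}). Next, assume the relation
\begin{align}
(\mu-j)A_{\mu,\nu}^{(j)}(t) - \mu A_{\mu-1,\nu}^{(j)}(t) = (t-1)A_{\mu,\nu}^{(j+1)}(t)
\end{align}
holds for some $0 \le j \le \mu-2$. Differentiating both sides with respect to $t$ yields
\begin{align}
(\mu-j)A_{\mu,\nu}^{(j+1)}(t) - \mu A_{\mu-1,\nu}^{(j+1)}(t) = A_{\mu,\nu}^{(j+1)}(t) + (t-1)A_{\mu,\nu}^{(j+2)}(t),
\end{align}
where the product rule is applied to the right-hand side. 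Collecting the $A_{\mu,\nu}^{(j+1)}$ terms on the left gives
\begin{align}
(\mu-(j+1))A_{\mu,\nu}^{(j+1)}(t) - \mu A_{\mu-1,\nu}^{(j+1)}(t) = (t-1)A_{\mu,\nu}^{(j+2)}(t),
\end{align}
which is precisely the statement of the lemma with $j$ replaced by $j+1$. Since the induction increments $j$ up to $\mu-1$, the desired range $0 \le j \le \mu-1$ is covered.

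There is essentially no obstacle beyond the base case, which is already justified in Appendix~\ref{appendix:deriv-A}; the inductive step is a one-line differentiation. I would only briefly remark that the relation is valid termwise as a polynomial identity in $t$ (recall from Eq.~(\ref{eq:A-entries-t}) that each $A_{\mu,\nu}(t)$ is a polynomial in $t$ via the Chebyshev polynomials $T_{|i-i'|}(t)$), so all derivatives exist and commute with the finite sums defining $A_{\mu,\nu}$. This makes the differentiation step rigorous without needing to separately analyze convergence.
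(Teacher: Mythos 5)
Your inductive step is exactly the paper's: differentiate the relation, apply the product rule to $(t-1)A_{\mu,\nu}^{(j+1)}(t)$, and move the extra $A_{\mu,\nu}^{(j+1)}$ term across. That part is correct and you also correctly observe that everything is a polynomial identity so there are no analytic subtleties.

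The gap is that you treat the base case $j=0$ as a citable prior result. In fact Eq.~(\ref{eq:A-deriv-basecase}) is not established anywhere outside the proof of Lemma~\ref{lemma:deriv-A} itself: the main text's phrase ``It is shown in Appendix~\ref{appendix:deriv-A} that\ldots'' is a forward reference to exactly this appendix, whose proof proceeds by first noting the trivial inductive step (your argument) and then spending roughly two pages establishing $j=0$. Citing that appendix to supply your base case is therefore circular. And the base case is genuinely where the mathematics lives: one must show
\begin{align}
\mu\bigl(A_{\mu,\nu}(t)-A_{\mu-1,\nu}(t)\bigr)=(t-1)\,A'_{\mu,\nu}(t),
\end{align}
which, after rewriting $A_{\mu,\nu}(t)$ as a Chebyshev expansion $2\alpha_\nu\sum_k' C_{\mu,k}T_k(t)$ and expressing both sides in the $U_k$ basis, reduces to the coefficient identity
\begin{align}
\mu\bigl[(C_{\mu,k}-C_{\mu-1,k})-(C_{\mu,k+2}-C_{\mu-1,k+2})\bigr]
= kC_{\mu,k}+(k+2)C_{\mu,k+2}-2(k+1)C_{\mu,k+1}.
\end{align}
Because the $C_{\mu,k}$ are sums of triple products of binomial coefficients with no simple closed form, the paper verifies this by packaging each side as a coefficient extraction $[x^k][w^{\nu-k}]$ from a common two-variable generating function and showing the two generating functions coincide. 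That computation is the actual content of the lemma and has no substitute in your proposal. As written, your argument proves only the reduction ``Lemma~\ref{lemma:deriv-A} for all $j$ follows from Lemma~\ref{lemma:deriv-A} at $j=0$,'' not the lemma itself.
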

\begin{proof}
    See Appendix \ref{appendix:deriv-A}.
\end{proof}

Lemma \ref{lemma:deriv-A} is a crucial result---without the relationship of Eq. (\ref{eq:deriv-A}), it currently appears extremely difficult to derive general closed-form bounds on the achievable $\theta$ when $\mathcal{T}=\Tsym$. In particular, this lemma provides a way to construct an equivalent linear system to $(A|\mathbf{d})_{1:M-1}$ which is in row-echelon form, as desired. However, rather than using elementary row operations to derive this equivalent system from $(A|\mathbf{d})_{1:M-1}$, the relationship of Eq. (\ref{eq:deriv-A}) allows us to simply replace the $\mu$th row of $(A|\mathbf{d})_{1:M-1}$ with its $(\mu-1)$th derivative. The resulting equivalent system is presented by the following theorem:
\begin{theorem}\label{thm:red-lin-prog}
    Suppose $\mathcal{T}=\Tsym$ and $G=\Sigma_n$. Let $A(t)$ and $\mathbf{d}$ be the matrix and vector defined in Theorem \ref{thm:lin-prog}, where $t=\cos{\theta}$. Define the $M\times N$ real matrix $A'(t)$ as follows:
    \begin{align}
        A'_{\mu,\nu}(t) &= \begin{cases}
            A_{0,\nu} (t)&\mu=0\\
            A_{\mu,\nu}^{(\mu-1)} (t)&\mu = 1,\ldots, M-1
        \end{cases}
    \end{align}
    for all $\nu=0,\ldots,N-1$. Then for a given $\theta\in[0,\pi]$, a $\sts$ state exists if and only if there exists $\mathbf{c}\in\mathbb{R}^N$ such that 
    \begin{align}\label{eq:red-lin-prog}
        A'(t)\mathbf{c}=\mathbf{d}\ \mathrm{and}\ \mathbf{c}\geq 0,
    \end{align}
    where $\mathbf{d}\in\mathbb{R}^{M}$ is the vector with entries $d_\mu = \delta_{\mu,0}$. Furthermore, for any $\mathbf{c}$ solving Eq. (\ref{eq:red-lin-prog}), $\ket{\psi}=\sum_{\nu=0}^{N-1}\sqrt{c_\nu}\ket{\overline{\nu}}$ is a satisfactory TS state.
    Note that the entries of $A'$ contain only terms which are linear in $t$, and $A'_{\mu,\nu}=0$ if $\nu < \mu-1$.
\end{theorem}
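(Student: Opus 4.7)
My plan is to prove Theorem \ref{thm:red-lin-prog} by showing that, for each fixed $t = \cos\theta \in [-1,1]$, the systems $A(t)\mathbf{c} = \mathbf{d}$ from Theorem \ref{thm:lin-prog} and $A'(t)\mathbf{c} = \mathbf{d}$ have identical solution sets in $\mathbf{c} \in \mathbb{R}^N$. The equivalence of feasibility and the TS state construction $\ket{\psi} = \sum_\nu \sqrt{c_\nu}\ket{\overline{\nu}}$ then transfer directly via Theorem \ref{thm:lin-prog}. The central tool is a closed-form polynomial identity derived by iterating Lemma \ref{lemma:deriv-A}: by induction on $k$ (base case $k=0$ trivial, inductive step combining Lemma \ref{lemma:deriv-A} with Pascal's rule $\binom{k}{i}+\binom{k}{i-1}=\binom{k+1}{i}$), I would establish $(t-1)^k A_{\mu,\nu}^{(k)}(t) = \frac{\mu!}{(\mu-k)!}\sum_{i=0}^{k}(-1)^i\binom{k}{i}A_{\mu-i,\nu}(t)$ for all $0 \leq k \leq \mu-1$. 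Specializing to $k = \mu-1$ yields $(t-1)^{\mu-1} A'_{\mu,\nu}(t) = \mu!\sum_{i=0}^{\mu-1}(-1)^i\binom{\mu-1}{i}A_{\mu-i,\nu}(t)$, expressing the $\mu$th row of $A'$ as an $\mathbb{R}(t)$-linear combination of rows $A_1(t), \ldots, A_\mu(t)$ of $A$, with no contribution from $A_0$.

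The reverse implication ($A'\mathbf{c} = \mathbf{d} \Rightarrow A\mathbf{c} = \mathbf{d}$) follows by induction on $\mu$ at any fixed $t$. The base cases $\mu = 0, 1$ hold because $A'_0 = A_0$ and $A'_1 = A_1$ by definition. For $\mu \geq 2$, isolating the $i=0$ term of the identity gives $\mu!\, A_\mu(t)\mathbf{c} = (t-1)^{\mu-1} A'_\mu(t)\mathbf{c} - \mu!\sum_{i=1}^{\mu-1}(-1)^i\binom{\mu-1}{i}A_{\mu-i}(t)\mathbf{c}$; the first RHS term vanishes by the assumption $A'_\mu(t)\mathbf{c} = d_\mu = 0$, and the sum vanishes by the inductive hypothesis $A_{\mu-i}(t)\mathbf{c} = d_{\mu-i} = 0$ for $i \in \{1,\ldots,\mu-1\}$. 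Crucially, this argument avoids dividing by $(t-1)$, so it holds for every $t$, including $t = 1$.

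For the forward implication ($A\mathbf{c} = \mathbf{d} \Rightarrow A'\mathbf{c} = \mathbf{d}$), substituting $A_{\mu-i}(t)\mathbf{c} = d_{\mu-i} = 0$ into the identity for $\mu \geq 2$ gives $(t-1)^{\mu-1} A'_\mu(t)\mathbf{c} = 0$. For $t \neq 1$, dividing by the nonzero factor yields $A'_\mu(t)\mathbf{c} = d_\mu$, while $\mu = 0, 1$ are trivial. The boundary $t = 1$ (i.e., $\theta = 0$) is handled separately: Lemma \ref{lemma:deriv-A} at $j = 0$ and $t = 1$ forces $A_\mu(1) = A_{\mu-1}(1)$, so all rows of $A(1)$ coincide and the system $A(1)\mathbf{c} = \mathbf{d}$ is infeasible whenever $M \geq 2$ (its first two rows would demand $A_0(1)\mathbf{c} = 1$ and $A_1(1)\mathbf{c} = 0$ simultaneously), rendering the implication vacuous; the degenerate $M = 1$ case is trivial since then $A' = A$.

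The auxiliary claims---that entries of $A'$ are linear in $t$ and $A'_{\mu,\nu} = 0$ for $\nu < \mu-1$---follow from the degree bound $\deg_t A_{\mu,\nu}(t) \leq \min(\mu,\nu)$, which is immediate from Eq. (\ref{eq:A-entries-t}) because $T_{|i-i'|}(t)$ has degree $|i-i'| \leq \nu$ while the binomials restrict $i,i' \leq \mu$. Taking the $(\mu-1)$th derivative yields $\deg_t A'_{\mu,\nu} \leq \min(1, \nu-\mu+1)$, which is at most $1$ (giving linearity) and is negative---forcing $A'_{\mu,\nu} \equiv 0$---precisely when $\nu < \mu-1$. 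The main obstacle I anticipate is the careful combinatorial bookkeeping in the inductive derivation of the closed-form identity, tracking signs, factorial ratios, and the telescoping Pascal sum across the inductive step, though each manipulation is routine once the pattern is recognized.
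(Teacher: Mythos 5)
Your proposal is correct, and it reaches the result by a genuinely different route than the paper. The paper's proof (Appendix \ref{appendix:red-lin-prog}) never forms a closed-form expression for $A'_{\mu,\nu}$ as a combination of the original rows; instead it works directly with the one-step recurrence of Lemma~\ref{lemma:deriv-A}, running an induction on increasing derivative order $j$ for the ``$\implies$'' direction (dividing by $t-1$ at each step) and a nested induction---outer on decreasing $j$, inner on increasing $\mu$---for ``$\impliedby$''. You instead telescope the recurrence into the explicit identity
\begin{align}
(t-1)^{\mu-1}A'_{\mu,\nu}(t) \;=\; \mu!\sum_{i=0}^{\mu-1}(-1)^i\binom{\mu-1}{i}A_{\mu-i,\nu}(t),
\end{align}
which makes the row-equivalence of $(A'\,|\,\mathbf{d})_{1:M-1}$ and $(A\,|\,\mathbf{d})_{1:M-1}$ manifest at a glance and collapses the paper's nested induction into a single strong induction on $\mu$. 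Your handling of the $t=1$ boundary is also different and arguably cleaner: you read off from Lemma~\ref{lemma:deriv-A} at $j=0$ that all rows of $A(1)$ coincide, making feasibility of $A(1)\mathbf{c}=\mathbf{d}$ immediately impossible for $M\geq 2$, whereas the paper instead evaluates the explicit formula Eq.~(\ref{eq:Ap-entries}) at $t=1$ to show the entries of $A'(1)$ are nonnegative and then argues from the row $\mu=1$. Your degree-bound argument for the auxiliary claims (linearity in $t$, vanishing of $A'_{\mu,\nu}$ for $\nu<\mu-1$) is essentially the same inference the paper draws from Eq.~(\ref{eq:Ap-entries}). The cost of your approach is proving the closed-form identity by a separate induction with Pascal's rule; the gain is that the logic of the main argument becomes transparent, one avoids the paper's somewhat delicate nested induction, and the $t=1$ special case is dispatched without any explicit entry computation.
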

\begin{proof}
    See Appendix \ref{appendix:red-lin-prog}. 
\end{proof}
The augmented matrix $(A'|\mathbf{d})_{1:M-1}$ representing the system $A'(t)\mathbf{c}=\mathbf{d}$ is in row-echelon form since $A'_{\mu,\nu}=0$ if $\nu < \mu-1$. The next step is to transform this $(A'|\mathbf{d})_{1:M-1}$ into reduced row-echelon form such that each of the solution coefficients $c_\nu$ can be cleanly expressed in terms of $\theta$. Note that the repeated differentiation of $A(t)$ has conveniently killed off all higher order powers of $t$ in the entries of $A'(t)$. Because $A'(t)$ is thus linear in $t$, elementary row operations on $(A'|\mathbf{d})_{1:M-1}$ are relatively straightforward to work with. Hence, we aim to convert $(A'|\mathbf{d})_{1:M-1}$ to reduced row-echelon form via standard Gaussian elimination.

Unfortunately, for many choices of $n$ and $m$, it is difficult to exhaustively describe the set of achievable $\theta$ since this further elimination still remains challenging. Note that our overall strategy of (1) solving $A'(t)\mathbf{c}=\mathbf{d}$ and then (2) enforcing $\mathbf{c}\geq 0$ to retrieve a bound on $\theta$ yields necessary and sufficient conditions on $\theta$ for the existence of a TS state. Such conditions represent a full solution to the TS problem, as they completely characterize the set of achievable $\theta$. Although we ultimately find a full solution for special values of $n$ and $m$, it is often infeasible to find general closed-form necessary conditions. However, we can still use our special full solutions to derive general sufficient conditions, even when necessary conditions are unobtainable. In particular, the following proposition explains when sufficient conditions for one TS problem also hold for another related problem. 
\begin{proposition}\label{lemma:smaller}
    Suppose $G=\Sigma_n$. Consider two TS scenarios with a sensor of $n$ qubits: one with $\mathcal{T}=\mathcal{T}_{\rm sym}(n,m_1)$ and another with $\mathcal{T}=\mathcal{T}_{\rm sym}(n,m_2)$. Let $M_j=m_j+1$ if $m_j\leq \floor{n/2}$ and $M_j=n-m_j+1$ otherwise, per Eq. (\ref{eq:M-sigma}). Now suppose for a given $\theta\in[0,\pi]$ that a TS state $\ket{\psi}$ exists discriminating the trajectories in $\mathcal{T}_{\rm sym}(n,m_1)$. Then for the same $\theta$, $\ket{\psi}$ also discriminates the trajectories in $\mathcal{T}_{\rm sym}(n,m_2)$ if $M_2\leq M_1$.
\end{proposition}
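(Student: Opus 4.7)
The plan is to exploit the fact that the orthogonality operator $\mathbf{R}^{(T,T')}(\theta)=R^{\dagger(T)}(\theta)R^{(T')}(\theta)$ depends only on the ordered pair $(T\setminus T',T'\setminus T)$ of disjoint, equal-sized subsets of $[n]$, and not on $T\cap T'$ or on the common size $m$ of the trajectories. Indeed, each qubit $j\in T\cap T'$ contributes $R_Z^\dagger R_Z=I$, each qubit $j\notin T\cup T'$ contributes $I\cdot I=I$, while $j\in T\setminus T'$ and $j\in T'\setminus T$ contribute $R_Z^\dagger(\theta)$ and $R_Z(\theta)$ respectively. Thus the TS orthogonality conditions Eq. (\ref{eq:ortho}) for $\mathcal{T}_{\rm sym}(n,m)$ can be re-indexed by the disjoint pairs $(A,B)$ of subsets of $[n]$ with $|A|=|B|=\mu$, independent of $m$.

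Next, I would identify which such pairs $(A,B)$ actually arise from $\mathcal{T}_{\rm sym}(n,m)$. A pair $(T,T')\in\mathcal{T}_{\rm sym}(n,m)^2$ with $T\setminus T'=A$ and $T'\setminus T=B$ exists iff one can choose a common part $T\cap T'\subseteq[n]\setminus(A\cup B)$ of size $m-\mu$, which is possible iff $0\le m-\mu$ and $m-\mu\le n-2\mu$; together these are equivalent to $\mu\le\min(m,n-m)=M-1$. Hence the TS conditions for $\mathcal{T}_{\rm sym}(n,m)$ are precisely $\bra{\psi}\mathbf{R}^{(A,B)}(\theta)\ket{\psi}=\delta_{\mu,0}$ ranging over all disjoint $(A,B)$ with $|A|=|B|=\mu\le M-1$, where $\mathbf{R}^{(A,B)}(\theta)$ denotes the operator with the prescribed tensor-factor pattern.

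The hypothesis $M_2\le M_1$ then immediately implies that every disjoint pair admissible for $\mathcal{T}_{\rm sym}(n,m_2)$ is also admissible for $\mathcal{T}_{\rm sym}(n,m_1)$. Concretely, given $(T,T')\in\mathcal{T}_{\rm sym}(n,m_2)^2$, set $A=T\setminus T'$, $B=T'\setminus T$, and $\mu=|A|$. Since $\mu\le M_2-1\le M_1-1=\min(m_1,n-m_1)$, one can choose any $C\subseteq[n]\setminus(A\cup B)$ with $|C|=m_1-\mu$; then $\tilde{T}=A\cup C$ and $\tilde{T}'=B\cup C$ lie in $\mathcal{T}_{\rm sym}(n,m_1)$ and satisfy $\mathbf{R}^{(\tilde{T},\tilde{T}')}(\theta)=\mathbf{R}^{(T,T')}(\theta)$ by the first paragraph. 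Applying the $m_1$ TS condition to $(\tilde{T},\tilde{T}')$ therefore yields $\bra{\psi}\mathbf{R}^{(T,T')}(\theta)\ket{\psi}=\delta_{\tilde{T},\tilde{T}'}=\delta_{T,T'}$, where the last equality holds because $\tilde{T}=\tilde{T}'\Leftrightarrow A=B=\varnothing\Leftrightarrow T=T'$.

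No step of this plan looks delicate: the main content is simply the observation that the orthogonality operators are controlled purely by the symmetric-difference data, while the condition $M_2\le M_1$ encodes precisely the set-theoretic inclusion of admissible symmetric-difference pairs. The only bookkeeping is verifying that an admissible common part $C$ of size $m_1-\mu$ can always be chosen, which is immediate from $\mu\le\min(m_1,n-m_1)$. I would not expect any genuine obstacle, and in particular the argument never appeals to any symmetry of $\ket{\psi}$ itself.
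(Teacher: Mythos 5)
Your proof is correct and takes a genuinely different—and more elementary—route than the paper's. The paper proves this via Theorem~\ref{thm:lin-prog}: it observes that the matrix $A(\theta)$ of Eq.~(\ref{eq:A-entries}) depends only on $n$ and $\theta$ (not on $m$), so the linear feasibility system for $\mathcal{T}_{\rm sym}(n,m_2)$ is literally a sub-system (the first $M_2$ rows) of the one for $\mathcal{T}_{\rm sym}(n,m_1)$, and any feasible $\mathbf{c}$ for the larger system is automatically feasible for the smaller. That argument establishes the \emph{existence} of a TS state for $m_2$, but it passes through the $\tilde{\mathcal{G}}$-invariant subspace and the $\mathbf{c}$-vector representation, so the TS state it produces for $m_2$ is the symmetrized $\sum_\nu\sqrt{c_\nu}\ket{\overline{\nu}}$, not a priori the original $\ket{\psi}$. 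Your argument, by contrast, works directly at the level of orthogonality operators: since $\mathbf{R}^{(T,T')}$ depends only on the disjoint pair $(T\setminus T',\,T'\setminus T)$, and since the condition $M_2\le M_1$ means exactly that every such disjoint pair realizable from $\mathcal{T}_{\rm sym}(n,m_2)$ is also realizable from $\mathcal{T}_{\rm sym}(n,m_1)$ (by padding with a common part $C$ of the appropriate size), each orthogonality condition for $m_2$ is literally one of the $m_1$ conditions. This gives the stronger conclusion that the \emph{same} $\ket{\psi}$ works, matching the wording of the proposition exactly, and it does so without invoking any of the linear-programming or symmetry-reduction machinery. The paper's version buys something too—it fits naturally into the matrix framework used throughout Section~\ref{sec:sym-group} and makes the row-truncation structure of the $A$-matrix explicit—but your argument is self-contained and sharper as a statement about individual states.
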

\begin{proof}
    Let $N$ be the quantity defined in Eq. (\ref{eq:N-sigma}). Define $A(\theta)$ to be the $M_1\times N$ matrix whose entries are given by Eq. (\ref{eq:A-entries}) for $\mu=0,\ldots M_1-1$ and $\nu=0,\ldots,N-1$. By Theorem \ref{thm:lin-prog}, a TS state exists at the given $\theta$ for the trajectory set $\mathcal{T}_{\rm sym}(n,m_1)$ if and only if $A(\theta)\mathbf{c}=\mathbf{d}_1$ for some $\mathbf{c}\geq 0$, where $\mathbf{d}_1\in\mathbb{R}^{M_1}$ has entries $d_\mu=\delta_{\mu,0}$. Now let $A_{0:M_2-1}(\theta)$ be the matrix formed by rows $0$ to $M_2-1$ of $A(\theta)$; $A_{0:M_2-1}(\theta)$ is well-defined since $M_2\leq M_1$. Then by Theorem \ref{thm:lin-prog}, a TS state exists at the given $\theta$ for the trajectory set $\mathcal{T}_{\rm sym}(n,m_2)$ if and only if $A_{0:M_2-1}(\theta)\mathbf{c}=\mathbf{d}_2$ for some $\mathbf{c}\geq 0$, where $\mathbf{d}_2\in\mathbb{R}^{M_2}$ has entries $d_\mu=\delta_{\mu,0}$. The result follows from the observation that a $\mathbf{c}\geq 0$ automatically satisfies $A_{0:M_2-1}(\theta)\mathbf{c}=\mathbf{d}_2$ if it satisfies $A(\theta)\mathbf{c}=\mathbf{d}_1$ and $M_2\leq M_1$.
\end{proof}

We now determine the set of achievable $\theta$ for the special case where $n$ and $m$ yield $M=N$ through Eqs. (\ref{eq:M-sigma}) and (\ref{eq:N-sigma}); note by Proposition \ref{lemma:smaller} that, for fixed $n$, this set of $\theta$ is also achievable for any $m$ such that $M\leq N$. Thus, assume $M=N$. If $N=1$, then the TS problem is trivial, so assume $N>1$. By using Gaussian elimination to convert the augmented matrix $(A'|\mathbf{d})_{1:M-1}$ into reduced row-echelon form, each $c_\nu$ can ultimately be expressed exclusively in terms of $t=\cos{\theta}$ and $c_{N-1}$ for all $\nu=0,\ldots, N-1$, per the following proposition:

\begin{proposition}\label{prop:rref}
    Suppose $\mathcal{T}=\Tsym$ and $G=\Sigma_n$. Then $\mathbf{c}$ satisfies $A'(t)\mathbf{c}=\mathbf{d}$ if and only if $\sum_\nu A'_{0,\nu} c_\nu = 1$ and 
    \begin{align}\label{eq:TSsym-coeffs}
        c_\nu=\begin{cases}
            (-1)^{N-1-\nu}T_{N-1-\nu}(t)c_{N-1} &n\ \mathrm{even}\\
        (-1)^{N-1-\nu}W_{N-1-\nu}(t)c_{N-1} &n\ \mathrm{odd},
        \end{cases}
    \end{align}
    where $W_k(t)$ is defined to be the $k$th Chebyshev polynomial of the fourth kind \cite{cheby}:
    \begin{align}
        W_k(\cos{\theta}) = \frac{\sin{((k+1/2)\theta)}}{\sin{(\theta/2)}}.
    \end{align}
\end{proposition}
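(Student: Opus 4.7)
The plan is to analyze the two parts of $A'(t)\mathbf{c}=\mathbf{d}$ separately. The row $\mu = 0$ reads $\sum_\nu A'_{0,\nu}(t)\, c_\nu = 1$, which is precisely the stated normalization condition. Rows $\mu = 1,\ldots,N-1$ form the homogeneous subsystem $\sum_\nu A'_{\mu,\nu}(t)\, c_\nu = 0$, and I must show its solution space is the one-dimensional family $c_\nu = (-1)^{N-1-\nu}P_{N-1-\nu}(t)\,c_{N-1}$, where $P_k = T_k$ for even $n$ and $P_k = W_k$ for odd $n$.

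First, I would obtain an explicit linear-in-$t$ formula for $A'_{\mu,\nu}(t) = A_{\mu,\nu}^{(\mu-1)}(t)$ by differentiating Eq. (\ref{eq:A-entries-t}) term by term. Since $T_k^{(\mu-1)} \equiv 0$ for $k < \mu-1$, only terms with $|i-i'| \in \{\mu-1,\mu\}$ survive; moreover $T_{\mu-1}^{(\mu-1)}$ is a constant and $T_\mu^{(\mu-1)}$ is linear in $t$, with known leading coefficients. Collapsing the resulting binomial sums via Vandermonde-type identities yields explicit constants $p_{\mu,\nu}, q_{\mu,\nu}$ with $A'_{\mu,\nu}(t) = p_{\mu,\nu}+q_{\mu,\nu}t$, vanishing when $\nu < \mu-1$. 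This converts each homogeneous row into a clean linear relation among $c_{\mu-1},c_\mu,\ldots,c_{N-1}$.

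Second, I would perform back-substitution starting from row $\mu = N-1$. Because $A'$ is in echelon form with nonzero pivots $A'_{\mu,\mu-1}$, the solution space of the homogeneous subsystem has dimension exactly one and is parameterized by $c_{N-1}$. I expect the back-substitution recurrence to reduce, after simplification, to the three-term Chebyshev recurrence $T_{k+1}(t) = 2t\,T_k(t) - T_{k-1}(t)$ (respectively $W_{k+1} = 2tW_k - W_{k-1}$), with the parity of $n$ entering only through the boundary conditions: for $n$ even, the $\alpha_\nu$ factor equals $1$ at $\nu = n/2$ and produces the initialization $T_0=1,\,T_1=t$, whereas for $n$ odd it equals $2$ throughout and produces $W_0=1,\,W_1=2t-1$. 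I would verify the initial steps of the recurrence by direct computation and then close the induction using the Chebyshev identity, alternatively checking correctness by plugging the proposed $c_\nu$ back into each row.

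The main obstacle is the explicit evaluation of the binomial sums defining $p_{\mu,\nu}$ and $q_{\mu,\nu}$ and showing that the back-substitution recurrence genuinely coincides with the Chebyshev three-term recurrence rather than some close cousin. Tracking the combinatorial factor $\alpha_\nu$ and the parity of $n$ through this reduction is where most of the bookkeeping will live; once the Chebyshev recurrence is identified, the induction on $\nu$ running downward from $N-1$ closes cleanly, and the ``only if'' direction follows from the one-dimensionality of the homogeneous solution space together with the normalization from row $\mu=0$.
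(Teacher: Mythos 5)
Your overall strategy matches the paper's: split off the normalization row $\mu=0$, use the echelon form of $A'$ (with nonzero pivots $A'_{\mu,\mu-1}$) to conclude the homogeneous solution space is one-dimensional, compute each $A'_{\mu,\nu}(t)$ explicitly as a degree-one polynomial in $t$, and back-substitute downward in $\nu$. The ``alternative'' you mention of plugging the proposed $c_\nu$ back into each row is in fact exactly the paper's route.

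However, the framing that the back-substitution ``reduces to the three-term Chebyshev recurrence'' is misleading and understates where the real work is. Row $\mu$ of the homogeneous system ties $c_{\mu-1}$ to \emph{all} of $c_\mu,\ldots,c_{N-1}$, a list whose length grows as $\mu$ decreases; there is no literal three-term relation among the row equations. What must actually be shown at the inductive step is the multi-term identity
\begin{align*}
    \sum_{\nu'=\nu}^{N-1}(-1)^{N-\nu'}\alpha_{\nu'}\left\{\binom{n-2\nu}{\nu'-\nu}t+ \frac{1}{2}\left[\binom{n-2\nu}{\nu'-\nu+1}+\binom{n-2\nu}{\nu'-\nu-1}\right]\right\}T_{N-1-\nu'}(t) = (-1)^{N-\nu}T_{N-\nu}(t)
\end{align*}
(and its $W$-analogue for odd $n$), and this does not follow by simply ``applying the recurrence.'' The paper establishes it through the product-to-sum identity $t\,T_k(t)=\tfrac{1}{2}\bigl(T_{k+1}(t)+T_{k-1}(t)\bigr)$, a reindexing $r=\nu'-\nu$, $s=N-1-\nu$, and a telescoping cancellation that relies on $\binom{2s}{s+1}=\binom{2s}{s-1}$. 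That telescoping is the substance of the proposition and must be exhibited, not tucked under ``bookkeeping of $\alpha_\nu$''; merely observing that $T_k$ and $W_k$ obey a three-term recurrence with different initial data is not a proof.

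One arithmetic slip: $W_1(t)=2t+1$, not $2t-1$ (indeed $W_1(\cos\theta)=\sin(3\theta/2)/\sin(\theta/2)=1+2\cos\theta$); you have confused the fourth-kind polynomial with the third-kind polynomial $V_1(t)=2t-1$.
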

\begin{proof}
    See Appendix \ref{appendix:TSsym-coeffs}.
\end{proof}
Since Eq. (\ref{eq:TSsym-coeffs}) expresses the solution to $A'(t)\mathbf{c}=\mathbf{d}$ simply in terms of $t=\cos{\theta}$, Theorem \ref{thm:red-lin-prog} implies that imposing the nonnegativity constraint $\mathbf{c}\geq 0$ should readily yield bounds on the achievable $\theta$. Note that $\mathbf{c}\geq 0$ requires $c_\nu/c_{N-1} \geq 0$ for all $\nu=0,\ldots,N-1$. Since these coefficient ratios depend solely on $\theta$, enforcing them to be nonnegative gives necessary and sufficient conditions on $\theta$ for the existence of a TS state, thereby fully solving the TS problem for $\mathcal{T}=\Tsym$ when $M=N$. Proposition \ref{lemma:smaller} then allows this particular full solution to be extended to the case of general $n$ and $m$ in the form of sufficient conditions:
\begin{theorem}\label{thm:S-TS-criterion}
    Suppose $\mathcal{T}=\Tsym$. For $\theta\in[0,\pi]$ and arbitrary $n>0$ and $m\geq 0$, a sufficient criterion for the existence of a $\sts$ state is 
    \begin{align}\label{eq:S-TS-criterion}
       \theta\geq\frac{(n-1)\pi}{n}.
    \end{align}
    Furthermore, when $m=\floor*{\frac{n}{2}}$ or $\ceil*{\frac{n}{2}}$, Eq. (\ref{eq:S-TS-criterion}) becomes a necessary criterion.
\end{theorem}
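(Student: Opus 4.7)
The plan is to exploit the special structure $M = N$, which occurs precisely when $m \in \{\lfloor n/2 \rfloor, \lceil n/2 \rceil\}$ by Eqs. (\ref{eq:M-sigma}) and (\ref{eq:N-sigma}), and then to extend the resulting sufficient condition to general $m$ via Proposition \ref{lemma:smaller}. In the $M = N$ regime, Proposition \ref{prop:rref} already supplies a complete parametric solution to $A'(t)\mathbf{c} = \mathbf{d}$: each $c_\nu$ is a fixed scalar multiple of the free parameter $c_{N-1}$, with the multiplier being a signed Chebyshev polynomial in $t = \cos\theta$. Theorem \ref{thm:red-lin-prog} then reduces the existence of a TS state to the question of whether $c_{N-1}$ can be chosen so that every $c_\nu$ is nonnegative.

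First I would treat the case $n$ even, where $N - 1 = n/2$. Setting $k = N - 1 - \nu$, the nonnegativity constraints read $(-1)^k T_k(\cos\theta)\, c_{N-1} \geq 0$ for $k = 0, \ldots, n/2$. The $k = 0$ constraint forces $c_{N-1} > 0$, after which one needs $(-1)^k \cos(k\theta) \geq 0$ for all $k = 0, \ldots, n/2$. Substituting $\phi = \pi - \theta$ converts each term to $\cos(k\phi)$, and simultaneous nonnegativity is equivalent to $(n/2)\phi \leq \pi/2$, i.e., $\theta \geq (n-1)\pi/n$. For $n$ odd, I would repeat the argument using $W_k(\cos\theta) = \sin((k+\tfrac{1}{2})\theta)/\sin(\theta/2)$; the key identity $\sin((k+\tfrac{1}{2})(\pi-\phi)) = (-1)^k \cos((k+\tfrac{1}{2})\phi)$ converts the constraints to $\cos((k+\tfrac{1}{2})\phi) \geq 0$ for $k = 0, \ldots, N-1$, whose binding case $k = (n-1)/2$ again yields $\theta \geq (n-1)\pi/n$. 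This establishes both necessity and sufficiency of Eq. (\ref{eq:S-TS-criterion}) when $M = N$, and the explicit construction $\ket{\psi} = \sum_\nu \sqrt{c_\nu}\ket{\overline{\nu}}$ from Theorem \ref{thm:red-lin-prog} gives a concrete TS state for each admissible $\theta$.

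To extend sufficiency to arbitrary $n$ and $m$, I would invoke Proposition \ref{lemma:smaller}. From Eq. (\ref{eq:M-sigma}), $M$ attains its maximum value $N$ at $m = \lfloor n/2 \rfloor$, so for any other $m$ we have $M_m \leq N$. Proposition \ref{lemma:smaller} then guarantees that any TS state discriminating $\mathcal{T}_{\rm sym}(n, \lfloor n/2 \rfloor)$ at a given $\theta$ automatically discriminates $\mathcal{T}_{\rm sym}(n,m)$ for every other $m$. Combining this with the preceding paragraph's construction at any $\theta \geq (n-1)\pi/n$ completes the sufficient direction for all $n, m$.

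The main technical obstacle is confirming that the simultaneous nonnegativity conditions across \emph{all} $k$ collapse to a single tight inequality at $k = N - 1$, rather than carving out a more complicated disconnected region of $\theta$. The substitution $\phi = \pi - \theta$ is what makes this collapse transparent: for $\phi \in [0, \pi/n]$, every argument $k\phi$ or $(k+\tfrac{1}{2})\phi$ lies in $[0, \pi/2]$, so the relevant cosines are manifestly nonnegative without any case analysis. The remaining work---verifying the closed-form Chebyshev substitutions and the edge cases $N = 1$ or $m \in \{0, n\}$ where the problem is trivial---is routine.
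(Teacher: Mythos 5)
Your route is fundamentally the same as the paper's: reduce to $M=N$ via Proposition \ref{lemma:smaller}, solve the triangular system via Proposition \ref{prop:rref}, and then determine for which $\theta$ the resulting coefficients are nonnegative. The substitution $\phi=\pi-\theta$ is a genuinely nicer way to present the sign analysis than the paper's Appendix \ref{appendix:S-TS-criterion}: it turns the constraints into $\cos(k\phi)\geq 0$ (even $n$) and $\cos\bigl((k+\tfrac{1}{2})\phi\bigr)\geq 0$ (odd $n$), and the sufficiency of $\theta\geq(n-1)\pi/n$ falls out immediately since for $\phi\leq\pi/n$ every argument stays in $[0,\pi/2]$.

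However, there is a gap in the necessity direction. You assert that ``simultaneous nonnegativity is equivalent to $(n/2)\phi\leq\pi/2$'' and that the $k=N-1$ constraint is ``binding,'' and you gesture at the potential danger of disconnected feasible regions, but your offered resolution---``the substitution makes this collapse transparent''---only re-establishes the \emph{forward} implication. It does not rule out the later branches of $\cos(k\phi)\geq 0$; e.g.\ for $n=6$, the constraint $\cos(3\phi)\geq 0$ is also satisfied on $\phi\in[\pi/2,5\pi/6]$, so pointing at the single inequality $(n/2)\phi\leq\pi/2$ does not by itself show that no $\phi>\pi/n$ is feasible. One still needs an interlacing or nesting argument, as the paper does with Chebyshev zeros $\theta_k$. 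In your framework the cleanest fix is a short induction: if $\cos((k-1)\phi)\geq 0$ forces $\phi\leq\pi/(2(k-1))$, then $k\phi\leq \tfrac{k}{k-1}\cdot\tfrac{\pi}{2}\leq\pi$, so $k\phi$ cannot escape into the next nonnegativity branch of cosine, and $\cos(k\phi)\geq 0$ then forces $\phi\leq\pi/(2k)$. Iterating from $k=1$ to $k=N-1$ (with the obvious half-integer modification for odd $n$) pins down $\phi\leq\pi/n$ and closes the necessity direction; without this step, the equivalence you claim is unproven.
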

\begin{proof}
    Suppose $m=\floor*{\frac{n}{2}}$ or $\ceil*{\frac{n}{2}}$ so that $M=N = \floor*{\frac{n}{2}}$. Under these conditions, Proposition \ref{prop:rref} and Theorem \ref{thm:red-lin-prog} imply that for a particular $\theta$, a TS state exists if and only if there exists $\mathbf{c}\geq 0$ such that Eq. (\ref{eq:TSsym-coeffs}) holds. Such a $\mathbf{c}$ exists if and only if
    \begin{align}\label{eq:coeffs-ineq}
        0\leq \begin{cases}
        (-1)^{N-1-\nu}T_{N-1-\nu}(\cos{\theta}) &n\ \mathrm{even}\\
        (-1)^{N-1-\nu}W_{N-1-\nu}(\cos{\theta}) &n\ \mathrm{odd},
    \end{cases}
    \end{align}
    for all $\nu=0,\ldots N-1$. It can readily be verified that Eq. (\ref{eq:coeffs-ineq}) holds for both even and odd $n$ if and only if $\theta$ satisfies Eq. (\ref{eq:S-TS-criterion}) (see Appendix \ref{appendix:S-TS-criterion}). Hence, Eq. (\ref{eq:S-TS-criterion}) is a necessary and sufficient criterion for the existence of a $\sts$ state when $m=\floor*{\frac{n}{2}}$ or $\ceil*{\frac{n}{2}}$.

    When $m\neq \floor*{\frac{n}{2}}$ or $\ceil*{\frac{n}{2}}$, we have $M\leq N$. However, applying Proposition \ref{lemma:smaller} with $m_2=m$ and $m_1=\floor*{\frac{n}{2}}$, we deduce that Eq. (\ref{eq:S-TS-criterion}) is a sufficient condition for the existence of a $\sts$ state.
\end{proof}

Because the proof of Theorem \ref{thm:S-TS-criterion} is constructive, we can recover descriptions of the TS states which are guaranteed to exist. For any $n$ and $m$, if $\theta$ satisfies Eq. (\ref{eq:S-TS-criterion}), then Theorem \ref{thm:red-lin-prog} implies that $\ket{\psi_{\rm sym}} = \sum_{\nu=0}^{N-1}\sqrt{c_\nu}\ket{\overline{\nu}}$ is a TS state, where $c_{N-1}$ is chosen such that $\ket{\psi_{\rm sym}}$ is normalized and the remaining $c_\nu$ are determined by Eq. (\ref{eq:TSsym-coeffs}). We call a TS state of this form a $\sts$ state. For example, if $n=2m$ and $\theta$ satisfies Eq. (\ref{eq:S-TS-criterion}), then a satisfactory $\sts$ state is
\begin{align}
    \ket{\psi_{\rm sym}}= \sum_{\nu=0}^m\sqrt{\abs{\cos\left[(m-\nu)\theta\right]}}\ket{\overline{\nu}}
\end{align}
up to normalization, where $\ket{\overline{\nu}}$ is the unnormalized superposition over $Z$-eigenbasis states with weight $\nu$ or $n-\nu$.

Since Theorem \ref{thm:S-TS-criterion} does not provide necessary conditions on $\theta$ when $m\neq\floor{n/2}$ or $\ceil{n/2}$, the true minimum achievable $\theta$ may in general be lower than the threshold of Eq. (\ref{eq:S-TS-criterion}). However, when $m$ is chosen such that $M$ is small, it is in fact possible to sharpen the bound on the minimum achievable $\theta$ by exactly solving the linear programs of Theorems \ref{thm:lin-prog} or \ref{thm:red-lin-prog}. We provide these improved bounds for $M=0$ and $M=1$ in the following theorem:
\begin{theorem}\label{thm:small-sym}
    Suppose $\mathcal{T}=\Tsym$. Then the following are true for $\theta\in[0,\pi]$ and $n>0$:
    \begin{enumerate}[label=(\alph*), ref=\ref{thm:small-sym}\alph*]
        \item If $m=0$ or $m=n$, then any state is a TS state.
        \item If $n>1$ and either $m=1$ or $m=n-1$, then a necessary and sufficient criterion for the existence of a TS state is
        \begin{align}
            \theta \geq \arccos{\left(-1+\ceil*{\frac{n}{2}}^{-1}\right)}.
        \end{align}\label{thm:small-sym1}
    \end{enumerate}
\end{theorem}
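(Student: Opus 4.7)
Part (a) is immediate from the definitions. When $m=0$, the only allowed trajectory is $\varnothing$, and when $m=n$, the only allowed trajectory is $[n]$; in either case $\mathcal{T}=\{T_0\}$ is a singleton, so the only orthogonality condition of Eq.~(\ref{eq:ortho}) reads $\bra{\psi}\mathbf{R}^{(T_0,T_0)}(\theta)\ket{\psi}=1$. Since $\mathbf{R}^{(T,T)}=I^{\otimes n}$, this is satisfied by every normalized $\ket{\psi}\in\mathcal{H}$.

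For part (b), the plan is to invoke Theorem \ref{thm:lin-prog} with $M=2$. By Eq.~(\ref{eq:M-sigma}) both $m=1$ and $m=n-1$ yield $M=2$, and inspection of Eq.~(\ref{eq:A-entries}) shows that $A_{\mu,\nu}(\theta)$ depends only on $n,\mu,\nu$ and not on $m$, so the linear programs coincide. It therefore suffices to analyze the feasibility of $A(\theta)\mathbf{c}=\mathbf{d}$, $\mathbf{c}\geq 0$, when $A$ has only two rows. I would first compute these rows explicitly from Eq.~(\ref{eq:A-entries}): the $\mu=0$ row collapses to $A_{0,\nu}=\alpha_\nu\binom{n}{\nu}>0$, while the $\mu=1$ row restricts to $i,i'\in\{0,1\}$ and gives
\begin{align*}
A_{1,\nu}(\theta)=\alpha_\nu\!\left[\binom{n-2}{\nu}+\binom{n-2}{\nu-2}+2\binom{n-2}{\nu-1}\cos\theta\right].
\end{align*}

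Since $A_{0,\nu}>0$, the LP is feasible if and only if $0$ lies in the convex hull of $\{A_{1,\nu}(\theta)/A_{0,\nu}(\theta)\}_\nu$. Taking $\nu=0$ gives $A_{1,0}/A_{0,0}>0$, so feasibility reduces to the existence of some $\nu\geq 1$ with $A_{1,\nu}(\theta)\leq 0$, i.e.
\begin{align*}
\cos\theta\leq -\frac{\binom{n-2}{\nu}+\binom{n-2}{\nu-2}}{2\binom{n-2}{\nu-1}}.
\end{align*}
The loosest resulting condition on $\theta$ is obtained by minimizing the right-hand side over $\nu\in\{1,\ldots,\lfloor n/2\rfloor\}$. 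Using the standard ratios $\binom{n-2}{\nu}/\binom{n-2}{\nu-1}=(n-1-\nu)/\nu$ and $\binom{n-2}{\nu-2}/\binom{n-2}{\nu-1}=(\nu-1)/(n-\nu)$, the right-hand side simplifies to $1-\tfrac{n(n-1)}{2\nu(n-\nu)}$, which is most negative when $\nu(n-\nu)$ is largest. The maximum of $\nu(n-\nu)$ on the allowed range is $n^2/4$ when $n$ is even (at $\nu=n/2$) and $(n^2-1)/4$ when $n$ is odd (at $\nu=(n-1)/2$). Substituting either value produces the threshold $\cos\theta\leq -1+\lceil n/2\rceil^{-1}$, which is the claimed criterion.

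The main obstacles are minor: verifying that the reduction from $m=n-1$ to the same LP as $m=1$ is justified (clear from $M_1=M_2=2$ and the $m$-independence of Eq.~(\ref{eq:A-entries})), and carrying out the binomial simplification to get the clean expression $1-\tfrac{n(n-1)}{2\nu(n-\nu)}$; both are routine. A by-product of this argument is a constructive TS state: for any $\theta$ meeting the threshold, one can take $\mathbf{c}$ supported on two coordinates $\nu=0$ and $\nu=\lfloor n/2\rfloor$ with the values dictated by solving the $2\times 2$ subsystem, and then use $\ket{\psi}=\sum_\nu\sqrt{c_\nu}\ket{\overline\nu}$ as in Theorem \ref{thm:lin-prog}.
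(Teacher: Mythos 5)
Your proof of part (a) is immediate and matches the paper. For part (b), you take essentially the same route as the paper (reduce to the two-row LP with $M=2$ via Theorem \ref{thm:lin-prog}, identify feasibility with the condition that $A_{1,\nu}(\theta)\le 0$ for some $\nu$, and locate the optimum at $\nu=\lfloor n/2\rfloor$), with two small refinements: your rescale-to-convex-hull characterization is more self-contained than the paper's appeal to linear-fractional programming, and your binomial simplification to $1-\tfrac{n(n-1)}{2\nu(n-\nu)}$ actually \emph{justifies} that the extremum occurs at $\nu=\lfloor n/2\rfloor$, which the paper asserts without proof.

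One wording slip worth fixing, though it does not affect your computation: the feasibility condition is that $\cos\theta\le \mathrm{RHS}(\nu)$ holds for \emph{some} $\nu$, so the weakest (loosest) threshold on $\theta$ is obtained by \emph{maximizing} the right-hand side over $\nu$, not minimizing it; correspondingly, $1-\tfrac{n(n-1)}{2\nu(n-\nu)}$ is \emph{least} negative when $\nu(n-\nu)$ is largest, not most negative. You do in fact evaluate at the $\nu$ that maximizes $\nu(n-\nu)$, so the final threshold $\cos\theta\le -1+\lceil n/2\rceil^{-1}$ is correct, but the surrounding prose contradicts the calculation. (Incidentally, the paper's appendix contains the typo ``$M=1$'' where $M=2$ is meant, as it then writes both $\mu=0$ and $\mu=1$ constraints; your $M=2$ is right.)
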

\begin{proof}
    See Appendix \ref{appendix:small-cases}.
\end{proof}
While not explored here, we note that an improved bound for $M=2$ (i.e., $m=2$ or $m=n-2$) can also be found by directly solving the linear programs. Additionally, we remark that recent work \cite{gupta} has independently proved a separate result equivalent to Theorem \ref{thm:small-sym1}. 

From an experimental point of view, it is often desirable for sensors to be able to perfectly distinguish trajectories in a single-shot measurement even when the particle-sensor interaction strength $\theta$ is weak. However, Theorem \ref{thm:S-TS-criterion} suggests that when $\mathcal{T}=\Tsym$, the minimum achievable $\theta$ increases toward $\pi$ (i.e., the maximum possible interaction strength) as the number of qubits in the sensor increases. This loss of sensor sensitivity corresponds with the observation that the number of trajectories in $\Tsym$ generally increases extremely rapidly with $n$, as $\abs{\Tsym}=\binom{n}{m}$. In the next section, we address this shortcoming by showing for appropriately restricted trajectory sets that the minimum achievable $\theta$ does not increase as the number of sensor qubits grows.

\subsection{$\mathcal{T}$ is transitive under the cyclic group}\label{sec:cyc}
Because $\Tsym$ includes all possible size-$m$ trajectories, it may include many trajectories which are not physically relevant in a practical setting. For example, assuming the sensor qubits are spatially distributed in a 3D array, some trajectories in $\Tsym$ will consist of qubits which are are not localized together along a continuous curve. Because particle-sensing applications may be interested in only such ``continuous" trajectories, it is desirable to know how to restrict $\mathcal{T}$ to exclude all discontinuous trajectories.

In this section, we will again choose our trajectory set to be $\mathcal{T}_G(n,m)$, but we instead select $G$ to be the cyclic group $Z_n$ of order $n$, where $Z_n = \{z^j: j=0,\ldots,n-1\}$ with generator permutation $z=(1 \ldots n)$. Here, the notation $z^j$ represents the permutation $z$ applied $j$ times. Accordingly, define $\Tcyc$ to be $T_G(n,m)$ when $G=Z_n$. Evidently, the qubits within each trajectory in $\Tcyc$ must have consecutive indices modulo $n$ (see Figure \ref{fig:mathcalT}). Additionally, $\Tcyc$ is a subset of $\Tsym$, and  $\abs{\Tcyc} = n$ scales only linearly with the number of sensor qubits.

Using the same strategy as Section \ref{sec:sym-group}, the criteria of Theorem \ref{thm:lin-prog} can be applied to determine whether a TS state exists at a particular value of $\theta$ when $\mathcal{T}=\Tcyc$. Let $\tilde{Z}_n$ equal the group $\tilde{G}$ when $G=Z_n$. To calculate the entries of the $A(\theta)$ matrix in Eq. (\ref{eq:lin-prog}), it is necessary to compute the orbits of bit-strings in $\mathbb{Z}_2^n$ and trajectory pairs in $\Tcycsq$ under the action of $\tilde{Z}_n$. The orbits of trajectory pairs are straightforward to determine:
\begin{proposition}\label{prop:cyc-orbits}
    The set of orbits of trajectory pairs in $\Tcycsq$ under $\tilde{Z}_n$ is $\Tcycsq/\tilde{Z}_n = \left\{\Omega_\mu\ :\ \mu=0,\ldots,M_{\tilde{Z}_n}-1\right\}$ where $\Omega_\mu=\operatorname{Orb}_{\tilde{Z}_n}[([m],z^\mu([m]))]$ and $z=(1\ldots n)$. The number $M_{\tilde{Z}_n}$ of such orbits equals $\floor{n/2}+1$. 
\end{proposition}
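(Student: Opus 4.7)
The plan is to identify a canonical representative from each orbit in $\Tcycsq/\tilde{Z}_n$ and then count how many distinct orbits arise. Throughout I assume $0 < m < n$; the boundary cases $m \in \{0,n\}$ are trivial since then $|\Tcyc| = 1$. Recall from Eq.~(\ref{eq:tilde-G-def}) that $\tilde{Z}_n = S \times Z_n$ with $S = \{\varnothing, [n]\}$, and that the non-identity element $[n]$ acts on a trajectory pair by exchanging its two components (Proposition~\ref{prop:S-bitflip3}), while $\pi \in Z_n$ acts componentwise as $\pi(T,T') = (\pi(T), \pi(T'))$ (Proposition~\ref{prop:P-conj}a).

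The first step is to put every pair in a canonical form. Since $\Tcyc$ is $Z_n$-transitive, any pair can be written as $(z^i([m]), z^{i'}([m]))$, and applying $(\varnothing, z^{-i}) \in \tilde{Z}_n$ yields $([m], z^k([m]))$ for $k \equiv i' - i \pmod n$. So every orbit has a representative of the form $([m], z^k([m]))$ with $k \in \{0, 1, \ldots, n-1\}$. The next step folds the parameter $k$ via the swap: acting by $([n], e)$ followed by $(\varnothing, z^{-k})$ sends $([m], z^k([m]))$ to $([m], z^{n-k}([m]))$, so each orbit already has a representative with $\mu \in \{0, 1, \ldots, \floor{n/2}\}$. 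This immediately yields at most $\floor{n/2}+1$ orbits.

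The main obstacle is proving distinctness: that different values of $\mu$ in this range truly yield different orbits. The crucial ingredient will be the triviality of the stabilizer of $[m]$ in $Z_n$. Since $z^j$ shifts every index by $j$ modulo $n$ and $0 < m < n$, the only power preserving the consecutive block $[m]$ is $j \equiv 0 \pmod n$. Assuming $([m], z^k([m]))$ and $([m], z^{k'}([m]))$ lie in a common orbit, I would split on the two forms $(\varnothing, z^j)$ and $([n], z^j)$ of the group element. The first case demands $z^j([m]) = [m]$ and $z^{j+k}([m]) = z^{k'}([m])$, forcing $j \equiv 0$ and hence $k \equiv k' \pmod n$; the second demands $z^{j+k}([m]) = [m]$ and $z^j([m]) = z^{k'}([m])$, forcing $j \equiv -k$ and hence $k' \equiv -k \pmod n$. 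Thus $k$ and $k'$ agree up to the involution $k \mapsto n-k$, so distinct $\mu$ in $\{0, \ldots, \floor{n/2}\}$ give genuinely distinct orbits.

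Finally, for the count, the involution $k \mapsto n-k$ on $\{0,1,\ldots,n-1\}$ has one fixed point ($k=0$) when $n$ is odd and two fixed points ($0$ and $n/2$) when $n$ is even; a direct orbit-counting check in either parity gives $\floor{n/2}+1$ equivalence classes, matching $M_{\tilde{Z}_n}$. The only delicate step in the whole argument is the stabilizer computation underpinning distinctness; the rest is a straightforward bookkeeping of the actions of $Z_n$ and the swap.
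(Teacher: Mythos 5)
Your proposal is correct and follows essentially the same route as the paper's proof. Both arguments rest on writing every pair as $(z^i([m]), z^{i'}([m]))$, reducing via the cyclic action to a canonical form $([m], z^k([m]))$, and then using the observation that the $Z_n$-stabilizer of $[m]$ is trivial (for $0 < m < n$) to conclude that two canonical representatives $([m], z^k([m]))$ and $([m], z^{k'}([m]))$ share an orbit iff $k \equiv \pm k' \pmod n$. The paper phrases this as a single iff characterization $j_1 - k_1 \equiv \pm(j_2 - k_2) \pmod n$ of pairs $(z^{j_1}([m]), z^{k_1}([m]))$, $(z^{j_2}([m]), z^{k_2}([m]))$ being orbit-equivalent, while you organize it as canonical-form reduction followed by a separate distinctness argument; the uniqueness of the representation $T = z^j([m])$ with $j \in \{0,\ldots,n-1\}$ that the paper invokes is precisely the trivial-stabilizer fact you make explicit. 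One minor point worth noting: you correctly flag that $m \in \{0,n\}$ requires separate treatment, but in those degenerate cases the orbit count is actually $1$, not $\floor{n/2}+1$, so the proposition as literally stated holds only under the (implicit, unstated in the paper) assumption $0 < m < n$ — your phrasing ``trivial'' could be read as claiming the stated count still holds there, which it does not.
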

\begin{proof}
    See Appendix \ref{appendix:cyclic-orbs}.
\end{proof}

However, the orbits of bit-strings under $\tilde{Z}_n$ are considerably more complex to describe, and they are provided by the set of distinct length-$n$ binary necklaces \cite{necklace} where two necklaces are considered equivalent if related by the $n$-bit bit-flip operation. The number $N_{\tilde{Z}_n}$ of such orbits is given by sequence A000013 of the OEIS \cite{oeis} with the formula
\begin{align}
    N_{\tilde{Z}_n}=\frac{1}{n}\sum_{d|n}2^{\frac{n}{d}-1}\phi(2d),
\end{align}
where $d|n$ is the set of positive integers $d$ that divide $n$ and $\phi(\cdot)$ is the Euler totient function. The first several values for $N_{\tilde{Z}_n}$ when $n=1,2,\ldots$ are:
\begin{align}
    N_{\tilde{Z}_n}=1, 2, 2, 4, 4, 8, 10, 20, 30,\ldots.
\end{align}

Although the linear program of Theorem \ref{thm:lin-prog} can be used to numerically determine whether a TS state exists for $\mathcal{T}=\Tcyc$ and some given $\theta$, it is challenging to use this strategy to analytically derive bounds on the achievable $\theta$ as we did previously in Section \ref{sec:sym-group}. Recall that the bound in Theorem \ref{thm:S-TS-criterion} was effectively computed by solving the system of Eq. (\ref{eq:lin-prog}) for the special case where $A$ is a square matrix---that is, when $M_{\tilde{\Sigma}_n}=N_{\tilde{\Sigma}_n}$. However, when $G=Z_n$, it is apparent that no choice of $n>3$ and $m$ will result in $M_{\tilde{Z}_n}=N_{\tilde{Z}_n}$; consequently, it is impossible for $A$ to be square in general, thereby prohibiting a similar approach. Moreover, the complicated scaling of $N_{\tilde{Z}_n}$ with $n$ presents a significant obstacle for finding a closed-form solution to Eq. (\ref{eq:lin-prog}) that generalizes to arbitrary $n$.

To circumvent this issue, we now show that if $n$ is a constant multiple of $m$, there exist TS states for $\Tcyc$ which decompose into the tensor product of several identical, smaller $\sts$ states. Because TS states which decompose in this manner admit a vastly simpler description, it is easier to determine the range of $\theta$ over which they are guaranteed to exist. Hence, suppose $n=\kappa m$ for some integer $\kappa>1$. To visualize this decomposition, it is easiest to imagine that the $n$ qubits are assembled in a $\kappa\times m$ rectangular array. The array is filled in row-major order (using one-indexing) such that position $(r,s)$ in the array is occupied by qubit $(r-1)m+s$. Now suppose there exists a $\kappa$-qubit $\sts$ state $\ket{\phi}$ that discriminates the trajectories in $\mathcal{T}_{\rm sym}(n'=\kappa,m'=1)$ at a particular value of $\theta$. We claim that the $n$-qubit state $\ket{\psi}$ obtained by preparing each column of the array to $\ket{\phi}$ is a TS state which discriminates the trajectories in $\Tcyc$ at the same value of $\theta$. We will call any TS state constructed in this manner a $\cts$ state.

Theorem \ref{lemma:SG-simp} can be used to show that $\ket{\psi}$ is indeed a TS state for $\Tcyc$ at the given value of $\theta$. For $G=Z_n$ and $\mathcal{G}=P(G)$, let $\mathcal{H}_{\rm cyc}$ denote the $\tilde{\mathcal{G}}$-invariant subspace $\mathcal{H}_{\tilde{\mathcal{G}}}$. To use this theorem, we must first show that $\ket{\psi}$ belongs in $\mathcal{H}_{\rm cyc}$ , which requires that $\ket{\psi}$ be invariant under both $n$-qubit bit-flips and permutations in $Z_n$. Since the $\kappa$-qubit state $\ket{\phi}$ constituting each column of $\ket{\psi}$ is a $\sts$ state, it is already invariant under global bit-flips and any permutations. It follows that $\ket{\psi}$ is also bit-flip invariant. To show that $\ket{\psi}$ is also invariant under permutations in $Z_n$, it suffices to show that $\ket{\psi}$ is unchanged by the cyclic permutation $z=(1\ldots n)$. Note that $z$ can be implemented on the array by ($\pi_c$) cyclically permuting the collection of columns and then ($\pi_r$) cyclically permuting just the qubits within the last column. For example, consider the result of successively applying $\pi_c$ and $\pi_r$ on the array of qubit indices when $\kappa=2$ and $m=3$:
\begin{align}
    \begin{bmatrix}
        1&2&3\\
        4&5&6
    \end{bmatrix}\xrightarrow[]{\pi_c}
    \begin{bmatrix}
        2&3&1\\
        5&6&4
    \end{bmatrix}&\xrightarrow[]{\pi_r}
    \begin{bmatrix}
        2&3&4\\
        5&6&1
    \end{bmatrix}\nonumber\\
    &=\begin{bmatrix}
        z(1)&z(2)&z(3)\\
        z(4)&z(5)&z(6)
    \end{bmatrix}.
\end{align}
Since all the columns of $\ket{\psi}$ are identically in the state $\ket{\phi}$, permuting the columns with $\pi_c$ leaves $\ket{\psi}$ unchanged. Additionally, since the state $\ket{\phi}$ of each column is permutation-invariant, permuting within a column using $\pi_r$ also leaves $\ket{\psi}$ unchanged. Since $\ket{\psi}$ does not change under permutations $\pi_c$ and $\pi_r$, $\ket{\psi}$ is invariant under the action of $z$ and therefore invariant under the group $Z_n$. Because $\ket{\psi}$ is invariant under bit-flips and $Z_n$, it belongs in $\mathcal{H}_{\rm cyc}$, as desired. 

To verify the criteria of Theorem \ref{lemma:SG-simp} for $\ket{\psi}$, it is first necessary to consider the application of $R^{(T)}(\theta)$ to the $\kappa\times m$ array of qubits for any $T\in\Tcyc$. Because all trajectories $T\in\mathcal{T}_{\rm cyc}$ consist of $m$ consecutive qubits, any $R^{(T)}(\theta)$ operator applies the rotation $R_Z(\theta)$ to exactly one qubit in each column of the array. For example, if $\kappa=2$, $m=3$, and $T=\{3,4,5\}$, then $R^{(T)}(\theta)$ takes the form
\begin{align}\label{eq:RT-array}
    \begin{bmatrix}
        I&I&R_Z\\
        R_Z&R_Z&I
    \end{bmatrix},
\end{align}
where the operator at entry $(k,l)$ of the above matrix is applied to the qubit at position $(k,l)$ in the sensor array; clearly, there is only one $R_Z$ applied per column of the sensor array. 

Now consider the application of $\rbf=R^{\dagger(T)}R^{(T')}$ to the array for any $T,T'\in\mathcal{T}_{\rm cyc}$. By the above argument, $R^{\dagger(T)}$ and $R^{(T')}$ individually deposit exactly one $R^\dagger_Z$ and one $R_Z$ per column, respectively. Using the same depiction as Eq. (\ref{eq:RT-array}), the operator $\rbf$ takes the following form when $T=\{1,2,3\}$ and $T'=\{3,4,5\}$:
\begin{align}\label{eq:decomp}
\begin{bmatrix}
        R^{\dagger}_Z&R^{\dagger}_Z&R^{\dagger}_ZR_Z\\
        R_Z&R_Z&I 
    \end{bmatrix}=\begin{bmatrix}
        R^{\dagger}_Z&R^{\dagger}_Z&I\\
        R_Z&R_Z&I
    \end{bmatrix}.
\end{align}
The perturbation induced on the state $\ket{\psi}$ by $\rbf$ is visualized in Figure \ref{fig:cyclic-decomp}.

\begin{figure}[htbp]
  \includegraphics[width = \linewidth]{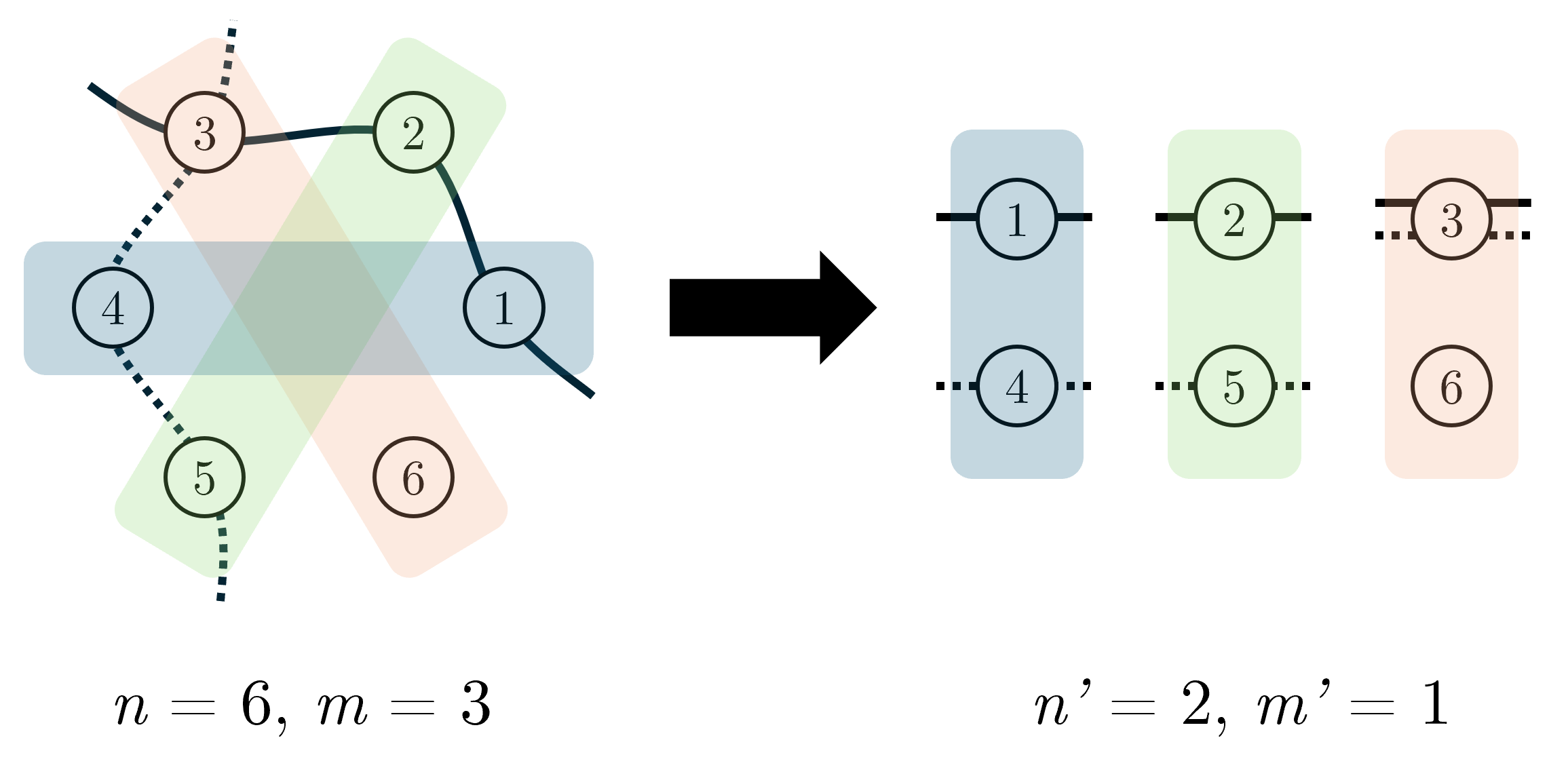}
  \caption{Decomposition of trajectories $T=\{1,2,3\}$ (solid line) and $T'=\{3,4,5\}$ (dotted line) applied to a $\cts$ state, as described in Eq. (\ref{eq:decomp}). $T$ and $T'$ belong to the set $\Tcyc$, where $n=6$ and $m=3$. Each pair of qubits grouped under a colored box is prepared to a $\sts$ state which can distinguish the trajectories in $\mathcal{T}_{\rm sym}(n'=2,m'=1)$. Note that $T$ and $T'$ locally look like distinct trajectories in $\mathcal{T}_{\rm sym}(n'=2,m'=1)$ when applied to the blue and green $\sts$ states. Thus, from a local viewpoint, $T$ and $T'$ map the each of the blue and green $\sts$ states to orthogonal outputs. This fact is sufficient to guarantee that the two full trajectories yield orthogonal outputs for the whole $\cts$ state in the global picture.}
  \label{fig:cyclic-decomp}
\end{figure}

To use Theorem \ref{lemma:SG-simp}, we must next prove that $\ket{\psi}$ satisfies Eq. (\ref{eq:ortho}) for at least one representative trajectory pair in each orbit of $\Tcycsq/\tilde{Z}_n$. Pick any $\mu\in\{0,\ldots, M_{\tilde{Z}_n}-1\}$. Then the trajectory pair $([m],z^\mu([m]))$ belongs to the orbit $\Omega_\mu\in\Tcycsq/\tilde{Z}_n$ by Proposition \ref{prop:cyc-orbits}. Because the two trajectories in the pair are equal if and only if $\mu=0$, we aim to show that
\begin{align}\label{eq:cyc-red-criteria}
    \bra{\psi}\mathbf{R}^{([m],z^\mu([m]))}(\theta)\ket{\psi} = \delta_{\mu,0}.
\end{align}
The inner product in Eq. (\ref{eq:cyc-red-criteria}) can be decomposed into $m$ inner products over the $m$ columns of the array. Noting from above that each column receives exactly one $R^\dagger_Z$ and one $R_Z$, we can write
\begin{align}
    \bra{\psi}\mathbf{R}^{([m],z^\mu([m]))}(\theta)\ket{\psi} &= \prod_{\mathrm {column}\ c=1}^m\bra{\phi}\mathbf{R}^{(T_c,T'_c)}\ket{\phi}
\end{align}
where $T_c = \{k_c\}$ and $T'_c=\{k'_c\}$ for some indices $k_c,k'_c$ of qubits that belong in the $c$th column. Alternately, $T_c$ and $T'_c$ can be interpreted as single-qubit trajectories within the $c$th column. Because the state $\ket{\phi}$ of each column is a $\sts$ state capable of discriminating all size-$1$ trajectories within the column, Eq. (\ref{eq:ortho}) implies that $\bra{\phi}\mathbf{R}^{(T_c,T'_c)}\ket{\phi}=\delta_{T_c,T'_c}$. Since $[m]=z^\mu([m])$ (or equivalently, $\mu=0$) if and only if $S_c=S'_c$ for all columns $c$,
\begin{align}
   \prod_{\mathrm {column}\ c=1}^m\bra{\phi}\mathbf{R}^{(T_c,T'_c)}\ket{\phi} &= \prod_{\mathrm{column}\ c}\delta_{T_c, T'_c}=\delta_{\mu,0}.
\end{align}
It follows that Eq. (\ref{eq:ortho}) holds for one representative trajectory pair in each orbit $\Omega_\mu$ for $\mu = 0,\ldots,M_{\tilde{Z}_n}-1$. Because $\ket{\psi}$ is in $\mathcal{H}_{\rm cyc}$, Theorem \ref{lemma:SG-simp} guarantees that $\ket{\psi}$ is a $\cts$ state for $\Tcyc$ at the given value of $\theta$, as desired.

Therefore, if a $\sts$ state $\ket{\phi}$ exists for $\mathcal{T}_{\rm sym}(n'=\kappa,m'=1)$ at a particular $\theta$, then a $\cts$ state $\ket{\psi}$ exists for $\mathcal{T}_{\rm cyc}(n=\kappa m,m)$ at the same $\theta$. The interval of $\theta$ over which the larger $\cts$ state exists consequently includes the interval over which the smaller $\sts$ state exists, leading to the following theorem:
\begin{theorem}\label{thm:C-TS-criterion}
    Suppose $\mathcal{T}=\Tcyc$, where $n=\kappa m$ for some arbitrary $m>0$ and $\kappa>1$. Given $\theta\in[0,\pi]$, a sufficient condition for the existence of a TS state is
    \begin{align}\label{eq:C-TS-criterion}
        \theta \geq \arccos{\left(-1+\ceil*{\frac{\kappa}{2}}^{-1}\right)}.
    \end{align}
\end{theorem}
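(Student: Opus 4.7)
The plan is to formalize the explicit tiling construction that has already been sketched in the paragraphs preceding the theorem. Given $\theta$ satisfying (\ref{eq:C-TS-criterion}), I first invoke Theorem \ref{thm:small-sym1} with parameters $n' = \kappa$ and $m' = 1$: this yields a $\sts$ state $\ket{\phi}$ on $\kappa$ qubits that discriminates all singleton trajectories in $\mathcal{T}_{\rm sym}(\kappa,1)$, since the assumed lower bound on $\theta$ is exactly the sharp threshold supplied by that theorem. The hypothesis $\kappa > 1$ guarantees Theorem \ref{thm:small-sym1} applies.

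Next, I define the candidate state $\ket{\psi}$ on $n = \kappa m$ qubits exactly as in the preceding discussion: arrange the qubits in a $\kappa \times m$ row-major array and prepare each of the $m$ columns independently in $\ket{\phi}$. To apply Theorem \ref{lemma:SG-simp} with $G = Z_n$ and $\mathcal{S} = \{I^{\otimes n},X^{\otimes n}\}$, I must check two things. First, $\ket{\psi} \in \mathcal{H}_{\rm cyc}$: global bit-flip invariance follows because each column $\ket{\phi}$ is bit-flip invariant as a $\sts$ state; invariance under the generator $z = (1\,\ldots\,n)$ follows from its factorization $z = \pi_r \circ \pi_c$ into a cyclic column permutation $\pi_c$ (which fixes $\ket{\psi}$ because all columns are identically $\ket{\phi}$) and a cyclic permutation $\pi_r$ internal to one column (which fixes $\ket{\psi}$ because $\ket{\phi}$ is permutation-invariant).

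Second, I verify the reduced orthogonality criteria (\ref{eq:ortho}) on one representative per orbit of $\Tcycsq/\tilde{Z}_n$. By Proposition \ref{prop:cyc-orbits}, I may choose the representative $([m], z^\mu([m]))$ for each $\mu = 0, \ldots, M_{\tilde{Z}_n}-1$. Every trajectory in $\Tcyc$ is a block of $m$ consecutive qubits, so in the $\kappa \times m$ array each column hosts exactly one qubit of $[m]$ and exactly one qubit of $z^\mu([m])$. The tensor product structure of $\ket{\psi}$ across columns therefore factorizes the inner product:
\begin{align}
\bra{\psi}\mathbf{R}^{([m],z^\mu([m]))}(\theta)\ket{\psi} = \prod_{c=1}^{m}\bra{\phi}\mathbf{R}^{(T_c,T_c')}\ket{\phi},
\end{align}
where $T_c,T_c' \subseteq [\kappa]$ are the singletons labelling the column-$c$ qubits in $[m]$ and $z^\mu([m])$, respectively. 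Since $\ket{\phi}$ is a $\sts$ state for $\mathcal{T}_{\rm sym}(\kappa,1)$, each factor equals $\delta_{T_c,T_c'}$; the product equals $1$ exactly when $T_c = T_c'$ for all $c$, i.e.\ when $\mu = 0$, and vanishes otherwise, giving $\delta_{\mu,0}$ as required. Theorem \ref{lemma:SG-simp} then certifies $\ket{\psi}$ as a TS state.

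There is no real obstacle here: the bulk of the verification is already present in the text preceding the theorem, so the proof amounts to packaging the construction and citing Theorem \ref{thm:small-sym1} and Proposition \ref{prop:cyc-orbits} to justify the two inputs (existence of $\ket{\phi}$ and enumeration of orbit representatives). The only subtlety worth flagging is the columnwise accounting that one $R_Z$ and one $R_Z^{\dagger}$ land in each column precisely because $|T| = |T'| = m$ equals the number of columns and both trajectories are contiguous blocks modulo $n$; this is what makes the factorization into single-qubit $\sts$ conditions exact.
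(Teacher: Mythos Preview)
Your proposal is correct and follows essentially the same approach as the paper: the paper's proof simply cites the preceding tiling construction (which you have faithfully reproduced) and then invokes Theorem \ref{thm:small-sym1} with $n'=\kappa$, $m'=1$ to obtain the threshold. Your writeup is more detailed but contains no new ideas beyond what the paper already presents in the text before the theorem.
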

\begin{proof}
    As shown above, the desired TS state exists if a smaller $m'=1$, $n'=\kappa$ $\sts$ state exists. A sufficient criterion for the existence of this smaller $\sts$ state is given by Theorem \ref{thm:small-sym1}, and substituting $n=\kappa$ into this criterion gives Eq. (\ref{eq:C-TS-criterion}).
\end{proof}

Given that the form of $\sts$ states is known, it is straightforward to provide descriptions for the $\cts$ states guaranteed to exist by Theorem \ref{thm:C-TS-criterion}. For example, if $n=2m$ and $\theta$ obeys Eq. (\ref{eq:C-TS-criterion}), then a satisfactory TS state takes the form of
\begin{align}\label{eq:example-cts}
    \ket{\psi_{\rm cyc}}=\left[\sqrt{\abs{\cos{\theta}}}(\ket{00}+\ket{11}) + (\ket{01}+\ket{10})\right]^{\otimes m}
\end{align}
up to normalization and a permutation of the qubits.

Since Eq. (\ref{eq:C-TS-criterion}) is only a sufficient condition, it represents an upper bound to the minimum achievable $\theta$ when $\mathcal{T}=\Tcyc$. However, assuming $n$ remains a fixed multiple of $m$, then this upper bound is in fact constant with respect to $n$. It follows that, when $\mathcal{T}=\Tcyc$, the minimum achievable $\theta$ does not increase even in the large-sensor limit.

In Section \ref{sec:results}, we solved the TS problem assuming the trajectory set is transitive for some permutation group. As a result, although the permutation symmetry group $G$ associated with each TS problem has been highly nontrivial, the associated Pauli symmetry group $\mathcal{S}$ has remained comparatively simple, as we have chosen $\mathcal{S}=\{I^{\otimes n},X^{\otimes n}\}$ every time. In the next section, we shift the focus from $G$ to $\mathcal{S}$: after first exploring the relationship between TS states and quantum codes, we then investigate useful TS scenarios where the Pauli symmetry group may instead be highly nontrivial.

\section{Quantum codes for trajectory sensing}\label{sec:qecc}
Quantum error correcting codes enable the recovery of quantum information after 
an undesirable error process by providing a means for the possible errors to be diagnosed and reversed. Generally speaking, the information to be preserved is encoded into a subspace of the full Hilbert space. Presuming that the encoded state is then corrupted by some unknown error, the goal is to perform syndrome measurements on the state to identify the error without destroying the information contained within. If successful, the error may be undone, and the original encoded state may be recovered without loss of quantum information.

In fact, the setting of quantum error correction  is in many ways fundamentally equivalent to that of quantum trajectory sensing, but the TS problem does not require preservation of quantum information and its error model is very different. In TS, a sensor state is prepared, subjected to an unknown trajectory, and measured with the intent to identify the trajectory (see Section~\ref{sec:formalism}).  Unlike in QEC, the ``errors" in TS are desirable system-environment interactions, and the purpose is shifted away from information recovery and toward error diagnosis. Moreover, while QEC typically supposes that each qubit is subject to independent and identically distributed errors, in TS the trajectories instead constitute highly correlated, many-qubit perturbations of a kind that are rarely considered in the standard QEC setting.

Due to the trajectory error model, a TS state should constitute a special kind of quantum code. On the other hand, the diagnostic capability of existing quantum codes suggests that they could be repurposed for sensing rather than information recovery. Specifically, we expect that syndrome measurements on such codes may be used to distinguish a discrete set of perturbations of interest.

In Section \ref{sec:sensor-code}, we establish a correspondence between the criteria for general codes and the criteria for TS states. Then, in Section \ref{sec:TS-code}, we demonstrate that a subspace of TS states can encode a logical qubit. Afterwards, we discuss in Section \ref{sec:stab} how the symmetries used earlier to simplify the search for TS states closely relate to familiar symmetries of existing quantum codes, particularly Pauli stabilizers. We subsequently show how a number of known stabilizer codes can be alternatively deployed for trajectory sensing. Lastly, in Section \ref{sec:noise}, we illustrate how code concatenation can be used to enhance TS states by, for example, making them resilient to noise.

\subsection{Criteria for quantum codes and TS states}\label{sec:sensor-code}

We now review the mathematical criteria for a general quantum error correcting code and contrast them with the TS state criteria.
An $[[n,k]]$ quantum code $C$ is a $2^k$-dimensional subspace of the full $n$-qubit Hilbert space $\mathcal{H}$ and is spanned by an orthonormal basis of code states $\{\ket{\psi_i}\ :\ i=1,\ldots,2^k\}$. Suppose the system-environment interaction can be modeled with the quantum channel $\mathcal{E}(\rho) = E_a\rho E_a^\dagger$ where $\rho$ is the density matrix of the system and the $\{E_a\}$ are a set of Kraus operators satisfying $\sum_a E_a^\dagger E_a = I$. Intuitively, this interaction affects a code state $\ket{\phi}\in C$ by applying one of the perturbations $E_a$ to it with some probability, yielding the output $E_a\ket{\phi}$. A projective syndrome measurement on the output reveals information about which $E_a$ was applied to $\ket{\phi}$, and $\ket{\phi}$ can be recovered if the code satisfies the Knill-Laflamme (KL) error correction criteria \cite{knill}:
\begin{align}\label{eq:KL}
\bra{\psi_i}E^\dagger_a E_b \ket{\psi_j} = \gamma_{ab}\delta_{ij}
\end{align}
for some Hermitian matrix $\gamma$ which is independent of $\ket{\psi_i}$. When $\gamma$ is diagonal, Eq. (\ref{eq:KL}) expresses the requirement that the output states for every error must be orthogonal. Thus, any code with diagonal $\gamma$ can perfectly discriminate the given set of perturbations with a single projective measurement.

However, even though a code may be effective in recovering a state subject to a set of perturbations, it may not be able to \textit{unambiguously} identify each perturbation if $\gamma$ is not diagonal. In fact, degenerate codes (for which some $\bra{\psi_i}E_a^\dagger E_b\ket{\psi_i} = 1$ for $a\neq b$) are completely unable to distinguish some of the perturbations, despite the fact that in some scenarios they can recover errors under higher noise rates than can non-degenerate codes \cite{Smith_2007}. Hence, to successfully discriminate a discrete set of perturbations, a quantum code must prioritize syndrome diagnosis over only information recovery.

The KL error correction criteria of Eq. (\ref{eq:KL}) bear a striking resemblance to the TS state criteria of Eq. (\ref{eq:ortho}) and suggest that TS states may form a kind of quantum code. In fact, TS states indeed constitute quantum error-correcting codes under the appropriate error channel. Consider the channel 
\begin{align}
\mathcal{E}_{\mathrm{TS}}(\rho)= \frac{1}{\abs{\mathcal{T}}}\sum_{T\in\mathcal{T}} R^{(T)}(\theta)\rho R^{\dagger(T)}(\theta),
\end{align}
which corresponds to the action of a trajectory $T\in\mathcal{T}$ chosen uniformly at random. It is easy to see that a TS state $\ket{\psi_{\mathrm{TS}}}$ should be recoverable under $\mathcal{E}_{\mathrm{TS}}$ as follows. Because the outputs corresponding to each trajectory acting on $\ket{\psi_{\mathrm{TS}}}$ are orthogonal, the trajectory $T$ can be unambiguously determined in a single measurement. Afterwards, the state $\ket{\psi_{\mathrm{TS}}}$ can be restored by applying the recovery operator $R^{\dagger(T)}$ to the output.

More formally, we can show that the $[[n,0]]$ code $\{\ket{\psi_{\rm TS}}\}$ satisfies the KL criteria of Eq. (\ref{eq:KL}) under this channel. If each trajectory $T\in\mathcal{T}$ is assigned a unique integer index $a=1,\ldots,\abs{\mathcal{T}}$, then the Kraus operators for $\mathcal{E}_{\mathrm{TS}}$ are  
\begin{align}\label{eq:TS-kraus}
E_a=\frac{1}{\sqrt{\abs{\mathcal{T}}}}R^{\left(T_a\right)}(\theta).
\end{align}
Since $\ket{\psi_{\rm TS}}$ is a TS state, it follows immediately from Eq. (\ref{eq:ortho}) that Eq. (\ref{eq:KL}) holds with $\gamma_{ab} = \delta_{a,b}/\abs{\mathcal{T}}$. Hence, $\{\ket{\psi_{\rm TS}}\}$ is an error-correcting code for the channel $\mathcal{E}_{\mathrm{TS}}$. Moreover, since $\gamma$ is diagonal, a syndrome measurement can unambiguously determine $E_a$, or equivalently, the trajectory which perturbed the TS state.

To re-emphasize the distinction between information recovery and error diagnosis, note that the code $\{\ket{\psi}\}$ for \textit{any} $\ket{\psi}\in\mathcal{H}$ also satisfies the KL criteria under this channel; $\ket{\psi}$ can always trivially be recovered by discarding the output state and reinitializing $\ket{\psi}$. However, although any $\ket{\psi}$ can be recovered from this channel, the ``errors" (or trajectories) can only be unambiguously distinguished if $\ket{\psi}$ is a TS state and $\gamma$ is subsequently diagonal.

Although single TS states form $[[n,0]]$ quantum codes which can precisely identify trajectory perturbations from the channel $\mathcal{E}_{\rm TS}$, these $k=0$ codes cannot store a logical qubit. In the next section, we show how to construct codes from TS states which properly encode a logical qubit.

\subsection{Trajectory sensing codes}\label{sec:TS-code}

There also exist $k>0$ codes such that each codeword state within the code space is a TS state. Recall that any code satisfying the KL criteria of Eq. (\ref{eq:KL}) with diagonal $\gamma$ can unambiguously distinguish the trajectory errors in $\mathcal{E}_{\mathrm{TS}}$. An $[[n,k]]$ code with orthonormal basis $\{|\psi_i\rangle \ :\ i=1,\ldots, 2^k\}$ satisfies Eq. (\ref{eq:KL}) for $\mathcal{E}_{\rm TS}$ and $\gamma_{ab}=\delta_{a,b}/\abs{\mathcal{T}}$ if and only if
\begin{align}\label{eq:TS-code}
    \bra{\psi_i}\rbf\ket{\psi_j}=\delta_{T,T'}\delta_{i,j}
\end{align}
for all $T,T'\in\mathcal{T}$. We call any code satisfying Eq. (\ref{eq:TS-code}) a \textit{TS code} since all of the states within are TS states.

We now show how to construct a $[[n,1]]$ TS code which encodes a single logical qubit. A basis for this code will consist of two orthogonal TS states. Recall by Theorem \ref{thm:exist} that there exist TS states in the subspace invariant under a permutation matrix group $\mathcal{G}$ and a Pauli subgroup $\mathcal{S}=\{I^{\otimes n}, X^{\otimes n}\}$; we now show that there also exist TS states instead invariant under $\mathcal{G}$ and $\mathcal{S}=\{I^{\otimes n}, -X^{\otimes n}\}$. Subsequently, define $\tilde{\mathcal{G}}_+$ and $\tilde{\mathcal{G}}_-$ to be the group $\langle\mathcal{S},\mathcal{G}\rangle$ when $\mathcal{S}=\{I^{\otimes n}, X^{\otimes n}\}$ and $\{I^{\otimes n}, -X^{\otimes n}\}$, respectively. Importantly, we find that these $\tilde{\mathcal{G}}_+$-invariant and $\tilde{\mathcal{G}}_-$-invariant TS states are orthogonal and form the desired code basis. 

When $\mathcal{T}=\Tsym$ and $n$ is odd, a $\tilde{\mathcal{G}}_-$-invariant TS state exists for a particular $\theta$ if and only if a $\tilde{\mathcal{G}}_+$-invariant TS state exists, where $\mathcal{G}=P(\Sigma_n)$. We can understand this claim via the following argument. An orthogonal basis for the $\tilde{\mathcal{G}}_-$-invariant subspace is given by the vectors 
\begin{align}
    \ket{\overline{\nu}_-}=\left(\sum_{j_1\ldots j_n\in \mathcal{W}_\nu}\ket{j_1\ldots j_n}\right) -\left(\sum_{j'_1\ldots j'_n\in \mathcal{W}_{n-\nu}}\ket{j'_1\ldots j'_n}\right)
\end{align}
for all $\nu=0,\ldots,N-1$, noting that $X^{\otimes n}\ket{\overline{\nu}_-}=-\ket{\overline{\nu}_-}$. If $n$ is odd, then the $\tilde{\mathcal{G}}_+$-invariant basis vectors $\ket{\overline{\nu}}$ of Eq. (\ref{eq:sym-basis}) are related to the $\ket{\overline{\nu}_-}$ via a unitary transformation $\ket{\overline{\nu}}=Z_L\ket{\overline{\nu}_-}$, where $Z_L$ is the unitary operator that applies a $-1$ phase to each $Z$-eigenbasis state with a weight over $\floor{n/2}$. Specifically, 
\begin{align}
    Z_L\ket{j_1\ldots j_n} = (-1)^{\Theta(j_1\ldots j_n)}\ket{j_1\ldots j_n},
\end{align}
where $\Theta(j_1\ldots j_n)$ equals one if $j_1+\ldots+ j_n>\floor{n/2}$ and zero otherwise. It is furthermore easy to check that $\braket{\overline{\nu}|\overline{\nu}'_-}=0$ for all $\nu,\nu'$, which implies that the $\tilde{\mathcal{G}}_+$- and $\tilde{\mathcal{G}}_-$-invariant subspaces are orthogonal. However, if $n$ is even, then $\ket{\overline{\nu}_-}=0$ for $\nu=n/2$, and clearly this vector is not related to the nonzero $\ket{\overline{\nu}}$ via a unitary transformation. Thus, assuming that $n$ is odd, let $\ket{\psi_+} = \sum_{\nu}c_\nu\ket{\overline{\nu}}$ be a $\tilde{\mathcal{G}}_+$-invariant state, and define the $\tilde{\mathcal{G}}_-$-invariant state $\ket{\psi_-}=Z_L\ket{\psi_+}= \sum_{\nu}c_\nu\ket{\overline{\nu}_-}$. Then
\begin{align}\label{eq:ortho-antisym}
    \bra{\psi_-}\rbf\ket{\psi_-}&=\bra{\psi_+}Z_L^\dagger \rbf Z_L\ket{\psi_+}\nonumber\\
    &=\bra{\psi_+}\rbf\ket{\psi_+}
\end{align}
for all $T,T'\in\mathcal{T}$ since $Z_L$ and the $\rbf$ commute and $Z_L^\dagger Z_L=I$. It follows that $\ket{\psi_+}$ is a TS state if and only if $\ket{\psi_-}$ is. 

For odd $n$ and a given $\theta$, after defining $\ket{+_L}$ to be a $\tilde{\mathcal{G}}_+$-invariant TS state and letting $\ket{-_L}=Z_L\ket{+_L}$, the $[[n,1]]$ code $C_{\rm TS} = \{\ket{+_L},\ket{-_L}\}$ is a TS code. Since $\ket{+_L}$ and $\ket{-_L}$ are TS states by the above argument,
\begin{align}
    \bra{+_L}\rbf\ket{+_L} = \bra{-_L}\rbf\ket{-_L} =\delta_{T,T'}
\end{align}
for all $T,T'\in\mathcal{T}$. To show that the TS code criteria of Eq. (\ref{eq:TS-code}) hold, it remains to show that $\bra{+_L}\rbf\ket{-_L}=\bra{-_L}\rbf\ket{+_L}=0$ for all $T,T'\in\mathcal{T}$. Note for any $T,T'\in\mathcal{T}$ that there exists a bijection between the elements of $T$ and the elements of $T'$ since $\abs{T}=\abs{T'}=m$. Hence, there is a permutation $\pi\in\Sigma_n$ such that $\pi(T)=T'$ and $\pi(T')=T$. Then, 
\begin{align}
    \bra{+_L}\rbf\ket{-_L} &= \bra{+_L}P_\pi\rbf P^\dagger_\pi\ket{-_L}\nonumber\\
    &= \bra{+_L}\mathbf{R}^{\pi(T,T')}\ket{-_L}\nonumber\\
     &= \bra{+_L}\mathbf{R}^{(\pi(T),\pi(T'))}\ket{-_L}\nonumber\\
    &= \bra{+_L}\mathbf{R}^{(T',T)}\ket{-_L}\nonumber\\
    &= \bra{+_L}\mathbf{R}^{[n](T,T')}\ket{-_L}\nonumber\\
    &= \bra{+_L}X^{\otimes n}\rbf X^{\otimes n}\ket{-_L}\nonumber\\
    &=-\bra{+_L}\rbf\ket{-_L},
\end{align}
where in the first equality we use the fact that $\ket{+_L},\ket{-_L}$ are permutation-invariant and in the fifth equality we invoke $X^{\otimes n}\ket{-_L}=-\ket{-_L}$. Necessarily, $\bra{+_L}\rbf\ket{-_L} = 0$; by an identical argument, $\bra{-_L}\rbf\ket{+_L}=0$ as well. We thus conclude that the code $C_{\rm TS}$ is a TS code, which implies that any state in $C_{\rm TS}$ is a TS state.

We now examine the logical qubit encoded in this code space. Define the logical qubit states $\ket{0_L}=\frac{1}{\sqrt{2}}(\ket{+_L}+\ket{-_L})$ and $\ket{1_L}=\frac{1}{\sqrt{2}}(\ket{+_L}-\ket{-_L})$. As mentioned above, both of these states are TS states. The logical operators for this code are then $Z_L$ and $X_L=X^{\otimes n}$, and it is easy to check that $X_L$ exchanges the states $\ket{0_L}$ and $\ket{1_L}$ while $Z_L$ applies a $-1$ phase to $\ket{1_L}$ only. Additionally, $X_L$ and $Z_L$ anticommute.

Although all of the states within $C_{\rm TS}$ are invariant under permutations by $\mathcal{G}=P(\Sigma_n)$, the Pauli symmetry group of $C_{\rm TS}$ is trivial, containing only the identity matrix. Recall that we have constructed $C_{\rm TS}$ from the state $\ket{+_L}$, which in contrast has nontrivial Pauli symmetry group $\{I^{\otimes n},X^{\otimes n}\}$. Hence, to create $C_{\rm TS}$ from the $k=0$ code $\{\ket{+}\}$, we remove the operator $X^{\otimes n}$ from the symmetry group of the latter, thereby increasing the code dimension by one. Furthermore, the removed operator $X^{\otimes n}$ becomes a logical operator of the larger code $C_{\rm TS}$. These relationships between symmetry operators, code dimension, and logical operators are already familiar in the study of stabilizer codes; note, however, that the $C_{\rm TS}$ provided here is not a stabilizer code in general. Given these observations, we are now motivated to examine the connection between stabilizer codes and TS codes in more detail.

\subsection{Stabilizer codes for trajectory sensing}\label{sec:stab}
Due to the large number of variables and equations typically involved, it is generally challenging to find codes which correct for a given error set by directly solving the KL criteria of (\ref{eq:KL}). The stabilizer formalism of QEC serves well to overcome these difficulties, and it has yielded many families of useful codes \cite{gottesman1997stabilizer}. A stabilizer code $\mathcal{H}_\mathcal{S}$ is defined to be the simultaneous $+1$ eigenspace of some Pauli subgroup $\mathcal{S}\leq \mathcal{P}_n$; recall that for $\mathcal{H}_{\mathcal{S}}$ to be nontrivial, $\mathcal{S}$ must be abelian and not contain the operator $-I^{\otimes n}$. Because a stabilizer code of dimension $K$ can be fully characterized by the $\log{K}$ generators of its stabilizer group, these codes admit error correcting criteria which are generally much easier to verify.

As seen in Sections \ref{sec:theory}-\ref{sec:simp-criteria}, it is similarly difficult to naively solve the TS criteria of Eq. (\ref{eq:ortho}) without simplification. To surmount this obstacle, we have likewise invoked symmetry: analogously to the stabilizer formalism, Theorem \ref{lemma:SG-simp} allows the TS criteria to be simplified for TS states invariant under certain symmetries. The first relevant symmetry is a permutation group $G$, which arises naturally due to the assumption that each $R^{(T)}$ ``error" operator is a tensor product of $R_Z(\theta)$ operators. However, the second relevant symmetry is in fact also a Pauli stabilizer group $\mathcal{S}$.

The fact that the TS states constructed in Sections \ref{sec:theory}-\ref{sec:results} are (1) quantum codes and (2) invariant under a Pauli subgroup $\mathcal{S}$ strongly suggests that there should exist stabilizer codes which are also TS codes for some particular set of trajectories. It is important to note, however, that a TS state invariant under some $\mathcal{S}$ is not necessarily a stabilizer state, since it need not be the unique state in $\mathcal{H}_\mathcal{S}$. Nonetheless, for a given $\theta$ and $\mathcal{T}$, it is sometimes possible for $\mathcal{H}_{\mathcal{S}}$ to be completely contained within the subspace of TS states; in this scenario, the stabilizer code $\mathcal{H}_\mathcal{S}$ may indeed also constitute a TS code. To visualize this situation, compare Figure \ref{fig:venn} to Figure \ref{fig:venn2}. In the former, some of the states in $\mathcal{H}_\mathcal{S}$ are not TS states, which prohibits the stabilizer code $\mathcal{H}_\mathcal{S}$ from being a TS code. In contrast, in the latter, the stabilizer code $\mathcal{H}_\mathcal{S}$ contains only TS states and may therefore also be a TS code.

\begin{figure}[htbp]
  \includegraphics[width = 0.6\linewidth]{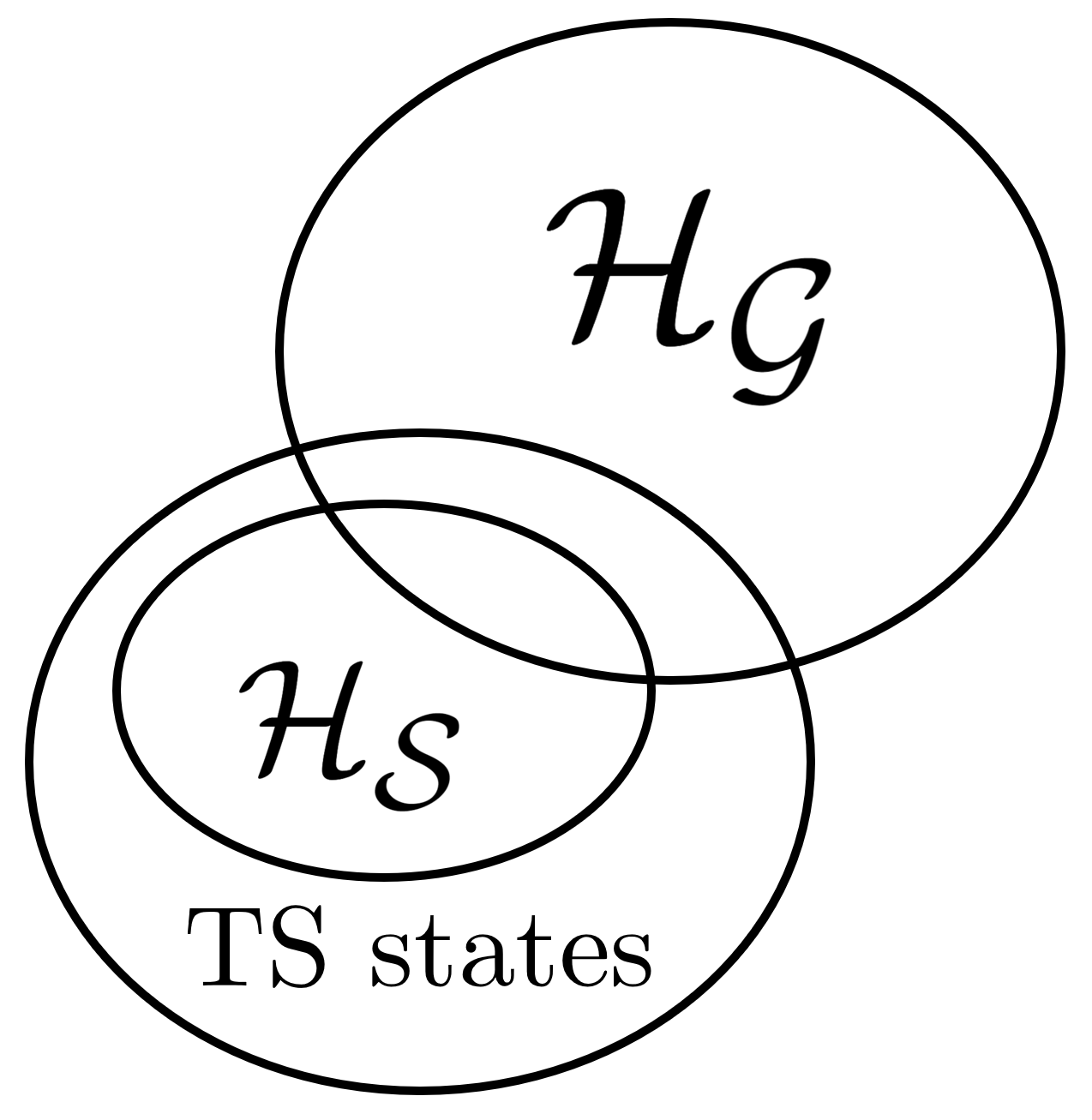}
  \caption{Venn diagram illustrating the space of TS states and the invariant subspaces $\mathcal{H}_\mathcal{S}$ and $\mathcal{H}_\mathcal{G}$ for a Pauli stabilizer $\mathcal{S}$ and permutation matrix group $\mathcal{G}$. In this situation, the space stabilized by $\mathcal{S}$ is completely contained within the space of TS states for the given $\theta$, which implies that $\mathcal{H}_\mathcal{S}$ may be a stabilizer TS code.}
  \label{fig:venn2}
\end{figure}

In the next three subsections, we first establish general criteria which a stabilizer code must satisfy for it to also serve as a TS code. We then show how these criteria may be applied by providing two small examples of stabilizer TS codes. Lastly, we demonstrate that certain code states of the well-known toric code \cite{toric} are TS states.

\subsubsection{Criteria for stabilizer TS codes}

Given a set of Pauli errors, the error correcting criteria for a stabilizer code can be succinctly stated in terms of the $k$ stabilizer generators. Consider a stabilizer code $\mathcal{H}_{\mathcal{S}}$ with stabilizer $\mathcal{S}$ and orthonormal basis $\{\ket{\psi_i}:i=1,\ldots,2^k\}$, where and define $\{U_i,U_j\}=U_iU_j+U_jU_i$ to be the anticommutator of two operators $U_i$ and $U_j$. Then, two Pauli errors $E_a,E_b\in\mathcal{P}_n$ are distinguishable via a single syndrome measurement if $\{D, E^\dagger_aE_b\} = 0$ for some generator $D$ of $\mathcal{S}$, since $E_a\ket{\psi_i}$ and $E_b\ket{\psi_j}$ are subsequently orthogonal for any basis vectors $\ket{\psi_i},\ket{\psi_j}$:
\begin{align}
    \bra{\psi_i}E^\dagger_aE_b\ket{\psi_j} &= \bra{\psi_i}DE^\dagger_aE_b\ket{\psi_j}\nonumber\\
    &=-\bra{\psi_i}E^\dagger_aE_bD\ket{\psi_j}\nonumber\\
    &=-\bra{\psi_i}E^\dagger_aE_b\ket{\psi_j} = 0.
\end{align}
Hence, if $E_a^\dagger E_b$ anticommutes with some generator of $\mathcal{S}$ for every $E_a, E_b$ in the given Pauli error set, then $\mathcal{H}_\mathcal{S}$ can unambiguously identify and correct for every error in the set. Verifying the error correcting criteria through these anticommutation conditions is typically much simpler than directly checking the KL conditions of Eq. (\ref{eq:KL}) since the number of stabilizer generators is only logarithmic in the number of code basis states. 

However, it is not generally possible to check whether some $\mathcal{H}_\mathcal{S}$ is a TS code using these anticommutation conditions because the trajectory ``errors" $R^{(T)}(\theta)$ are not Pauli operators when $\theta \in (0,\pi)$. In particular, for non-Pauli $R^{(T)}(\theta)$ operators, there cannot exist a generator of $\mathcal{S}$ which anticommutes with $R^{\dagger(T)}R^{(T')}=\rbf$. Nonetheless, there do indeed exist stabilizer TS codes for $\theta\in(0,\pi)$ even when these anticommutation conditions are not satisfied; we discuss multiple concrete examples in the next two subsections. Although these stabilizer codes can be verified to be TS codes by directly checking the criteria of Eq. (\ref{eq:TS-code}) for each code basis state, it still remains desirable to find simpler equivalent criteria in terms of the stabilizer generators.

Fortunately, the anticommutation conditions described above can be generalized to provide a sufficient condition for stabilizer TS codes even when $\theta\in(0,\pi)$. Given $\mathcal{T}$, it is too stringent to require that each $\rbf$ operator fully anticommutes with some stabilizer generator, as explained below. In fact, so long as each $\rbf$ operator anticommutes with some $D\in\mathcal{S}$ \textit{when both operators are projected to a subspace containing the code}, $\mathcal{H}_\mathcal{S}$ is indeed guaranteed to be a TS code, per the following theorem: 

\begin{theorem}\label{thm:stab}
    Let $\mathcal{H}_{\mathcal{S}}$ be a stabilizer code with Pauli stabilizer $\mathcal{S}$. Given a trajectory set $\mathcal{T}$ and $\theta\in[0,\pi]$, 
    then $\mathcal{H}_{\mathcal{S}}$ is a TS code if for every $T\neq T'$ in $\mathcal{T}$, there exists a subspace $V\supseteq\mathcal{H}_\mathcal{S}$ such that
    some $D\in\mathcal{S}$ satisfies
    \begin{align}\label{eq:anticom-proj}
        \{D,\rbf(\theta)\}\Pi_V=0,
    \end{align}
    where $\Pi_V$ is the projector onto $V$.
\end{theorem}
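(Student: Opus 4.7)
The plan is to verify the TS code criteria of Eq. (\ref{eq:TS-code}) directly for any orthonormal basis $\{\ket{\psi_i}\}$ of $\mathcal{H}_{\mathcal{S}}$. The case $T=T'$ is trivial since $\mathbf{R}^{(T,T)}=I^{\otimes n}$, so the required identity reduces to $\braket{\psi_i|\psi_j}=\delta_{i,j}$, which follows automatically from orthonormality of the basis. Thus, the entire proof comes down to showing that $\bra{\psi_i}\mathbf{R}^{(T,T')}(\theta)\ket{\psi_j}=0$ for every $T\neq T'$ in $\mathcal{T}$ and every pair of basis states $\ket{\psi_i},\ket{\psi_j}\in\mathcal{H}_{\mathcal{S}}$.

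To this end, fix such a pair $T\neq T'$ and invoke the hypothesis of the theorem to obtain a subspace $V\supseteq\mathcal{H}_{\mathcal{S}}$ and a stabilizer element $D\in\mathcal{S}$ with $\{D,\mathbf{R}^{(T,T')}(\theta)\}\Pi_V=0$. Since $\ket{\psi_j}\in\mathcal{H}_{\mathcal{S}}\subseteq V$, the projector acts trivially on it, $\Pi_V\ket{\psi_j}=\ket{\psi_j}$, and therefore the operator identity collapses to
\begin{align}
D\,\mathbf{R}^{(T,T')}(\theta)\ket{\psi_j} + \mathbf{R}^{(T,T')}(\theta)\,D\ket{\psi_j} = 0.
\end{align}

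Next I would multiply this equation on the left by $\bra{\psi_i}$ and exploit the defining property of the stabilizer: because $D\in\mathcal{S}$, we have $D\ket{\psi_j}=\ket{\psi_j}$, and because elements of a Pauli stabilizer group are Hermitian (they square to the identity and do not include $-I^{\otimes n}$), the identity $D\ket{\psi_i}=\ket{\psi_i}$ implies $\bra{\psi_i}D=\bra{\psi_i}$ as well. Both terms on the left-hand side then reduce to $\bra{\psi_i}\mathbf{R}^{(T,T')}(\theta)\ket{\psi_j}$, yielding
\begin{align}
2\,\bra{\psi_i}\mathbf{R}^{(T,T')}(\theta)\ket{\psi_j}=0,
\end{align}
which gives the desired vanishing.

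There is no real obstacle; the proof is a two-line bilinear argument. The only subtle point worth emphasizing is why the weaker hypothesis $\{D,\mathbf{R}^{(T,T')}\}\Pi_V=0$, rather than a full operator anticommutation $\{D,\mathbf{R}^{(T,T')}\}=0$, is sufficient. This matters precisely because the orthogonality operators are generically non-Pauli for $\theta\in(0,\pi)$ and will almost never fully anticommute with any stabilizer generator; however, the argument above only ever feeds code states into the anticommutator, so having it vanish on any superspace $V$ of $\mathcal{H}_{\mathcal{S}}$ is all that is ever needed. This loosening of the familiar stabilizer error-detection criterion is what makes the theorem applicable to realistic trajectory errors.
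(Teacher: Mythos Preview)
Your proof is correct and follows essentially the same argument as the paper's: both use $\Pi_V\ket{\psi_j}=\ket{\psi_j}$ to collapse the projected anticommutator onto code states, then invoke $D\ket{\psi_j}=\ket{\psi_j}$ and $\bra{\psi_i}D=\bra{\psi_i}$ to conclude that the matrix element equals its own negative. The only cosmetic difference is that the paper writes the identity as $\rbf D\Pi_V=-D\rbf\Pi_V$ and chains equalities, while you apply the anticommutator form directly and sum the two terms.
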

\begin{proof}
    Let $\{\ket{\psi_i}\}$ be an orthonormal basis for $\mathcal{H}_\mathcal{S}$. Clearly, $\bra{\psi_i}\rbf\ket{\psi_j} = \delta_{i,j}$ for any $i,j$ if $T=T'$. Thus, choose any $T,T'\in\mathcal{T}$ such that $T\neq T'$, and assume Eq. (\ref{eq:anticom-proj}) holds for some $D\in\mathcal{S}$ and $V\supseteq\mathcal{H}_\mathcal{S}$. We must show that $\bra{\psi_i}\rbf\ket{\psi_j}=0$ for all $i,j$. Eq. (\ref{eq:anticom-proj}) implies that $\rbf D\Pi_V = -D\rbf\Pi_V$. Also, note that every $\ket{\psi_i}\in V$ since $V\supseteq \mathcal{H}_\mathcal{S}$. Hence,
    \begin{align}
        \bra{\psi_i}\rbf\ket{\psi_j}&= \bra{\psi_i}\rbf D\ket{\psi_j}\nonumber\\
        &=\bra{\psi_i}\rbf D \Pi_V\ket{\psi_j}\nonumber\\
        &=-\bra{\psi_i}D \rbf  \Pi_V\ket{\psi_j}\nonumber\\
        &=-\bra{\psi_i}D \rbf \ket{\psi_j}\nonumber\\
        &=-\bra{\psi_i}\rbf \ket{\psi_j}\nonumber\\
        &=0,
    \end{align}
    as desired. We conclude that $\mathcal{H}_\mathcal{S}$ is a TS code.
\end{proof}

These anticommutation relations of Theorem \ref{thm:stab} in fact generalize the standard QEC criteria for a stabilizer code with Pauli errors. Specifically, when $\theta=\pi$ so that each ``error" $R^{(T)}(\theta)$ is a tensor product of Pauli operators, the standard criteria are recovered from the above theorem by choosing $V$ to be the whole Hilbert space for every $(T,T')$. However, we highlight the fact that $V$ may be chosen differently for each $(T,T')$, which is especially useful when $\theta<\pi$. Thus, $V$ may be understood as the decoding space for the corresponding pair of trajectory errors $R^{(T)}$ and $R^{(T')}$.

Theorem \ref{thm:stab} turns out to be particularly valuable for determining whether a CSS code \cite{css} is a TS code for a given $\mathcal{T}$ and $\theta$. Consider a CSS code with stabilizer $\mathcal{S}$. Note that a generating set for $\mathcal{S}$ can be found as the union of some sets $\mathcal{S}_X$ and $\mathcal{S}_Z$ whose elements are tensor products of only $\{I,X\}$ or $\{I,Z\}$, respectively. When a CSS code is a TS code, it is often possible in practice to find some $D\in\mathcal{S}_X$ which anticommutes with a given $\rbf$ when projected to a space stabilized by a subset of $\mathcal{S}_Z$. In the next subsection, we present two small CSS codes which are also TS codes and deploy Theorem \ref{thm:stab} to validate them.

\subsubsection{$C_4$ and $C_6$ codes}\label{sec:C4}
The $[[4,2]]$ $C_4$ code with stabilizer $\langle XXXX,ZZZZ\rangle$ is the smallest stabilizer code capable of detecting an arbitrary single-qubit Pauli error \cite{c4c6}. This CSS code encodes two logical qubits into the following code states 
\begin{align}
    \ket{00_L} &= \ket{0000} + \ket{1111}\nonumber\\
    \ket{01_L}&=\ket{0011}+\ket{1100}\nonumber\\
    \ket{10_L}&=\ket{0101}+\ket{1010}\nonumber\\
    \ket{11_L}&=\ket{0110}+\ket{1001},
\end{align}
where we have suppressed normalization for convenience. 

Remarkably, the $\cts$ state which discriminates $\mathcal{T}_{\rm cyc}(n=4,m=2)$ when $\theta=\frac{\pi}{2}$ (constructed in Section \ref{sec:cyc}) resides in this code space. Recall that this $\cts$ state $\ket{\psi_{\rm cyc}}$ is prepared by initializing both qubit pairs $\{1,3\}$ and $\{2,4\}$ into the two-qubit $\sts$ state  $\ket{01}+\ket{10}$, per Eq. (\ref{eq:example-cts}). It follows that 
\begin{align}
    \ket{\psi_{\rm cyc}}&=\ket{0011}+\ket{1100}+\ket{0110}+\ket{1001}\nonumber\\
    &=\ket{01_L}+\ket{11_L}.
\end{align}
The stabilizer of the one-dimensional subcode $\{\ket{\psi_{\rm cyc}}\}$ is generated by
\begin{align}\label{eq:C4}
    \langle -Z^{(13)},-Z^{(24)},X^{(13)},X^{(24)}\rangle,
\end{align}
where the notation $U^{(j_1\ldots j_k)}$ represents the operator which applies $U$ to qubits $j_1$ through $j_k$.
Besides its ability to discriminate trajectories, this subcode is further known to correct a single amplitude damping error \cite{amplitude-damp}.

Although we have already used Theorem \ref{lemma:SG-simp} to prove that $\ket{\psi_{\rm cyc}}$ is a TS state in Section \ref{sec:cyc}, we can now corroborate this result using Theorem \ref{thm:stab} instead. Specifically, we prove that the code $\{\ket{\psi_{\rm cyc}}\}$ is a TS code when $\mathcal{T}=\mathcal{T}_{\rm cyc}(n=4,m=2)$ and $\theta = \frac{\pi}{2}$. Suppose $T=\{1,2\}$ and $T'=\{2,3\}$, and let $\mathcal{S}$ be the stabilizer of $\mathcal{H}_\mathcal{S}= \{\ket{\psi_{\rm cyc}}\}$ given by Eq. (\ref{eq:C4}). Note that since $\rbf$ is not Pauli, it will not fully anticommute with any element of $\mathcal{S}$. We must consequently turn to the generalized criteria of Theorem \ref{thm:stab}, choosing $D=X^{(13)}$ and $V$ to be the $+1$ eigenspace of $-Z^{(13)}$; observe that $V\supseteq\mathcal{H}_\mathcal{S}$ since $-Z^{(13)}\in\mathcal{S}$. The projector onto $V$ is then given by $\Pi_V = (I-Z^{(13)})/2$. Then, to invoke Theorem \ref{thm:stab}, we must show that
\begin{align}\label{eq:C4-anticom}
    \{X^{(13)},\rbf\}(I-Z^{(13)})= 0.
\end{align}
For convenience, let $R=R_Z(\pi/2)$. Note that $R^2 = -iZ$ and $XRX = R^\dagger$, which implies that $RXZ = iXR$ and $R^\dagger X Z = -iXR^{\dagger}$. Furthermore, observe that $\rbf = R^{\dagger(1)}R^{(3)}$. It follows that  
\begin{align}
    &\{X^{(13)},\rbf\}Z^{(13)}\nonumber\\
    &= \left(XR^{\dagger}Z\right)^{(1)}\left(XRZ\right)^{(3)}+\left(R^{\dagger}XZ\right)^{(1)}\left(RXZ\right)^{(3)}\nonumber\\
    &= \left(RXZ\right)^{(1)}\left(R^{\dagger}XZ\right)^{(3)}+\left(R^{\dagger}XZ\right)^{(1)}\left(RXZ\right)^{(3)}\nonumber\\
    &= \left(iXR\right)^{(1)}\left(-iXR^{\dagger}\right)^{(3)}+\left(-iXR^{\dagger}\right)^{(1)}\left(iXR\right)^{(3)}\nonumber\\
    &= \left(XR\right)^{(1)}\left(XR^{\dagger}\right)^{(3)}+\left(XR^{\dagger}\right)^{(1)}\left(XR\right)^{(3)}\nonumber\\
    &= \left(R^\dagger X\right)^{(1)}\left(R X\right)^{(3)}+\left(XR^{\dagger}\right)^{(1)}\left(XR\right)^{(3)}\nonumber\\
    &=\{X^{(13)},\rbf\}.
\end{align}
Since $\{X^{(13)},\rbf\}Z^{(13)}=\{X^{(13)},\rbf\}$, Eq. (\ref{eq:C4-anticom}) must hold for this choice of $T,T'$. 
A similar argument applies for all other $T\neq T'$, so Theorem \ref{thm:stab} confirms that $\mathcal{H}_\mathcal{S}=\{\ket{\psi_{\rm cyc}}\}$ is indeed the desired TS code. 

The $[[6,2]]$ $C_6$ code, a close analogue of the $C_4$ code, is also a CSS code but with  stabilizer $\langle XIXXIX,IXXIXX,ZZIZZI,IZZIZZ\rangle$ up to a permutation of the qubits \cite{c4c6}. Similarly, the $\cts$ state which discriminates $\mathcal{T}_{\rm cyc}(n=6, m=3)$ when $\theta=\frac{\pi}{2}$ (prepared by initializing qubit pairs $\{1,4\}$, $\{2,5\}$, and $\{3,6\}$ to the smaller $\sts$ state $\ket{\phi}=\ket{01}+\ket{10}$) lies within the $C_6$ code. The stabilizer for the subcode containing this $\cts$ state is generated by 
\begin{align}\label{eq:C6}
    \langle -Z^{(14)},-Z^{(25)},-Z^{(36)},X^{(14)},X^{(25)},X^{(36)}\rangle.
\end{align}
Recall that in Section \ref{sec:cyc} we have already proven that the state stabilized by Eq. (\ref{eq:C6}) is a TS state using Theorem \ref{lemma:SG-simp}. In principle, it is possible to equivalently verify this result using Theorem \ref{thm:stab} as we did above for the $C_4$ TS state, but we will omit the details here.

Given that the $C_4$ and $C_6$ are small stabilizer codes, it is natural to ask whether there might exist larger stabilizer codes which also yield useful TS states. From a practical point of view, trajectory sensors built from surface codes \cite{surface} would be particularly desirable, as such codes have demonstrated particularly promising performance in recent experiments; namely, these codes have remarkably achieved decreasing error rates with increasing code size \cite{googlesurface,googlesurface2}. Subsequently, we next consider toric codes \cite{toric}, a kind of surface code with periodic boundary conditions, for trajectory sensing. Note that the $C_4$ code introduced above is in fact the smallest example of a toric code. We thus now show that a larger toric code can also support TS states.

\subsubsection{$8$-qubit toric code}\label{sec:toric}
The general toric code is defined on a two-dimensional square lattice with periodic boundary conditions such that a qubit is located on each edge \cite{toric}. The stabilizer operators for the code are generated by tensor products of Pauli operators on the qubits around each vertex and plaquette of the lattice:
\begin{align}
    A_v &= \prod_{i\in v} X_i,\ B_p = \prod_{j\in p}Z_j
\end{align}
where $i\in v$ designates the set of four qubits surrounding vertex $v$ and $j\in p$ denotes the set of four qubits on the boundary of plaquette $p$. The toric code is the simultaneous $+1$ eigenspace of the group generated by the $A_v$ and $B_p$ operators for all vertices and plaquettes $v$ and $p$ in the lattice. 

\begin{figure}[htbp]
  \includegraphics[width = 0.5\linewidth]{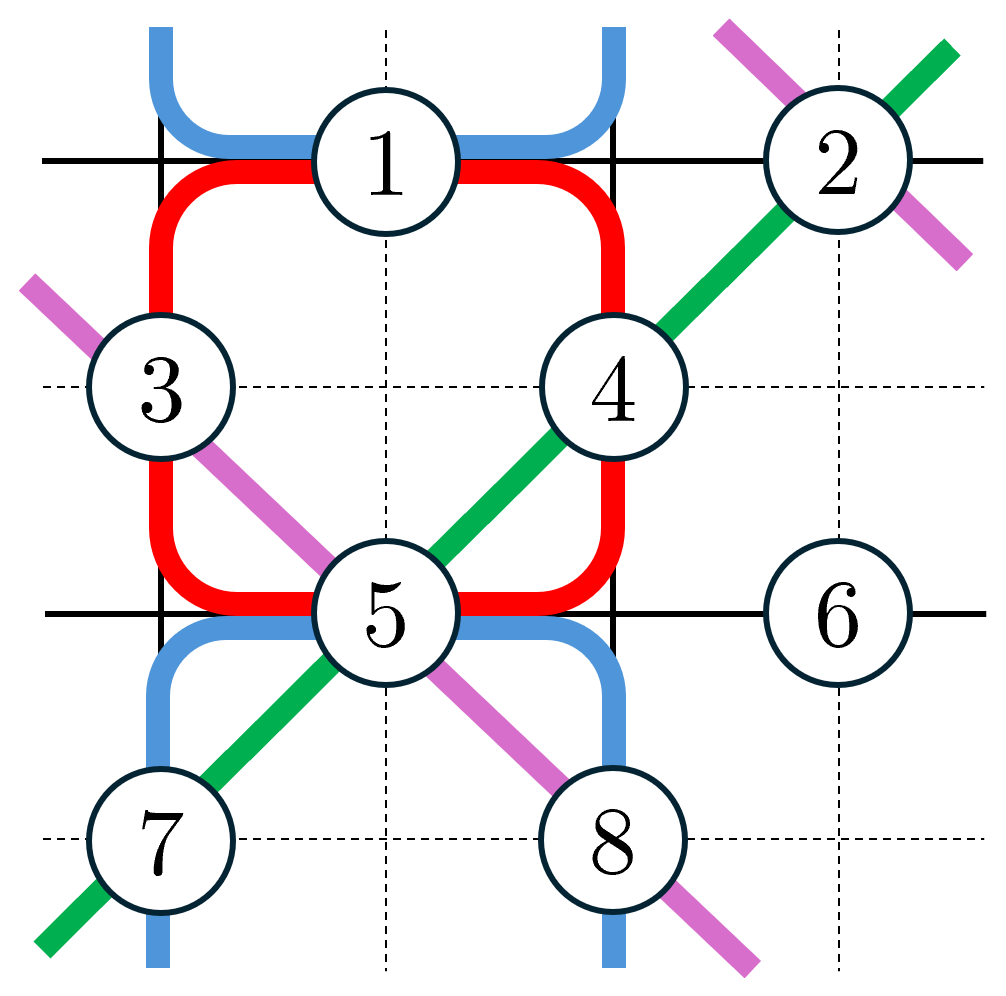}
  \caption{An $8$-qubit toric code contains a TS state which can discriminate the four trajectories in $\mathcal{T}_{\rm toric}=\{T_1,T_2,T_3,T_4\}$ when $\theta=\frac{\pi}{2}$, where $T_1=\{1,3,4,5\}$ (red), $T_2=\{5,7,8,1\}$ (blue), $T_3=\{2,4,5,7\}$ (green), and $T_4=\{3,5,8,2\}$ (purple).} 
  \label{fig:toric}
\end{figure}

An $8$-qubit toric code contains a TS state $\ket{\psi_{\rm toric}}$ which can distinguish the four trajectories in $\mathcal{T}_{\rm toric}$ (depicted in Figure \ref{fig:toric}) when $\theta=\frac{\pi}{2}$. The full toric code has the following $8$ stabilizer generators:
\begin{gather}
    \langle X^{(1237)}, X^{(1248)},X^{(3567)},X^{(4568)},\nonumber\\
    Z^{(1345)},Z^{(2346)},Z^{(5781)},Z^{(6782)}\rangle,\label{eq:toric-stab}
\end{gather}
where the first four are $A_v$ operators and the second four are $B_p$ operators.
Note that only $6$ of these generators are independent because the product of all $A_v$ and the product of all $B_p$ give the identity. The desired TS state $\ket{\psi_{\rm toric}}$ constitutes the one-dimensional subcode stabilized by the generators in Eq. (\ref{eq:toric-stab}) along with the additional generators $-Z^{(12)},-Z^{(37)}$. These additional generators can be verified to commute with all other generators of the code. 

To show that $\ket{\psi_{\rm toric}}$ can indeed distinguish the four trajectories of $\mathcal{T}_{\rm toric}$ when $\theta=\frac{\pi}{2}$, we now have a number of tools available. 
We could, for example, verify the anticommutation relations of Theorem \ref{thm:stab} for every pair of trajectories. However, supposing there exists a permutation group $G$ and Pauli subgroup $\mathcal{S}$ under which $\ket{\toric}$ and $\mathcal{T}^2_{\rm toric}$ are simultaneously invariant, we could also augment this approach with our earlier result of Theorem \ref{lemma:SG-simp}. In particular, instead of checking the anticommutation relations for every single trajectory pair, it would suffice to only validate them for one trajectory pair per orbit of $\mathcal{T}^2_{\rm toric}/(S_G\rtimes G)$, where $S=\sigma(\mathcal{S})$. This combined approach greatly reduces the number of anticommutation relations to evaluate, and we pursue this strategy in Appendix \ref{appendix:toric} to prove that $\ket{\toric}$ is indeed a TS state under the given conditions.

\subsection{Enhancement of TS states via code concatenation}\label{sec:noise}

Given the close connection between TS states and stabilizer codes, it is intriguing to consider what possibilities might be enabled through their combination. The principle of \textit{code concatenation} provides a meaningful way to amalgamate a TS state with another quantum code; two codes are said to be concatenated if one is constructed from the logical states of the other. In particular, suppose we have an $[[n,k]]$ code which encodes $k$ logical qubits into $n$ physical qubits. If we then initialize the $k$ logical qubits into a TS state, we can imagine that any perturbations on the physical qubits which realize trajectory-like perturbations on the encoded logical qubits might be distinguishable via a projective measurement. We now show how concatenating a TS state with another code in this manner may enhance some desirable properties of the TS state. 

For instance, by concatenating a TS state with a repetition code, it is possible in principle to push the achievable $\theta$ for the TS state arbitrarily close to zero. It would be practically useful for a TS state to be able to unambiguously discriminate trajectories even when the particle-sensor interaction strength $\theta$ is very weak. Let $\ket{\psi}$ be any $n$-qubit TS state that distinguishes a trajectory set $\mathcal{T}$ at a given $\theta$, where each trajectory in $\mathcal{T}$ is of size $m$. Furthermore, define an $n'$-qubit repetition code to be the code spanned by the logical qubits $\ket{0_L}=\ket{0}^{\otimes n'}$ and $\ket{1_L}=\ket{1}^{\otimes n'}$. If the logical qubits of $n$ different blocks of the repetition code are prepared to the state $\ket{\psi}$, then the resulting $nn'$-qubit concatenated state is a TS state for a new trajectory set $\mathcal{T}'$, but at the decreased interaction strength $\theta/n'$. Note that each trajectory in $\mathcal{T}'$ is of size $mn'$ and consists of $m$ whole code blocks. 

We illustrate this claim with the following simple example. Given some positive integer $n'$, suppose we have $2n'$ qubits that we prepare into two repetition code blocks: qubits $\{1,\ldots,n'\}$ form the first block and qubits $\{n'+1,\ldots 2n'\}$ form the second. These two code blocks encode two logical qubits, which we respectively label $\bar{1}$ and $\bar{2}$. If we prepare the logical qubits to the $\sts$ state $\ket{\psi_{\rm sym}}=\frac{1}{\sqrt{2}}(\ket{01_L}+\ket{10_L})$, then the logical trajectories $\bar{T} = \{\bar{1}\}$ and $\bar{T}'=\{\bar{2}\}$ can be perfectly distinguished with a single projective measurement when $\theta =\frac{\pi}{2}$. Importantly, note that applying $R_Z(\theta/n')$ to every physical qubit in a code block has the effect of rotating the corresponding logical qubit by $R_Z(\theta)$. Hence, the logical trajectories $R^{(\bar{T})}(\theta)$ and $R^{(\bar{T}')}(\theta)$ are respectively equivalent to the physical trajectories $R^{(T)}(\theta/n')$ and $R^{(T')}(\theta/n')$, where $T=\{1,\ldots,n'\}$ and $T'= \{n'+1,\ldots 2n'\}$. It follows that the concatenated TS state can distinguish the physical trajectories in $\mathcal{T}'=\{T,T'\}$ when the interaction strength is $\theta/n'=\frac{\pi}{2n'}$. 

Moreover, since $n'$ can be chosen arbitrarily large, the achievable interaction strength for this new TS state can in theory be pushed arbitrarily close to zero. However, in a real experiment, issues may arise when $n'$ becomes very large. In particular, note that while the incident particle perturbs the sensor qubits, the sensor qubits, in turn, must conversely perturb the particle. Hence, if the particle interacts with a vast number of qubits, the resulting back-action may measurably alter its trajectory in an undesirable way, thus corrupting the trajectory data. This phenomenon may thereby impose a practical limit on the minimum $\theta$ which could be achievable via concatenation with repetition codes.

On the other hand, if a TS state is instead concatenated with a code that has error-correction capabilities, then it may be possible to discriminate perturbations even if the physical qubits are subject to noise. However, it is not immediately clear for a given code what kinds of physical qubit perturbations could lead to distinguishable logical qubit ``trajectories." 

A good candidate error correcting code for concatenation with a TS state would be any code that implements the $R_Z(\theta)$ gate transversally on the logical space by applying $R_Z(\theta)$ to a subset of the physical qubits in the code. The Steane code, which encodes one logical qubit into 7 physical qubits, is a prime example, as applying $R_Z\left(\frac{\pi}{2}\right)$ to all 7 physical qubits has the effect of rotating the logical qubit by $R^\dagger_Z\left(\frac{\pi}{2}\right)$ \cite{steane}. Now let $\ket{\psi}$ be any $n$-qubit TS state that distinguishes a trajectory set $\mathcal{T}$ at $\theta=\frac{\pi}{2}$, where each trajectory in $\mathcal{T}$ is of size $m$. If we prepare the logical qubits of $n$ different Steane code blocks into the state $\ket{\psi}$, then the resulting $7n$-qubit concatenated code is a TS state for a new trajectory set $\mathcal{T}'$ at $\theta=\frac{\pi}{2}$, where each trajectory in $\mathcal{T}'$ is of size $7m$ and consists of $m$ whole code blocks.

We support this claim with an analogous toy example, illustrated in Figure \ref{fig:steane}. We prepare 14 physical qubits into two Steane code blocks such that qubits 1-7 constitute the first block and qubits 8-14 constitute the second. As before, the two code blocks respectively encode two logical qubits labeled $\bar{1}$ and $\bar{2}$ whose states are either $\ket{0_L}$ or $\ket{1_L}$. Like in the previous example, we prepare the logical qubits to the same state $\ket{\psi_{\rm sym}}$, which allows the logical trajectories $\bar{T} = \{\bar{1}\}$ and $\bar{T}'=\{\bar{2}\}$ to be perfectly distinguished when $\theta=\frac{\pi}{2}$. However, since the Steane code transversally implements the $R_Z\left(\frac{\pi}{2}\right)$ gate, these logical trajectories are again equivalent to physical trajectories:
\begin{align}
    R^{(\bar{T})}\left(\frac{\pi}{2}\right)\ket{\psi} = R^{\dagger({T})}\left(\frac{\pi}{2}\right)\ket{\psi}\ \mathrm{and}\\
    R^{(\bar{T}')}\left(\frac{\pi}{2}\right)\ket{\psi} = R^{\dagger({T}')}\left(\frac{\pi}{2}\right)\ket{\psi},
\end{align}
where $T=\{1,\ldots 7\}$ and $T'=\{8,\ldots 14\}$. Thus, this concatenated TS code can distinguish the two physical trajectories in $\mathcal{T}'=\{T, T'\}$ when $\theta=\frac{\pi}{2}$. 

\begin{figure}[htbp]
  \includegraphics[width = 0.8\linewidth]{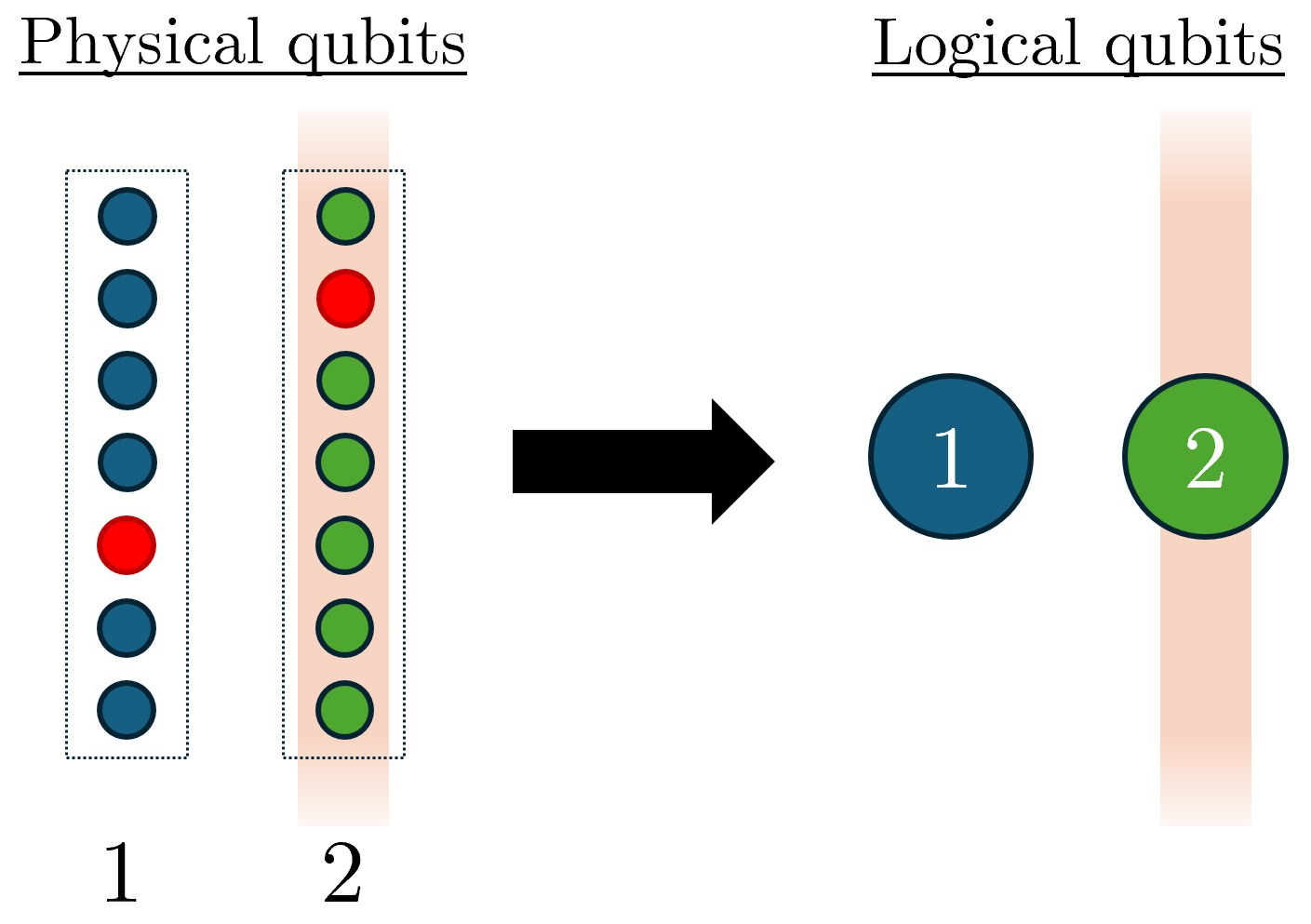}
  \caption{Noise-resilient TS state constructed by preparing the logical qubits of two Steane code blocks into the $\sts$ state $\ket{\psi_{\rm sym}}=\frac{1}{\sqrt{2}}(\ket{01_L} + \ket{10_L})$. A trajectory which passes through one block of physical qubits induces a trajectory through the corresponding logical qubit. Qubits in green have been rotated by $R_Z(\pi/2)$. Because the Steane code blocks can each correct arbitrary single-qubit errors, perfect trajectory sensing is possible even if one qubit per block decoheres (e.g., the red qubits).} 
  \label{fig:steane}
\end{figure}

Crucially, the Steane code furthermore allows arbitrary single-qubit errors to be perfectly corrected. It follows that if an undesired single qubit error occurs in either code block before (or after) the incident particle interacts with the sensor, then syndrome measurements on the blocks can be used to restore the TS state (or output state) so that the trajectories can still be distinguished with zero probability of failure. Hence, concatenating TS states with error-correcting codes allows for trajectory sensing that is robust to external noise.

\section{Conclusion}\label{sec:conclusion}
In this paper, we formally introduced the TS problem, developed a group-theoretic framework for solving the problem, and provided various families of solutions. In particular, we showed how permutation and Pauli group symmetries naturally arise within the TS problem, and we applied these symmetries to substantially simplify the general criteria for TS states. We subsequently used these simplified criteria to determine closed-form bounds on the interval of achievable $\theta$ as a function of sensor size when the trajectory set is transitive under the symmetric and cyclic permutation groups. Finally, we established a concrete link between trajectory sensing and quantum error correction, demonstrating how familiar stabilizer codes can be used as trajectory sensors under suitable conditions.

A number of important and interesting questions remain open within the TS formalism developed here. For example, recall that Theorem \ref{thm:exist} guarantees the existence of TS states invariant under $\langle\mathcal{S},\mathcal{G}\rangle$ where $\mathcal{S}$ is the particular Pauli subgroup $\{I^{\otimes n},X^{\otimes n}\}$; however, it is not known how this result might generalize to other choices of $\mathcal{S}$. Additionally, it remains unproven whether the bound on the achievable $\theta$ for $\Tcyc$ given in Theorem \ref{thm:C-TS-criterion} is a necessary condition (as opposed to just a sufficient one). Furthermore, although we provided individual examples of stabilizer codes which provide TS states, it is unclear how to systematically construct a general family of stabilizer codes (possibly surface or toric codes) which can support trajectory sensing.

The foundation for quantum trajectory sensing established here could be expanded in many meaningful and exciting ways. For instance, it may be possible to create TS codes with intrinsic error-correcting capabilities which do not rely on concatenation; such codes may subsequently be more resource-efficient. 
On the other hand, a natural extension of our TS scenario might replace the sensor qubits with qudits. Although qudits can be incorporated into a trajectory sensor by simply concatenating our current TS code with a qubit-into-oscillator bosonic code (such as a GKP code \cite{GKP}), it is unknown whether there exist more powerful TS architectures built from qudits directly. Lastly, given that the toric codes of Section \ref{sec:toric} are ground states of a many-body Hamiltonian \cite{toric}, we ask whether similar Hamiltonians may provide a natural way to describe or physically realize TS states in general.

\begin{acknowledgements}
Z.E.C. acknowledges support from the National Science Foundation Graduate Research Fellowship under Grant No. 2141064.  This project was supported in part by the NSF Q-SAIL National Quantum Virtual Laboratory under Grant No. 2410687, and by the U.S. Department of Energy, Office of Science, National Quantum Information Science Research Centers, Co-design Center for Quantum Advantage (C$^2$QA) under contract number DE-SC0012704.    
\end{acknowledgements}
 
%%%%%%%%%%%%%%%%%%%%%%%%%%%%%%%%%%%%%%%%%%%%%%%%%%%%%%%%%%%%%%%%

\onecolumngrid
\appendix
\counterwithin{proposition}{section}
\renewcommand{\theproposition}{\thesection\arabic{proposition}}
\section{Miscellaneous proofs}
In this appendix, we provide a number of miscellaneous proofs which have been deferred from the main text. 
\subsection{Conjugation of operators by qubit permutation matrices}\label{appendix:perm-matrix-conjugation}
In Section \ref{sec:theory}, we consider how $\rbf$ operators and Pauli operators transform under conjugation by qubit permutation matrices. In essence, if an operator $U$ can be written as a tensor product of single-qubit operators, then conjugating $U$ by a qubit permutation matrix $P_\pi$ permutes the indices of its tensor factors by $\pi^{-1}$:
\begin{proposition}\label{prop:permute-factors}
    Let $G$ be any permutation group. Suppose $U\in\unit$ can be written as a tensor product of single qubit operators $U_k$, so that $U=\bigotimes_{k=1}^n U_k$. Then $P_\pi U P_\pi^\dagger = U'$ for any $\pi\in G$,
    where $U'=\bigotimes_{k=1}^n U_{\pi^{-1}(k)}$.
\end{proposition}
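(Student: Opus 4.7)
The plan is to prove the claim by a direct matrix-element computation in the computational basis, which is the cleanest way to handle the index bookkeeping that conjugation by a permutation matrix inevitably produces. The first step is to read off from Eq.~(\ref{eq:P-pi}) the four elementary action rules: $P_\pi\ket{j_1\ldots j_n}=\ket{j_{\pi^{-1}(1)}\ldots j_{\pi^{-1}(n)}}$, and dually, using unitarity and $P_\pi^\dagger=P_{\pi^{-1}}$, $P_\pi^\dagger\ket{j_1\ldots j_n}=\ket{j_{\pi(1)}\ldots j_{\pi(n)}}$ together with $\bra{a_1\ldots a_n}P_\pi=\bra{a_{\pi(1)}\ldots a_{\pi(n)}}$. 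All of these follow immediately from the definition and orthonormality of the $Z$-eigenbasis.

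With those rules in hand, I would evaluate the arbitrary matrix element $\bra{a_1\ldots a_n}P_\pi U P_\pi^\dagger\ket{b_1\ldots b_n}$. Applying the two formulas above to the bra and the ket independently reduces this immediately to $\bra{a_{\pi(1)}\ldots a_{\pi(n)}}U\ket{b_{\pi(1)}\ldots b_{\pi(n)}}$. The tensor-product hypothesis $U=\bigotimes_{k=1}^n U_k$ then factorizes this expression into $\prod_{k=1}^n\bra{a_{\pi(k)}}U_k\ket{b_{\pi(k)}}$.

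The final step is a change of index in the product. Substituting $m=\pi(k)$ (equivalently $k=\pi^{-1}(m)$), the product becomes $\prod_{m=1}^n\bra{a_m}U_{\pi^{-1}(m)}\ket{b_m}$, which is precisely the $(a,b)$ matrix element of $U'=\bigotimes_{m=1}^n U_{\pi^{-1}(m)}$. Since two operators on $\mathcal{H}$ are equal if and only if all their computational basis matrix elements agree, this establishes $P_\pi U P_\pi^\dagger = U'$.

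There is no real obstacle here; the proof is essentially a careful unwinding of the definition of $P_\pi$. The one point requiring attention is not conflating $\pi$ with $\pi^{-1}$: the convention of Eq.~(\ref{eq:P-pi}) places $\pi^{-1}$ on the output ket under $P_\pi$, which means slot $k$ of the conjugated operator draws from slot $\pi^{-1}(k)$ of the original, ultimately producing the factor $U_{\pi^{-1}(k)}$ at position $k$ of $U'$. Keeping this convention consistent when passing between $P_\pi$ and $P_\pi^\dagger$ on both sides of the inner product is the only bookkeeping that needs care.
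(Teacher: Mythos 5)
Your proof is correct and takes essentially the same approach as the paper's: both unwind the definition of $P_\pi$ in the computational basis and perform the index change $k\mapsto\pi^{-1}(k)$ at the end. Your version, working directly with matrix elements $\bra{a_1\ldots a_n}P_\pi U P_\pi^\dagger\ket{b_1\ldots b_n}$ rather than expanding $U$ and multiplying operators, is a somewhat more streamlined presentation of the same computation.
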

\begin{proof}
    Each $U_k$ can be written in the $Z$-eigenbasis as 
    \begin{align}
        U_k = \sum_{l',l\in\{0,1\}}\left(u_k\right)_{l'l}\ketbra{l'}{l}
    \end{align}
    for some scalars $\left(u_k\right)_{l'l}\in\mathbb{C}$. It follows that 
    \begin{align}
        U &= \bigotimes_{k=1}^n\sum_{l_k',l_k\in\{0,1\}}\left(u_k\right)_{l_k'l_k}\ketbra{l_k'}{l_k}\\
        &=\sum_{l'_1,\ldots, l'_n\in\{0,1\}}\sum_{l_1,\ldots, l_n\in\{0,1\}}\left(\prod_{k=1}^n\left(u_k\right)_{l'_kl_k}\right)\ketbra{l'_1\ldots l'_n}{l_1\ldots l_n}.
    \end{align}
    To compute $P_\pi U P_\pi^\dagger$, we first compute $UP_\pi^\dagger$:
    \begin{align}
        UP_\pi^\dagger &= \left(\sum_{l'_1,\ldots, l'_n\in\{0,1\}}\sum_{l_1,\ldots, l_n\in\{0,1\}}\left(\prod_{k=1}^n\left(u_k\right)_{l'_kl_k}\right)\ketbra{l'_1\ldots l'_n}{l_1\ldots l_n}\right)\left(\sum_{j_1,\ldots,j_n\in\{0,1\}}\ketbra{j_{\pi(1)}\ldots j_{\pi(n)}}{j_1\ldots j_n}\right)\\
        &=\sum_{l'_1,\ldots, l'_n\in\{0,1\}}\sum_{l_1,\ldots, l_n\in\{0,1\}}\sum_{j_1,\ldots,j_n\in\{0,1\}}\left(\prod_{k=1}^n\left(u_k\right)_{l'_kl_k}\right)\ketbra{l'_1\ldots l'_n}{l_1\ldots l_n}\ketbra{j_{\pi(1)}\ldots j_{\pi(n)}}{j_1\ldots j_n}
    \end{align}
    We keep only terms where $l_1\ldots l_n = j_{\pi(1)}\ldots j_{\pi(n)}$:
    \begin{align}
        UP_\pi^\dagger = \sum_{l'_1,\ldots, l'_n\in\{0,1\}}\sum_{j_1,\ldots, j_n\in\{0,1\}}\left(\prod_{k=1}^n\left(u_k\right)_{l'_kj_\pi(k)}\right)\ketbra{l'_1\ldots l'_n}{j_1\ldots j_n}.
    \end{align}
    We can now left-multiply by $P_\pi$:
    \begin{align}
        P_\pi U P^\dagger_\pi &= \left(\sum_{j'_1,\ldots,j'_n\in\{0,1\}}\ketbra{j'_1\ldots j'_n}{j'_{\pi(1)}\ldots j'_{\pi(n)}}\right)\left(\sum_{l'_1,\ldots, l'_n\in\{0,1\}}\sum_{j_1,\ldots, j_n\in\{0,1\}}\left(\prod_{k=1}^n\left(u_k\right)_{l'_kj_\pi(k)}\right)\ketbra{l'_1\ldots l'_n}{j_1\ldots j_n}\right)\\
        &=\sum_{j'_1,\ldots,j'_n\in\{0,1\}}\sum_{l'_1,\ldots, l'_n\in\{0,1\}}\sum_{j_1,\ldots, j_n\in\{0,1\}}\left(\prod_{k=1}^n\left(u_k\right)_{l'_kj_\pi(k)}\right)\ketbra{j'_1\ldots j'_n}{j'_{\pi(1)}\ldots j'_{\pi(n)}}\ketbra{l'_1\ldots l'_n}{j_1\ldots j_n}
    \end{align}
    and again keep only terms where $j'_{\pi(1)}\ldots j'_{\pi(n)} = l'_1\ldots l'_n$:
    \begin{align}
        P_\pi U P^\dagger_\pi&=\sum_{j'_1,\ldots,j'_n\in\{0,1\}}\sum_{j_1,\ldots, j_n\in\{0,1\}}\left(\prod_{k=1}^n\left(u_k\right)_{j'_{\pi(k)}j_\pi(k)}\right)\ketbra{j'_1\ldots j'_n}{j_1\ldots j_n}.
    \end{align}
    We finally change indices $k\rightarrow \pi^{-1}(k')$ to obtain
    \begin{align}
        P_\pi U P^\dagger_\pi&=\sum_{j'_1,\ldots,j'_n\in\{0,1\}}\sum_{j_1,\ldots, j_n\in\{0,1\}}\left(\prod_{k'=1}^n\left(u_{\pi^{-1}(k')}\right)_{j'_{k'}j_{k'}}\right)\ketbra{j'_1\ldots j'_n}{j_1\ldots j_n}\\
        &= \bigotimes_{k'=1}^n\sum_{j_k',j_k\in\{0,1\}}\left(u_{\pi^{-1}(k')}\right)_{j'_{k'}j'_{k'}}\ketbra{j'_{k'}}{j_{k'}}\\
        &=\bigotimes_{k'=1}^nU_{\pi^{-1}(k')},
    \end{align}
    as desired.
\end{proof}
We can directly apply this proposition to describe how $R^{(T)}$ operators transform when conjugated by qubit permutation matrices:
\begin{proposition}\label{prop:P-R-conj}
    Let $G$ be a permutation group on $[n]$. For any $\pi\in G$ and $T\subseteq[n]$, $P_\pi R^{(T)}P^\dagger_\pi = R^{(\pi(T))}$.
\end{proposition}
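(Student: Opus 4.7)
The plan is to reduce this proposition to the tensor-factor permutation statement already proved as Proposition \ref{prop:permute-factors}. First I would observe that the definition of $R^{(T)}$ in Eq.~(\ref{eq:RS}) already presents it as a tensor product of single-qubit operators, namely $R^{(T)} = \bigotimes_{k=1}^n U_k$ where $U_k = R_Z(\theta)$ if $k \in T$ and $U_k = I$ otherwise. Thus the hypothesis of Proposition \ref{prop:permute-factors} is satisfied with no extra work.

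Applying that proposition directly gives
\begin{align*}
    P_\pi R^{(T)} P_\pi^\dagger = \bigotimes_{k=1}^n U_{\pi^{-1}(k)} = \bigotimes_{k=1}^n R_Z\bigl(\theta \cdot \mathbbm{1}_T(\pi^{-1}(k))\bigr),
\end{align*}
where in the last step I have re-expressed the permuted single-qubit factor using the indicator function from Eq.~(\ref{eq:RS}).

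The only remaining step is the elementary identity $\mathbbm{1}_T(\pi^{-1}(k)) = \mathbbm{1}_{\pi(T)}(k)$, which holds because $\pi^{-1}(k) \in T$ if and only if $k \in \pi(T)$ (this is just the definition $\pi(T) = \{\pi(j) : j \in T\}$ together with the bijectivity of $\pi$). Substituting this identity into the tensor product above yields $\bigotimes_{k=1}^n R_Z(\theta \cdot \mathbbm{1}_{\pi(T)}(k)) = R^{(\pi(T))}$, completing the proof.

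There is no real obstacle here; the entire argument is a two-line application of the preceding proposition together with a trivial indicator-function identity. The proposition is essentially a corollary whose main purpose is to record the special case needed in Proposition \ref{prop:P-conj}.
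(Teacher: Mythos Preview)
Your proof is correct and follows essentially the same approach as the paper's own proof: apply Proposition~\ref{prop:permute-factors} to the tensor-product form of $R^{(T)}$, then use the indicator-function identity $\mathbbm{1}_T(\pi^{-1}(k)) = \mathbbm{1}_{\pi(T)}(k)$ to rewrite the result as $R^{(\pi(T))}$. The paper's argument is line-for-line the same.
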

\begin{proof}
    We have 
    \begin{align}
        P_\pi R^{(T)}P^\dagger_\pi&= P_\pi \left(\bigotimes_{j=1}^n R_Z(\theta\cdot\mathbbm{1}_T(j))\right)P^\dagger_\pi.
    \end{align}
    Proposition \ref{prop:permute-factors} then implies that conjugation by $P_\pi$ permutes the indices of the tensor factors by $\pi^{-1}$:
    \begin{align}
        P_\pi R^{(T)}P^\dagger_\pi&= \bigotimes_{j=1}^n R_Z(\theta\cdot\mathbbm{1}_T(\pi^{-1}(j))).
    \end{align}
    Note that $\pi^{-1}(j)\in T$ if and only if $j\in \pi(T)$. Hence, $\mathbbm{1}_T(\pi^{-1}(j))=\mathbbm{1}_{\pi(T)}(j)$, so
    \begin{align}
        P_\pi R^{(T)}P^\dagger_\pi&=\bigotimes_{j=1}^n R_Z(\theta\cdot\mathbbm{1}_{\pi(T)}(j))\\
        &= R^{(\pi(T))},
    \end{align}
    as desired.
\end{proof}
\subsection{Proof of Proposition \ref{prop:S-conj}}\label{appendix:pauli-action}
In this section, we first show that $\sigma$ is a valid surjective homomorphism from a $\mathcal{S}\leq\mathcal{P}_n$ to $S=\sigma(\mathcal{S})$ and then prove Proposition \ref{prop:S-conj}. 
\begin{proposition}\label{prop:sigma-homo}
    The map $\sigma$ is a surjective homomorphism from any $\mathcal{S}\leq \mathcal{P}_n$ to $S = \sigma(\mathcal{S})$.
\end{proposition}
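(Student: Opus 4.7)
The plan is to establish two things: first, that $\sigma$ preserves the group operation (mapping Pauli multiplication to symmetric difference), and second, that $\sigma$ is surjective. Surjectivity will be immediate from the definition $S = \sigma(\mathcal{S})$, so essentially all the work lies in the homomorphism property $\sigma_{DD'} = \sigma_D \symdif \sigma_{D'}$ for all $D, D' \in \mathcal{S}$.

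To verify the homomorphism property, I would first observe that since $\mathcal{S}$ is a subgroup of $\mathcal{P}_n$, the product $DD'$ of any two elements is again in $\mathcal{S}$, and hence $\sigma_{DD'}$ is well-defined. Writing $D = e^{i\phi}\bigotimes_{j=1}^n D_j$ and $D' = e^{i\phi'}\bigotimes_{j=1}^n D'_j$, the product decomposes factor-wise as $DD' = e^{i(\phi+\phi')}\bigotimes_{j=1}^n D_j D'_j$ up to local phases picked up from anticommutations, which are irrelevant for determining $\sigma_{DD'}$ since global phases do not affect which tensor factors equal $X$ or $Y$ (up to scalar).

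The key step is then a case analysis on the single-qubit products $D_j D'_j$: if both $D_j, D'_j \in \{I, Z\}$, the product lies in $\{I, Z\}$; if both lie in $\{X, Y\}$, the product again lies in $\{I, Z\}$ (up to phase); and if exactly one of them lies in $\{X, Y\}$, the product lies in $\{X, Y\}$ (up to phase). Thus the index $j$ belongs to $\sigma_{DD'}$ if and only if $j$ belongs to exactly one of $\sigma_D$ and $\sigma_{D'}$, which is precisely the definition of symmetric difference:
\begin{align}
\sigma_{DD'} = \sigma_D \symdif \sigma_{D'}.
\end{align}
This is exactly the homomorphism property, since $\symdif$ is the group operation on $S$.

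Surjectivity follows directly: by the definition $S = \sigma(\mathcal{S})$, every element of $S$ is of the form $\sigma_D$ for some $D \in \mathcal{S}$. As a minor sanity check, I would note that this surjectivity together with the homomorphism property ensures $S$ is itself a group under $\symdif$, justifying the earlier claim that $S \leq \mathbb{P}([n])$. The main obstacle is merely the bookkeeping in the case analysis of Pauli products; no conceptual difficulty arises because the phases that appear in single-qubit Pauli multiplication are precisely what $\sigma$ is designed to forget.
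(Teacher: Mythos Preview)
Your proposal is correct and follows essentially the same approach as the paper's proof: a factor-wise case analysis on single-qubit Pauli products to show that $j\in\sigma_{DD'}$ if and only if $j$ lies in exactly one of $\sigma_D$ and $\sigma_{D'}$, together with the trivial observation that surjectivity is immediate from the definition $S=\sigma(\mathcal{S})$. Your remark that the phases are irrelevant for determining $\sigma$ is exactly the point the paper handles by its ``up to a phase'' qualifications.
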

\begin{proof}
    To prove that $\sigma$ is a homomorphism, we must show that $\sigma_{DD'} =\sigma_D\symdif\sigma_{D'}$ for any $D,D'\in\mathcal{S}$, where $\symdif$ is the set symmetric difference. Write $D$ and $D'$ as $D = e^{i\phi}\bigotimes_{j=1}^nD_j$ and $D' = e^{i\phi'}\bigotimes_{j=1}^nD'_j$ for some $\phi,\phi'\in[0,2\pi)$, where each $D_j,D'_j\in\{X,Y,Z,I\}$. Then
    \begin{align}
        DD' = \bigotimes_{j=1}^nD_jD'_j.
    \end{align}
    We now make the following observations about products of single-qubit Paulis, which hold up to a phase:
    \begin{enumerate}
        \item Products of two (possibly identical) operators from the set $\{X,Y\}$ are equal to one of $\{Z,I\}$.
        \item Products consisting of one element of $\{X,Y\}$ and one element of $\{Z,I\}$ are equal to an element of $\{X,Y\}$. 
        \item Products of two (possibly identical) operators from the set $\{Z,I\}$ are equal to one of $\{Z,I\}$.
    \end{enumerate}
    By definition, $\sigma_{DD'} = \{j: D_jD'_j= X\ \mathrm{or}\ Y\}$. Due to the observations above, $j\in\sigma_{DD'}$ if and only if either $D_j\in\{X,Y\}\ \mathrm{and}\ D'_j\in\{Z,I\}$ or $D_j\in\{Z,I\}\ \mathrm{and}\ D'_j\in\{X,Y\}$. Equivalently, $j\in\sigma_{DD'}$ if and only if $j\in\sigma_D$ and $j\notin\sigma_{D'}$ or $j\notin\sigma_D$ and $j\in\sigma_{D'}$. This result can be summarized as follows: $j\in\sigma_{DD'}$ if and only if $j\in\sigma_D\symdif\sigma_{D'}$, which implies that $\sigma_{DD'}=\sigma_D\symdif \sigma_{D'}$, as desired.

    The surjectivity of $\sigma:\mathcal{S}\rightarrow S$ follows trivially from the fact that $S$ is the image of $\mathcal{S}$ under $\sigma$ by definition.
\end{proof}
We now prove Proposition \ref{prop:S-conj}.
\begin{proof}[Proof of Proposition \ref{prop:S-conj}]
    Part (a). For any $D\in \mathcal{S}$ and $T,T'\subseteq[n]$,
    \begin{align}
        D\rbf D^\dagger &= D R^{\dagger(T)}R^{(T')} D^\dagger\\
        &= \left(D R^{\dagger(T)}D^\dagger\right)\left(D R^{(T')} D^\dagger\right)
    \end{align}
    We will use the notation $R^{(j)}$ to represent $R^{(\{j\})}$, that is, the $n$-qubit operator which applies $R_Z$ to the $j$th qubit and the identity to all other qubits. Additionally, write $D$ as $D = e^{i\phi}\bigotimes_{j=1}^nD_j$ for some $\phi\in[0,2\pi)$, where each $D_j\in\{X,Y,Z,I\}$. Accordingly, define $\tilde{D}_j$ to be the $n$-qubit operator which applies $D_j$ to the $j$th qubit and the identity to all other qubits. Then
    \begin{align}
        D\rbf D^\dagger &= \left(\prod_{j\in T}\tilde{D}_jR^{\dagger(j)}\tilde{D}^\dagger_j\right)\left(\prod_{l\in T'}\tilde{D}_{l}R^{(l)}\tilde{D}^\dagger_{l}\right)
    \end{align}
     Note that if $j\in\sigma_D$, then $D_j = X$ or $Y$, so $D_j R^\dagger_Z D^\dagger_j = R_Z$ and $D_j R_Z D^\dagger_j = R^\dagger_Z$. If instead $j\notin\sigma_D$, then $D_j=Z$ or $I$, so $D_j R^\dagger_Z D^\dagger_j = R^\dagger_Z$ and $D_j R_Z D^\dagger_j = R_Z$. Thus,
    \begin{align}
        D\rbf D^\dagger &= \left(\prod_{j\in T\setminus\sigma_D}R^{\dagger(j)}\right)\left(\prod_{j'\in T\cap\sigma_D}R^{(j')}\right)\left(\prod_{l\in T'\setminus\sigma_D}R^{(l)}\right)\left(\prod_{l'\in T'\cap\sigma_D}R^{\dagger(l')}\right)\label{eq:d-conj01}\\
        &= \left(\prod_{j\in (T\setminus\sigma_D)\cup(T'\cap\sigma_D)}R^{\dagger(j)}\right)\left(\prod_{l\in (T'\setminus\sigma_D)\cup(T\cap\sigma_D)}R^{(l)}\right).\label{eq:d-conj02}
    \end{align}
    Eq. (\ref{eq:d-conj02}) follows from Eq. (\ref{eq:d-conj01}) because $(T\setminus\sigma_D)$ and $(T'\cap\sigma_D)$ are disjoint (likewise, $(T'\setminus\sigma_D)$ and $(T\cap\sigma_D)$ are disjoint). We can then rewrite this result as 
    \begin{align}
        D\rbf D^\dagger &= R^{\dagger((T\setminus\sigma_D)\cup(T'\cap\sigma_D))}R^{((T'\setminus\sigma_D)\cup(T\cap\sigma_D))}\\
        &= \mathbf{R}^{((T\setminus\sigma_D)\cup(T'\cap\sigma_D),(T'\setminus\sigma_D)\cup(T\cap\sigma_D))}\\
        &= \mathbf{R}^{\sigma_D(T,T')}
    \end{align}
    from which Eq. (\ref{eq:pauli-action}) follows, as desired. It remains to show that $\sigma_D(T,T')$ is a valid group action of $S=\sigma(\mathcal{S})$ on $[n]^2$. The identity group action axiom requires $e(T,T')=(T,T')$, where $e$ is the identity element of $S$. It is obvious that this axiom holds with $e=\{\}$. The compatibility axiom then requires that $\varsigma_1(\varsigma_2(T,T')) = (\varsigma_1\symdif\varsigma_2)(T,T')$ for any $\varsigma_1,\varsigma_2\in S$. We now prove this as follows. It will be convenient to write the set difference $T\setminus \varsigma$ as $T\cap\varsigma^c$, where $\varsigma^c$ is the complement of $\varsigma$ in $[n]$, that is, $[n]\setminus \varsigma$. Then
    \begin{align}
        \varsigma_1(\varsigma_2(T,T')) &= \varsigma_1((T\cap\varsigma_2^c)\cup(T'\cap\varsigma_2),(T'\cap\varsigma_2^c)\cup(T\cap\varsigma_2)) \\
        &= (\tilde{T},\tilde{T'}),
    \end{align}
    where 
    \begin{align}
        \tilde{T} &= [((T\cap\varsigma_2^c)\cup(T'\cap\varsigma_2))\cap\varsigma_1^c]\cup [((T'\cap\varsigma_2^c)\cup(T\cap\varsigma_2))\cap\varsigma_1]\ \mathrm{and}\nonumber\\
        \tilde{T'} &= [((T'\cap\varsigma_2^c)\cup(T\cap\varsigma_2))\cap\varsigma_1^c]\cup [((T\cap\varsigma_2^c)\cup(T'\cap\varsigma_2))\cap\varsigma_1].
    \end{align}
    We start by expanding the above expression for $\tilde{T}$. The first step is to distribute the $\varsigma_1^c$ and $\varsigma_1$ over the set unions with which they intersect:
    \begin{align}
        \tilde{T} &= ((T\cap\varsigma_2^c)\cap\varsigma_1^c)\cup((T'\cap\varsigma_2)\cap\varsigma_1^c)\cup ((T'\cap\varsigma_2^c)\cap\varsigma_1)\cup((T\cap\varsigma_2)\cap\varsigma_1)
    \end{align}
    We now apply associativity of the set intersection and De Morgan's laws:
    \begin{align}
        \tilde{T} &= (T\cap(\varsigma_2\cup\varsigma_1)^c)\cup(T'\cap(\varsigma_2\cap\varsigma_1^c))\cup (T'\cap(\varsigma_2^c\cap\varsigma_1))\cup(T\cap(\varsigma_2\cap\varsigma_1)).
    \end{align}
    Next, we use the commutativity of set unions and the distributive property to write
    \begin{align}
        \tilde{T} &= (T\cap(\varsigma_2\cup\varsigma_1)^c)\cup(T\cap(\varsigma_2\cap\varsigma_1))\cup(T'\cap(\varsigma_2\cap\varsigma_1^c))\cup (T'\cap(\varsigma_2^c\cap\varsigma_1))\\
        &=(T\cap[(\varsigma_2\cup\varsigma_1)^c\cup(\varsigma_2\cap\varsigma_1)])\cup(T'\cap[(\varsigma_2\cap\varsigma_1^c)\cup(\varsigma_2^c\cap\varsigma_1)]).
    \end{align}
    Letting $\symdif$ denote the set symmetric difference, note that $(\varsigma_2\cup\varsigma_1)^c\cup(\varsigma_2\cap\varsigma_1) = [(\varsigma_2\cup\varsigma_1)\cap(\varsigma_2\cap\varsigma_1)^c]^c = [\varsigma_1\symdif\varsigma_2]^c$. Furthermore, $(\varsigma_2\cap\varsigma_1^c)\cup(\varsigma_2^c\cap\varsigma_1) = (\varsigma_2\setminus\varsigma_1)\cup(\varsigma_1\setminus\varsigma_2)=\varsigma_1\symdif\varsigma_2$. Then,
    \begin{align}
        \tilde{T} &=(T\cap(\varsigma_1\symdif\varsigma_2)^c)\cup(T'\cap(\varsigma_1\symdif\varsigma_2))\\
        &=(T\setminus(\varsigma_1\symdif\varsigma_2))\cup(T'\cap(\varsigma_1\symdif\varsigma_2)).
    \end{align}
    By an identical argument, we also have 
    \begin{align}
        \tilde{T}' &=(T'\setminus(\varsigma_1\symdif\varsigma_2))\cup(T\cap(\varsigma_1\symdif\varsigma_2)).
    \end{align}
    Lastly, since evidently $(\varsigma_1\symdif\varsigma_2)(T,T') = (\tilde{T},\tilde{T'})$ as well, we have 
    \begin{align}
        \varsigma_1(\varsigma_2(T,T'))=(\varsigma_1\symdif\varsigma_2)(T,T'),
    \end{align}
    as desired.

    Part (b). Pick any $D\in\mathcal{S}$ and $j_1\ldots j_n\in\mathbb{Z}_2^n$. We will again write $D$ as $D = e^{i\phi}\bigotimes_{l=1}^nD_l$ for some $\phi\in[0,2\pi)$, where each $D_l\in\{X,Y,Z,I\}$. Then
    \begin{align}
        D\ket{j_1\ldots j_n} &= e^{i\phi}\bigotimes_{l=1}^n D_l\ket{j_l}.
    \end{align}
    Note that if $l\in\sigma_D$, then $D_l=X$ or $Y$, so $D_l\ket{j_l}\propto \ket{j_l\oplus 1}$ up to a phase. On the other hand, if $l\notin\sigma_D$, then $D_l=Z$ or $I$, so $D_l\ket{j_l}\propto \ket{j_l}$ up to a phase. It follows that
    \begin{align}
        D\ket{j_1\ldots j_n} &= e^{i\phi'}\bigotimes_{l=1}^n \ket{j'_l},
    \end{align}
    for some phase $\phi'\in [0,2\pi)$, where
    \begin{align}
        j'_l = \begin{cases}
            j_l\oplus 1 &l\in\sigma_D\\
            j_l &l\notin\sigma_D
        \end{cases}
    \end{align}
    for $l=1,\ldots,n$. Therefore,
    \begin{align}
        D\ket{j_1\ldots j_n} = e^{i\phi'}\ket{\sigma_D(j_1\ldots j_n)}
    \end{align}
    for some phase $\phi'\in [0,2\pi)$, as desired. It remains to show that $\sigma_D(j_1\ldots j_n)$ is a valid group action of $S$ on $\mathbb{Z}_2^n$. The identity axiom is clearly satisfied for the identity element $\{\}$ of $S$. We now prove that the compatibility axiom holds. We need $\varsigma_1(\varsigma_2(j_1\ldots j_n)) = (\varsigma_1\symdif\varsigma_2)(j_1\ldots j_n)$ for any $\varsigma_1,\varsigma_2\in S$. We have 
    \begin{align}
        \varsigma_1(\varsigma_2(j_1\ldots j_n)) = \varsigma_1(j'_1\ldots j'_n) = j''_1\ldots j''_n
    \end{align}
    where for $l=1,\ldots,n$,
    \begin{align}
        j'_l = \begin{cases}
            j_l\oplus 1 &l\in\varsigma_2\\
            j_l &l\notin\varsigma2
        \end{cases}\quad \mathrm{and}\quad 
        j''_l=\begin{cases}
            j'_l\oplus 1 &l\in\varsigma_1\\
            j'_l &l\notin\varsigma_1
        \end{cases}.
    \end{align}
    We would like to find expressions for $j''_l$ in terms of $j_l$, and we proceed by considering all possible cases:
    \begin{enumerate}
        \item Suppose $l\in\varsigma_1$ and $l\in\varsigma_2$. Then $j_l''=j'_l\oplus 1$ and $j'_l = j_l\oplus 1$, so $j_l'' = j_l$.
        \item Suppose $l\in\varsigma_1$ and $l\notin\varsigma_2$. Then $j_l'' = j'_l\oplus 1$ and  $j'_l = j_l$, so $j_l'' = j_l\oplus 1$.
        \item Suppose $l\notin\varsigma_1$ and $l\in\varsigma_2$. Then $j_l''=j_l'$ and $j'_l=j_l\oplus 1$, so $j_l''=j_l\oplus 1$.
        \item Suppose $l \notin \varsigma_1$ and $l\notin \varsigma_2$. Then $j_l''=j_l'$ and $j_l'=j_l$, so $j_l''=j_l$.
    \end{enumerate}
    In summary,
    \begin{align}
        j_l'' = \begin{cases}
            j_l\oplus 1 &l\in\varsigma_1\symdif\varsigma_2\\
            j_l &l\notin\varsigma_1\symdif\varsigma_2
        \end{cases}.
    \end{align}
    Evidently, $(\varsigma_1\symdif\varsigma_2)(j_1\ldots j_n) = (j''_1\ldots j''_n)$ as well, from which it follows that 
    \begin{align}
        \varsigma_1(\varsigma_2(j_1\ldots j_n)) = (\varsigma_1\symdif\varsigma_2)(j_1\ldots j_n),
    \end{align}
    as desired.
\end{proof}

\subsection{Proof of Proposition \ref{prop:PS-conj}}\label{appendix:semidirect}
We now establish some preliminary results which will later be used to prove statements about the semidirect product $S_G\rtimes G$. For Propositions \ref{prop:sigma-perm}-\ref{prop:phi-homo}, assume we are given an arbitrary permutation group $G$ on $[n]$ and Pauli subgroup $\mathcal{S}\leq\mathcal{P}_n$; let $\mathcal{G}=P(G)$ and $S=\sigma(\mathcal{S})$. We first describe how Pauli operators transform when conjugated by permutation matrices.
\begin{proposition}\label{prop:sigma-perm}
    For any $\pi\in G$ and $D\in\mathcal{S}$, define $D' = P_\pi D P_\pi^\dagger$. Then $\sigma_{D'} = \pi(\sigma_D)$.
\end{proposition}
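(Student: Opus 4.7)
The plan is to unpack both sides of the claimed equality in terms of the single-qubit tensor factors of $D$ and then match them index by index using the already-established permutation rule for tensor products.

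First I would write $D = e^{i\phi}\bigotimes_{j=1}^n D_j$ with $D_j \in \{I,X,Y,Z\}$, so that by definition
\[
\sigma_D = \{\, j\in[n]\ :\ D_j\in\{X,Y\}\,\}.
\]
Then I would apply Proposition \ref{prop:permute-factors}, which tells us exactly how a tensor-product operator transforms under conjugation by $P_\pi$: the factor sitting in slot $k$ of $D'=P_\pi D P_\pi^\dagger$ is the factor of $D$ originally at slot $\pi^{-1}(k)$. In other words, $D' = e^{i\phi}\bigotimes_{k=1}^n D_{\pi^{-1}(k)}$, so $D'_k = D_{\pi^{-1}(k)}$.

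From here the proof is a one-line index-chase. By definition of $\sigma$ applied to $D'$,
\[
\sigma_{D'} = \{\, k\in[n]\ :\ D'_k\in\{X,Y\}\,\} = \{\, k\in[n]\ :\ D_{\pi^{-1}(k)}\in\{X,Y\}\,\}.
\]
Substituting $k = \pi(j)$ gives the equivalence $k\in\sigma_{D'} \iff j\in\sigma_D$, i.e.\ $\sigma_{D'} = \{\pi(j)\ :\ j\in\sigma_D\} = \pi(\sigma_D)$, as claimed.

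There is essentially no obstacle here beyond being careful about the direction of the permutation (it is $\pi^{-1}$ that permutes the tensor factors in Proposition \ref{prop:permute-factors}, but the resulting set of ``flipped'' qubit positions is $\pi(\sigma_D)$, not $\pi^{-1}(\sigma_D)$). The only thing worth flagging is the harmless global phase $e^{i\phi}$ in the decomposition of $D$, which plays no role in $\sigma$; I would mention this in passing so that the reader does not worry about whether $\sigma$ is well-defined on Paulis carrying phase factors.
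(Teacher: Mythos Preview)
Your proposal is correct and follows essentially the same approach as the paper's proof: write $D$ in tensor-product form, invoke Proposition \ref{prop:permute-factors} to get $D'_k = D_{\pi^{-1}(k)}$, and then chase indices through the definition of $\sigma$ to conclude $\sigma_{D'} = \pi(\sigma_D)$. Your remark about the direction of the permutation and the harmless phase factor are appropriate clarifications.
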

\begin{proof}
    Write $D$ as $D = e^{i\phi}\bigotimes_{j=1}^nD_j$ for some $\phi\in[0,2\pi)$, where each $D_j\in\{X,Y,Z,I\}$. Then Proposition \ref{prop:permute-factors} implies that
    \begin{align}
        D' &= P_\pi D P_\pi'\\
        &= e^{i\phi} \bigotimes_{j=1}^n D_{\pi^{-1}(j)}.
    \end{align}
    Thus, $D' = e^{i\phi}\bigotimes_{j=1}^nD'_j$, where $D'_j = D_{\pi^{-1}(j)}$. Note  that $j\in\sigma_{D'}$ is equivalent to $D'_j= X$ or $Y$ by the definition of $\sigma_{D'}$. Since $D'_j=D_{\pi^{-1}(j)}$, $D'_j= X$ or $Y$ if and only if $\pi^{-1}(j)\in\sigma_D$. Since $\pi^{-1}(j)\in\sigma_D$ if and only if $j\in \pi(\sigma_D)$, we conclude that $\sigma_{D'}=\pi(\sigma_D)$. 
\end{proof}
We now show that the two definitions for $S_G$ given in the main text (i.e., $S_G = \sigma(\mathcal{S}_\mathcal{G})$ and $S_G = \langle \pi(\varsigma):\varsigma\in S, \pi\in G\rangle$) are equivalent. 
\begin{proposition}\label{prop:equiv-def}
    The two definitions for $S_G$ are equivalent: 
    \begin{align}
        \sigma(\mathcal{S}_\mathcal{G}) = \langle \pi(\varsigma)\ :\ \varsigma\in S, \pi\in G\rangle.
    \end{align}
\end{proposition}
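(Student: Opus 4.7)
The plan is to establish the two inclusions separately, using Proposition \ref{prop:sigma-homo} (that $\sigma$ is a homomorphism sending Pauli multiplication to the symmetric difference $\symdif$) together with Proposition \ref{prop:sigma-perm} (that $\sigma_{P_\pi D P_\pi^\dagger}=\pi(\sigma_D)$).

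For the inclusion $\sigma(\mathcal{S}_\mathcal{G})\subseteq \langle \pi(\varsigma):\varsigma\in S,\pi\in G\rangle$, I would take an arbitrary $D\in\mathcal{S}_\mathcal{G}$ and invoke Eq.~(\ref{eq:SsubG-def}) to write
\begin{align}
D \;=\; \prod_{j=1}^{k} P_{\pi_j} D_j P_{\pi_j}^\dagger
\end{align}
for some $D_j\in\mathcal{S}$ and $\pi_j\in G$. Applying $\sigma$ and using that it is a homomorphism, together with Proposition \ref{prop:sigma-perm}, yields
\begin{align}
\sigma_D \;=\; \operatorname*{\Symdif}_{j=1}^{k} \sigma_{P_{\pi_j}D_j P_{\pi_j}^\dagger} \;=\; \operatorname*{\Symdif}_{j=1}^{k}\pi_j(\sigma_{D_j}).
\end{align}
Since each $\sigma_{D_j}\in S$, each factor $\pi_j(\sigma_{D_j})$ is a generator of the right-hand side, so $\sigma_D$ lies in it.

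For the reverse inclusion, I would argue that the set $\sigma(\mathcal{S}_\mathcal{G})$ is itself a subgroup of $\mathbb{P}([n])$ under $\symdif$ (again because $\sigma$ is a homomorphism), so it suffices to show that every generator $\pi(\varsigma)$ of the right-hand side lies in $\sigma(\mathcal{S}_\mathcal{G})$. Given $\varsigma\in S$ and $\pi\in G$, choose any $D\in\mathcal{S}$ with $\sigma_D=\varsigma$ (which exists since $S=\sigma(\mathcal{S})$). Then $P_\pi D P_\pi^\dagger\in\mathcal{S}_\mathcal{G}$ by definition, and Proposition \ref{prop:sigma-perm} gives
\begin{align}
\pi(\varsigma)\;=\;\pi(\sigma_D)\;=\;\sigma_{P_\pi D P_\pi^\dagger}\;\in\;\sigma(\mathcal{S}_\mathcal{G}),
\end{align}
completing the proof. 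The argument is essentially bookkeeping once the two homomorphism properties are in hand; the only mild subtlety is remembering that $\sigma(\mathcal{S}_\mathcal{G})$ is automatically closed under $\symdif$, which is what lets us reduce the reverse inclusion to checking generators rather than arbitrary words in $\pi(\varsigma)$.
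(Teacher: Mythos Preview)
Your proof is correct and follows essentially the same approach as the paper: both directions use that $\sigma$ is a homomorphism (Proposition~\ref{prop:sigma-homo}) together with the conjugation identity $\sigma_{P_\pi D P_\pi^\dagger}=\pi(\sigma_D)$ (Proposition~\ref{prop:sigma-perm}). The only minor difference is that for the reverse inclusion the paper lifts an arbitrary word $\Symdif_j \pi_j(\varsigma_j)$ directly, whereas you observe that $\sigma(\mathcal{S}_\mathcal{G})$ is a subgroup and hence it suffices to check generators---a slight streamlining but not a different argument.
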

\begin{proof}    
    Suppose $\varsigma\in\sigma(\mathcal{S}_\mathcal{G})$. Then $\varsigma = \sigma_D$ for some $D\in\mathcal{S}_\mathcal{G}$. According to Eq. (\ref{eq:SsubG-def}), $D$ can be written as $D=\prod_{j=1}^kP_{\pi_j} D'_j P_{\pi_j}^\dagger$ for some $P_{\pi_j}\in \mathcal{G}$, $D'_j\in \mathcal{S}$, and $k\in\mathbb{N}$. Since $\sigma$ is a homomorphism, we have $\sigma_D = \Symdif_{j=1}^k\sigma(P_{\pi_j} D'_j P_{\pi_j}^\dagger)$. By Proposition \ref{prop:sigma-perm}, $\sigma(P_{\pi_j} D'_j P_{\pi_j}^\dagger)=\pi_j(\sigma_{D'_j})$, which implies that $\sigma_D = \Symdif_{j=1}^k\pi_j(\sigma_{D'_j})$. Noting that each $\sigma_{D'_j}\in S$, it follows that $\sigma(\mathcal{S}_\mathcal{G})\subseteq \langle \pi(\varsigma'):\varsigma'\in S, \pi\in G\rangle$.

    Conversely, suppose $\varsigma\in\langle \pi(\varsigma'):\varsigma'\in S, \pi\in G\rangle$ such that $\varsigma = \Symdif_{j=1}^k\pi_j(\varsigma'_j)$ for some $\pi_j\in G$, $\varsigma'_j\in S$, and $k\in\mathbb{N}$. Since $S=\sigma(\mathcal{S})$, there exist $D'_j\in\mathcal{S}$ such that $\varsigma'_j=\sigma_{D'_j}$. Hence, $\varsigma = \Symdif_{j=1}^k\pi_j(\sigma_{D'_j})$, and it follows from Proposition \ref{prop:sigma-perm} that $\varsigma = \Symdif_{j=1}^k\sigma(P_{\pi_j}D'_j P^\dagger_{\pi_j})$. Because $\sigma$ is a homomorphism, we have $\varsigma = \sigma(\Symdif_{j=1}^kP_{\pi_j}D'_j P^\dagger_{\pi_j})$, which implies that $\varsigma\in\sigma(\mathcal{S}_\mathcal{G})$. It follows that $\sigma(\mathcal{S}_\mathcal{G})\supseteq \langle \pi(\varsigma'):\varsigma'\in S, \pi\in G\rangle$, thereby completing the proof.
\end{proof}
Next, we show that $S_G$ is invariant under permutations from $G$:
\begin{proposition}\label{prop:SG-perms}
    For any $\varsigma \in S_G$ and $\pi\in G$, $\pi(\varsigma)\in S_G$.
\end{proposition}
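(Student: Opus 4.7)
My plan is to leverage the two prior facts already established about $S_G$ and $\mathcal{S}_\mathcal{G}$: (i) by definition $S_G = \sigma(\mathcal{S}_\mathcal{G})$, and (ii) Proposition~\ref{prop:SG-decomp1} asserts $\mathcal{G}\subseteq \mathcal{N}(\mathcal{S}_\mathcal{G})$, i.e., $\mathcal{S}_\mathcal{G}$ is closed under conjugation by any $P_\pi\in \mathcal{G}$. The bridge between these two statements is Proposition~\ref{prop:sigma-perm}, which says that conjugating a Pauli operator by $P_\pi$ simply acts on its $\sigma$-image by the permutation $\pi$.

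The proof itself should then be a short three-line chain. Given any $\varsigma\in S_G$, choose a preimage $D\in\mathcal{S}_\mathcal{G}$ with $\sigma_D=\varsigma$, which exists because $S_G=\sigma(\mathcal{S}_\mathcal{G})$. For any $\pi\in G$, set $D'=P_\pi D P_\pi^\dagger$. By Proposition~\ref{prop:SG-decomp1}, $D'\in \mathcal{S}_\mathcal{G}$, so $\sigma_{D'}\in S_G$. By Proposition~\ref{prop:sigma-perm}, $\sigma_{D'}=\pi(\sigma_D)=\pi(\varsigma)$. Hence $\pi(\varsigma)\in S_G$, as claimed.

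As a sanity check, one can verify the same conclusion using the alternative characterization of $S_G$ given in Proposition~\ref{prop:equiv-def}: write $\varsigma = \Symdif_{j=1}^k \pi_j(\varsigma_j')$ with $\varsigma_j'\in S$ and $\pi_j\in G$, and note that a permutation distributes over symmetric differences (it is a bijection on $[n]$, so it commutes with union, intersection, and complement). Then $\pi(\varsigma) = \Symdif_{j=1}^k (\pi\pi_j)(\varsigma_j')$, and since each $\pi\pi_j\in G$ and each $\varsigma_j'\in S$, every summand lies in the generating set of $S_G$, hence so does their symmetric difference.

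No step here looks like a serious obstacle; the result is essentially a translation of the normalizer statement Proposition~\ref{prop:SG-decomp1} across the homomorphism $\sigma$ via Proposition~\ref{prop:sigma-perm}. The only thing to be a little careful about is invoking the correct prior proposition (Proposition~\ref{prop:SG-decomp1}, not \ref{prop:SG-decomp3}, since we do not want to assume $\mathcal{G}\subseteq\mathcal{N}(\mathcal{S})$).
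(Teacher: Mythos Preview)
Your proof is correct and essentially identical to the paper's: both lift $\varsigma$ to a preimage $D\in\mathcal{S}_\mathcal{G}$ via surjectivity of $\sigma$, conjugate by $P_\pi$, invoke Proposition~\ref{prop:SG-decomp1} to keep $D'\in\mathcal{S}_\mathcal{G}$, and then apply Proposition~\ref{prop:sigma-perm} to conclude $\sigma_{D'}=\pi(\varsigma)\in S_G$. Your additional sanity check via Proposition~\ref{prop:equiv-def} is a nice alternative route but not needed.
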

\begin{proof}
    Consider any $\varsigma\in S_G$ and $\pi\in G$. Because $\sigma:\mathcal{S_G}\rightarrow S_G$ is surjective, there exists $D\in\mathcal{S}_\mathcal{G}$ such that $\varsigma = \sigma_D$. Now define $D' = P_\pi D P_\pi^\dagger$, and note that $D'\in\mathcal{S}_\mathcal{G}$ since $\mathcal{G}\subseteq \mathcal{N}(\mathcal{S}_\mathcal{G})$ by Proposition \ref{prop:SG-decomp1}. It follows that $\sigma_{D'}\in S_G$, since $S_G = \sigma(\mathcal{S}_\mathcal{G})$. Furthermore, because $\sigma_{D'} = \pi(\sigma_D)=\pi(\varsigma)$ by Proposition \ref{prop:sigma-perm}, we conclude that $\pi(\varsigma)\in S_G$, as desired.
\end{proof}
We can use this result to show that permutations in $G$ act as automorphisms of $S_G$.
\begin{proposition}\label{prop:sigma-norm}
    For any $\pi\in G$, the map $\pi:S_G\rightarrow S_G$ is an automorphism of $S_G$.
\end{proposition}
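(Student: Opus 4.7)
The plan is to verify the three defining properties of an automorphism in turn: well-definedness, the homomorphism property, and bijectivity. Throughout, fix $\pi \in G$.

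First I would establish that $\pi: S_G \to S_G$ is well-defined, meaning $\pi(\varsigma) \in S_G$ whenever $\varsigma \in S_G$. This is exactly the content of Proposition \ref{prop:SG-perms}, which we may cite directly. So the map is a bona fide function from $S_G$ into itself.

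Next, I would show $\pi$ is a group homomorphism with respect to the symmetric difference $\symdif$. That is, for any $\varsigma_1, \varsigma_2 \in S_G$, we need
\begin{align}
    \pi(\varsigma_1 \symdif \varsigma_2) = \pi(\varsigma_1) \symdif \pi(\varsigma_2).
\end{align}
This follows from the fact that $\pi$, as a bijection on $[n]$, commutes elementwise with union, intersection, and complementation of subsets of $[n]$: for any subsets $A, B \subseteq [n]$, we have $\pi(A \cup B) = \pi(A) \cup \pi(B)$ and $\pi(A \cap B) = \pi(A) \cap \pi(B)$, and hence $\pi(A \symdif B) = \pi(A) \symdif \pi(B)$ by the definition $A \symdif B = (A \cup B) \setminus (A \cap B)$. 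This is a short direct calculation.

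Finally, I would show that $\pi: S_G \to S_G$ is bijective. Since $G$ is a group, $\pi^{-1} \in G$, so by Proposition \ref{prop:SG-perms} the map $\pi^{-1}: S_G \to S_G$ is also well-defined as a set-theoretic map on $S_G$. The compositions $\pi \circ \pi^{-1}$ and $\pi^{-1} \circ \pi$ both act as the identity on subsets of $[n]$, and in particular on elements of $S_G$, so $\pi$ is a two-sided inverse of $\pi^{-1}$ on $S_G$ and is therefore bijective. Combined with the homomorphism property, this makes $\pi$ an automorphism of $S_G$. I do not anticipate any serious obstacle here: the only nontrivial input is Proposition \ref{prop:SG-perms} (applied both to $\pi$ and to $\pi^{-1}$), and the remaining steps are elementary facts about bijections acting on subsets of a finite set.
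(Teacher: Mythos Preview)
Your proposal is correct and follows essentially the same approach as the paper: both establish well-definedness via Proposition~\ref{prop:SG-perms}, verify the homomorphism property by noting that a bijection on $[n]$ distributes over union, intersection, and set difference, and then conclude bijectivity. Your treatment of bijectivity is slightly more explicit---invoking $\pi^{-1}\in G$ and applying Proposition~\ref{prop:SG-perms} to $\pi^{-1}$ to exhibit an inverse on $S_G$---whereas the paper simply appeals to the bijectivity of $\pi$; your version is arguably cleaner on this point, but the underlying argument is the same.
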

\begin{proof}
    The result of Proposition \ref{prop:SG-perms} implies that $\pi(\cdot)$ is indeed a mapping from $S_G$ to itself. We must now show that $\pi(\cdot)$ is a homomorphism; specifically, we must demonstrate that $\pi(\varsigma_1\symdif\varsigma_2) = \pi(\varsigma_1)\symdif\pi(\varsigma_2)$ for all $\varsigma_1,\varsigma_2\in S_G$. Letting $\varsigma^c$ indicate the complement of $\varsigma$ in $[n]$ for any $\varsigma\subseteq[n]$, we have
    \begin{align}
        \pi(\varsigma_1\symdif\varsigma_2) &= \pi((\varsigma_1\cup\varsigma_2)\setminus (\varsigma_1\cap\varsigma_2)).
    \end{align}
    Note that $\pi(\varsigma\cup\varsigma') = \pi(\varsigma)\cup\pi(\varsigma')$ for any $\varsigma,\varsigma'\subseteq[n]$. Additionally, since $\pi$ is bijective, we have $\pi(\varsigma\cap\varsigma') = \pi(\varsigma)\cap\pi(\varsigma')$ and $\pi(\varsigma\setminus \varsigma') = \pi(\varsigma)\setminus \pi(\varsigma')$ as well. Thus,
    \begin{align}
        \pi(\varsigma_1\symdif\varsigma_2) &= \pi(\varsigma_1\cup\varsigma_2)\setminus\pi(\varsigma_1\cap\varsigma_2)\\
        &= [\pi(\varsigma_1)\cup\pi(\varsigma_2)]\setminus[\pi(\varsigma_1)\cap\pi(\varsigma_2)]\\
        &= \pi(\varsigma_1)\symdif\pi(\varsigma_2),
    \end{align}
    as desired. Because $\pi$ is bijective, we conclude that the homomorphism $\pi:S_G\rightarrow S_G$ is an automorphism.
\end{proof}
We are now equipped to show that $\Phi$ is a surjective homomorphism.
\begin{proposition}\label{prop:phi-homo}
    The map $\Phi$ is a surjective homomorphism from $\langle\mathcal{S},\mathcal{G}\rangle$ to $S_G\rtimes G$. 
\end{proposition}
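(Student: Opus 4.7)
The plan is to verify the two defining properties of a surjective homomorphism separately, leveraging the groundwork already laid in Propositions \ref{prop:SG-decomp}, \ref{prop:sigma-homo}, and \ref{prop:sigma-perm}. The well-definedness of $\Phi$ is immediate from Proposition \ref{prop:SG-decomp2}, which guarantees a unique factorization $U=DP_\pi$ with $D\in\mathcal{S}_\mathcal{G}$ and $P_\pi\in\mathcal{G}$ for every $U\in\langle\mathcal{S},\mathcal{G}\rangle$. Surjectivity is the easy half: given any $(\varsigma,\pi)\in S_G\rtimes G$, I would use the surjectivity of $\sigma:\mathcal{S}_\mathcal{G}\to S_G$ (which holds because $S_G=\sigma(\mathcal{S}_\mathcal{G})$ by definition) to choose some $D\in\mathcal{S}_\mathcal{G}$ with $\sigma_D=\varsigma$, and then set $U=DP_\pi\in\mathcal{S}_\mathcal{G}\mathcal{G}=\langle\mathcal{S},\mathcal{G}\rangle$, for which $\Phi(U)=(\varsigma,\pi)$ by construction.

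The homomorphism property requires more care. I would start with two generic elements $U_1=D_1P_{\pi_1}$ and $U_2=D_2P_{\pi_2}$ in their canonical forms, and compute their product by inserting $P_{\pi_1}^\dagger P_{\pi_1}$ between $P_{\pi_1}$ and $D_2$:
\begin{align}
U_1U_2 = D_1P_{\pi_1}D_2P_{\pi_1}^\dagger P_{\pi_1}P_{\pi_2} = \bigl(D_1\cdot P_{\pi_1}D_2P_{\pi_1}^\dagger\bigr)\,P_{\pi_1\pi_2}.
\end{align}
The critical observation is that $P_{\pi_1}D_2P_{\pi_1}^\dagger$ still belongs to $\mathcal{S}_\mathcal{G}$ by Proposition \ref{prop:SG-decomp1} (i.e., $\mathcal{G}\subseteq\mathcal{N}(\mathcal{S}_\mathcal{G})$), so the bracketed factor lies in $\mathcal{S}_\mathcal{G}$ and this displayed expression is in fact the unique canonical factorization of $U_1U_2$. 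Consequently
\begin{align}
\Phi(U_1U_2) = \bigl(\sigma_{D_1\cdot P_{\pi_1}D_2P_{\pi_1}^\dagger},\; \pi_1\pi_2\bigr).
\end{align}
I would then simplify the first coordinate using two earlier facts: $\sigma$ is a homomorphism (Proposition \ref{prop:sigma-homo}), so the symbol splits as a symmetric difference, and $\sigma_{P_{\pi_1}D_2P_{\pi_1}^\dagger}=\pi_1(\sigma_{D_2})$ by Proposition \ref{prop:sigma-perm}. This yields $\sigma_{D_1}\symdif\pi_1(\sigma_{D_2})$, which matches the first coordinate of the semidirect-product law in Eq. (\ref{eq:SG-comp-law}), giving $\Phi(U_1U_2)=(\sigma_{D_1},\pi_1)\cdot(\sigma_{D_2},\pi_2)=\Phi(U_1)\cdot\Phi(U_2)$.

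The main obstacle is really a conceptual one rather than a computational one: the calculation only goes through because the enlargement from $\mathcal{S}$ to $\mathcal{S}_\mathcal{G}$ is precisely what ensures conjugation by elements of $\mathcal{G}$ stays within the Pauli group being considered. Had one naively defined $\Phi$ using $\mathcal{S}$ itself, the factor $P_{\pi_1}D_2P_{\pi_1}^\dagger$ might escape $\mathcal{S}$ and the canonical form would fail, breaking well-definedness. Once this is flagged and Propositions \ref{prop:SG-decomp1}, \ref{prop:sigma-homo}, and \ref{prop:sigma-perm} are invoked in the right order, the rest reduces to bookkeeping.
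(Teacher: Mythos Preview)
Your proof is correct and follows essentially the same route as the paper's: both factor the product $U_1U_2$ into the canonical form $(D_1\cdot P_{\pi_1}D_2P_{\pi_1}^\dagger)P_{\pi_1\pi_2}$ using Proposition~\ref{prop:SG-decomp1}, then apply Propositions~\ref{prop:sigma-homo} and~\ref{prop:sigma-perm} to identify the first coordinate with $\sigma_{D_1}\symdif\pi_1(\sigma_{D_2})$, and both derive surjectivity from the surjectivity of $\sigma:\mathcal{S}_\mathcal{G}\to S_G$. Your additional remark about why the enlargement from $\mathcal{S}$ to $\mathcal{S}_\mathcal{G}$ is essential is a helpful gloss but does not alter the argument.
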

\begin{proof}
    By Proposition \ref{prop:SG-decomp2}, any element of $\langle\mathcal{S},\mathcal{G}\rangle$ can be written uniquely as $DP_\pi$ for some $D\in\mathcal{S}_\mathcal{G}$ and $P_\pi\in\mathcal{G}$. To demonstrate that $\Phi$ is a homomorphism, we must show that 
    \begin{align}
        \Phi[D_1P_{\pi_1}D_2P_{\pi_2}] = \Phi[D_1P_{\pi_1}]\Phi[D_2P_{\pi_2}]
    \end{align}
    for any $D_1P_{\pi_1},D_2P_{\pi_2}\in\langle\mathcal{S},\mathcal{G}\rangle$. Hence, pick any arbitrary $D_1P_{\pi_1},D_2P_{\pi_2}\in\langle\mathcal{S},\mathcal{G}\rangle$. Because $\mathcal{G}\in\mathcal{N}(\mathcal{S}_\mathcal{G})$ by Proposition \ref{prop:SG-decomp1}, we have $P_{\pi_1}D_2 = D'_2 P_{\pi_1}$ for some $D'_2\in\mathcal{S}_\mathcal{G}$. Therefore,
    \begin{align}
        \Phi[D_1P_{\pi_1}D_2P_{\pi_2}] &= \Phi[(D_1D'_2)(P_{\pi_1}P_{\pi_2})]\\
        &=(\sigma_{D_1D'_2},P^{-1}(P_{\pi_1}P_{\pi_2}))\\
        &= (\sigma_{D_1}\symdif\sigma_{D'_2},\pi_1\pi_2),
    \end{align}
    where the last equality follows from the fact that $\sigma$ and $P^{-1}$ are homomorphisms. Because $D'_2 = P_{\pi_1}D_2P_{\pi_1}^\dagger$, Proposition \ref{prop:sigma-perm} implies that $\sigma_{D'_2} = \pi_1(\sigma_{D_2})$. Hence,
    \begin{align}
        \Phi[D_1P_{\pi_1}D_2P_{\pi_2}] &= (\sigma_{D_1}\symdif\pi_1(\sigma_{D_2}),\pi_1\pi_2)\\
        &= (\sigma_{D_1},\pi_1)\cdot(\sigma_{D_2},\pi_2)\\
        &= \Phi[D_1P_{\pi_1}]\Phi[D_2P_{\pi_2}],
    \end{align}
    as desired. The surjectivity of $\Phi$ follows from the surjectivity of $\sigma:\mathcal{S}_\mathcal{G}\rightarrow S_G$ and $P^{-1}:\mathcal{G}\rightarrow G$.
\end{proof}

Lastly, we prove Proposition \ref{prop:PS-conj}.
\begin{proof}[Proof of Proposition \ref{prop:PS-conj}]
    Part (a). Any $U\in\langle\mathcal{S},\mathcal{G}\rangle$ can be written uniquely as $DP_\pi$ for some $D\in\mathcal{S}_\mathcal{G}$ and $\pi\in G$ by Proposition \ref{prop:SG-decomp2}. Thus, by Propositions \ref{prop:P-conj} and \ref{prop:S-conj},
    \begin{align}
        U\rbf U^\dagger &= (DP_\pi)\rbf (DP_\pi)^\dagger\\
        &= D\left(P_\pi\rbf P_\pi^\dagger\right)D^\dagger\\
        &= D\mathbf{R}^{\pi(T,T')}D^\dagger\\
        &= \mathbf{R}^{\sigma_D[\pi(T,T')]}\\
        &= \mathbf{R}^{(\sigma_D,\pi)[(T,T')]},
    \end{align}
    where we note that $(\sigma_D,\pi) = \Phi(U)$.
    It remains to show that Eq. (\ref{eq:sympauli-action}) is a valid group action of $S_G\rtimes G$ on $[n]^2$. The identity axiom clearly holds. To verify the compatibility action, we will make use of the following result: for any $\pi\in G$, $\varsigma\in S_G$, and $T,T'\subseteq[n]$,
    \begin{align}\label{eq:semidirect-claim}
        \pi[\varsigma(T,T')] = \pi(\varsigma)[\pi(T,T')].
    \end{align}
    We now prove this claim. Note that
    \begin{align}
        \pi[\varsigma(T,T')] &= \pi((T\setminus\varsigma)\cup(T'\cap\varsigma),(T'\setminus\varsigma)\cup(T\cap\varsigma))\\
        &= (\pi[(T\setminus\varsigma)\cup(T'\cap\varsigma)],\pi[(T'\setminus\varsigma)\cup(T\cap\varsigma)])\\
        &= (\pi(T\setminus\varsigma)\cup\pi(T'\cap\varsigma),\pi(T'\setminus\varsigma)\cup\pi(T\cap\varsigma))\\
        &= ([\pi(T)\setminus\pi(\varsigma)]\cup[\pi(T')\cap\pi(\varsigma)],[\pi(T')\setminus\pi(\varsigma)]\cup[\pi(T)\cap\pi(\varsigma)]),\label{eq:sympauli-proof01}
    \end{align}
    where the last equality follows from the bijectivity of $\pi$. Now observe that
    \begin{align}
        \pi(\varsigma)[\pi(T,T')] &=  \pi(\varsigma)(\pi(T),\pi(T'))\\
        &= ([\pi(T)\setminus\pi(\varsigma)]\cup[\pi(T')\cap\pi(\varsigma)],[\pi(T')\setminus\pi(\varsigma)]\cup[\pi(T)\cap\pi(\varsigma)]).\label{eq:sympauli-proof02}
    \end{align}
    Comparing Eqs. (\ref{eq:sympauli-proof01}) and (\ref{eq:sympauli-proof02}), the claim follows. We now verify the compatibility axiom by showing that
    \begin{align}
         (\varsigma_1,\pi_1)[(\varsigma_2,\pi_2)[(T,T')]]=((\varsigma_1,\pi_1)\cdot(\varsigma_2,\pi_2))[(T,T')]
    \end{align}
    for any $(\varsigma_1,\pi_1),(\varsigma_2,\pi_2)\in S_G\rtimes G$. We expand the left side as
    \begin{align}
        (\varsigma_1,\pi_1)[(\varsigma_2,\pi_2)[(T,T')]] &= (\varsigma_1,\pi_1)[\varsigma_2[\pi_2(T,T')]]\\
        &=\varsigma_1[\pi_1[\varsigma_2[\pi_2(T,T')]]].
    \end{align}
    Using our earlier result of Eq. (\ref{eq:semidirect-claim}), we can rewrite this expression as
    \begin{align}
        (\varsigma_1,\pi_1)[(\varsigma_2,\pi_2)[(T,T')]]&=\varsigma_1[\pi_1(\varsigma_2)[\pi_1[\pi_2(T,T')]]]\\
        &= (\varsigma_1\symdif\pi_1(\varsigma_2))[(\pi_1\pi_2)(T,T')]
    \end{align}
    using the fact that $\varsigma(\cdot)$ and $\pi(\cdot)$ are individually valid group actions which satisfy the compatibility axiom. It then follows that
    \begin{align}
        (\varsigma_1,\pi_1)[(\varsigma_2,\pi_2)[(T,T')]]&=(\varsigma_1\symdif\pi_1(\varsigma_2),\pi_1\pi_2)[(T,T')]\\
        &=((\varsigma_1,\pi_1)\cdot(\varsigma_2,\pi_2))[(T,T')],
    \end{align}
    as desired. 

    Part (b). Again, we may write any $U\in\langle\mathcal{S},\mathcal{G}\rangle$ uniquely as $DP_\pi$ for some $D\in\mathcal{S}_\mathcal{G}$ and $\pi\in G$. Then, by Propositions \ref{prop:P-conj} and \ref{prop:S-conj},
    \begin{align}
        U\ket{j_1\ldots j_n}&=DP_\pi\ket{j_1\ldots j_n}\\
        &=D\ket{\pi(j_1\ldots j_n)}\\
        &=e^{i\phi}\ket{\sigma_D[\pi(j_1\ldots j_n)]}\\
        &=e^{i\phi}\ket{(\sigma_D,\pi)[(j_1\ldots j_n)]}
    \end{align}
    for some $\phi\in[0,2\pi)$, where we note that $(\sigma_D,\pi) = \Phi(U)$.
    It remains to show that Eq. (\ref{eq:sympauli-action-state}) is a valid group action of $S_G\rtimes G$ on $\mathbb{Z}_2^n$. Again, the identity axiom clearly holds. To verify the compatibility action, we will make use of an analogous result to Eq. (\ref{eq:semidirect-claim}): for any $\pi\in G$, $\varsigma\in S_G$, and $j_1\ldots j_n\in\mathbb{Z}_2^n$:
    \begin{align}\label{eq:semidirect-claim2}
        \pi[\varsigma(j_1\ldots j_n)] = \pi(\varsigma)[\pi(j_1\ldots j_n)].
    \end{align}
    We now prove this claim. Expand the left side as
    \begin{align}
        \pi[\varsigma(j_1\ldots j_n)] &= \pi(j'_1\ldots j'_n) \\
        &= j'_{\pi^{-1}(1)}\ldots j'_{\pi^{-1}(n)},
    \end{align}
    where $j'_l = j_l\oplus 1$ if $l\in\varsigma$ and $j'_l=j_l$ otherwise. It follows that 
    \begin{align}
        j'_{\pi^{-1}(l)}&=\begin{cases}
            j_{\pi^{-1}(l)}\oplus 1 & \pi^{-1}(l)\in\varsigma\\
            j_{\pi^{-1}(l)}& \mathrm{otherwise}
        \end{cases} 
    \end{align}
    or equivalently
    \begin{align}\label{eq:semidirect01}
        j'_{\pi^{-1}(l)}&=\begin{cases}
            j_{\pi^{-1}(l)}\oplus 1 & l\in\pi(\varsigma)\\
            j_{\pi^{-1}(l)}& \mathrm{otherwise}
        \end{cases}.
    \end{align}
    Eq. (\ref{eq:semidirect01}) then implies that $j'_{\pi^{-1}(1)}\ldots j'_{\pi^{-1}(n)} = \pi(\varsigma)(j_{\pi^{-1}(1)}\ldots j_{\pi^{-1}(n)})$, from which we deduce that
    \begin{align}
        \pi[\varsigma(j_1\ldots j_n)] &=\pi(\varsigma)(j_{\pi^{-1}(1)}\ldots j_{\pi^{-1}(n)})\\
        &=\pi(\varsigma)[\pi(j_1\ldots j_n)],
    \end{align}
    as desired. We are now equipped to prove the compatibility axiom, namely
    \begin{align}
        (\varsigma_1,\pi_1)[(\varsigma_2,\pi_2)(j_1\ldots j_n)]
        &=((\varsigma_1,\pi_1)\cdot(\varsigma_2,\pi_2))(j_1\ldots j_n),
    \end{align}
    for any $(\varsigma_1,\pi_1),(\varsigma_2,\pi_2)\in S_G\rtimes G$. The left side can be written as
    \begin{align}
        (\varsigma_1,\pi_1)[(\varsigma_2,\pi_2)(j_1\ldots j_n)]
        &=\varsigma_1[\pi_1[\varsigma_2[\pi_2(j_1\ldots j_n)]]],
    \end{align}
    and applying our result from Eq. (\ref{eq:semidirect-claim2}), we obtain
    \begin{align}
        (\varsigma_1,\pi_1)[(\varsigma_2,\pi_2)(j_1\ldots j_n)]
        &=\varsigma_1[\pi_1(\varsigma_2)[\pi_1[\pi_2(j_1\ldots j_n)]]]\\
        &=(\varsigma_1\symdif\pi_1(\varsigma_2))[(\pi_1\pi_2)(j_1\ldots j_n)]\\
        &=(\varsigma_1\symdif\pi_1(\varsigma_2),\pi_1\pi_2)(j_1\ldots j_n)\\
        &=((\varsigma_1,\pi_1)\cdot(\varsigma_2,\pi_2))(j_1\ldots j_n),
    \end{align}
    as desired.
\end{proof}

\subsection{Symmetries of $\mathcal{T}^2$}\label{appendix:pairs-sym}
In this section, we prove a supplemental result which is useful for Theorem \ref{lemma:SG-simp}. Namely, we show that if some qubit swap group $S\leq\mathbb{P}([n])$ and permutation group $G$ on $[n]$ are symmetries of $\mathcal{T}^2$, then $S_G\rtimes G$ is also a symmetry of $\mathcal{T}^2$:
\begin{proposition}\label{prop:pairs-sym}
    Given a permutation group $G$ on $[n]$ and a Pauli subgroup $\mathcal{S}\leq \mathcal{P}_n$, let $S=\sigma(\mathcal{S})$. Then $\mathcal{T}^2$ is invariant under $S$ and $G$ if and only if $\mathcal{T}^2$ is invariant under $S_G\rtimes G$.
\end{proposition}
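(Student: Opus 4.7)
The plan is to prove the two directions separately, with the forward direction ($S_G \rtimes G$-invariance $\Rightarrow$ $S$- and $G$-invariance) being immediate and the reverse direction requiring a bit more care.

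For the forward direction, I would observe that $G$ embeds into $S_G \rtimes G$ via $\pi \mapsto (\varnothing, \pi)$ and $S$ embeds via $\varsigma \mapsto (\varsigma, e)$, where $e$ is the identity permutation, since $S \subseteq S_G$. The action of $(\varnothing, \pi)$ on $(T,T')$ is just $\pi(T,T')$, and the action of $(\varsigma, e)$ is just $\varsigma(T,T')$. So invariance under $S_G \rtimes G$ immediately restricts to invariance under $S$ and under $G$.

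For the reverse direction, assume $\mathcal{T}^2$ is invariant under both $S$ and $G$. Since $(\varsigma', \pi')[(T,T')] = \varsigma'[\pi'(T,T')]$ by Eq. (\ref{eq:sympauli-action}), and $\pi'(T,T') \in \mathcal{T}^2$ by $G$-invariance, it suffices to show $\mathcal{T}^2$ is invariant under the action of $S_G$ alone. I would introduce the set
\begin{align}
S_G' = \left\{\varsigma' \in S_G\ :\ \varsigma'(T,T') \in \mathcal{T}^2\ \text{for all}\ (T,T')\in\mathcal{T}^2 \rule{0pt}{2.4ex}\right\}
\end{align}
and show $S_G' = S_G$. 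First, $S_G'$ is a subgroup of $(S_G, \symdif)$: it contains $\varnothing$ trivially, and if $\varsigma_1, \varsigma_2 \in S_G'$ then $(\varsigma_1 \symdif \varsigma_2)(T,T') = \varsigma_1[\varsigma_2(T,T')] \in \mathcal{T}^2$ by the compatibility axiom of the $S_G$-action (established in Proposition \ref{prop:S-conj}) together with the assumption $\varsigma_1, \varsigma_2 \in S_G'$.

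Next, I would verify that every generator of $S_G$ is in $S_G'$. By Proposition \ref{prop:equiv-def}, $S_G = \langle \pi(\varsigma) : \varsigma \in S, \pi \in G\rangle$, so I need to show $\pi(\varsigma)[(T,T')] \in \mathcal{T}^2$ for any $(T,T') \in \mathcal{T}^2$, $\varsigma \in S$, $\pi \in G$. By $G$-invariance, $(T_0, T_0') := \pi^{-1}(T,T') \in \mathcal{T}^2$, and by $S$-invariance, $\varsigma(T_0, T_0') \in \mathcal{T}^2$, and by $G$-invariance once more, $\pi[\varsigma(T_0, T_0')] \in \mathcal{T}^2$. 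The identity $\pi[\varsigma(T_0, T_0')] = \pi(\varsigma)[\pi(T_0, T_0')] = \pi(\varsigma)[(T,T')]$ established as Eq. (\ref{eq:semidirect-claim}) in the proof of Proposition \ref{prop:PS-conj} then gives the desired conclusion. Thus $S_G'$ contains a generating set and is a subgroup, hence $S_G' = S_G$, completing the proof.

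The main obstacle will be correctly handling the generators of $S_G$: one cannot directly apply $S$-invariance to a generator of the form $\pi(\varsigma)$ since $\pi(\varsigma)$ is in general not in $S$. The trick is to conjugate the action appropriately using the already-proved identity $\pi[\varsigma(T_0, T_0')] = \pi(\varsigma)[\pi(T_0, T_0')]$, so that the problem reduces to $S$-invariance applied to the $G$-shifted pair $\pi^{-1}(T,T')$. Everything else is routine bookkeeping with the subgroup structure and the fact that the $\symdif$-composition coincides with the composition of actions.
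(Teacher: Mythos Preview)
Your proof is correct and follows essentially the same approach as the paper: both directions are handled identically in spirit, with the harder direction proceeding by decomposing elements of $S_G$ in terms of the generators $\pi(\varsigma)$ and invoking the conjugation identity $\pi[\varsigma(\cdot)] = \pi(\varsigma)[\pi(\cdot)]$. The only cosmetic difference is that the paper directly writes $(\varsigma,\pi)$ as a product $\left[\prod_j (\varnothing,\pi_j)(\varsigma_j,\pi_j^{-1})\right](\varnothing,\pi)$ and unpacks the action factor by factor, whereas you package the same reasoning into the cleaner ``$S_G'$ is a subgroup containing a generating set'' argument (which works because every element of $(S_G,\symdif)$ is self-inverse).
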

\begin{proof}
    ``$\implies$" direction: Suppose $\mathcal{T}^2$ is invariant under $S$ and $G$. Consider any $(\varsigma,\pi)\in S_G\rtimes G$. By Eq. (\ref{eq:SG-equiv-def}), $\varsigma=\Symdif_{j=1}^k\pi_j(\varsigma_j)$ for some $\pi_j\in G$, $\varsigma_j\in S$, and $k\in\mathbb{N}$. Note for every $j=1,\ldots,k$ that
    \begin{align}
    (\pi_j(\varsigma_j),e)=(\varnothing,\pi_j)\cdot(\varsigma_j,\pi_j^{-1}),
    \end{align}
    where $e$ is the identity permutation. It follows that 
    \begin{align}
        (\varsigma,\pi) &= (\varsigma,e)\cdot(\varnothing,\pi)\\
        &=\left[\prod_{j=1}^k(\pi_j(\varsigma_j),e)\right]\cdot(\varnothing,\pi)\\
        &=\left[\prod_{j=1}^k(\varnothing,\pi_j)\cdot(\varsigma_j,\pi_j^{-1})\right]\cdot(\varnothing,\pi).
    \end{align}
    Now pick any $(T,T')\in\mathcal{T}^2$. Using the action of $S_G\times G$ on $[n]^2$ defined in Proposition \ref{prop:PS-conj}, observe that
    \begin{align}
        (\varsigma,\pi)[(T,T')] &= \left((\varnothing,\pi_1)\circ(\varsigma_1,\pi_1^{-1})\circ\cdots\circ(\varnothing,\pi_k)\circ(\varsigma_k,\pi_k^{-1})\circ(\varnothing,\pi)\right)[(T,T')]\\
        &= \left(\pi_1\circ\varsigma_1\circ\pi_1^{-1}\circ\cdots\circ\pi_k\circ\varsigma_k\circ\pi_k^{-1}\circ\pi\right)(T,T'),
    \end{align}
    where the ``$\circ$" symbol indicates function composition. Since $\mathcal{T}^2$ is invariant under the actions of $S$ and $G$ by assumption, it follows that $(\varsigma,\pi)[(T,T')]\in\mathcal{T}^2$. We thus conclude that $\mathcal{T}^2$ is invariant under $S_G\rtimes G$.

    ``$\impliedby$" direction: Suppose $\mathcal{T}^2$ is invariant under $S_G\rtimes G$. Pick any $(T,T')\in\mathcal{T}^2$. For any $\varsigma\in S$, we have $\varsigma(T,T')=(\varsigma,e)[(T,T')]\in\mathcal{T}^2$. Likewise, for any $\pi\in G$, we have $\pi(T,T') = (\varnothing,\pi)[(T,T')]\in\mathcal{T}^2$. It follows that $\mathcal{T}^2$ is invariant under $S$ and $G$.
\end{proof}

\subsection{Proof of Theorem \ref{thm:exist}}\label{appendix:exist}
In this section, we prove Theorem \ref{thm:exist}. However, it is first useful to compute the eigenvalues of the $R^{(T)}(\theta)$ operators. Note that since $R^{(T)}$ is a tensor product of single-qubit $R_Z$ operators, its eigenvectors will be the $Z$-eigenbasis vectors $\ket{j_1\ldots j_n}$ for $j_1,\ldots,j_n\in\{0,1\}$. From
\begin{align}
    R^{(T)}\ket{j_1\ldots j_n} &=  \left( \prod_{k\in T}R^{(\{k\})}_Z\right)\ket{j_1\ldots j_n}\label{eq:eigval01}\\
    &= \left(\prod_{k\in T}\exp\left[\frac{i\theta}{2}(-1)^{j_k+1}\right]\right)\ket{j_1\ldots j_n}\\
    &= \exp\left[\frac{i\theta}{2}\sum_{k\in T} (-1)^{j_k+1}\right]\ket{j_1\ldots j_n}.\label{eq:eigphase}
\end{align}
it follows that the eigenvalue of $R^{(T)}$ associated with eigenvector $\ket{j_1\ldots j_n}$ is $\exp(i\varphi_{j_1\ldots j_n}^{(T)}(\theta))$ where
\begin{align}\label{eq:R-eigphase-def}
    \varphi_{j_1\ldots j_n}^{(T)}(\theta) = \frac{\theta}{2}\sum_{k\in T} (-1)^{j_k+1}.
\end{align}

We are now prepared to prove Theorem \ref{thm:exist}.

\begin{proof}[Proof of Theorem \ref{thm:exist}]
The reverse direction is trivial, so we now prove the forward direction.

The first step is to prove that the existence of a TS state $\ket{\psi}$ implies the existence of another TS state $\ket{\psi_\mathcal{G}}$ such that $\ket{\psi_\mathcal{G}} = P_\pi\ket{\psi_\mathcal{G}}$ for all $\pi\in G$. $\ket{\psi}$ can be written in the $Z$-eigenbasis as 
\begin{align} \label{eq:psi-decomp}
    \ket{\psi} = \sum_{j_1\ldots j_n\in\mathbb{Z}_2^n} a_{j_1\ldots j_n}\ket{{j_1\ldots j_n}}
\end{align}
for some $a_{j_1\ldots j_n}\in\mathbb{C}$. Then we claim the desired state $\ket{\psi_\mathcal{G}}$ can be written as 
\begin{align} \label{eq:sym-phi}
    \ket{\psi_\mathcal{G}} = \sum_{j_1\ldots j_n\in\mathbb{Z}_2^n} \left(\sqrt{\frac{1}{\abs{ G}}\sum_{\pi\in G}\abs{a_{j_{\pi(1)}\ldots j_{\pi(n)}}}^2}\right)\ket{{j_1\ldots j_n}}.
\end{align}
To show that $\ket{\psi_\mathcal{G}} = P_\pi\ket{\psi_\mathcal{G}}$, note for any $\pi \in G$ that
\begin{align}
    P_\pi\ket{\psi_\mathcal{G}} = \sum_{j_1\ldots j_n\in\mathbb{Z}_2^n} \left(\sqrt{\frac{1}{\abs{G}}\sum_{\pi'\in G}\abs{a_{j_{\pi'(1)}\ldots j_{\pi'(n)}}}^2}\right)\ket{{j_{\pi^{-1}(1)}\ldots j_{\pi^{-1}(n)}}},
\end{align}
where we have invoked Eq. (\ref{eq:P-pi}). Since $\pi$ is a bijection, we can relabel the indices in the above with $j_k \rightarrow j_{\pi(k)}$ as follows, noting that a sum over $j_{\pi(1)},\ldots,j_{\pi(n)}$ is equivalent to a sum over $j_1,\ldots,j_n$:
\begin{align}
    P_\pi\ket{\psi_\mathcal{G}} &= \sum_{j_{\pi(1)}\ldots j_{\pi(n)}\in\mathbb{Z}_2^n} \left(\sqrt{\frac{1}{\abs{ G}}\sum_{\pi'\in G}\abs{a_{j_{\pi(\pi'(1))}\ldots j_{\pi(\pi'(n))}}}^2}\right)\ket{{j_1\ldots j_n}}\\
    &= \sum_{j_1\ldots j_n\in\mathbb{Z}_2^n} \left(\sqrt{\frac{1}{\abs{ G}}\sum_{\pi'\in G}\abs{a_{j_{\pi(\pi'(1))}\ldots j_{\pi(\pi'(n))}}}^2}\right)\ket{{j_1\ldots j_n}}\\
    &= \sum_{j_1\ldots j_n\in\mathbb{Z}_2^n} \left(\sqrt{\frac{1}{\abs{ G}}\sum_{\pi''\in G}\abs{a_{j_{\pi''(1)}\ldots j_{\pi''(n)}}}^2}\right)\ket{{j_1\ldots j_n}}.\label{eq:sym-phi2}
\end{align}
Comparing Eqs. (\ref{eq:sym-phi}) and (\ref{eq:sym-phi2}), we deduce that $\ket{\psi_\mathcal{G}} = P_\pi\ket{\psi_\mathcal{G}}$ for any $\pi\in G$, as desired. 

Now, we prove that $\bra{\psi_\mathcal{G}}\rbf\ket{\psi_\mathcal{G}} = \delta_{T,T'}$ for any $T, T'\in \nmg$, hence proving that $\ket{\psi_\mathcal{G}}$ is a TS state. As a prerequisite, note that for any $\pi\in G$, we can write the state $P_\pi\ket{\psi}$ as  
\begin{align}
    P_\pi\ket{\psi} &= \sum_{j_1\ldots j_n\in\mathbb{Z}_2^n} a_{j_1\ldots j_n}\ket{{j_{\pi^{-1}(1)}\ldots j_{\pi^{-1}(n)}}}\\
    &= \sum_{j_1\ldots j_n\in\mathbb{Z}_2^n} a_{j_{\pi(1)}\ldots j_{\pi(n)}}\ket{{j_1\ldots j_n}}, \label{eq:P-pi-psi}
\end{align}
where the first equality follows from the application of Eq. (\ref{eq:P-pi}) to Eq. (\ref{eq:psi-decomp}) and the second equality follows from the change of indices $j_k \rightarrow j_{\pi(k)}$. Additionally, define $\varphi_{j_1\ldots j_n}^{(T)}$ to be the eigenphase of $R^{(T)}$ given by Eq. (\ref{eq:R-eigphase-def}). Then for any $T,T'\in\mathcal{T}$,
\begin{align}
    \bra{\psi_\mathcal{G}}\rbf\ket{\psi_\mathcal{G}} &= \frac{1}{\abs{ G}}\sum_{j_1\ldots j_n\in\mathbb{Z}_2^n}\exp[i\left(\varphi_{j_1\ldots j_n}^{(T')}-\varphi_{j_1\ldots j_n}^{(T)}\right)]\sum_{\pi\in G}\abs{a_{j_{\pi(1)}\ldots j_{\pi(n)}}}^2\\
    &= \frac{1}{\abs{ G}}\sum_{\pi\in G}\sum_{j_1\ldots j_n\in\mathbb{Z}_2^n}\exp[i\left(\varphi_{j_1\ldots j_n}^{(T')}-\varphi_{j_1\ldots j_n}^{(T)}\right)]\abs{a_{j_{\pi(1)}\ldots j_{\pi(n)}}}^2\\
    &= \frac{1}{\abs{ G}}\sum_{\pi\in G} \bra{\psi}P^\dagger_\pi \rbf P_\pi\ket{\psi}\label{eq:symmetrize01}\\
    &= \frac{1}{\abs{ G}}\sum_{\pi\in G} \bra{\psi} \mathbf{R}^{\pi(T,T')} \ket{\psi}\label{eq:symmetrize02}
\end{align}
where Eq. (\ref{eq:symmetrize01}) follows from the application of Eq. (\ref{eq:P-pi-psi}) and Eq. (\ref{eq:symmetrize02}) follows from Proposition \ref{prop:P-conj}. By Proposition \ref{prop:G-transitive}, $\pi(T,T')\in\mathcal{T}^2$ because $\mathcal{T}$ is $G$-transitive. Consequently, since $\ket{\psi}$ is a TS state, $\bra{\psi} \mathbf{R}^{\pi(T,T')} \ket{\psi} = \bra{\psi} \mathbf{R}^{(\pi(T),\pi(T'))} \ket{\psi}=\delta_{\pi(T),\pi(T')}$ for all $\pi\in G$. Furthermore, noting that $\pi(\cdot)$ acts bijectively on trajectories, $\delta_{\pi(T),\pi(T')}=\delta_{T,T'}$ for all $\pi\in G$. Thus, for any $T,T'\in\mathcal{T}$,
\begin{align}
    \bra{\psi_\mathcal{G}}\rbf\ket{\psi_\mathcal{G}} &= \frac{1}{\abs{ G}}\sum_{\pi\in G} \delta_{\pi(T),\pi(T')}\\
    &= \frac{1}{\abs{ G}}\sum_{\pi\in G} \delta_{T,T'}\\
    &= \delta_{T,T'},
\end{align}
which implies that $\ket{\psi_{\mathcal{G}}}$ is a TS state, as desired.

The next step is to prove that the existence of a TS state $\ket{\psi_\mathcal{G}}$ satisfying $\ket{\psi_\mathcal{G}} = P_\pi\ket{\psi_\mathcal{G}}$ for all $\pi\in G$ implies the existence of another TS state $\ket{\psi_{\tilde{\mathcal{G}}}}$ satisfying $\ket{\psi_{\tilde{\mathcal{G}}}} = U\ket{\psi_{\tilde{\mathcal{G}}}}$ for all $U\in \tilde{\mathcal{G}}$. $\ket{\psi_\mathcal{G}}$ can be written in the $Z$-eigenbasis as 
\begin{align} \label{eq:phi-decomp}
    \ket{\psi_\mathcal{G}} = \sum_{j_1\ldots j_n\in\mathbb{Z}_2^n} b_{j_1\ldots j_n}\ket{{j_1\ldots j_n}}
\end{align}
for some $b_{j_1\ldots j_n}\in\mathbb{C}$. We now claim that the desired state $\ket{\psi_{\tilde{\mathcal{G}}}}$ can be written as 
\begin{align}\label{eq:psi_sym}
    \ket{\psi_{\tilde{\mathcal{G}}}} = \frac{1}{\sqrt{2}}\sum_{j_1\ldots j_n\in\mathbb{Z}_2^n}\left(\sqrt{\abs{b_{j_1\ldots j_n}}^2 + \abs{b_{[n](j_1\ldots j_n)}}^2}\right)\ket{{j_1\ldots j_n}}
\end{align}
where the action of $[n]$ on $j_1\ldots j_n$ is given by Eq. (\ref{eq:bit-flip-action}), i.e., $[n](j_1\ldots j_n) = (j_1\oplus 1)\ldots (j_n\oplus 1)$.

Recall that $\tilde{\mathcal{G}} = \mathcal{S}\mathcal{G}$, where $\mathcal{S}=\{I^{\otimes n},X^{\otimes n}\}$. Hence, to show that $\ket{\psi_{\tilde{\mathcal{G}}}} = U\ket{\psi_{\tilde{\mathcal{G}}}}$ for all $U\in\tilde{\mathcal{G}}$, it suffices to show that $\ket{\psi_{\tilde{\mathcal{G}}}} =  X^{\otimes n}\ket{\psi_{\tilde{\mathcal{G}}}}= P_\pi\ket{\psi_{\tilde{\mathcal{G}}}}$ for all $\pi\in G$. We first prove that $\ket{\psi_{\tilde{\mathcal{G}}}}=X^{\otimes n}\ket{\psi_{\tilde{\mathcal{G}}}}$. We have
\begin{align}
    X^{\otimes n}\ket{\psi_{\tilde{\mathcal{G}}}} &= \frac{1}{\sqrt{2}}\sum_{j_1\ldots j_n\in\mathbb{Z}_2^n}\left(\sqrt{\abs{b_{j_1\ldots j_n}}^2 + \abs{b_{[n](j_1\ldots j_n)}}^2}\right)X^{\otimes n}\ket{{j_1\ldots j_n}}\\
    &= \frac{1}{\sqrt{2}}\sum_{j_1\ldots j_n\in\mathbb{Z}_2^n}\left(\sqrt{\abs{b_{j_1\ldots j_n}}^2 + \abs{b_{[n](j_1\ldots j_n)}}^2}\right)\ket{{[n](j_1\ldots j_n)}}
\end{align}
due to Eq. (\ref{eq:X-def}). Changing the indices of summation from $j_k\rightarrow j_k \oplus 1$, we then have
\begin{align}
    X^{\otimes n}\ket{\psi_{\tilde{\mathcal{G}}}}&= \frac{1}{\sqrt{2}}\sum_{(j_1\oplus 1)\ldots(j_n \oplus 1)\in\mathbb{Z}_2^n}\left(\sqrt{\abs{b_{[n](j_1\ldots j_n)}}^2+\abs{b_{j_1\ldots j_n}}^2}\right)\ket{{j_1\ldots j_n}}\\
    &= \frac{1}{\sqrt{2}}\sum_{j_1\ldots j_n\in\mathbb{Z}_2^n}\left(\sqrt{\abs{b_{[n](j_1\ldots j_n)}}^2+\abs{b_{j_1\ldots j_n}}^2}\right)\ket{{j_1\ldots j_n}}\\
    &=\ket{\psi_{\tilde{\mathcal{G}}}},
\end{align}
as desired. We next prove that $P_\pi\ket{\psi_{\tilde{\mathcal{G}}}}=\ket{\psi_{\tilde{\mathcal{G}}}}$ for all $\pi\in G$. We first need a preliminary fact: analogously to Eq. (\ref{eq:P-pi-psi}), we can write $P_\pi\ket{\psi_\mathcal{G}}$ as 
\begin{align} \label{eq:phi-P}
    P_\pi\ket{\psi_\mathcal{G}} = \sum_{j_1\ldots j_n\in\mathbb{Z}_2^n} b_{j_{\pi(1)}\ldots j_{\pi(n)}}\ket{{j_1\ldots j_n}}.    
\end{align}
Because $\ket{\psi_{\tilde{\mathcal{G}}}} = P_\pi\ket{\psi_{\tilde{\mathcal{G}}}}$ for all $\pi\in G$, we deduce by comparing Eqs. (\ref{eq:phi-decomp}) and (\ref{eq:phi-P}) that $b_{j_1\ldots j_n} = b_{j_{\pi(1)}\ldots j_{\pi(n)}}$ for all $j_1\ldots j_n\in\mathbb{Z}_2^n$ and $\pi\in G$. The fact that $b_{j_1\ldots j_n} = b_{j_{\pi(1)}\ldots j_{\pi(n)}}$ readily implies that $b_{[n](j_1\ldots j_n)} = b_{[n](j_{\pi(1)}\ldots j_{\pi(n)})}$ as well. Consequently, for all $\pi\in G$,
\begin{align}
    P_\pi\ket{\psi_{\tilde{\mathcal{G}}}} &= \frac{1}{\sqrt{2}}\sum_{j_1\ldots j_n\in\mathbb{Z}_2^n}\left(\sqrt{\abs{b_{j_1\ldots j_n}}^2 + \abs{b_{[n](j_1\ldots j_n)}}^2}\right)\ket{{j_{\pi^{-1}(1)}\ldots j_{\pi^{-1}(n)}}}\\
    &= \frac{1}{\sqrt{2}}\sum_{j_1\ldots j_n\in\mathbb{Z}_2^n}\left(\sqrt{\abs{b_{j_{\pi(1)}\ldots j_{\pi(n)}}}^2 + \abs{b_{[n](j_{\pi(1)}\ldots j_{\pi(n)})}}^2}\right)\ket{{j_1\ldots j_n}}\\
    &= \frac{1}{\sqrt{2}}\sum_{j_1\ldots j_n\in\mathbb{Z}_2^n}\left(\sqrt{\abs{b_{j_1\ldots j_n}}^2 + \abs{b_{[n](j_1\ldots j_n)}}^2}\right)\ket{{j_1\ldots j_n}}\\
    &=\ket{\psi_{\tilde{\mathcal{G}}}}.
\end{align}

Now, to confirm that $\ket{\psi_{\tilde{\mathcal{G}}}}$ is a TS state, we prove that $\bra{\psi_{\tilde{\mathcal{G}}}}\rbf\ket{v} = \delta_{T,T'}$ for any $T,T'\in\nmg$. To do so, we will need the fact that 
\begin{align}
    X^{\otimes n}\ket{\psi_{\mathcal{G}}} &= \sum_{j_1\ldots j_n\in\mathbb{Z}_2^n} b_{j_1\ldots j_n}\ket{{[n](j_1\ldots j_n)}}\\
    &= \sum_{j_1\ldots j_n\in\mathbb{Z}_2^n} b_{[n](j_1\ldots j_n)}\ket{{j_1\ldots j_n}}.\label{eq:bitflip-decomp}
\end{align}
Then for any $T,T'\in\mathcal{T}$,
\begin{align}
    \bra{\psi_{\tilde{\mathcal{G}}}}\rbf\ket{\psi_{\tilde{\mathcal{G}}}}&=\frac{1}{2} \sum_{j_1\ldots j_n\in\mathbb{Z}_2^n}\exp[i\left(\varphi_{j_1\ldots j_n}^{(T')}-\varphi_{j_1\ldots j_n}^{(T)}\right)]\left(\abs{b_{j_1\ldots j_n}}^2 + \abs{b_{x(j_1\ldots j_n)}}^2\right)\\
    &= \frac{1}{2} \left(\bra{\psi_{\mathcal{G}}}\rbf\ket{\psi_{\mathcal{G}}}+\bra{\psi_{\mathcal{G}}}X^{\otimes n}\rbf X^{\otimes n}\ket{\psi_{\mathcal{G}}}\right)\label{eq:symmetrize03}\\
    &= \frac{1}{2} \left(\bra{\psi_{\mathcal{G}}}\rbf\ket{\psi_{\mathcal{G}}}+\bra{\psi_{\mathcal{G}}}\mathbf{R}^{[n](T,T')}\ket{\psi_{\mathcal{G}}}\right)\label{eq:symmetrize04},
\end{align}
where Eq. (\ref{eq:symmetrize03}) follows from the application of Eq. (\ref{eq:bitflip-decomp}) and Eq. (\ref{eq:symmetrize04}) follows from Proposition \ref{prop:S-conj}. By Proposition \ref{prop:S-bitflip3}, $[n](T,T')\in\mathcal{T}^2$. Consequently, since $\ket{\psi_{\mathcal{G}}}$ is a TS state, $\bra{\psi_{\mathcal{G}}}\mathbf{R}^{[n](T,T')}\ket{\psi_{\mathcal{G}}}=\bra{\psi_{\mathcal{G}}}\mathbf{R}^{(T',T)}\ket{\psi_{\mathcal{G}}} = \delta_{T',T}$. Since $\delta_{T',T}=\delta_{T,T'}$, it follows that
\begin{align}
    \bra{\psi_{\tilde{\mathcal{G}}}}\rbf\ket{\psi_{\tilde{\mathcal{G}}}}&= \frac{1}{2} \left(\delta_{T,T'}+\delta_{T',T}\right)\\
    &= \delta_{T,T'}
\end{align}
for all $T,T'\in\mathcal{T}$, as desired. We conclude that $\mathcal{\ket{\psi_{\tilde{\mathcal{G}}}}}$ is a $\tilde{\mathcal{G}}$-invariant TS state, thereby completing the proof.
\end{proof}
\subsection{Proof of Proposition \ref{prop:HG-basis}}\label{appendix:HGbasis}

\begin{proof}[Proof of Proposition \ref{prop:HG-basis}]
We now show that the set of vectors 
\begin{align}
    \left\{\ket{\overline{\nu}}=\sum_{j_1\ldots j_n\in {\omega_\nu}}\ket{{j_1\ldots j_n}}\ :\ \nu=0,\ldots,N_{\tilde{G}}-1\right\}
\end{align}
forms an orthogonal basis for $\mathcal{H}_{\tilde{\mathcal{G}}}$, where the $\omega_\nu$ are orbits in $\mathbb{Z}_2^n/\tilde{G}$. The key insight is that the orbits $\mathbb{Z}_2^n/\tilde{G}$ form a partition for $\mathbb{Z}_2^n$. Subsequently, since the orbits are disjoint and the $Z$-eigenbasis is orthogonal, $\braket{\overline{\nu}|\overline{\nu}'}=0$ for $\nu\neq \nu'$. Next, we show that the set of $\ket{\overline{\nu}}$ spans $\mathcal{H}_{\tilde{\mathcal{G}}}$. Any $\ket{\psi}\in\mathcal{H}_{\tilde{\mathcal{G}}}$ can be written in the $Z$-eigenbasis as $\ket{\psi}=\sum_{j_1\ldots j_n} a_{j_1\ldots j_n}\ket{j_1\ldots j_n}$ for some $a_{j_1\ldots j_n}\in\mathbb{C}$. Since $\ket{\psi}=\Pi_{\tilde{\mathcal{G}}}\ket{\psi}$,
\begin{align}
    \ket{\psi}&=\Pi_{\tilde{\mathcal{G}}}\sum_{j_1\ldots j_n\in\mathbb{Z}_2^n} a_{j_1\ldots j_n}\ket{j_1\ldots j_n}\\
    &= \sum_{j_1\ldots j_n\in\mathbb{Z}_2^n} a_{j_1\ldots j_n}\Pi_{\tilde{\mathcal{G}}}\ket{\psi}\ket{j_1\ldots j_n}\\
    &= \frac{1}{\abs{\tilde{\mathcal{G}}}}\sum_{j_1\ldots j_n\in\mathbb{Z}_2^n} a_{j_1\ldots j_n}\sum_{(\varsigma,\pi)\in\tilde{G}}\ket{(\varsigma,\pi)[j_1\ldots j_n]}
\end{align}
due to Eq. (\ref{eq:tilde-mathcal-G-action}). Invoking the definition of an orbit, we then have
\begin{align}
    \ket{\psi}&= \frac{1}{\abs{\tilde{\mathcal{G}}}}\sum_{j_1\ldots j_n\in\mathbb{Z}_2^n} a_{j_1\ldots j_n}\sum_{j'_1\ldots j'_n\in \operatorname{Orb}_{\tilde{G}}[j_1\ldots j_n]}\ket{{j'_1\ldots j'_n}}\\
    &= \frac{1}{\abs{\tilde{\mathcal{G}}}}\sum_{\nu=0}^{N_{\tilde{G}-1}} \left(\sum_{j_1 \ldots j_n\in \omega_\nu}a_{j_1\ldots j_n}\right)\ket{\overline{\nu}}.
\end{align}
It follows that any $\ket{\psi}\in\mathcal{H}_{\tilde{\mathcal{G}}}$ can be written as a linear combination of $\ket{\overline{\nu}}$, so the set of $\ket{\overline{\nu}}$ is a basis for $\mathcal{H}_{\tilde{\mathcal{G}}}$.
\end{proof}

\subsection{Proof of Proposition \ref{prop:RG-props}}\label{appendix:R-prop}
We first prove a useful intermediate result, namely, that $\tilde{G}$ and $\tilde{\mathcal{G}}$ are isomorphic:
\begin{proposition}\label{prop:phi-iso}
    Given a permutation group $G$ on $[n]$, let $\mathcal{G} = P(G)$. Then the map $\Phi(\cdot)$ defined in Eq. (\ref{eq:phi-def}) is an isomorphism from $\tilde{\mathcal{G}}$ to $\tilde{G}$.
\end{proposition}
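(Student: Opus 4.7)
The plan is to leverage the general result of Proposition \ref{prop:phi-homo}, which already asserts that $\Phi$ is a surjective homomorphism from $\langle\mathcal{S},\mathcal{G}\rangle$ onto $S_G\rtimes G$ for any choice of $\mathcal{S}\leq\mathcal{P}_n$. Specializing to $\mathcal{S}=\{I^{\otimes n},X^{\otimes n}\}$ and using Proposition \ref{prop:S-bitflip}, the codomain $S_G\rtimes G$ collapses to $\tilde{G}=S\times G$ with $S=\{\varnothing,[n]\}$, as noted in Eq.~(\ref{eq:tilde-G-def}). Thus the only remaining thing to establish is injectivity, equivalently, triviality of the kernel of $\Phi$.

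To prove the kernel is trivial, I would take any $U\in\tilde{\mathcal{G}}$ with $\Phi(U)=(\varnothing,e)$, where $e$ is the identity permutation. By Proposition \ref{prop:SG-decomp2}, $U$ admits a unique decomposition $U=DP_\pi$ with $D\in\mathcal{S}_\mathcal{G}$ and $P_\pi\in\mathcal{G}$. Since $\mathcal{G}\subseteq\mathcal{N}(\mathcal{S})$ by Proposition \ref{prop:S-bitflip1}, Proposition \ref{prop:SG-decomp3} gives $\mathcal{S}_\mathcal{G}=\mathcal{S}=\{I^{\otimes n},X^{\otimes n}\}$. By definition $\Phi(U)=(\sigma_D,\pi)$, so the kernel condition forces $\pi=e$ and $\sigma_D=\varnothing$. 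The two elements of $\mathcal{S}$ satisfy $\sigma_{I^{\otimes n}}=\varnothing$ and $\sigma_{X^{\otimes n}}=[n]$, so $\sigma_D=\varnothing$ uniquely picks out $D=I^{\otimes n}$. Hence $U=I^{\otimes n}P_e=I^{\otimes n}$, the identity of $\tilde{\mathcal{G}}$.

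A surjective group homomorphism with trivial kernel is an isomorphism, which yields the claim. The crux of the argument is the observation that, while the general map $\sigma:\mathcal{S}\to S$ may have nontrivial kernel, its restriction to the two-element subgroup $\{I^{\otimes n},X^{\otimes n}\}$ is injective because the two elements map to the two distinct elements $\varnothing$ and $[n]$ of $S$. This is the only subtlety; there is no real obstacle, since all the supporting machinery (unique decomposition, surjectivity, the structure of $S_G$ and $\mathcal{S}_\mathcal{G}$ in this special case) has already been set up.
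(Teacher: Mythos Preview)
Your proof is correct. Both your argument and the paper's rely on Proposition~\ref{prop:phi-homo} for surjectivity, but you take a slightly different route to bijectivity: the paper simply counts, observing that $|\tilde{\mathcal{G}}|=|\mathcal{S}\mathcal{G}|=2|\mathcal{G}|=2|G|=|S\times G|=|\tilde{G}|$, and then uses the fact that a surjective homomorphism between finite groups of equal cardinality is automatically an isomorphism. You instead compute the kernel directly, using the unique decomposition $U=DP_\pi$ together with the observation that $\sigma$ restricted to $\{I^{\otimes n},X^{\otimes n}\}$ is injective. Both arguments are short and elementary; the cardinality count is marginally quicker, while your kernel computation makes explicit exactly where the special choice $\mathcal{S}=\{I^{\otimes n},X^{\otimes n}\}$ enters (namely, that $\sigma$ has no collapse on this two-element group).
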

\begin{proof}
    Let $\mathcal{S} = \{I^{\otimes n},X^{\otimes n}\}$ and let $S_G = \{\varnothing,[n]\}$. Observe that $\abs{\tilde{\mathcal{G}}} = \abs{\mathcal{S}\mathcal{G}} = 2\abs{\mathcal{G}}$. Similarly, $\abs{\tilde{G}} = \abs{S_G\times G} = 2\abs{G}$. Since $G$ and $\mathcal{G}$ are isomorphic, $\abs{\tilde{\mathcal{G}}}=\abs{\tilde{G}}$. Since $\abs{\tilde{\mathcal{G}}}=\abs{\tilde{G}}$ and $\Phi$ is a surjective homomorphism from $\tilde{\mathcal{G}}$ to $\tilde{G}$ by Proposition \ref{prop:phi-homo}, $\Phi$ is an isomorphism.
\end{proof}
We can now prove Proposition \ref{prop:RG-props}.
\begin{proof}[Proof of Proposition \ref{prop:RG-props}]
Part (a). To prove that the operator $\mathbf{R}^{(\mu)}_{\tilde{\mathcal{G}}}(\theta)$ is Hermitian for any $\mu=0,\ldots, M_{\tilde{G}}$, we show that $\mathbf{R}^{\dagger(\mu)}_{\tilde{\mathcal{G}}}=\mathbf{R}^{(\mu)}_{\tilde{\mathcal{G}}}$. Pick any $(T,T')\in\Omega_\mu$. Then $\mathbf{R}^{(\mu)}_{\tilde{\mathcal{G}}}=\mathbf{R}^{(T,T')}_{\tilde{\mathcal{G}}}$ and 
\begin{align}
    \mathbf{R}^{\dagger(T,T')}_{\tilde{\mathcal{G}}}&=\left(\Pi_{\tilde{\mathcal{G}}} R^{\dagger(T)}R^{(T')}\Pi_{\tilde{\mathcal{G}}}\right)^\dagger\\
    &=\Pi_{\tilde{\mathcal{G}}}R^{\dagger(T')}R^{(T)}\Pi_{\tilde{\mathcal{G}}}\\
    &=\Pi_{\tilde{\mathcal{G}}}\left(X^{\otimes n} R^{(T')}X^{\otimes n}\right)\left(X^{\otimes n}R^{\dagger(T)}X^{\otimes n}\right)\Pi_{\tilde{\mathcal{G}}}\\
    &=\Pi_{\tilde{\mathcal{G}}}X^{\otimes n} R^{(T')}R^{\dagger(T)}X^{\otimes n}\Pi_{\tilde{\mathcal{G}}}\\
    &=\Pi_{\tilde{\mathcal{G}}}R^{(T')}R^{\dagger(T)}\Pi_{\tilde{\mathcal{G}}}\\
    &=\Pi_{\tilde{\mathcal{G}}}R^{\dagger(T)}R^{(T')}\Pi_{\tilde{\mathcal{G}}} \\
    &= \mathbf{R}^{(T,T')}_{\tilde{\mathcal{G}}}
\end{align}
as desired, noting that the $R^{(T)}$ operators commute. The identity $X^{\otimes n}\Pi_{\tilde{\mathcal{G}}} = \Pi_{\tilde{\mathcal{G}}}$ follows from the fact that $X^{\otimes n}\in \tilde{\mathcal{G}}$.

Part (b). To show that the $\ket{\overline{\nu}}$ vectors are eigenvectors of $\mathbf{R}^{(\mu)}_{\tilde{\mathcal{G}}}(\theta)$, we first need to precisely describe how a given $Z$-eigenbasis vector $\ket{j_1\ldots j_n}$ projects onto $\mathcal{H}_{\tilde{\mathcal{G}}}$. For clarity, we will write elements 
$(\varsigma,\pi)$ of $\tilde{G}$ in the more compact form $\tilde{\pi}$. Then we have
\begin{align}
    \Pi_{\tilde{\mathcal{G}}}\ket{j_1\ldots j_n} &= \frac{1}{\abs{\tilde{\mathcal{G}}}} \sum_{U\in\tilde{\mathcal{G}}} U \ket{j_1\ldots j_n}\\
    &= \frac{1}{\abs{\tilde{\mathcal{G}}}} \sum_{U\in\tilde{\mathcal{G}}}  \ket{\Phi(U)(j_1\ldots j_n)}\\
    &= \frac{1}{\abs{\tilde{G}}}\sum_{\tilde{\pi}\in\tilde{G}}\ket{\tilde{\pi}(j_1\ldots j_n)},
\end{align}
where the second equality follows from Eq. (\ref{eq:tilde-mathcal-G-action}) and the third equality follows from the fact that $\Phi$ is an isomorphism (Proposition \ref{prop:phi-iso}). Now let $\mathfrak{S} ={\{\tilde{\pi}\in \tilde{G}\ :\ \tilde{\pi}(j_1\ldots j_n)= j_1\ldots j_n\}}$ be the stabilizer of the bit-string $j_1\ldots j_n$. Since $\mathfrak{S}$ is a subgroup of $\tilde{G}$, $\tilde{G}$ is partitioned by the left cosets of $\mathfrak{S}$. Hence, by separating the elements of $\tilde{G}$ into cosets, we can write
\begin{align}
    \Pi_{\tilde{\mathcal{G}}}\ket{j_1\ldots j_n} &= \frac{1}{\abs{\tilde{G}}}\sum_{\tilde{\pi}\in\tilde{G}}\ket{\tilde{\pi}(j_1\ldots j_n)}\\
    &= \frac{1}{\abs{\tilde{G}}}\sum_{\mathrm{cosets}\ \tilde{\pi}\mathfrak{S}}
    \left(\sum_{\tilde{\pi}'\in \tilde{\pi}\mathfrak{S}}\ket{\tilde{\pi}'(j_1\ldots j_n)}\right)\label{eq:orbit-stab01}
\end{align}
Note that all elements in a given coset $\tilde{\pi}\mathfrak{S}$ act identically on $j_1\ldots j_n$. For any $\tilde{\pi}'\in \tilde{\pi}\mathfrak{S}$, we can write $\tilde{\pi}'=\tilde{\pi}s$ for some $s\in \mathfrak{S}$, and
\begin{align}
    \tilde{\pi}'(j_1\ldots j_n) = \tilde{\pi}s(j_1\ldots j_n) = \tilde{\pi}(j_1\ldots j_n). \label{eq:orbit-stab02}
\end{align}
Thus, all terms in the inside the parentheses of Eq. (\ref{eq:orbit-stab01}) are identical, so 
\begin{align}
    \Pi_{\tilde{\mathcal{G}}}\ket{j_1\ldots j_n} &= \frac{1}{\abs{\tilde{G}}}\sum_{\mathrm{cosets}\ \tilde{\pi}\mathfrak{S}}\abs{\tilde{\pi}\mathfrak{S}}\ket{\tilde{\pi}(j_1\ldots j_n)}\\
    &= \frac{\abs{\mathfrak{S}}}{\abs{\tilde{G}}}\sum_{\mathrm{cosets}\ \tilde{\pi}\mathfrak{S}}\ket{\tilde{\pi}(j_1\ldots j_n)}
\end{align}
since the size of all cosets is $\abs{\mathfrak{S}}$. Now let $\omega_\nu=\operatorname{Orb}_{\tilde{G}}[j_1\ldots j_n]$ be the orbit of $j_1\ldots j_n$ under $\tilde{G}$. Then the map from cosets to $\omega$ defined by $f(\tilde{\pi}\mathfrak{S})=\tilde{\pi}(j_1\ldots j_n)$ is bijective. To see this, first note that $f$ is a valid map since if $\tilde{\pi}\mathfrak{S}=\tilde{\pi}'\mathfrak{S}$, then Eq. (\ref{eq:orbit-stab02}) guarantees that $f(\tilde{\pi}\mathfrak{S})=f(\tilde{\pi}'\mathfrak{S})$. Next, $f$ is surjective because every $j'_1\ldots j'_n\in\omega$ can be written as $\tilde{\pi}'(j_1\ldots j_n)$ for some $\tilde{\pi}'\in\tilde{G}$, and $\tilde{\pi}'$ belongs to the coset $\tilde{\pi}'\mathfrak{S}$. Lastly, if $f(\tilde{\pi}\mathfrak{S})=f(\tilde{\pi}'\mathfrak{S})$, then $\tilde{\pi}(j_1\ldots j_n)=\tilde{\pi}'(j_1\ldots j_n)$ which implies $\tilde{\pi}^{-1}\tilde{\pi}'\in \mathfrak{S}$; because $\tilde{\pi}'=\tilde{\pi}(\tilde{\pi}^{-1}\tilde{\pi}')$ and the cosets partition the group, the cosets $\tilde{\pi}\mathfrak{S}$ and $\tilde{\pi}'\mathfrak{S}$ are equal, so $f$ is injective. Since $f$ is injective and surjective, it is bijective. Then,
\begin{align}
    \Pi_{\tilde{\mathcal{G}}}\ket{j_1\ldots j_n} &= \frac{\abs{\mathfrak{S}}}{\abs{\tilde{G}}}\sum_{\mathrm{cosets}\ \tilde{\pi}\mathfrak{S}}\ket{f(\tilde{\pi}\mathfrak{S})}\\
    &= \frac{\abs{\mathfrak{S}}}{\abs{\tilde{G}}}\sum_{j'_1\ldots j'_n\in\omega_\nu}\ket{j'_1\ldots j'_n}\\
    &= \frac{\abs{\mathfrak{S}}}{\abs{\tilde{G}}}\ket{\overline{\nu}}\\
    &= \frac{1}{\abs{\omega_\nu}}\ket{\overline{\nu}}
\end{align}
where the last equality follows from the orbit-stabilizer theorem, which gives $\abs{\tilde{G}}/\abs{\mathfrak{S}}=\abs{\omega}$.

We now use this fact to show that the $\ket{\overline{\nu}}$ are eigenvectors of $\mathbf{R}^{(\mu)}_{\tilde{\mathcal{G}}}$. Pick any $(T,T')\in\Omega_\mu$. Then
\begin{align}
    \mathbf{R}^{(\mu)}_{\tilde{\mathcal{G}}}\ket{\overline{\nu}} &= \mathbf{R}^{(T,T')}_{\tilde{\mathcal{G}}}\ket{\overline{\nu}}\\
    &=\Pi_{\tilde{\mathcal{G}}}R^{\dagger(T)}R^{(T')}\Pi_{\tilde{\mathcal{G}}}\ket{\overline{\nu}}\\
    &= \Pi_{\tilde{\mathcal{G}}}R^{\dagger(T)}R^{(T')}\ket{\overline{\nu}}\\
    &=\sum_{j_1\ldots j_n\in {\omega_\nu}}\Pi_{\tilde{\mathcal{G}}}R^{\dagger(T)}R^{(T')}\ket{{j_1\ldots j_n}}\\
    &=\sum_{j_1\ldots j_n\in {\omega_\nu}}\exp(i\varphi_{j_1\ldots j_n}^{(T')}-i\varphi_{j_1\ldots j_n}^{(T)})\Pi_{\tilde{\mathcal{G}}}\ket{{j_1\ldots j_n}}\\
    &=\left[\frac{1}{\abs{\omega_\nu}}\sum_{j_1\ldots j_n\in {\omega_\nu}}\exp(i\varphi_{j_1\ldots j_n}^{(T')}-i\varphi_{j_1\ldots j_n}^{(T)})\right]\ket{\overline{\nu}},
\end{align}
where $\varphi_{j_1\ldots j_n}^{(T)}$ is the eigenphase of $R^{(T)}$ given by Eq. (\ref{eq:R-eigphase-def}).
Hence, $\ket{\overline{\nu}}$ is an eigenvector of $\mathbf{R}^{(\mu)}_{\tilde{\mathcal{G}}}$ with eigenvalue 
\begin{align}\label{eq:lambda-eigval}
    \lambda_{\mu,\nu}(\theta)=\frac{1}{\abs{\omega_\nu}}\sum_{j_1\ldots j_n\in {\omega_\nu}}\exp(i\varphi_{j_1\ldots j_n}^{(T')}(\theta)-i\varphi_{j_1\ldots j_n}^{(T)}(\theta)),
\end{align}
where $(T,T')$ can equivalently be chosen to be any trajectory pair in $\Omega_\mu$.
Since $\mathbf{R}^{(\mu)}_{\tilde{\mathcal{G}}}$ is Hermitian, the eigenvalues $\lambda_{\mu,\nu}(\theta)$ are real.
\end{proof}

\subsection{Proof of Proposition \ref{prop:MNbounds}}\label{appendix:MNbounds}
In this section, we first prove Proposition \ref{prop:MNbounds}. Afterward, we provide lower bounds on $M_{\tilde{G}}$ and $N_{\tilde{G}}$.

\begin{proof}[Proof of Proposition \ref{prop:MNbounds}]
To derive the upper bound on $M_{\tilde{G}}$, consider the orbit $\Omega$ of a pair $(T,T')$ under $\tilde{G}$. The set
\begin{align}
    \mathcal{X}&={\{(\varnothing,\pi)[(T,T')]\ :\ \pi\in G\}}\\
    &={\{(\pi(T),\pi(T'))\ :\ \pi\in G\}}
\end{align}
obtained by applying all permutations to $(T,T')$ is a subset of $\Omega$. Define the map $f:\mathcal{T}^2\rightarrow\mathcal{T}$ such that $f(T_1,T_2)=T_1$ for $T_1,T_2\in\mathcal{T}$. Because the image of $\mathcal{X}$ under $f$ is ${\{\pi(T):\pi\in G\}}=\mathcal{T}$, it must be true that $\abs{\mathcal{X}}\geq \abs{\mathcal{T}}$. It follows that each orbit has size at least $\abs{\mathcal{T}}$, and since the orbits partition $\mathcal{T}^2$, there cannot be more than $\abs{\mathcal{T}^2}/\abs{\mathcal{T}} = \abs{\mathcal{T}}$ orbits.

Let $e$ be the identity permutation. For the upper bound on $N_{\tilde{G}}$, note that all orbits have size at least two, since $\big\{j_1\ldots j_n,([n],e)[j_1\ldots j_n]\big\}\subseteq \operatorname{Orb}_{\tilde{G}}[j_1\ldots j_n]$ and $j_1\ldots j_n\neq([n],e)[j_1\ldots j_n]$ always. Since the orbits partition the set $\mathbb{Z}_2^n$, there cannot be more than $2^{n-1}$ orbits.

\end{proof}
We now provide lower bounds on $M_{\tilde{G}}$ and $N_{\tilde{G}}$.
\begin{proposition}
    The following lower bounds hold on the size of $N_{\tilde{G}}$ and $M_{\tilde{G}}$ for any permutation group $G$ on $[n]$, assuming $\mathcal{T}$ is $G$-transitive:
    \begin{align}
       M_{\tilde{G}} \geq \frac{1}{2\abs{G}}\left(\abs{\mathcal{T}}^2+\abs{\mathcal{T}}\right)\ \mathrm{and}\\
       N_{\tilde{G}}\geq \frac{1}{\abs{G}}\big(2^{n-1}+\abs{G}-1\big)
    \end{align}
\end{proposition}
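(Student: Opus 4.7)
The plan is to prove both lower bounds by identifying one ``small'' orbit in each case, bounding all other orbits above by $|\tilde{G}| = 2|G|$ via orbit-stabilizer, and then summing.

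For $N_{\tilde{G}}$, I would first show that $\operatorname{Orb}_{\tilde{G}}[00\ldots 0] = \{00\ldots 0,\, 11\ldots 1\}$ and in particular has size exactly $2$. The argument is elementary: every permutation $\pi \in G$ fixes the all-zero and all-one bit-strings, and the non-identity element $[n] \in S_G = \{\varnothing,[n]\}$ flips every bit, which swaps $00\ldots 0$ and $11\ldots 1$. Then I would note that every other orbit has size at most $|\tilde{G}| = 2|G|$ by the orbit-stabilizer theorem. Since the $N_{\tilde{G}}$ orbits partition $\mathbb{Z}_2^n$,
\begin{align}
    2^n \;=\; 2 + \sum_{\text{other orbits}} |\text{orbit}| \;\leq\; 2 + (N_{\tilde{G}} - 1)\cdot 2|G|,
\end{align}
which rearranges to $N_{\tilde{G}} \geq 1 + (2^n - 2)/(2|G|) = (2^{n-1} + |G| - 1)/|G|$, as claimed.

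For $M_{\tilde{G}}$, the analogous ``small'' orbit will be the diagonal $\Delta = \{(T,T) : T \in \mathcal{T}\}$. Using the action from Proposition \ref{prop:PS-conj}, one checks $(\varsigma,\pi)[(T,T)] = (\pi(T),\pi(T))$ for both $\varsigma = \varnothing$ and $\varsigma = [n]$, since the swap operation acts trivially on a pair whose two components coincide. By $G$-transitivity of $\mathcal{T}$, this orbit is exactly $\Delta$ and has size $|\mathcal{T}|$. All remaining orbits again have size at most $2|G|$, so partitioning $\mathcal{T}^2$ gives
\begin{align}
    |\mathcal{T}|^2 \;\leq\; |\mathcal{T}| + (M_{\tilde{G}} - 1)\cdot 2|G|,
\end{align}
i.e.\ $M_{\tilde{G}} \geq 1 + (|\mathcal{T}|^2 - |\mathcal{T}|)/(2|G|)$.

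The only remaining step is to massage this into the stated form $(|\mathcal{T}|^2 + |\mathcal{T}|)/(2|G|)$. Here I would use the fact that $\mathcal{T}$ is a single $G$-orbit (Eq.~(\ref{eq:G-transitive})), so by orbit-stabilizer $|\mathcal{T}| = |G|/|\mathrm{Stab}_G([m])| \leq |G|$; equivalently $2|G| \geq 2|\mathcal{T}|$. Adding $|\mathcal{T}|^2 - |\mathcal{T}|$ to both sides and dividing by $2|G|$ yields $1 + (|\mathcal{T}|^2 - |\mathcal{T}|)/(2|G|) \geq (|\mathcal{T}|^2 + |\mathcal{T}|)/(2|G|)$, completing the chain. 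The only real subtlety—and hence the main thing to get right—is verifying that the diagonal and the $\{0^n,1^n\}$ orbits behave as claimed under the combined swap-plus-permutation action; once those are in hand, the counting is routine.
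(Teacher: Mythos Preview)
Your argument is correct and complete. The approach, however, differs from the paper's. The paper applies Burnside's lemma directly: for $M_{\tilde{G}}$ it counts fixed points of $(\varnothing,e)$ (all $|\mathcal{T}|^2$ pairs) and of $([n],e)$ (the $|\mathcal{T}|$ diagonal pairs) and drops the remaining terms, yielding the stated bound in one step; for $N_{\tilde{G}}$ it counts fixed points of $(\varnothing,e)$ (all $2^n$ strings) and of $(\varnothing,\pi)$ for each $\pi\neq e$ (at least the two constant strings). Your route instead isolates one small orbit, bounds every other orbit by $|\tilde{G}|=2|G|$ via orbit--stabilizer, and sums. This is more elementary in that it avoids Burnside, and for $M_{\tilde{G}}$ it even produces the slightly sharper intermediate bound $1+(|\mathcal{T}|^2-|\mathcal{T}|)/(2|G|)$ before you weaken it using $|\mathcal{T}|\leq |G|$; the paper's Burnside argument lands on the stated form immediately without that extra step. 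Both proofs are short, and the difference is essentially a matter of taste.
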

\begin{proof}
    For the lower bound on $M_{\tilde{G}}$, we use Burnside's lemma, which states that for a group $H$ acting on a set $\mathcal{X}$, the number of orbits $\abs{\mathcal{X}/H}$ is given by
    \begin{align}
        \abs{\mathcal{X}/H} = \frac{1}{\abs{H}}\sum_{h\in H}\abs{\operatorname{Fix}_\mathcal{X}(h)},
    \end{align}
    where $\operatorname{Fix}_\mathcal{X}(h)$ is the subset of $\mathcal{X}$ fixed by the element $h$. Choosing $H=\tilde{G}$ and $\mathcal{X}=\mathcal{T}^2$, we can apply the lemma to compute $M_{\tilde{G}}$ as follows:
    \begin{align}
        M_{\tilde{G}} &= \frac{1}{\abs{\tilde{G}}}\sum_{(\varsigma,\pi)\in \tilde{G}}\abs{\operatorname{Fix}_{\nmg^2}((\varsigma,\pi))}\\
        &= \frac{1}{2\abs{G}}\big(\abs{\operatorname{Fix}_{\nmg^2}((\varnothing,e))}+\abs{\operatorname{Fix}_{\nmg^2}(([n],e))}+\cdots\big).
    \end{align}
    The observation that $\abs{\tilde{G}}=2\abs{G}$ follows from the fact that $\tilde{G}=\{\varnothing,[n]\}\times G$. Note that all trajectory pairs in $\nmg^2$ are fixed by $(\varnothing,e)$ and that pairs of the form $(T,T)$ are fixed by $([n],e)$. Thus,
    \begin{align}
        M_{\tilde{G}} \geq \frac{1}{2\abs{G}}\left(\abs{\mathcal{T}}^2+\abs{\mathcal{T}}\right).
    \end{align}
    
    The lower bound on $N_{\tilde{G}}$ also follows from Burnside's lemma as follows:
    \begin{align}
        N_{\tilde{G}} &= \frac{1}{\abs{\tilde{G}}}\sum_{(\varsigma,\pi)\in \tilde{G}}\abs{\operatorname{Fix}_{\mathbb{Z}_2^n}((\varsigma,\pi))}\\
        &= \frac{1}{2\abs{G}}\left[\abs{\operatorname{Fix}_{\mathbb{Z}_2^n}((\varnothing,e))}+\left(\sum_{\pi\in G,\pi\neq e}\abs{\operatorname{Fix}_{\mathbb{Z}_2^n}((\varnothing,\pi))}\right)+\cdots\right].
    \end{align}
    Note that all $2^n$ elements of $\mathbb{Z}_2^n$ are fixed by $(\varnothing,e)$, while at least two strings (namely, $0\ldots 0$ and $1\ldots 1$) are fixed by $(\varnothing,\pi)$ for all $\pi\neq e$. Thus,
    \begin{align}
        N_{\tilde{G}} &\geq \frac{1}{2\abs{G}}\left[2^n+\left(\sum_{\pi\in G,\pi\neq e}2\right)\right]\\
        &=\frac{1}{2\abs{G}}\big[2^n+2(\abs{G}-1)\big]\\
        &=\frac{1}{\abs{G}}\big(2^{n-1}+\abs{G}-1\big).
    \end{align}
\end{proof}

\subsection{Proof of Propositions \ref{prop:sym-orbits-strings} and  \ref{prop:sym-orbits-pairs}}\label{appendix:orbs-pairs}
In this section, we prove Propositions $\ref{prop:sym-orbits-strings}$ and  \ref{prop:sym-orbits-pairs}, beginning with Proposition $\ref{prop:sym-orbits-strings}$.
\begin{proof}[Proof of Proposition \ref{prop:sym-orbits-strings}]
    We must show that two strings $j_1\ldots j_n$ and $j'_1\ldots j'_n$ are in the same orbit of $\mathbb{Z}_2^n/\tilde{\Sigma}_n$ if and only if they belong to the same $\mathcal{W}_\nu\cup\mathcal{W}_{n-\nu}$, where $\nu\in\{0,\ldots, n\}$. By Eq. (\ref{eq:tilde-G-def}), we can write $\tilde{\Sigma}_n$ as $\tilde{\Sigma}_n=S\times \Sigma_n$, where $S=\{\varnothing,[n]\}$. 
    
    ``$\implies$" direction: Observe that a permutation $\pi\in\Sigma_n$ does not change the weight of a string and that the $[n]$ element of $S$ changes the weight of a string from $\nu$ to $n-\nu$. If $j'_1\ldots j'_n = (\varsigma,\pi)[j_1\ldots j_n]$ for some $(\varsigma,\pi)\in\tilde{\Sigma}_n$ (i.e., the strings are in the same orbit) and $j_1\ldots j_n\in \mathcal{W}_\nu$, then it follows that $j'_1\ldots j'_n$ must be in either $\mathcal{W}_\nu$ or $\mathcal{W}_{n-\nu}$. 
    
    ``$\impliedby$" direction: Assume $j_1\ldots j_n$ and $j'_1\ldots j'_n$ both belong to ${\mathcal{W}_\nu\cup\mathcal{W}_{n-\nu}}$ for some $\nu$. Without loss of generality, they are either both in $\mathcal{W}_\nu$, or one is in $\mathcal{W}_\nu$ while the other is in $\mathcal{W}_{n-\nu}$. In the former case, there exists a permutation $\pi\in\Sigma_n$ taking one string to the other because they have the same weight; in the latter case, some permutation $\pi\in\Sigma_n$ transforms one string into the bit-flipped version of the other. In either case, we conclude that some $(\varsigma,\pi)\in\tilde{\Sigma}_n$ maps one string to the other, so they must be in the same orbit. 
    
    Lastly, the observation that ${\mathcal{W}_\nu\cup\mathcal{W}_{n-\nu}}={\mathcal{W}_{n-\nu}\cup\mathcal{W}_{\nu}}$ implies that there are only $N_{\tilde{\Sigma}_n}=\floor{n/2}+1$ distinct orbits.
\end{proof}

We now prove Proposition \ref{prop:sym-orbits-pairs}.
\begin{proof}[Proof of Proposition \ref{prop:sym-orbits-pairs}]
    We must show that $(T_1,T_1')$ and $(T_2,T_2')$ are in the same orbit of $\Tsymsq/\tilde{\Sigma}_n$ if and only if they belong to the same $\mathcal{D}_\mu$. By Eq. (\ref{eq:tilde-G-def}), we can write $\tilde{\Sigma}_n$ as $\tilde{\Sigma}_n=S\times \Sigma_n$, where $S=\{\varnothing,[n]\}$. 
    
    ``$\implies$" direction: Note that $(T,T')$ and $[n][(T,T')]=(T',T)$ have the same degree since $m=\abs{T}=\abs{T'}$ implies $\abs{T\setminus T'}=\abs{T'\setminus T}$. Additionally, we now show that $(T,T')$ and $\pi(T,T') = (\pi(T),\pi(T'))$ must have the same degree because permutations are bijective. Note that if $j\in T\cap T'$, then $\pi(j)\in \pi(T)$ and $\pi(j)\in \pi(T')$, so $\pi(j)\in \pi(T)\cap\pi(T')$ and $\pi(T\cap T')\subseteq \pi(T)\cap\pi(T')$. Conversely, if $j'\in \pi(T)\cap\pi(T')$, then $\pi^{-1}(j)\in 
    T$ and $\pi^{-1}(j)\in T'$, so $\pi^{-1}(j)\in T\cap T'$. It follows that $j\in\pi(T\cap T')$, which implies $\pi(T)\cap\pi(T')\subseteq \pi(T\cap T')$ and therefore $\pi(T)\cap\pi(T')=\pi(T\cap T')$. Using this result, we can show that the degrees are equal:
    \begin{align}
        \abs{T\setminus T'} = \abs{T}-\abs{T\cap T'} = \abs{\pi(T)}-\abs{\pi(T\cap T')} =\abs{\pi(T)}-\abs{\pi(T)\cap \pi(T')} = \abs{\pi(T)\setminus\pi(T')},
    \end{align}
    where in the second equality we have used the bijectivity of $\pi$. Subsequently, if $(T_2,T_2')=(\varsigma,\pi)[(T_1,T_1')]$ for some $(\varsigma,\pi)\in\tilde{\Sigma}_n$, then the two pairs must have the same degree. 
    
    ``$\impliedby$" direction: We now show that if two pairs $(T_1,T_1')$ and $(T_2,T_2')$ have the same degree, then there exists $(\varsigma,\pi)\in\tilde{\Sigma}_n$ mapping one to the other, implying they belong to the same orbit. Note that since all four trajectories $T_1,T_1',T_2,T_2'$ are the same size, $\abs{T_1\setminus T_1'}=\abs{T_2\setminus T_2'}$ implies that $\abs{T_1'\setminus T_1}=\abs{T_2'\setminus T_2}$ and 
    \begin{align}
        \abs{T_1\cap T_1'} = \abs{T_1}-\abs{T_1\setminus T_1'} = \abs{T_2}-\abs{T_2\setminus T_2'} = \abs{T_2\cap T_2}.
    \end{align}
    Since a bijection exists sets of the same size, there exists bijections between $T_1\setminus T_1'$ and $T_2\setminus T_2'$, $T_1'\setminus T_1$ and $T_2'\setminus T_2$, and $T_1\cap T_1'$ and $T_2\cap T_2'$. Because the union of bijections with disjoint domains and disjoint codomains is a bijection, we can define a permutation $\pi\in\Sigma_n$ which is the union of these three bijections such that 
    \begin{align}
        \pi(T_1\setminus T_1')=T_2\setminus T_2',\ \pi(T_1'\setminus T_1)=T_2'\setminus T_2',\ \mathrm{and}\ \pi(T_1\cap T_1')=T_2\cap T_2'.
    \end{align}
    It follows that $\pi(T_1,T_1')=(T_2,T_2')$, which implies $(\varnothing,\pi)[(T_1,T_1')]=(T_2,T_2')$, as desired.

    The number of orbits $M_{\tilde{\Sigma}_n}$ is the number of possible degrees that an arbitrary allowed trajectory pair could have. If $m\leq \floor{n/2}$, a trajectory pair could have any degree between $0$ and $m$ inclusive, where these extreme values are respectively attained when the trajectories are either completely overlapping or completely disjoint. In contrast, if $m > \floor{n/2}$, then any two trajectories are guaranteed to overlap by at least one qubit, and the possible degrees are the values $0,\ldots,n-m$. In summary,
    \begin{align}
        M_{\tilde{\Sigma}_n} = \begin{cases}
            m+1 &m\leq \floor*{\frac{n}{2}}\\
            n-m+1 &m>\floor*{\frac{n}{2}}.
        \end{cases}
    \end{align}
\end{proof}

\subsection{Proof of Proposition \ref{prop:A-entries}}\label{appendix:A-entries}
In this section, we prove Proposition \ref{prop:A-entries}.
\begin{proof}[Proof of Proposition \ref{prop:A-entries}]
For given $\mu\in\{0,\ldots, M-1\}$ and $\nu \in\{0,\ldots N-1\}$, let $(T,T')$ be any trajectory pair in the orbit $\Omega_\mu = \mathcal{D}_\mu$. Using Eq. (\ref{eq:lambda-eigval}), we can write $A_{\mu,\nu}(\theta)=\lambda_{\mu,\nu}(\theta)\abs{\omega_\nu}$ as
\begin{align}
    A_{\mu,\nu}(\theta)&= \sum_{j_1\ldots j_n\in {\omega_\nu}}\exp(i\varphi_{j_1\ldots j_n}^{(T')}(\theta)-i\varphi_{j_1\ldots j_n}^{(T)}(\theta))\\
    &= \sum_{j_1\ldots j_n\in {\mathcal{W}_\nu\cup\mathcal{W}_{n-\nu}}}\exp(i\varphi_{j_1\ldots j_n}^{(T')}(\theta)-i\varphi_{j_1\ldots j_n}^{(T)}(\theta))
\end{align}
where $\varphi_{j_1\ldots j_n}^{(T)}$ is the eigenphase of $R^{(T)}$ given by Eq. (\ref{eq:R-eigphase-def}).
Note if $n$ is even and $\nu=n/2$, then $\mathcal{W}_\nu=\mathcal{W}_{n-\nu}$. Otherwise, $\mathcal{W}_\nu$ and $\mathcal{W}_{n-\nu}$ are disjoint. In the former case, the $[n]$ element of $S=\{\varnothing,[n]\}$ acts as a bijection from $\mathcal{W}_\nu$ to itself; in the latter, $[n]$ acts as a bijection between the elements of $\mathcal{W}_\nu$ and $\mathcal{W}_{n-\nu}$. These observations allow us to write the sum over $\mathcal{W}_\nu\cup\mathcal{W}_{n-\nu}$ as just a sum over $\mathcal{W}_\nu$:
\begin{align}
    A_{\mu,\nu}&= \frac{\alpha_\nu}{2}\sum_{j_1\ldots j_n\in \mathcal{W}_\nu}\left[\exp(i\varphi_{j_1\ldots j_n}^{(T')}-i\varphi_{j_1\ldots j_n}^{(T)}) 
    +\exp(i\varphi_{[n](j_1\ldots j_n)}^{(T')}-i\varphi_{[n](j_1\ldots j_n)}^{(T)})\right]
\end{align}
where $\alpha_\nu=1$ if $n$ is even and $\nu=N-1=n/2$, and $\alpha_\nu=2$ otherwise. This additional factor is necessary because the above summation double-counts the elements of $\mathcal{W}_\nu$ if $n$ is even and $\nu=n/2$. Now observe that 
\begin{align}
    \varphi_{[n](j_1\ldots j_n)}^{(T)} = \frac{\theta}{2}\sum_{k\in T} (-1)^{(j_k\oplus 1)+1} = -\frac{\theta}{2}\sum_{k\in T} (-1)^{j_k+1}=-\varphi_{j_1\ldots j_n}^{(T)}.
\end{align}
Hence,
\begin{align}
    A_{\mu,\nu}&= \frac{\alpha_\nu}{2}\sum_{j_1\ldots j_n\in \mathcal{W}_\nu}\left[\exp(i\varphi_{j_1\ldots j_n}^{(T')}-i\varphi_{j_1\ldots j_n}^{(T)}) 
    +\exp(-i\varphi_{j_1\ldots j_n}^{(T')}+i\varphi_{j_1\ldots j_n}^{(T)})\right]\\
    &= \alpha_\nu\sum_{j_1\ldots j_n\in \mathcal{W}_\nu}\cos(\varphi_{j_1\ldots j_n}^{(T')}-\varphi_{j_1\ldots j_n}^{(T)}). 
\end{align}
Substituting in Eq. (\ref{eq:R-eigphase-def}) for $\varphi_{j_1\ldots j_n}^{(T)}$, we can simplify the expression for $A_{\mu,\nu}$ using the intuition that any qubits in $T\cap T'$ do not acquire a phase because the $R^{\dagger}_Z$ they receive from trajectory $T$ cancels with the $R_Z$ from $T'$: that is, ${\bra{\psi}R^{\dagger(T)}R^{(T')}\ket{\psi}}={\bra{\psi}R^{\dagger(T\setminus T')}R^{(T'\setminus T)}\ket{\psi}}$. Thus,
\begin{align}
    A_{\mu,\nu}
    &=\alpha_\nu\sum_{j_1\ldots j_n\in \mathcal{W}_\nu}\cos\left[\frac{\theta}{2}\left(\sum_{k\in T'} (-1)^{j_k+1}-\sum_{l\in T} (-1)^{j_{l}+1}\right)\right]\\
    &=\alpha_\nu\sum_{j_1\ldots j_n\in \mathcal{W}_\nu}\cos\left[\frac{\theta}{2}\left(\sum_{k\in T'\setminus T} (-1)^{j_k+1}+\sum_{k'\in T'\cap T} (-1)^{j_{k'}+1}-\sum_{l'\in T\cap T'} (-1)^{j_{l'}+1}-\sum_{l\in T\setminus T'} (-1)^{j_{l}+1}\right)\right]\\
    &=\alpha_\nu\sum_{j_1\ldots j_n\in \mathcal{W}_\nu}\cos\left[\frac{\theta}{2}\left(\sum_{k\in T'\setminus T} (-1)^{j_k+1}-\sum_{l\in T\setminus T'} (-1)^{j_{l}+1}\right)\right].
\end{align}
Note that for $k\in T'\setminus T$,
\begin{align}
    \sum_{k} (-1)^{j_k+1} = \sum_{j_k=1}j_k -\sum_{j_k=0}j_k = \sum_{k}j_k - \left(\abs{T'\setminus T} - \sum_{k}j_k\right) = -\mu + 2\sum_k j_k,
\end{align}
since $\mu=\abs{T'\setminus T}$ is the degree of $(T,T')$. Substituting this result into the expression for $A_{\mu,\nu}$, we obtain
\begin{align}
    A_{\mu,\nu}&=\alpha_\nu\sum_{j_1\ldots j_n\in \mathcal{W}_\nu}\cos\left[\frac{\theta}{2}\left(-\mu + 2\sum_{k\in T'\setminus T} j_k-\left(-\mu + 2\sum_{l\in T\setminus T'} j_l\right)\right)\right]\\
    &=\alpha_\nu\sum_{j_1\ldots j_n\in \mathcal{W}_\nu}\cos\left[\theta\left( \sum_{k\in T'\setminus T} j_k-\sum_{l\in T\setminus T'} j_l\right)\right]\label{eq:A-matrix01}
\end{align}
Given $j_1\ldots j_n$, let $i=\sum_{k\in T'\setminus T} j_k$ and $i'=\sum_{l\in T\setminus T'} j_l$. Intuitively, $i$ (resp. $i'$) is the number of qubits in $T'\setminus T$ (resp. $T\setminus T'$) which are in the state $\ket{1}$ when the array is prepared to $\ket{j_1\ldots j_n}$; likewise, $\nu$ is the total number of $\ket{1}$-qubits in the whole array. The number of $\ket{1}$-qubits in $T'\setminus T$ can take any integer value between $0$ and $\min(\abs{T'\setminus T},\nu)$, so
\begin{align}\label{eq:i-bounds}
    0\leq i\leq \min(\mu,\nu).
\end{align}
To determine the possible values of $i'$, we must consider how the remaining $\nu-i$ $\ket{1}$-qubits can be distributed between $T\setminus T'$ and $T\cap T'$, which respectively hold $\mu$ and $n-2\mu$ qubits total. The following constraints apply to $i'$:
\begin{enumerate}
    \item $T\setminus T'$ cannot contain more than $\abs{T\setminus T'}$ $\ket{1}$-qubits, so $i'\leq \mu$.
    \item $T\setminus T'$ cannot contain more than the remaining number of $\ket{1}$-qubits, so $i'\leq \nu-i$.
    \item In the event that the remaining number of $\ket{1}$-qubits exceeds the size of $T\cap T'$ (i.e., $\nu-i >\abs{T\cap T'}$), then $T\setminus T'$ must at least contain those $\ket{1}$-qubits which do not fit in $T\cap T'$, so $i\geq \nu-i-(n-2\mu)$.
\end{enumerate}
Synthesizing the three above constraints, we deduce that 
\begin{align}\label{eq:j-bounds}
    \max[0,\nu-i-(n-2\mu)]\leq i' \leq \min(\mu,\nu-i).
\end{align}
We now convert the sum over $j_1\ldots j_n$ in Eq. (\ref{eq:A-matrix01}) to a sum over $i$ and $i'$. Recall that $i$ and $i'$ depend on $j_1\ldots j_n$, and it remains to be determined how many $j_1\ldots j_n\in\mathcal{W}_\nu$ are associated with a particular value of $i,i'$. For any $i,i'$ satisfying the above bounds, there are $\binom{\mu}{i}$ ways to arrange $i$ $\ket{1}$-qubits in $T'\setminus T$, $\binom{\mu}{i'}$ ways to arrange $i'$ $\ket{1}$-qubits in $T\setminus T'$, and $\binom{n-2\mu}{\nu-(i+i')}$ ways to arrange the remaining qubits in $T\cap T'$. Eq. (\ref{eq:A-matrix01}) then becomes
\begin{align}
    A_{\mu,\nu}(\theta) = \alpha_\nu\sum_{i,i'}\binom{\mu}{i}\binom{\mu}{i'}\binom{n-2\mu}{\nu-(i+i')}\cos{[(i-i')\theta]}
\end{align}
where $i$ and $i'$ are summed over the intervals given by Eqs. (\ref{eq:i-bounds}) and (\ref{eq:j-bounds}). In fact, $i$ and $i'$ can both be summed from $0$ to $\nu$ since at least one of the binomial coefficients becomes zero when $i$ or $i'$ exceeds the bounds of Eqs. (\ref{eq:i-bounds}) and (\ref{eq:j-bounds}).
\end{proof}

\subsection{Proof of Lemma \ref{lemma:deriv-A}}\label{appendix:deriv-A}
In this section, we prove Lemma \ref{lemma:deriv-A}.
\begin{proof}[Proof of Lemma \ref{lemma:deriv-A}]
Suppose $\mathcal{T}=\Tsym$. We now prove Lemma \ref{lemma:deriv-A}, which states that 
\begin{align}
    (\mu-j)A_{\mu,\nu}^{(j)}(t)-\mu A_{\mu-1,\nu}^{(j)}(t)=(t-1)A_{\mu,\nu}^{(j+1)}(t)\nonumber
\end{align}
for all $\mu=1,\ldots,M-1$, $\nu=0,\ldots,N-1$ and integers $j$ such that $0\leq j\leq \mu-1$, where $A_{\mu,\nu}^{(j)}(t)$ is the $j$-th derivative of $A_{\mu,\nu}(t)$ with respect to $t$. We proceed by induction on $j$. If Eq. (\ref{eq:deriv-A}) holds for one value of $j$, then it also holds for $j+1$. To see this, differentiate both sides of Eq. (\ref{eq:deriv-A}):
\begin{align}
    \frac{d}{dt}\left[(\mu-j)A_{\mu,\nu}^{(j)}(t)-\mu A_{\mu-1,\nu}^{(j)}(t)\right]&=\frac{d}{dt}\left[(t-1)A_{\mu,\nu}^{(j+1)}(t)\right]
\end{align}
which gives
\begin{align}
    (\mu-j)A_{\mu,\nu}^{(j+1)}(t)-\mu A_{\mu-1,\nu}^{(j+1)}(t)&=(t-1)A_{\mu,\nu}^{(j+2)}(t) + A_{\mu,\nu}^{(j+1)}(t)
\end{align}
and 
\begin{align}
    (\mu-(j+1))A_{\mu,\nu}^{(j+1)}(t)-\mu A_{\mu-1,\nu}^{(j+1)}(t)&=(t-1)A_{\mu,\nu}^{(j+2)}(t) 
\end{align}
upon rearrangement, as desired. It remains to show that Eq. (\ref{eq:deriv-A}) holds for the base case when $j=0$. 

The base case appears to be challenging to demonstrate, but it may be verified through the following approach. First, we make the change of variables $\abs{i-i'}\rightarrow k$ to rewrite the entries of $A(t)$ as
\begin{align}
    A_{\mu,\nu}(t) &= \alpha_\nu \sideset{}{'}\sum_{k=0}^\nu\sum_{i=0}^\nu \left[\binom{\mu}{i}\binom{\mu}{i+k}\binom{n-2\mu}{\nu-(2i+k)}+\binom{\mu}{i+k}\binom{\mu}{i}\binom{n-2\mu}{\nu-(2i+k)}\right]T_k(t)\\
    &= 2\alpha_\nu \sideset{}{'}\sum_{k=0}^{\nu}\sum_{i=0}^\nu\binom{\mu}{i}\binom{\mu}{i+k}\binom{n-2\mu}{\nu-(2i+k)}T_k(t)
\end{align}
where the primed summation indicates that the $k=0$ term should be halved to avoid double counting. For convenience, we will define
\begin{align}
    C_{\mu,k} = \sum_{i=0}^\nu\binom{\mu}{i}\binom{\mu}{i+k}\binom{n-2\mu}{\nu-(2i+k)}
\end{align}
such that 
\begin{align}
    A_{\mu,\nu} = 2\alpha_\nu\sideset{}{'}\sum_{k=0}^\nu C_{\mu,k} T_k(t).
\end{align}
The base case to be proven is:
\begin{align}\label{eq:base-case}
    \mu A_{\mu,\nu}(t) -\mu A_{\mu-1,\nu}(t) = (t-1)A^{(1)}_{\mu,\nu}(t).
\end{align}
The right-hand side (RHS) can be expanded as 
\begin{align}
    (t-1)A^{(1)}_{\mu,\nu}(t)=2\alpha_\nu\sideset{}{'}\sum_{k=0}^\nu C_{\mu,k} (t-1) \frac{d}{dt}T_k(t).
\end{align}
Using the identity $\frac{d}{dt}T_k(t) = k U_{k-1}(t)$, where $U_k(t)$ is the $k$th Chebyshev polynomial of the second kind defined by $U_k(\cos(\theta))\sin(\theta)=\sin((k+1)\theta)$, the RHS becomes
\begin{align}
    (t-1)A^{(1)}_{\mu,\nu}(t)&=2\alpha_\nu\sideset{}{'}\sum_{k=0}^\nu kC_{\mu,k} (t-1)U_{k-1}(t) \\
    &=2\alpha_\nu\sideset{}{'}\sum_{k=0}^\nu kC_{\mu,k} (tU_{k-1}(t)-U_{k-1}(t))\\
    &=2\alpha_\nu\sum_{k=1}^\nu kC_{\mu,k} (tU_{k-1}(t)-U_{k-1}(t)).
\end{align}
Note that we drop the $k=0$ term because $U_{-1}(t)=0$. Additionally, without the $k=0$ term, we drop the prime symbol on the summation. The identity $T_{k'}(t)U_{k}(t)=\frac{1}{2}(U_{k'+k}(t)+U_{k-k'}(t))$ (which holds for $k\geq k'-1$) implies that $tU_{k-1}(t) = \frac{1}{2}(U_k(t)+U_{k-2}(t))$ when $k\geq 0$ since $t=T_1(t)$. Making this substitution, we obtain
\begin{align}
    (t-1)A^{(1)}_{\mu,\nu}(t)&=\alpha_\nu\sum_{k=1}^\nu kC_{\mu,k} \left[(U_k(t)+U_{k-2}(t))-2U_{k-1}(t)\right]
\end{align}
Collecting like terms of $U_k(t)$, we have
\begin{align}
    (t-1)A^{(1)}_{\mu,\nu}(t)&=\alpha_\nu\sum_{k=0}^\nu \left[kC_{\mu,k} + (k+2)C_{\mu,k+2}- 2(k+1)C_{\mu,k+1}\right]U_k(t)
\end{align}
for the RHS. Similarly, using the identity $T_k(t) = \frac{1}{2}(U_{k}(t)-U_{k-2}(t))$, the left-hand side (LHS) can be expanded as
\begin{align}
    \mu A_{\mu,\nu}(t) -\mu A_{\mu-1,\nu}(t) &= 2\mu\alpha_\nu \sideset{}{'}\sum_{k=0}^\nu (C_{\mu,k}-C_{\mu-1,k})T_k(t)\\
    &= \mu\alpha_\nu \sideset{}{'}\sum_{k=0}^\nu (C_{\mu,k}-C_{\mu-1,k})(U_{k}(t)-U_{k-2}(t))\\
    &=  \mu\alpha_\nu \left[\frac{1}{2}(C_{\mu,0}-C_{\mu-1,0})(U_{0}(t)-U_{-2}(t))+\sum_{k=1}^\nu(C_{\mu,k}-C_{\mu-1,k})(U_{k}(t)-U_{k-2}(t))\right]
\end{align}
where we have separated out the $k=0$ term from the summation. Since $U_{-2}(t)=-U_0(t)$, we then have
\begin{align}
    \mu A_{\mu,\nu}(t) -\mu A_{\mu-1,\nu}(t) &=\mu\alpha_\nu \left[(C_{\mu,0}-C_{\mu-1,0})U_{0}(t)+\sum_{k=1}^\nu (C_{\mu,k}-C_{\mu-1,k})(U_{k}(t)-U_{k-2}(t))\right]\\
    &= \mu\alpha_\nu \sum_{k=0}^\nu \left[(C_{\mu,k}-C_{\mu-1,k})-(C_{\mu,k+2}-C_{\mu-1,k+2})\right]U_k(t).
\end{align}

By comparing the expanded expressions for the LHS and RHS and matching the $U_k(t)$ terms, we see that the base case of Eq. (\ref{eq:base-case}) holds if 
\begin{align}
    \mu \left[(C_{\mu,k}-C_{\mu-1,k})-(C_{\mu,k+2}-C_{\mu-1,k+2})\right] = kC_{\mu,k} + (k+2)C_{\mu,k+2}- 2(k+1)C_{\mu,k+1}\label{eq:coeffs-relation}
\end{align}
for $k=0,\ldots \nu$. However, the $C_{\mu,k}$ are sums of products of binomial coefficients, and a simple closed form for these coefficients has not been found. Let $\ell_k$ and $r_k$ equal the left and right sides of Eq. (\ref{eq:coeffs-relation}), respectively. To verify that $\ell_k=r_k$ for all $k=0,\ldots,\nu$, we will derive generating functions for the sequences $(\ell_k)_{k=0}^\nu$ and $(r_k)_{k=0}^\nu$ and show that the generating functions are equal. For a generating function $f(x)$, we will use the notation $[x^k]f(x)$ to represent the coefficient of $x^k$ in the formal power series expansion of $f$. By the binomial theorem, we have $\binom{n}{k}=[x^k](1+x)^n$. Thus, we can write each $C_{\mu,k}$ as
\begin{align}
    C_{\mu,k}&= \sum_{i=0}^\infty\binom{\mu}{i}\binom{\mu}{i+k}\binom{n-2\mu}{\nu-(2i+k)}\label{eq:infinite}\\
    &=\sum_{i=0}^\infty\binom{\mu}{i}[x^{i+k}](1+x)^\mu [w^{\nu-(2i+k)}](1+w)^{n-2\mu}\\
    &=\sum_{i=0}^\infty\binom{\mu}{i}[x^{k}]x^{-i}(1+x)^\mu [w^{\nu-k}]w^{2i}(1+w)^{n-2\mu}\\
    &=[x^{k}][w^{\nu-k}](1+x)^\mu(1+w)^{n-2\mu}\sum_{i=0}^\infty\binom{\mu}{i}\left(w^2x^{-1}\right)^i\\
    &=[x^{k}][w^{\nu-k}](1+x)^\mu(1+w)^{n-2\mu}\left(1+w^2x^{-1}\right)^\mu
\end{align}
where in the last equality we have employed the binomial theorem. Note that the upper bound of summation in Eq. (\ref{eq:infinite}) can be chosen to be infinity because the third binomial coefficient in the summand will evaluate to zero if $i>\nu$. For convenience, define 
\begin{align}
    g_k(x,w)=(1+x)^\mu(1+w)^{n-2\mu}\left(1+w^2x^{-1}\right)^\mu
\end{align}
such that $C_{\mu,k}=[x^k][w^{\nu-k}]g_k(x,w)$. We can now derive similar expressions for the remaining terms in $\ell_k$:
\begin{align}
    C_{\mu-1,k}&= [x^{k}][w^{\nu-k}](1+x)^{\mu-1}(1+w)^{n-2\mu+2}\left(1+w^2x^{-1}\right)^{\mu-1}\\
    &= [x^{k}][w^{\nu-k}]\frac{(1+w)^2}{(1+x)(1+w^2x^{-1})}g_k(x,w),\\
    C_{\mu,k+2} &= [x^{k+2}][w^{\nu-k-2}](1+x)^\mu(1+w)^{n-2\mu}\left(1+w^2x^{-1}\right)^\mu\\
    &= [x^{k}][w^{\nu-k}]w^2x^{-2}g_k(x,w),\ \mathrm{and}\\
    C_{\mu-1,k+2} &= [x^{k}][w^{\nu-k}]w^2x^{-2}\frac{(1+w)^2}{(1+x)(1+w^2x^{-1})}g_k(x,w).
\end{align}
Combining the above results, we can write $\ell_k$ as 
\begin{align}
    \ell_k &= [x^{k}][w^{\nu-k}] \mu\left(1-\frac{(1+w)^2}{(1+x)(1+w^2x^{-1})}\right)(1-w^2x^{-2}) g_k(x,w)\\
    &= [x^{k}][w^{\nu-k}] \mu\left(\frac{(1+ x + w^2x^{-1} +w^2)-(1+w)^2}{(1+x)(1+w^2x^{-1})}\right)(1-w^2x^{-2}) g_k(x,w)\\
    &= [x^{k}][w^{\nu-k}] \mu\left(\frac{x+w^2x^{-1}-2w}{(1+x)(1+w^2x^{-1})}\right)(1-w^2x^{-2}) g_k(x,w)\label{eq:lk}
\end{align}
For $r_k$, we need to find generating functions for expressions like $kC_{\mu,k}$. Fortunately, we can employ a clever trick to find these functions: note for a generating function $f(x)$ that $k[x_k]f(x) = [x_k]x\frac{d}{dx}f(x)$. Hence,
\begin{align}
    kC_{\mu,k} &= k[x_k][w^{\nu-k}]g_k(x,w)\\
    &= [x^k][w^{\nu-k}]x\frac{d}{dx}g_k(x,w)\\
    &= [x^k][w^{\nu-k}]x\left[\mu(1+x)^{\mu-1}(1+w)^{n-2\mu}(1+w^2x^{-1})^{\mu}-\mu w^2x^{-2}(1+x)^{\mu}(1+w)^{n-2\mu}(1+w^2x^{-1})^{\mu-1}\right]\\
    &= [x^k][w^{\nu-k}]\mu x\left(\frac{1}{1+x}-\frac{ w^2x^{-2}}{1+w^2x^{-1}}\right)g_k(x,w)\\
    &= [x^k][w^{\nu-k}] \frac{\mu x(1-w^2x^{-2})}{(1+x)(1+w^2x^{-1})}g_k(x,w)
\end{align}
as well as
\begin{align}
    (k+2)C_{\mu,k+2} &= [x^k][w^{\nu-k}] w^2x^{-2}\frac{\mu x(1-w^2x^{-2})}{(1+x)(1+w^2x^{-1})}g_k(x,w)\ \mathrm{and}\\
    (k+1)C_{\mu,k+1}&= [x^k][w^{\nu-k}]wx^{-1}\frac{\mu x(1-w^2x^{-2})}{(1+x)(1+w^2x^{-1})}g_k(x,w).
\end{align}
Combining the above, we obtain
\begin{align}
    r_k&= [x^k][w^{\nu-k}](1+w^2x^{-2}-2wx^{-1})\frac{\mu x(1-w^2x^{-2})}{(1+x)(1+w^2x^{-1})}g_k(x,w)\\
    &= [x^k][w^{\nu-k}]\mu \left(\frac{x+w^2x^{-1}-2w}{(1+x)(1+w^2x^{-1})}\right)(1-w^2x^{-2})g_k(x,w).\label{eq:rk}
\end{align}
Finally, comparing Eqs. (\ref{eq:lk}) and (\ref{eq:rk}), we see that $\ell_k$ and $r_k$ are coefficients of the same term in the power series expansion of the same function. It follows that $\ell_k=r_k$ for all $\nu=0,\ldots k$, which implies that the two sides of Eq. (\ref{eq:base-case}) are equal, thereby completing the proof.
\end{proof}
\subsection{Proof of Theorem \ref{thm:red-lin-prog}}\label{appendix:red-lin-prog}
In this section, we prove Theorem \ref{thm:red-lin-prog}. 
\begin{proof}[Proof of Theorem \ref{thm:red-lin-prog}]
We first compute explicit expressions for the entries of $A'(t)$ when $\mathcal{T}=\Tsym$. If $\mu=0$, then $A'_{\mu,\nu}(t)=A_{\mu,\nu}(t)$. If $\mu \geq 1$, then
\begin{align}
    A'_{\mu,\nu}(t) &= \frac{d^{\mu-1}}{dt^{\mu-1}} A_{\mu,\nu}\\
    &= \alpha_\nu\sum_{i,i'=0}^\nu\binom{\mu}{i}\binom{\mu}{i'}\binom{n-2\mu}{\nu-(i+i')}\frac{d^{\mu-1}}{dt^{\mu-1}}T_{\abs{i-i'}}(t).
\end{align}
Since $T_\abs{i-i'}(t)$ is a polynomial of degree $\abs{i-i'}$, the $(\mu-1)$th derivative of $T_\abs{i-i'}(t)$ will be zero if $\abs{i-i'}< \mu-1$. Furthermore, $\abs{i-i'}$ cannot exceed $\nu$, so if $\nu<\mu-1$, then $A'_{\mu,\nu}$ will automatically be zero. Subsequently, assume $\nu\geq\mu-1$. $A_{\mu,\nu}$ does not contain any Chebyshev polynomials with degree greater than $\mu$, since if $\abs{i-i'}>\mu$, at least one of the binomial coefficients will evaluate to zero. Thus, the only polynomials not sent to zero after differentiation will be $T_{\mu-1}(t)$ and $T_{\mu}(t)$. Keeping only terms proportional to these polynomials, and letting $T^{(\mu-1)}_k(t)$ be the $(\mu-1)$th derivative of $T_k(t)$, we obtain
\begin{align}
    A'_{\mu,\nu}(t)  &= 2\alpha_\nu\binom{\mu}{0}\binom{\mu}{\mu}\binom{n-2\mu}{\nu-\mu}T^{(\mu-1)}_{\mu}(t)+ 2\alpha_\nu\left[\binom{\mu}{0}\binom{\mu}{\mu-1}\binom{n-2\mu}{\nu-\mu+1}+\binom{\mu}{1}\binom{\mu}{\mu}\binom{n-2\mu}{\nu-\mu-1}\right]T^{(\mu-1)}_{\mu-1}(t) \\
    &= 2\alpha_\nu\binom{n-2\mu}{\nu-\mu}T^{(\mu-1)}_{\mu}(t)+ 2\alpha_\nu\mu\left[\binom{n-2\mu}{\nu-\mu+1}+\binom{n-2\mu}{\nu-\mu-1}\right]T^{(\mu-1)}_{\mu-1}(t).
\end{align}
Given that the leading coefficient of $T_k(t)$ is $2^{k-1}$, we deduce that $T_k^{(k)}(t)=2^{k-1}k!$. Furthermore, since $T_k$ is either even or odd, the degree $k-1$ term has coefficient zero, which implies that $T_k^{(k-1)}(t)=2^{k-1}k!t$. We use these facts to simplify the above expression to
\begin{align}
    A'_{\mu,\nu}(t)  &= 2\alpha_\nu\binom{n-2\mu}{\nu-\mu}2^{\mu-1}\mu!t+ 2\alpha_\nu\mu\left[\binom{n-2\mu}{\nu-\mu+1}+\binom{n-2\mu}{\nu-\mu-1}\right]2^{\mu-2}(\mu-1)!\\
    &= 2^\mu \mu! \alpha_\nu\left\{\binom{n-2\mu}{\nu-\mu}t+ \frac{1}{2}\left[\binom{n-2\mu}{\nu-\mu+1}+\binom{n-2\mu}{\nu-\mu-1}\right]\right\}.\label{eq:Ap-entries}
\end{align}

We are now equipped to prove Theorem \ref{thm:red-lin-prog}. We first consider the case where $\theta=0$ and $t=1$ so that the incident particle does not interact with the sensor at all. If $M=1$, then $A'(t)=A(t)$, so Theorem \ref{thm:lin-prog} and Theorem \ref{thm:red-lin-prog} are equivalent. Thus assume $M>1$. Then $1\leq m\leq n-1$, which means $\abs{\mathcal{T}}=\binom{n}{m}>1$. Clearly, it is not possible for $R^{(T)}(0)\ket{\psi}$ and $R^{(T')}(0)\ket{\psi}$ to be orthogonal for any $T\neq T'$ and $\ket{\psi}\in\mathcal{H}$, so no TS state exists. We now show by contradiction that Eq. (\ref{eq:red-lin-prog}) has no solution under these circumstances. Assume Eq. (\ref{eq:red-lin-prog}) has a solution when $M>1$ and $t=1$.  Note that all $A'_{\mu,\nu} \geq 0$ if $t=1$. For any $\mu\geq 1$, if $A'_{\mu,\nu} > 0$, then $c_\nu$ must be zero since Eq. (\ref{eq:red-lin-prog}) requires $\sum_{\nu}A'_{\mu,\nu}c_\nu = 0$ and $c_\nu\geq 0$. Inspecting Eq. (\ref{eq:Ap-entries}), we see that $A_{1,\nu} >0$ for all $\nu=0,\ldots, N-1$ (equivalently, all $\nu=0,\ldots, \floor{n/2}$) if $t=1$. Hence, all $c_\nu$ must be zero. However, this leads to a contradiction because the first row of the system $A'\mathbf{c}=\mathbf{d}$ cannot be satisfied, as $\sum_{\nu}A_{0,\nu} c_\nu\neq 1$. We conclude that Theorem \ref{thm:red-lin-prog} holds when $t=1$.

Now suppose $t\neq 1$. ``$\implies$" direction: Suppose there exists a TS state. Then Theorem \ref{thm:lin-prog} guarantees that there exists $\mathbf{c}\in \mathbb{R}^N$ such that $A(t)\mathbf{c}=\mathbf{d}$ and $\mathbf{c}\geq 0$. We show that this $\mathbf{c}$ also solves $A'(t)\mathbf{c}=\mathbf{d}$. Let $f_\mu(t)=\sum_\nu A_{\mu,\nu}c_\nu$ be the product of the $\mu$th row of $A(t)$ with $\mathbf{c}$. Then Lemma \ref{lemma:deriv-A} implies that the $j$th derivatives of $f$ with respect to $t$ satisfy
\begin{align}\label{eq:f-deriv}
    (\mu-j)f_{\mu}^{(j)}(t)+\mu f_{\mu-1}^{(j)}(t)=(t-1)f_{\mu}^{(j+1)}(t)
\end{align}
for all $j=0\ldots,\mu-1$. We now prove via induction on increasing $j$ that $f^{(j)}_\mu=0$ for all $j=0,\ldots, M-2$ and $\mu=j+1,\ldots,M-1$. For the base case, $A(t)\mathbf{c}=\mathbf{d}$ guarantees that $f^{(0)}_\mu=f_\mu=0$ for all $\mu\geq 1$. Now, if $f_\mu^{(j)}=0$ for all $\mu=j+1,\ldots, M-1$, then Eq. (\ref{eq:f-deriv}) directly implies that $f_\mu^{(j+1)}=0$ for $\mu=j+2,\ldots, M-1$ since $t\neq 1$; the claim follows. Since $f^{(j)}_{j+1}=0$ for all $j=0,\ldots, M-2$, we can change indices $j\rightarrow\mu-1$ to recover $f_\mu^{(\mu-1)}=0$ for all $\mu=1,\ldots, M-1$, so rows $1$ through $M-1$ of the system $A'(t)\mathbf{c}=\mathbf{d}$ must hold. Because the remaining first row of the system is identical to that of $A(t)\mathbf{c}=\mathbf{d}$, we conclude that $A'(t)\mathbf{c}=\mathbf{d}$.

``$\impliedby$" direction: Suppose there exists $\mathbf{c}\in\mathbb{R}^N$ such that $A'(t)\mathbf{c}=\mathbf{d}$ and $\mathbf{c}\geq 0$. We will show that this $\mathbf{c}$ also satisfies $A(t)\mathbf{c}=\mathbf{d}$, which guarantees the existence of a TS state by Theorem \ref{thm:lin-prog}. We similarly prove by induction that $f^{(j)}_\mu=0$ for $j=0,\ldots M-2$ and all $\mu=j+1\ldots, M-1$, but this time the procedure is more subtle. We perform a ``nested" induction: we induct on decreasing $j$ (starting at $j=M-2$), and at each $j$-step we induct again on increasing $\mu$ (starting at $\mu=j+1$). At the $j$-th outer step, assume we have shown that $f_\mu^{(j+1)}=0$ for $\mu=j+2,\ldots M-1$. To show the outer induction step, i.e., $f^{(j)}_\mu=0$ for all $\mu=j+1\ldots, M-1$, we induct on increasing $\mu$, assuming at the $\mu$th inner step that $f^{(j)}_{\mu-1}=0$. Eq. (\ref{eq:f-deriv}) implies the inner induction step, i.e., $f_\mu^{(j)}=0$, since the inner (resp. outer) induction hypothesis assures that $f_{\mu-1}^{(j)}=0$ (resp. $f_\mu^{(j+1)}=0$). The base case for both the inner and outer inductions is given by $A'(t)\mathbf{c}=\mathbf{d}$, which implies that $f_\mu^{(\mu-1)}=0$ for all $\mu=1,\ldots, M-1$ or equivalently $f^{(j)}_{j+1}=0$ for all $j=0,\ldots, M-2$. With the claim proven, we deduce that $f_\mu^{(0)}=f_\mu=0$ for all $\mu=1,\ldots,M-1$, which implies that rows $1$ through $M-1$ of the system $A(t)\mathbf{c}=\mathbf{d}$ must hold. Because the remaining first row of the system is identical to that of $A'(t)\mathbf{c}=\mathbf{d}$, we conclude that $A(t)\mathbf{c}=\mathbf{d}$, thereby completing the proof.
\end{proof}
\subsection{Proof of Proposition \ref{prop:rref}}\label{appendix:TSsym-coeffs}
In this section, we prove Proposition \ref{prop:rref}. 
\begin{proof}[Proof of Proposition \ref{prop:rref}]
Suppose $\mathcal{T}=\Tsym$. When $M=N=1$, the proposition is trivial. Hence, we will prove that for $M=N> 1$, the vector $\mathbf{c}\in\mathbb{R}^N$ satisfies $A'(t)\mathbf{c}=\mathbf{d}$ if and only if $\sum_\nu A'_{0,\nu} c_\nu = 1$ and all of its entries obey
\begin{align}
    c_\nu=\begin{cases}
        (-1)^{N-1-\nu}T_{N-1-\nu}(t)c_{N-1} &n\ \mathrm{even}\\
        (-1)^{N-1-\nu}W_{N-1-\nu}(t)c_{N-1} &n\ \mathrm{odd},
    \end{cases}\nonumber
\end{align}
where $W_k(t)$ is the $k$th Chebyshev polynomial of the fourth kind:
\begin{align}
    W_k(\cos{\theta}) = \frac{\sin{((k+1/2)\theta)}}{\sin{(\theta/2)}}
\end{align}
which follows the recursion relation
\begin{align}
    W_0(t)=1,\ W_1(t) = 2t+1,\ W_k(t) = 2tW_{k-1}(t)-W_{k-2}(t).
\end{align}
Furthermore, we define $W_{-1}(t)=-1$ by convention. We first prove the forward direction. Suppose that that $M=N=\floor{n/2}+1$, and first assume that $n$ is even. Let $g_\mu(t)=\sum_\nu A'_{\mu,\nu}c_\nu$ be the product of the $\mu$th row of $A'(t)$ with $\mathbf{c}$. Observe that $A'(t)\mathbf{c}=\mathbf{d}$ gives $g_{\mu}(t) = 0$ for $\mu = 1,\ldots, N-1$ and that $A'_{\mu, \nu}$ is nonzero only for $\nu\geq \mu-1$. We can then use this fact to write each $c_{\nu}$ in terms of the other $c_{\nu'}$ satisfying $\nu'>\nu$: 
\begin{align}
    0&=\frac{1}{2^\nu \nu!} g_{\nu}(t)\\
    &= \sum_{\nu'=\nu-1}^{N-1}\alpha_{\nu'}\left\{\binom{n-2\nu}{\nu'-\nu}t+ \frac{1}{2}\left[\binom{n-2\nu}{\nu'-\nu+1}+\binom{n-2\nu}{\nu'-\nu-1}\right]\right\}c_{\nu'}\\
    &= c_{\nu-1} + \sum_{\nu'=\nu}^{N-1}\alpha_{\nu'}\left\{\binom{n-2\nu}{\nu'-\nu}t+ \frac{1}{2}\left[\binom{n-2\nu}{\nu'-\nu+1}+\binom{n-2\nu}{\nu'-\nu-1}\right]\right\}c_{\nu'}
\end{align}
which upon rearrangement yields
\begin{align}
    c_{\nu-1} &= -\sum_{\nu'=\nu}^{N-1}\alpha_{\nu'}\left\{\binom{n-2\nu}{\nu'-\nu}t+ \frac{1}{2}\left[\binom{n-2\nu}{\nu'-\nu+1}+\binom{n-2\nu}{\nu'-\nu-1}\right]\right\}c_{\nu'}. \label{eq:TSsym-coeffs02}
\end{align}
We proceed to prove that Eq. (\ref{eq:TSsym-coeffs}) holds by induction on decreasing $\nu$, starting with $\nu=N-1$. There are two base cases to cover: $\nu=N-1$ and $\nu=N-2$. Since $T_0(t)=1$, Eq. (\ref{eq:TSsym-coeffs}) clearly holds for $\nu=N-1$. For $\nu = N-2$, we have
\begin{align}
    c_{N-2} = -\left\{\binom{0}{0}t+ \frac{1}{2}\left[\binom{0}{1}+\binom{0}{-1}\right]\right\}c_{N-1} = tc_{N-1} = T_1(t)c_{N-1},
\end{align}
as desired. For $\nu< N-2$, we will now show that if Eq. (\ref{eq:TSsym-coeffs}) holds for $\nu,\ldots,N-1$, then it also holds for $\nu-1$. Thus assume Eq. (\ref{eq:TSsym-coeffs}) holds for $\nu,\ldots,N-1$. Substituting the induction hypothesis into Eq. (\ref{eq:TSsym-coeffs02}), we obtain
\begin{align}
    c_{\nu-1} &= \sum_{\nu'=\nu}^{N-1}(-1)^{N-\nu'}\alpha_{\nu'}\left\{\binom{n-2\nu}{\nu'-\nu}t+ \frac{1}{2}\left[\binom{n-2\nu}{\nu'-\nu+1}+\binom{n-2\nu}{\nu'-\nu-1}\right]\right\}T_{N-1-\nu'}(t)c_{N-1}.
\end{align}
To prove the induction step for $\nu< N-2$, it is sufficient to show that
\begin{align}\label{eq:TSsym-coeffs01}
    \sum_{\nu'=\nu}^{N-1}(-1)^{N-\nu'}\alpha_{\nu'}\left\{\binom{n-2\nu}{\nu'-\nu}t+ \frac{1}{2}\left[\binom{n-2\nu}{\nu'-\nu+1}+\binom{n-2\nu}{\nu'-\nu-1}\right]\right\}T_{N-1-\nu'}(t) = (-1)^{N-\nu}T_{N-\nu}(t).
\end{align}
We now separate out the $\nu'=N-1=n/2$ term, noting that $\alpha_{\nu'}=1$ in when $\nu'=N-1$ and 2 otherwise. Additionally, we use the identity $T_k(t)T_{k'}(t) = \frac{1}{2}(T_{k+k'}(t)+T_{\abs{k-k'}}(t))$ (which holds for $k,k'\geq 0$) to evaluate the product $t T_{N-1-\nu'}(t)$, noting that $t=T_1(t)$. The left side of Eq. (\ref{eq:TSsym-coeffs01}) then becomes
\begin{align}
     &-\left\{\binom{n-2\nu}{\frac{n}{2}-\nu}t+\frac{1}{2}\left[\binom{n-2\nu}{\frac{n}{2}-\nu+1}+\binom{n-2\nu}{\frac{n}{2}-\nu-1}\right]\right\}\nonumber\\
    &+\sum_{\nu'=\nu}^{N-2}(-1)^{N-\nu'}\left\{\binom{n-2\nu}{\nu'-\nu}\left[T_{N-\nu'}(t)+T_{N-\nu'-2}(t)\right]+ \left[\binom{n-2\nu}{\nu'-\nu+1}+\binom{n-2\nu}{\nu'-\nu-1}\right]T_{N-1-\nu'}(t)\right\}.
\end{align}
To simplify the algebra, we change to the new variables $r= \nu'-\nu$ and $s = N-1-\nu=\frac{n}{2}-\nu$. The above expression is then
\begin{align}
     &-\left\{\binom{2s}{s}t+\frac{1}{2}\left[\binom{2s}{s+1}+\binom{2s}{s-1}\right]\right\}\nonumber\\
    &+\sum_{r=0}^{s-1}(-1)^{s-r+1}\left\{\binom{2s}{r}\left[T_{s-r+1}(t)+T_{s-r-1}(t)\right]+ \left[\binom{2s}{r+1}+\binom{2s}{r-1}\right]T_{s-r}(t)\right\}.
\end{align}
We now carefully combine like terms of $T_{s-r}(t)$ to obtain
\begin{align}
    &-\frac{1}{2}\left[\binom{2s}{s+1}+\binom{2s}{s-1}\right]+\binom{2s}{s-1}\nonumber\\
    &+\sum_{r=0}^{s-1}\left\{(-1)^{s-r+1} \left[\binom{2s}{r+1}+\binom{2s}{r-1}\right]+(-1)^{s-r+2}\binom{2s}{r-1}+(-1)^{s-r}\binom{2s}{r+1}\right\}T_{s-r}(t)\nonumber\\
    &+(-1)^{s+1}T_{s+1}(t).
\end{align}
Every term except the final one cancels in the expression above (the terms in the first row cancel because $\binom{2s}{s+1}=\binom{2s}{s-1}$). We thus conclude that the left side of Eq. (\ref{eq:TSsym-coeffs01}) equals $(-1)^{s+1}T_{s+1}(t)=(-1)^{N-\nu}T_{N-\nu}(t)$ as desired, which completes the proof of the induction step.  

It remains to show that Eq. (\ref{eq:TSsym-coeffs}) holds when $n$ is odd. The proof is much the same, except we now use polynomials $W_k(t)$ instead of $T_k(t)$. Since we will need to evaluate terms like $tW_k(t)$, we first prove the following identity:
\begin{align}\label{eq:Wk-identity}
    tW_k(t) = \frac{1}{2}(W_{k+1}(t)+W_{k-1}(t))
\end{align}
which holds for $k\geq 0$. To do so, we need the following preliminary fact:
\begin{align}\label{eq:Wk-identity2}
    W_k(t) = U_k(t)+U_{k-1}(t)
\end{align}
for $k\geq 0$, where $U_k(t)$ is the $k$th Chebyshev polynomial of the second kind. $U_k(t)$ follows the recursion relation $U_0(t) = 1$, $U_1(t)=2t$, and $U_{k+1}(t) = 2t U_{k}(t)-U_{k-1}$. Since $U_{-1}=0$ by convention, we can write $W_0(t)=U_0(t)+U_{-1}(t)$ and $W_1(t)=U_1(t)+U_0(t)$. Then, assuming Eq. (\ref{eq:Wk-identity2}) holds for $0,\ldots, k$, we see that for $k\geq 2$:
\begin{align}
    W_{k}(t) &= 2tW_{k-1}(t)-W_{k-2}(t)\\
    &= 2t(U_{k-1}(t)+U_{k-2}(t)) - (U_{k-2}(t)+U_{k-3}(t))\\
    &= (2tU_{k-1}(t)-U_{k-2}(t)+(2tU_{k-2}(t)-U_{k-3}(t))\\
    &= U_{k}(t)+U_{k-1}(t).
\end{align}
Eq. (\ref{eq:Wk-identity2}) then follows by induction. Returning to Eq. (\ref{eq:Wk-identity}), we use the identity $T_{k'}(t)U_{k}(t)=\frac{1}{2}(U_{k'+k}(t)+U_{k-k'}(t))$ (which holds for $k\geq k'-1$) that we employed earlier in Appendix \ref{appendix:deriv-A}, noting that $t=T_1(t)$:
\begin{align}
    tW_k(t) &= tU_{k}(t)+tU_{k-1}(t)\\
    &= \frac{1}{2}(U_{k+1}(t) + U_{k-1}(t)) + \frac{1}{2}(U_k(t)+U_{k-2}(t))\\
    &= \frac{1}{2}(U_{k+1}(t)+U_{k}(t))+\frac{1}{2}(U_{k-1}(t)+U_{k-2}(t))\\
    &= \frac{1}{2}(W_{k+1}(t)+W_{k-1}(t)),
\end{align}
as desired. Observe that Eq. (\ref{eq:Wk-identity}) holds for $k\geq 0$ using the conventions $U_{-2}(t)=-1$ and $W_{-1}(t)=-1$.

We continue now to show Eq. (\ref{eq:TSsym-coeffs}) holds for $n$ odd; accordingly, assume $n$ is odd. We induct again on decreasing $\nu$, starting with $\nu=N-1 = \frac{n-1}{2}$; the base case $\nu=N-1$ holds trivially. For $\nu=N-2= \frac{n-1}{2}-1$, we deduce from Eq. (\ref{eq:TSsym-coeffs02}) that 
\begin{align}
    c_{N-2} = -2\left\{\binom{1}{0}t+ \frac{1}{2}\left[\binom{1}{1}+\binom{1}{-1}\right]\right\}c_{N-1} = (2t+1)c_{N-1} = W_1(t)c_{N-1},
\end{align}
as desired. Note that $\alpha_{N-1}$ now equals 2 because $n$ is odd. Next assume Eq. (\ref{eq:TSsym-coeffs}) holds for $\nu,\ldots, N-1$, and substitute the induction hypothesis into Eq. (\ref{eq:TSsym-coeffs02}):
\begin{align}
    c_{\nu-1} &= -\sum_{\nu'=\nu}^{N-1}\left\{2\binom{n-2\nu}{\nu'-\nu}t+ \left[\binom{n-2\nu}{\nu'-\nu+1}+\binom{n-2\nu}{\nu'-\nu-1}\right]\right\}W_{N-1-\nu'}(t)c_{N-1}. 
\end{align}
To prove the induction step for $\nu<N-2$, it is sufficient to show that
\begin{align}\label{eq:TSsym-coeffs03}
    \sum_{\nu'=\nu}^{N-1}(-1)^{N-\nu'}\left\{2\binom{n-2\nu}{\nu'-\nu}t+ \left[\binom{n-2\nu}{\nu'-\nu+1}+\binom{n-2\nu}{\nu'-\nu-1}\right]\right\}W_{N-1-\nu'}(t) = (-1)^{N-\nu}W_{N-\nu}(t).
\end{align}
We separate out the $\nu'=N-1=\frac{n-1}{2}$ term and use the identity from Eq. (\ref{eq:Wk-identity}) to evaluate $tW_{N-1-\nu'}(t)$. The left side of Eq. (\ref{eq:TSsym-coeffs03}) becomes
\begin{align}
    &-\left\{2\binom{n-2\nu}{\frac{n-1}{2}-\nu}t+ \left[\binom{n-2\nu}{\frac{n-1}{2}-\nu+1}+\binom{n-2\nu}{\frac{n-1}{2}-\nu-1}\right]\right\}\\
    &+\sum_{\nu'=\nu}^{N-2}(-1)^{N-\nu'}\left\{\binom{n-2\nu}{\nu'-\nu}\left[W_{N-\nu'}(t)+W_{N-\nu'-2}(t)\right]+ \left[\binom{n-2\nu}{\nu'-\nu+1}+\binom{n-2\nu}{\nu'-\nu-1}\right]W_{N-1-\nu'}(t)\right\}
\end{align}
We change again to the new variables $r= \nu'-\nu$ and $s = N-1-\nu=\frac{n-1}{2}-\nu$:
\begin{align}
    &-\left\{\binom{2s+1}{s}(W_1(t)-1) +\left[\binom{2s+1}{s+1}+\binom{2s+1}{s-1}\right]\right\}\\
    &+\sum_{r=0}^{s-1}(-1)^{s-r+1}\left\{\binom{2s+1}{r}\left[W_{s-r+1}(t)+W_{s-r-1}(t)\right]+ \left[\binom{2s+1}{r+1}+\binom{2s+1}{r-1}\right]W_{s-r-1}(t)\right\}
\end{align}
Finally, we combine like terms:
\begin{align}
    &\binom{2s+1}{s}-\left[\binom{2s+1}{s+1}+\binom{2s+1}{s-1}\right] + \binom{2s+1}{s-1}\\
    &+\sum_{r=0}^{s-1}\left\{(-1)^{s-r+1} \left[\binom{2s+1}{r+1}+\binom{2s+1}{r-1}\right]+(-1)^{s-r+2}\binom{2s+1}{r-1}+(-1)^{s-r}\binom{2s+1}{r+1}\right\}W_{s-r}(t)\\
    &+(-1)^{s+1}W_{s+1}(t).
\end{align}
All terms except the last cancel (note that $\binom{2s+1}{s+1} = \binom{2s+1}{s}$). We conclude that the left side of Eq. (\ref{eq:TSsym-coeffs03}) equals $(-1)^{s+1}W_{s+1}(t) = (-1)^{N-\nu}T_{N-\nu}(t)$ as desired, completing the proof for the odd case. Thus, the forward direction is proven for both even and odd $n$.

The reverse direction is easy to understand as follows. Note that if Eq. (\ref{eq:TSsym-coeffs}) holds for some $\mathbf{c}\in\mathbb{R}^N$, then Eqs. (\ref{eq:TSsym-coeffs01}) and (\ref{eq:TSsym-coeffs03}) imply that Eq. (\ref{eq:TSsym-coeffs02}) holds as well, from which it follows that $g_\nu(t) = 0$ for all $\nu=1,\ldots N$. The normalization condition $\sum_\nu A'_{0,\nu} c_\nu = 1$ means that $g_0(t)=1$, so $A'(t)\mathbf{c}=\mathbf{d}$, as desired.
\end{proof}
\subsection{Proof of Theorem \ref{thm:S-TS-criterion}}\label{appendix:S-TS-criterion}
To complete the proof in the main text, it remains to be shown that for $n>0$, the inequalities
\begin{align}\label{eq:coeffs-ineq-appendix}
        0\leq \begin{cases}
        (-1)^{N-1-\nu}T_{N-1-\nu}(\cos{\theta}) &n\ \mathrm{even}\\
        (-1)^{N-1-\nu}W_{N-1-\nu}(\cos{\theta}) &n\ \mathrm{odd},
    \end{cases}
\end{align}
are satisfied for all $\nu=0,\ldots N-1$ if and only if $\theta\geq \frac{(n-1)\pi}{n}$. When $n=1$, the result is trivial, so assume $n>1$. First suppose $n$ is even. For even $n$, the change of variables $k=N-1-\nu$ allows the inequalities in Eq. (\ref{eq:coeffs-ineq-appendix}) to be rewritten as
\begin{align}
    (-1)^kT_k(\cos{\theta})\geq 0
\end{align}
for all $k=0,\ldots,\frac{n}{2}$. We can write $T_k(\cos{\theta})$ as 
\begin{align}
    T_k(\cos{\theta}) = \cos(k\theta).
\end{align}
Since $\cos{k\theta}=0$ when $k\theta=\frac{\pi}{2}+j\pi$ for integers $j$, it follows that the zeros of $T_k(\cos{\theta})$ over the interval $\theta\in[0,\pi]$ are given by
\begin{align}
    \theta = \frac{2j+1}{2k}\pi,\ j=0,\ldots,k-1.
\end{align}
For $k>0$, define $\theta_k=\frac{2k-1}{2k}\pi$ to be the largest such zero on the interval $[0,\pi]$; likewise, define $\theta_0=0$. Observe that for $\theta\in[\theta_k,\pi]$, $T_k(\cos{\theta})\geq0$ if $k$ is even and $T_k(\cos{\theta})\leq 0$ if $k$ is odd. Subsequently, $(-1)^kT_k(\cos{\theta})\geq 0$ for $\theta\in[\theta_k,\pi]$. Furthermore, for all $0\leq k' < k$, we have $(-1)^{k'}T_{k'}(\cos{\theta})\geq 0$ for $\theta\in[\theta_k,\pi]$ as well, since $\theta_{k'}<\theta_{k}$. Thus if $\theta\geq \frac{(n-1)\pi}{n}$, then $\theta \geq \theta_{\frac{n}{2}}$, and it follows by the above observations that all the inequalities must hold. 

To prove the other direction, we first show that if $\theta_{k-1}\leq \theta < \theta_k$ for some $k\geq 2$, then $(-1)^kT_k(\cos{\theta})<0$. Since $k\geq 2$, $T_k(\cos{\theta)}$ has more than one zero on the interval $[0,\pi]$. The second largest zero of $T_k(\cos{\theta)}$ is given by $\theta_k-\frac{1}{k}$. Evidently, $(-1)^kT_k(\cos{\theta})<0$ if $\theta_k-\frac{1}{k}<\theta<\theta_k $. However, we also have
\begin{align}
    \theta_{k-1} = \frac{2(k-1)-1}{2(k-1)}\pi = \frac{2k-3}{2k-2}\pi > \frac{2k-3}{2k}\pi = \theta_{k}-\frac{1}{k},
\end{align}
from which the claim follows. 

Now suppose that $\theta < \frac{(n-1)\pi}{n}$. Then either $0\leq \theta <\theta_1$ or there exists some $k\in\{2,\ldots, \frac{n}{2}\}$ such that $\theta_{k-1}\leq \theta < \theta_k$. In the former case, $-T_1(\cos{\theta}) = -\cos{\theta}< 0$ since $\theta_1=\frac{\pi}{2}$, while in the latter, the above result implies that some $(-1)^kT_l(\cos{\theta})<0$. Either way, not all of the inequalities in Eq. (\ref{eq:coeffs-ineq-appendix}) hold. 
It follows that Eq. (\ref{eq:coeffs-ineq-appendix}) is satisfied if and only if $\theta\geq\frac{(n-1)\pi}{n}$ when $n$ is even.

When $n$ is odd, the proof proceeds very similarly. For odd $n$, the change of variables $k=N-1-\nu$ allows the inequalities in Eq. (\ref{eq:coeffs-ineq-appendix}) to be rewritten as
\begin{align}
    (-1)^kW_k(\cos{\theta})\geq 0
\end{align}
for all $k=0,\ldots,\frac{n-1}{2}$. We can write $W_k(\cos{\theta})$ as 
\begin{align}
    W_k(\cos{\theta}) = \frac{\sin((k+\frac{1}{2})\theta)}{\sin{(\theta/2)}}
\end{align}
Since $\sin{(k+\frac{1}{2})\theta}=0$ when $(k+\frac{1}{2})\theta=j\pi$ for integers $j$, it follows that the zeros of $W_k(\cos{\theta})$ over the interval $\theta\in[0,\pi]$ are given by
\begin{align}
    \theta = \frac{2j}{2k+1}\pi,\ j=1,\ldots,k.
\end{align}
Note that $\theta=0$ is not a zero because of the $\sin(\theta/2)$ in the denominator of $W_k(\cos{\theta})$. For $k>0$, define $\theta_k=\frac{2k}{2k+1}\pi$ to be the largest such zero on the interval $[0,\pi]$; likewise, define $\theta_0=0$. We again have $(-1)^kW_k(\cos{\theta})\geq 0$ for $\theta\in[\theta_k,\pi]$, which implies that $(-1)^{k'}W_{k'}(\cos{\theta})\geq 0$ for all $0\leq k' < k$ and $\theta\in[\theta_k,\pi]$ since $\theta_{k'}<\theta_{k}$. Thus if $\theta\geq \frac{(n-1)\pi}{n}$, then $\theta \geq \theta_{\frac{n-1}{2}}$, so all the inequalities must hold. 

We can readily prove by exact analogy that if $\theta_{k-1}\leq \theta < \theta_k$ for some $k\geq 2$, then $(-1)^kW_k(\cos{\theta})<0$. As for the even case, this is sufficient to show that if $\theta<\frac{(n-1)\pi}{n}$, then not all of the inequalities in Eq. (\ref{eq:coeffs-ineq-appendix}) hold, thus completing the proof for the odd case. With the odd and even cases proven, the proof for Theorem \ref{thm:S-TS-criterion} is complete.

\subsection{Proof of Theorem \ref{thm:small-sym}}\label{appendix:small-cases}
In this section, we prove Theorem \ref{thm:small-sym}. 
\begin{proof}[Proof of Theorem \ref{thm:small-sym}]
First consider part (a). If $m=0$ or $m=n$, then $\abs{\mathcal{T}}=1$. Then $\ket{\psi}$ is a TS state if and only if $\bra{\psi}R^{\dagger(T)}(\theta)R^{(T)}(\theta)\ket{\psi}=\braket{\psi|\psi}=1$, which is true for any state $\ket{\psi}$.

Now consider part (b). If $m=1$ or $m=n-1$, then $M=1$. Also, assume $n>1$. Theorem \ref{thm:lin-prog} (or equivalently, Theorem \ref{thm:red-lin-prog}) implies that a TS state exists at a particular $\theta$ if and only if there exists $\mathbf{c}\geq 0$ such that $\sum_{\nu}A_{0,\nu}c_\nu=1$ and $\sum_{\nu}A_{1,\nu}(t)c_\nu=0$. Any appropriately normalized state can satisfy the first condition, so we focus on the second condition. We have
\begin{align}
    \sum_{\nu}A_{1,\nu}(t)c_\nu &= \sum_{\nu=0}^{N-1}\alpha_\nu\sum_{i,i'}\binom{1}{i}\binom{1}{i'}\binom{n-2}{\nu-(i+i')}T_{\abs{i-i'}}(t)c_\nu\\
    &= \sum_{\nu=0}^{N-1}\alpha_\nu\left[\binom{1}{0}\binom{1}{0}\binom{n-2}{\nu}+ \binom{1}{1}\binom{1}{1}\binom{n-2}{\nu-2}+2\binom{1}{1}\binom{1}{0}\binom{n-2}{\nu-1}t\right]c_\nu\\
    &= \sum_{\nu=0}^{N-1}\alpha_\nu\left[\binom{n-2}{\nu}+ \binom{n-2}{\nu-2}+ 2\binom{n-2}{\nu-1}t\right]c_\nu\\
    &= \sum_{\nu=0}^{N-1}\alpha_\nu\left[\binom{n-2}{\nu}+\binom{n-2}{\nu-2}\right]c_\nu+ \sum_{\nu'=0}^{N-1}2\alpha\binom{n-2}{\nu'-1}c_\nu't.
\end{align}
Setting $\sum_{\nu}A_{1,\nu}(t)c_\nu=0$ and solving for $t$, we obtain
\begin{align}
    t= \frac{-\sum_{\nu=0}^{N-1}\alpha_\nu\left[\binom{n-2}{\nu}+\binom{n-2}{\nu-2}\right]c_\nu}{\sum_{\nu'=0}^{N-1}2\alpha_\nu\binom{n-2}{\nu'-1}c_\nu'}
\end{align}
We would like to find the maximum possible value of $t$ as a function of the $c_\nu$ (as this corresponds to the minimum achievable $\theta$). We can cast this maximization problem as a linear fractional program:
\begin{align}
    \mathrm{maximize}\ t\ \mathrm{subject\ to} \sum_{\nu}A_{0,\nu}c_\nu=1\ \mathrm{and}\ c\geq 0.
\end{align}
Note that the $A_{0,\nu}$ are constants which do not depend on $t$. Because a linear fractional program admits a basic feasible solution \cite{linfrac}, the maximum will occur when exactly one of the $c_\nu$ is nonzero and the rest are zero. Thus, 
\begin{align}
    t_{\mathrm{max}} &= \max_{\nu}\frac{-\binom{n-2}{\nu}-\binom{n-2}{\nu-2}}{2\binom{n-2}{\nu-1}}.
\end{align}
The maximum occurs at $\nu=N-1$, giving
\begin{align}
    t_{\mathrm{max}} &= -1 + \ceil*{\frac{n}{2}}^{-1}.
\end{align}
Note that for $n>1$, $t_{\mathrm{max}}\in[-1,0]$. We now show that if $t\in[-1,t_{\mathrm{max}}]$, then there exists $\mathbf{c}\geq 0$ satisfying $\sum_\nu A_{1,\nu}(t)c_\nu=0$. Our solution will have $c_\nu=0$ for all $\nu$ except $\nu=0$ and $\nu=N-1$. We can then write
\begin{align}
    0&=\sum_\nu A_{1,\nu}(t)c_\nu\\
    &=c_0 + \alpha_{N-1}\left[\binom{n-2}{N-1}+ \binom{n-2}{N-1-2}+ 2\binom{n-2}{N-1-1}t\right]c_{N-1}\\
    &=c_0 + 2\alpha_{N-1}\binom{n-2}{N-1-1}(t-t_{\mathrm{max}})c_{N-1}
\end{align}
and rearrange to recover an expression for $c_0$:
\begin{align}
    c_0 = 2\alpha_{N-1}\binom{n-2}{N-2}(t_{\mathrm{max}}-t)c_{N-1}.
\end{align}
Since the coefficient of $c_{N-1}$ in the above expression is nonnegative, for any $t\in[-1,t_{\rm max}]$ it is possible to find $c_0\geq 0$ and $c_{N-1}\geq 0$ such that $\sum_\nu A_{1,\nu}(t)c_\nu=0$ with $c_\nu=0$ for $\nu=1,\ldots,N-2$. We conclude that $t\in[-1,t_{\rm max}]$ is a necessary and sufficient condition for the existence of a $\sts$ state, thereby proving the theorem.
\end{proof}
\subsection{Proof Proposition \ref{prop:cyc-orbits}}\label{appendix:cyclic-orbs}
In this section, we prove Proposition \ref{prop:cyc-orbits}.
\begin{proof}[Proof of Proposition \ref{prop:cyc-orbits}]
Assume $G=Z_n$. We can uniquely write any $T\in\mathcal{T}_{\rm cyc}$ as $T=z^j([m])$ for some $j\in\{0,\ldots n-1\}$. Then for $j_1,k_1,j_2,k_2\in\{0,\ldots n-1\}$, two trajectory pairs $(z^{j_1}([m]),z^{k_1}([m]))$ and $(z^{j_2}([m]),z^{k_2}([m]))$ are in the same orbit under $\tilde{Z}_n$ if and only if $j_1-k_1 = \pm(j_2-k_2)\pmod n$.

``$\implies$" direction: If $(z^{j_1}([m]),z^{k_1}([m]))=(e,z^l)(z^{j_2}([m]),z^{k_2}([m]))$ for some $l\in\{0,\ldots n-1\}$, then $j_1= j_2+l\pmod n$ and $k_1=k_2+l\pmod n$, so $j_1-k_1 = j_2-k_2\pmod n$. If instead $(z^{j_1}([m]),z^{k_1}([m]))=(x,z^l)(z^{j_2}([m]),z^{k_2}([m]))$, then $j_1=k_2+l\pmod n$ and $k_1= j_2+l\pmod n$, so $j_1-k_1 = -(j_2-k_2)\pmod n$. ``$\impliedby$" direction: Suppose $j_1-k_1 = j_2-k_2\pmod n$, and let $l=j_2-j_1$. Then $k_1+l = k_2\pmod n$, so $(z^{j_1}([m]),z^{k_1}([m]))=(e,z^l)(z^{j_2}([m]),z^{k_2}([m]))$. Now suppose $j_1-k_1 = -(j_2-k_2)\pmod n$, and let $l=k_2-j_1$. Then $k_1+l=j_2\pmod n$, so $(z^{j_1}([m]),z^{k_1}([m]))=(x,z^l)(z^{j_2}([m]),z^{k_2}([m]))$.

The orbits $\mathcal{T}_{\rm cyc}^2/\tilde{Z}_n$ are then
\begin{align}
    \mathcal{T}_{\rm cyc}^2/\tilde{Z}_n = \left\{\Omega_\mu\ :\ \mu = 0,\ldots, \floor*{\frac{n}{2}}\right\}
\end{align}
where 
\begin{align}
    \Omega_\mu &=\operatorname{Orb}_{\tilde{Z}_n}[([m],z^\mu([m]))]\\
    &=\left\{(z^{j}([m]),z^{k}([m]))\ :\ \pm(j-k) = \mu\pmod n\right\}.
\end{align}
\end{proof}
\subsection{Toric code provides a TS state}\label{appendix:toric}
In this section, we show that the state $\ket{\toric}$ constructed in Section \ref{sec:toric} is a TS state which can discriminate the four trajectories in Figure \ref{fig:toric} when $\theta=\frac{\pi}{2}$. These trajectories constitute the set $\mathcal{T}_{\rm toric}=\{T_1,\ldots,T_4\}$, where $T_1=\{1345\}$, $T_2=\{5781\}$, $T_3=\{2457\}$, and $T_4=\{3582\}$. Let $\mathcal{S}_{\rm toric}$ be the stabilizer group whose independent generators are 
\begin{align}\label{eq:toric-generators}
    \mathcal{S}_{\rm toric}=\langle X^{(1248)},X^{(3567)},X^{(4568)},Z^{(1345)},Z^{(2346)},Z^{(5781)},-Z^{(12)},-Z^{(37)}\rangle,
\end{align}
and recall that $\ket{\toric}$ is the unique state stabilized by $\mathcal{S}_{\rm toric}$.

To prove that $\ket{\toric}$ is the desired TS state, we will use a combination of Theorems \ref{lemma:SG-simp} and \ref{thm:stab}. Namely, we will first find an appropriate permutation group $G$ and Pauli subgroup $\mathcal{S}$ under which both $\mathcal{T}_{\rm toric}^2$ and $\ket{\toric}$ are invariant. Then, we will use the anticommutation relation of Theorem \ref{thm:stab} to demonstrate that Eq. (\ref{eq:ortho}) holds for at least one representative $(T,T')$ per orbit of $\mathcal{T}_{\rm toric}^2/(S_G\rtimes G)$; it will then follow that $\ket{\toric}$ is a TS state by Theorem \ref{lemma:SG-simp}.

We will not choose $\mathcal{S}=\mathcal{S}_{\rm toric}$, since $\mathcal{T}_{\rm toric}^2$ is not invariant under $\mathcal{S}_{\rm toric}$. Instead, we will choose $\mathcal{S} = \{I^{\otimes 8},X^{\otimes 8}\}$. By Proposition \ref{prop:S-bitflip3}, $\mathcal{T}_{\rm toric}^2$ is invariant under $S=\sigma(\mathcal{S})$. Additionally, since $X^{\otimes 8} = X^{(1248)}X^{(3567)}$, it follows that $\mathcal{S}\subseteq\mathcal{S}_{\rm toric}$, which implies that $\ket{\toric}$ is invariant under $\mathcal{S}$ as well.

We now choose a suitable permutation group. Using permutation cycle notation, define $\pi_1 = (1\ 2)(4\ 8)$ and $\pi_2=(3\ 7)(4\ 8)$, and let $G=\langle\pi_1,\pi_2\rangle$ be the group generated by these permutations. Note that $\pi_1$ interchanges $T_1\leftrightarrow T_2$ and $T_3\leftrightarrow T_4$; likewise, $\pi_2$ interchanges $T_1\leftrightarrow T_4$ and $T_2\leftrightarrow T_3$. It follows that $\mathcal{T}_{\rm toric}$ is $G$-transitive, which implies that $\mathcal{T}_{\rm toric}^2$ is invariant under $G$ by Proposition \ref{prop:G-transitive}. On the other hand, for $\ket{\toric}$ to be invariant under $\mathcal{G}=P(G)$, it suffices to show that $\mathcal{G}\in\mathcal{N}(\mathcal{S}_{\rm toric})$. To understand this claim, suppose $\mathcal{G}\in\mathcal{N}(\mathcal{S}_{\rm toric})$. Then for every $D\in\mathcal{S}_{\rm toric}$ and $\pi\in G$, $D(P_\pi\ket{\toric}) = P_\pi D'\ket{\toric} = P_\pi\ket{\toric}$ for some $D'\in\mathcal{S}_{\rm toric}$. It follows that every $P_\pi\ket{\toric}$ is in the stabilizer space of $\mathcal{S}_{\rm toric}$; however, since this stabilizer space has dimension one, it must be true that $P_\pi\ket{\toric}=\ket{\toric}$ for all $\pi\in G$. We thus proceed to show that $\mathcal{G}$ normalizes $\mathcal{S}_{\rm toric}$. It is easy to see that $P_{\pi_1}$ commutes with every generator of $\mathcal{S}_{\rm toric}$ except for $Z^{(1345)},Z^{(2346)}$, and $Z^{(5781)}$. However, conjugating these remaining generators by $P_{\pi_1}$ still produces stabilizer elements:
\begin{align}
    P_{\pi_1}Z^{(1345)}P^\dagger_{\pi_1} &= Z^{(2358)} = Z^{(5781)}\left(-Z^{(12)}\right)\left(-Z^{(37)}\right)\nonumber\\
    P_{\pi_1}Z^{(2346)}P^\dagger_{\pi_1} &= Z^{(1386)} = Z^{(1345)}Z^{(5871)}Z^{(2346)}\left(-Z^{(12)}\right)\left(-Z^{(37)}\right)\nonumber\\
    P_{\pi_1}Z^{(5781)}P^\dagger_{\pi_1} &= Z^{(5742)} = Z^{(1345)}\left(-Z^{(12)}\right)\left(-Z^{(37)}\right).
\end{align}
Similarly, $P_{\pi_2}$ evidently commutes with every generator of $\mathcal{S}_{\rm toric}$ except the same three. Nonetheless, conjugating these generators by $P_{\pi_2}$ still yields stabilizer elements:
\begin{align}
    P_{\pi_2}Z^{(1345)}P^\dagger_{\pi_2} &= Z^{(5871)} \nonumber\\
    P_{\pi_2}Z^{(5871)}P^\dagger_{\pi_2} &= Z^{(1345)} \nonumber\\
    P_{\pi_2}Z^{(2346)}P^\dagger_{\pi_2} &= Z^{(2678)} =Z^{(1345)}Z^{(5871)}Z^{(2346)}\nonumber\\.
\end{align}
It follows that $\mathcal{G}\in\mathcal{N}(\mathcal{S}_{\rm toric})$, which implies that $\ket{\toric}$ is invariant under $\mathcal{G}$, as desired. 

Having shown that $\mathcal{T}_{\rm toric}^2$ is invariant under both $G$ and $S=\sigma(\mathcal{S})$, we can now compute the orbits in $\mathcal{T}_{\rm toric}^2/(S_G\rtimes G)=\mathcal{T}_{\rm toric}^2/\tilde{G}$. There are four such orbits, labelled $\Omega_\mu$ for $\mu=0,\ldots 3$, as follows:
\begin{align}
    \Omega_0 &=\left\{(T_1,T_1),(T_2,T_2),(T_3,T_3),(T_4,T_4)\right\}\nonumber\\
    \Omega_1 &= \left\{(T_1,T_2),(T_2,T_1),(T_3,T_4),(T_4,T_3)\right\}\nonumber\\
    \Omega_2 &= \left\{(T_1,T_3),(T_3,T_1),(T_2,T_4),(T_4,T_2)\right\}\nonumber\\
    \Omega_3 &= \left\{(T_1,T_4),(T_4,T_1),(T_2,T_3),(T_3,T_2)\right\}.\label{eq:A-orbits}
\end{align}

To prove that $\ket{\toric}$ is a TS state using Theorem \ref{lemma:SG-simp}, we must demonstrate that Eq. (\ref{eq:ortho}) holds for one representative trajectory pair per orbit. Eq. (\ref{eq:ortho}) clearly holds for every pair in $\Omega_0$. Now note that every $(T,T')$ in $\Omega_1,\Omega_2,$ or $\Omega_3$ satisfies $T\neq T'$. Thus, to verify Eq. (\ref{eq:ortho}) for a representative $(T,T')$ in each of these remaining orbits, we will invoke the anticommutation relation of Theorem \ref{thm:stab}. Specifically, this theorem implies that $\bra{\toric}\rbf\ket{\toric}=0$ if there exists a subspace $V$ containing $\ket{\toric}$ such that some $D\in\mathcal{S}_{\rm toric}$ satisfies $\{D,\rbf\}\Pi_V=0$.

First consider $\Omega_1$, and pick $(T_1,T_2)\in\Omega_1$ as a representative trajectory pair. Then $V$ can be chosen as the $+1$ eigenspace of $-Z^{(37)}$; since $-Z^{(37)}$ is in $\mathcal{S}_{\rm toric}$, $\ket{\toric}$ is contained in $V$. The projector onto $V$ is consequently $\Pi_V = (I-Z^{(37)})/2$. We can then show that $X^{(3567)}$ satisfies the anticommutation relation of Eq. (\ref{eq:anticom-proj}). Letting $R=R_Z(\pi/2)$, recall the identities $RXZ = iXR$ and $R^\dagger X Z = -iXR^{\dagger}$ from Section \ref{sec:C4}. Then 
\begin{align}
    \{X^{(3567)},\mathbf{R}^{(T_1,T_2)}\}Z^{(37)} &= \left(X^{(3567)}R^{\dagger(34)}R^{(78)}+R^{\dagger(34)}R^{(78)}X^{(3567)}\right)Z^{(37)}\\
    &= X^{(56)}R^{\dagger(4)}R^{(8)}\left[\left(XR^{\dagger}\right)^{(3)}\left(XR\right)^{(7)}+\left(R^{\dagger}X\right)^{(3)}\left(RX\right)^{(7)}\right]Z^{(37)}\\
    &= X^{(56)}R^{\dagger(4)}R^{(8)}\left[\left(RX\right)^{(3)}\left(R^\dagger X\right)^{(7)}+\left(R^{\dagger}X\right)^{(3)}\left(RX\right)^{(7)}\right]Z^{(37)}\\
    &= X^{(56)}R^{\dagger(4)}R^{(8)}\left[\left(RXZ\right)^{(3)}\left(R^\dagger XZ\right)^{(7)}+\left(R^{\dagger}XZ\right)^{(3)}\left(RXZ\right)^{(7)}\right]\\
    &= X^{(56)}R^{\dagger(4)}R^{(8)}\left[\left(iXR\right)^{(3)}\left(-iXR^{\dagger}\right)^{(7)}+\left(-iXR^{\dagger}\right)^{(3)}\left(iXR\right)^{(7)}\right]\\
     &= X^{(56)}R^{\dagger(4)}R^{(8)}\left[\left(XR\right)^{(3)}\left(XR^{\dagger}\right)^{(7)}+\left(XR^{\dagger}\right)^{(3)}\left(XR\right)^{(7)}\right]\\
     &= X^{(56)}R^{\dagger(4)}R^{(8)}\left[\left(R^\dagger X\right)^{(3)}\left(R X\right)^{(7)}+\left(XR^{\dagger}\right)^{(3)}\left(XR\right)^{(7)}\right]\\
     &= X^{(56)}R^{\dagger(4)}R^{(8)}\left[\left(R^\dagger X\right)^{(3)}\left(R X\right)^{(7)}+\left(XR^{\dagger}\right)^{(3)}\left(XR\right)^{(7)}\right]\\
     &=\left(R^{\dagger(34)}R^{(78)}X^{(3567)}+X^{(3567)}R^{\dagger(34)}R^{(78)}\right)\\
     &=\{X^{(3567)},\mathbf{R}^{(T_1,T_2)}\}.
\end{align}
Since $\{X^{(3567)},\mathbf{R}^{(T_1,T_2)}\}Z^{(37)} = \{X^{(3567)},\mathbf{R}^{(T_1,T_2)}\}$, it follows that 
\begin{align}
    \{X^{(3567)},\mathbf{R}^{(T_1,T_2)}\}\Pi_V = \{X^{(3567)},\mathbf{R}^{(T_1,T_2)}\}\left(\frac{I-Z^{(37)}}{2}\right)
    &=0.
\end{align}
By Theorem \ref{thm:stab}, it must hence be true that $\bra{\toric}\mathbf{R}^{(T_1,T_2)}\ket{\toric}=0$.

For $\Omega_2$, we pick $(T_1,T_3)$ as a representative. Then identically choose $V$ to be the $+1$ eigenspace of $-Z^{(37)}$ so that $\Pi_V = (I-Z^{(37)})/2$. It can be verified by a similar argument that $X^{(3567)}$ again satisfies the anticommutation relation of Eq. (\ref{eq:anticom-proj}) with $\mathbf{R}^{(T_1,T_3)}$, from which it follows that $\bra{\toric}\mathbf{R}^{(T_1,T_3)}\ket{\toric}=0$. For $\Omega_3$, let $(T_1,T_4)$ be the representative and let $V$ instead be the $+1$ eigenspace of $-Z^{(48)}=Z^{(1345)}Z^{(1578)}\left(-Z^{(37)}\right)$. Then $X^{(4568)}$ can be shown to satisfy Eq. (\ref{eq:anticom-proj}) with $\mathbf{R}^{(T_1,T_4)}$, which implies $\bra{\toric}\mathbf{R}^{(T_1,T_4)}\ket{\toric}=0$.

Since $\ket{\toric}$ is invariant under $\mathcal{G}$ and $\mathcal{S}$ and Eq. (\ref{eq:ortho}) holds for one representative trajectory pair per orbit in $\mathcal{T}_{\rm toric}^2/\tilde{G}$, we conclude by Theorem \ref{lemma:SG-simp} that $\ket{\toric}$ is a TS state for the trajectory set $\mathcal{T}_{\rm toric}$ when $\theta = \frac{\pi}{2}$.

\bibliography{References} 
\bibliographystyle{apsrev4-1}
\end{document}